\pgfplotsset{compat=newest}
\newtheorem{theorem}{Theorem}
\newtheorem{corollary}{Corollary}
\newenvironment{manualcorollary}[1]{%
  \manualtheoreminner
}{\endmanualtheoreminner}
\newtheorem{lemma}{Lemma}
\newtheorem{proposition}{Proposition}
\newtheorem{remark}{Remark}
\theoremstyle{definition}
\newtheorem{assumption}{Assumption}
\newtheorem{example}{Example}
\newcommand{\ul}{\underline}
\newcommand{\ol}{\overline}
\newcommand{\df}{\mathrm{d}}
\newcommand{\bdis}{\begin{displaymath}}
\newcommand{\edis}{\end{displaymath}}
\newcommand{\beq}{\begin{equation}}
\newcommand{\eeq}{\end{equation}}
\newcommand{\bea}{\begin{eqnarray*}}
\newcommand{\eea}{\end{eqnarray*}}
\newcommand{\bean}{\begin{eqnarray}}
\newcommand{\eean}{\end{eqnarray}}
\newcommand{\R}{\mathbb{R}}
\newcommand{\N}{\mathbb{N}}
\newcommand{\E}{\mathbb{E}}
\DeclareMathOperator\supp{supp}
\DeclareMathOperator*{\argmax}{arg\,max}
\newcommand{\1}{\mbox{\bf 1}}
\begin{document}
\title{Persuasion with Non-Linear Preferences}
\author[\uppercase{Kolotilin, Corrao, and Wolitzky}]{\larger \textsc{Anton Kolotilin, Roberto Corrao, and Alexander Wolitzky}}
\date{\today}
\thanks{ \ \\
\textit{Kolotilin}: School of Economics, UNSW Business School. \\
\textit{Corrao} and \textit{Wolitzky}: Department of Economics, MIT.
\ \\
This paper was previously circulated with the title ``Persuasion as Matching.'' It supercedes the earlier paper ``Assortative Information Disclosure'' by Kolotilin and Wolitzky. For helpful comments and suggestions, we thank Jak{\v{s}}a Cvitani{\'c}, Jeffrey Ely, Piotr Dworczak, Drew Fudenberg, Emir Kamenica, Elliot Lipnowski, Stephen Morris, Paula Onuchic, Eran Shmaya, and Andriy Zapechelnyuk, as well as many seminar participants. We thank Daniel Clark and Yucheng Shang for excellent research assistance. Anton Kolotilin gratefully acknowledges support from the
Australian Research Council Discovery Early Career Research Award
DE160100964 and from MIT Sloan's Program on Innovation in Markets and
Organizations. Alexander Wolitzky gratefully acknowledges support from NSF CAREER Award 1555071 and Sloan Foundation Fellowship 2017-9633.}

\begin{abstract}
In persuasion problems where the receiver's action is one-dimensional and his utility is single-peaked, optimal signals are characterized by duality, based on a first-order approach to the receiver's problem. A signal is optimal iff the induced joint distribution over states and actions is supported on a compact set (the \emph{contact set}) where the dual constraint binds. A signal that pools at most two states in each realization is always optimal, and such \emph{pairwise} signals are the only solutions under a non-singularity condition on utilities (the \emph{twist condition}). We provide conditions under which higher actions are induced at more or less extreme pairs of states. Finally, we provide conditions for the optimality of either full disclosure or \emph{negative assortative disclosure}, where signal realizations can be ordered from least to most extreme. Optimal negative assortative disclosure is characterized as the solution to a pair of ordinary differential equations.

\ \\

\noindent\emph{JEL\ Classification:}\ C78, D82, D83 \newline

\noindent\emph{Keywords:} persuasion, information design, duality, optimal transport, first-order approach, contact set, pairwise signals, twist condition, single-dipped disclosure, negative assortative disclosure

\ 
\end{abstract}

\maketitle
\thispagestyle{empty}

\let\MakeUppercase\relax %

\newpage
\clearpage
\pagenumbering{arabic} 
\section{Introduction}

\label{s:intro}

Following the seminal papers of \citet{RS} and \citet{KG}, the past decade has witnessed an explosion of interest in the design of optimal information disclosure policies, or Bayesian persuasion. However, while significant progress has been made in the special case where the sender's and receiver's utilities are linear in the unknown state (\citealt{GK-RS}, \citealt{KMZL}, \citealt{Kolotilin2017}, \citealt{DM})---so that a distribution over states is effectively summarized by its mean---general results beyond this simple case remain scarce.

This paper reports progress on persuasion with non-linear preferences. We consider a standard persuasion problem with one sender and one receiver, where the receiver's action and the state of the world are both one-dimensional, and the receiver's expected utility is single-peaked in his action for any belief about the state. In this model, the receiver's action is optimal iff his expected marginal utility from increasing his action equals zero: that is, iff the receiver's first-order condition holds. The validity of such a \emph{first-order approach} is key for tractability. We provide four types of results.

First, a signal (i.e., a disclosure policy or Blackwell experiment) is optimal iff the joint distribution over states $\theta$ and actions $a$ that it induces is supported on a compact set $\Gamma$, which we call the \emph{contact set}. The contact set is the set of pairs $(a,\theta)$ that satisfy the dual constraint of the sender's problem with equality. In economic terms, $(a,\theta) \in \Gamma$ iff it is optimal for the sender to induce action $a$ at state $\theta$, where the sender's ``value'' for inducing $a$ at $\theta$ is equal to the sum of the sender's utility when $a$ is taken at $\theta$ and the sender's shadow value of the impact of inducing $a$ at $\theta$ on the receiver's obedience constraint when he is recommended action $a$. This technical result is the foundation for our analysis.

Second, it is always without loss to focus on \emph{pairwise signals}, where each induced posterior distribution has at most binary support. Moreover, when the contact set is pairwise---meaning that for each action $a$ there are at most two states $\theta$ such that $(a,\theta) \in \Gamma$---then every optimal signal is pairwise. We show that the contact set is pairwise under a non-singularity condition on the sender's and receiver's utilities, which we call the \emph{twist condition}. This result implies that, for example, no-disclosure is generically suboptimal whenever the support of the prior contains three or more states. It also implies several previously-known conditions for all optimal signals to be pairwise (\citealt{RS}, \citealt{AC}, \citealt{ZZ}).

Third, we ask when it is optimal for the sender to induce higher actions at more or less extreme states. That is, if the sender pools states $\theta_1, \theta_2$ and also pools states $\theta'_1, \theta'_2$, for $\theta_1 < \theta'_1 < \theta'_2 < \theta_2$, should the induced action be higher at the pair $\{\theta_1, \theta_2\}$---in which case we say that disclosure is \emph{single-dipped}, as more extreme states induce higher actions---or at the pair $\{\theta'_1, \theta'_2\}$---in which case we say that disclosure is \emph{single-peaked}? This seemingly obscure question turns out to unify a large part of the prior literature on persuasion with non-linear preferences. For instance, \citeauthor{FH}'s \citeyear{FH}\textquotedblleft matching extremes\textquotedblright\ gerrymandering
solution, where a gerrymanderer creates electoral districts that pool extreme
supporters with similarly extreme opponents, and wins those districts with
the most extreme supporters and opponents with the highest probability, is an example of single-dipped disclosure. \citeauthor{GL}'s \citeyear{GL} non-monotone stress tests, where a regulator designs a test that pools the weakest banks that it wants to receive funding with the strongest banks, pools slightly less weak banks with slightly less strong banks, and so on, such that the weakest and strongest banks receive the highest funding, is another such example. On the other hand, \citeauthor{GS2018}'s \citeyear{GS2018} \textquotedblleft nested intervals\textquotedblright\ disclosure rule, where a designer pools favorable states with similarly
unfavorable states, and persuades the receiver to take her preferred action
with higher probability at more moderate states, is an example of single-peaked disclosure.

We provide general conditions for the optimality of single-dipped disclosure (and, similarly, single-peaked disclosure), which are all based on a very simple idea. If disclosure is not single-dipped, then there must exist a \emph{single-peaked triple}: a pair of pooled state $\{\theta_1, \theta_3\}$ and an intervening state $\theta_2 \in (\theta_1, \theta_3)$ such that the induced action at $\theta_2$ (say, action $a_2$) is greater than the induced action at $\{\theta_1, \theta_3\}$ (say, action $a_1$). Our conditions ensure that any single-peaked triple can be profitably perturbed in the direction of single-dippedness by shifting weight on $\theta_1$ and $\theta_3$ from $a_1$ to $a_2$, while shifting weight on $\theta_2$ in the opposite direction. The conditions are a bit complicated in the general model, but they are very simple in leading special cases. In particular, if the receiver's optimal action equals the posterior mean state (the \emph{linear receiver case}), then single-dipped disclosure is optimal if the sender's marginal utility is convex in the state; and if the sender's utility is state-independent (the \emph{state-independent sender case}), then single-dipped disclosure is optimal if the cross-partial of the receiver's utility is log-supermodular. These conditions generalize ones in the prior literature, such as \citeauthor{FH}'s gerrymandering model and \citeauthor{BJ}'s ``martingale optimal transport'' model. We also establish a notable theoretical implication of single-dippedness/-peakedness: under some regularity conditions, whenever a strict version of this property holds, the optimal outcome is unique.

Fourth, we provide conditions for the optimality of either full disclosure, where the state is always disclosed, or \emph{negative assortative disclosure}, where the states are paired in a negatively assortative manner, so that signal realizations can be ordered from least to most extreme, and only a single state in the middle is disclosed. Intuitively, full disclosure and negative assortative disclosure represent the extremes of maximum disclosure (disclosing all states) and minimal disclosure (disclosing only one state). There is a unique full disclosure outcome, but there are many negative assortative disclosure outcomes, depending on the weights on the states in each pair. We further characterize the optimal negative assortative disclosure outcome as the solution of a pair of ordinary differential equations, and show that in some cases these equations admit an explicit solution. Notably, negative assortative disclosure is optimal whenever our conditions for the optimality of (strict) single-dipped/-peaked disclosure are satisfied and in addition the sender would rather pool any pair of states (with some non-degenerate weights) rather than separating them.

Our model and results generalize a great deal of prior literature; we give references throughout the paper. Methodologically, we rely on linear programming duality and connections to optimal transport. We build on \citet{Kolotilin2017}, which introduces the first-order approach to persuasion and the corresponding strong duality result. \citet{DM} and \citet{DizdarKovac} prove strong duality with linear preferences under weak assumptions. The linear case is important but non-generic, and the structure of optimal signals is typically very different from that in our model. \citet{DK} prove strong duality for a general persuasion problem and study its implications for multidimensional persuasion, focusing on the multidimensional linear case. \citet{KX} prove duality results for an insider-trading problem that can be shown to be mathematically equivalent to a subcase of our linear receiver case, albeit with a two-dimensional state space. \citet{GLP} use duality to study the value of data. The most related strand of the optimal transport literature is that on martingale optimal transport (e.g., \citealt{BHP}, \citealt{GHT}), which we discuss in Section \ref{s:contact}. A few recent papers apply optimal transport to persuasion, but these works are not very related to ours either methodologically or substantively.\footnote{\citet{PRS} and \citet{LinLiu} consider limited sender commitment; \citet{ABS22} and \citet{SY} consider persuasion with multiple receivers; \citet{MCS} focus on the question of when optimal signals partition a multidimensional state space.}

\section{Environment}

\label{s:model}

\subsection{Model}

We consider a standard persuasion problem, where a sender chooses a signal to reveal information to a receiver, who then takes an action. The sender's utility $V(a,\theta)$ and the receiver's utility $U(a,\theta)$ depend on the receiver's action $a\in A:=[0,1]$ and the state of the world $\theta\in \ol\Theta:=[0,1]$. The sender and receiver share a common prior $\phi \in \Delta(\ol\Theta)$, whose support is denoted by $\Theta:=\supp (\phi)$.\footnote{Throughout, for any compact metric space $X$, $\Delta(X)$ denotes the set of Borel probability measures on $X$, endowed with the weak* topology. By Theorem 12.14 in \citet{aliprantis2006}, any $\eta\in \Delta (X)$ has a well-defined support $\supp(\eta)$, which is the smallest compact set of measure one.} An \emph{outcome} $\pi \in \Delta( A \times \Theta)$ is a joint distribution over actions and states.

We impose three standard assumptions on the utility functions.  First, utilities are smooth.

\begin{assumption}
\label{a:smooth} $V(a,\theta)$ and $U(a,\theta)$ are differentiable in $a$, and the marginal utilities 
\begin{equation*}
v(a,\theta) := \frac{\partial V(a,\theta)}{\partial a}\quad\text{and}%
\quad u(a,\theta) := \frac{\partial U(a,\theta)}{\partial a}
\end{equation*}
are continuous in $(a,\theta)$. Moreover, the receiver's marginal utility $u(a,\theta)$ is differentiable in $a$, and the partial derivative $u_{a}(a,\theta):=\partial u(a,\theta) / \partial a$ is continuous in $(a,\theta)$.%
\end{assumption}

Second, the receiver's expected utility is single-peaked in his action for any posterior
belief. This is our key economic assumption.

\begin{assumption}
\label{a:qc} $U(a,\theta)$ satisfies \emph{strict aggregate quasi-concavity} in $
a$: for all posteriors $\mu\in \Delta(\Theta)$, 
\begin{equation*}
\int u(a,\theta)\df\mu=0 \implies \int u_{a}(a,\theta)\df\mu <0.
\end{equation*}
\end{assumption}

\citeauthor{Quah2012} (\citeyear*{Quah2012}) and \citeauthor{CS} (\citeyear*{CS}) characterized a weak version of aggregate quasi-concavity in terms of primitive conditions on $u$. We provide an analogous characterization of strict aggregate quasi-concavity in Appendix \ref{a:ad}.  A sufficient condition for strict aggregate quasi-concavity is that $u_{a}(a,\theta)<0$ for all $(a,\theta)$, so that $U$ is strictly concave in $a$. This stronger condition is violated in some applications we consider; however, Appendix \ref{a:ad} shows that strict aggregate quasi-concavity is equivalent to strict concavity up to a normalization.

Third, the receiver's optimal action satisfies an interiority condition.

\begin{assumption}
\label{a:int} $\min_{\theta \in \ol\Theta
}u(0,\theta )=\max_{\theta \in \ol\Theta }u(1,\theta )=0$.
\end{assumption}

Assumptions \ref{a:smooth}--\ref{a:int} imply that for any posterior $\mu$, the receiver's optimal action \linebreak $a ^{\star}(\mu ):= \argmax_{a\in [0,1]} \int U(a,\theta)\df \mu$ is unique and is characterized by the first-order condition 
\begin{equation*}
\int u(a ^{\star }(\mu ),\theta )\df\mu=0.    \label{e:rho}
\end{equation*} 
Our assumptions thus allow a ``first-order approach'' to the persuasion problem, similar to the approach of \citet{Mirrlees75} and \citet{Holmstrom79} to the classical moral hazard problem.\footnote{The conditions under which the first-order approach is valid in the persuasion problem (Assumptions \ref{a:smooth}--\ref{a:int}) are much simpler than those in the classical moral hazard problem (e.g., \citealt{rogerson}, \citealt{jewitt}). The first-order approach to persuasion is due to \citet{Kolotilin2017}.}$^,$\footnote{The substance of Assumption \ref{a:int} is that for each $\theta$, there exists $a$ such that $u(a,\theta)=0$. Note that it can never be optimal for the receiver to take any $a$ such that $u(a,\theta)$ has a constant sign for all $\theta$. We can then remove all such $a$ from $A$ and renormalize $A$ to $[0,1]$, so that Assumption \ref{a:int} holds.}

A common interpretation of the receiver's action $a \in [0,1]$ is that the receiver has a private type and makes a binary choice---say, whether to accept or reject a proposal---and $a$ is the receiver's choice of a cutoff type below which he accepts. This interpretation is especially useful for some special cases of the model, as we see next.\footnote{To spell out this interpretation, let $g(t|\theta )$ be the conditional density of the receiver's type $t\in\lbrack 0,1]$ given the state $\theta \in \lbrack 0,1]$. The sender's and receiver's utilities from rejection are normalized to zero. The sender's and receiver's utilities from acceptance are functions $\tilde{v}(t,\theta )$ and $\tilde{u}(t,\theta)$, with $\tilde{u}(t,\theta )g(t|\theta )$ satisfying Assumption \ref{a:qc}. For $a \in [0,1]$ (interpreted as the cutoff such that the receiver accepts iff $t \leq a$), we recover our model with $V(a ,\theta )=\int_{0}^{a }\tilde{v}(t,\theta )g(t|\theta )\mathrm{d}t$ and $U(a ,\theta )=\int_{0}^{a }\tilde{u}(t,\theta)g(t|\theta )\mathrm{d}t$.}

\subsection{Special Cases}

We define some leading special cases of the model, which we return to periodically to illustrate our results.

(1) The \emph{linear case} (\citealt{KG}): $u(a,\theta)=\theta-a$ and $V(a,\theta)=V(a)$. That is, $a^\star (\mu)=\mathbb{E}_{\mu}[\theta]$ and $V$ is state-independent. This is the well-studied case where the sender's indirect utility from inducing posterior $\mu$ is $V(\E_\mu[\theta])$.

(2) The \emph{linear receiver case} (\citealt{BHP}): $u(a,\theta)=\theta-a$ but $V$ is arbitrary (e.g., possibly state-dependent). Here the receiver's preferences are as in the linear case, while the sender's preferences are general. %

(2a) The \emph{separable subcase} (\citealt{RS}): $V(a,\theta)=w(\theta)G(a)$ with $w> 0$, $G> 0$, and $G'>0$, where $G'$ is the derivative of $G$. An interpretation of this subcase is that the receiver has a private type $t$ with distribution $G$ and accepts a proposal iff $\mathbb{E}_{\mu}[\theta]\geq t$, and the sender's utility when the proposal is accepted is $w(\theta)$. \citeauthor{RS} focused on the sub-subcase with the uniform distribution $G(a)=a$.\footnote{\citeasnoun{RS} assume that the state $(\omega ,\theta )$ is two-dimensional, and the sender's and receiver's marginal utilities are ${v}(a,\theta ,\omega )=\omega $ and ${u}(a,\theta )=\theta -a$. They assume
that there are finitely many states $(\theta ,\omega )$, so generically the sender's utility can be written as $v(a,\theta )=w(\theta)$. \citet{Rayo}, \citet{NP}, and \citet{OR} consider the separable subcase where $\theta$ is continuous and $(\omega ,\theta )$ is supported on the graph of $\theta\rightarrow w(\theta)$, albeit \citeauthor{Rayo} and \citeauthor{OR} restrict attention to monotone partitions. \citet{T18}, \citet{KX}, and \citet{DK} allow more general distributions of $(\omega,\theta)\in\R^2$.}

(2b) The \emph{translation-invariant subcase} (\citealt{BJ}): $V(a,\theta)=P(a-\theta)$. An interpretation of this subcase is that the receiver ``values'' the proposal at $\mathbb{E}_{\mu}[\theta]$, and the sender's utility depends on the amount by which the proposal is ``over-valued,'' $\mathbb{E}_{\mu}[\theta]-\theta$. For example, a school may care about the extent to which its students are over- or under-placed. These preferences are similar to those in \citet{GL}'s model of stress tests, discussed in Appendix \ref{s:other}.

(3) The \emph{state-independent sender case} (\citealt{FH}): $V(a,\theta)=V(a)$ with $v(a)>0$, and $u$ satisfies $u_\theta (a,\theta) :=\partial u (a,\theta) / \partial \theta >0$ for all $(a,\theta)$ but is otherwise arbitrary. Here the sender's preferences are as in the linear case, and in addition the sender prefers higher actions and the receiver's utility is strictly supermodular. %

(3a) The \emph{translation-invariant subcase}: $u(a,\theta)=T(\theta-a)$, with $T(0)=0$ and $T'>0$, where $T'$ is the derivative of $T$. An example that fits this subcase is that the sender's utility when the proposal is accepted is $1$, and the proposal corresponds to the receiver undertaking a project that can either succeed or fail, where the receiver's payoff is $1-\kappa$ when the project succeeds and $-\kappa$ when it fails (and $0$ when it is not undertaken), with $\kappa\in (0,1)$. The difficulty of the project is $1-\theta$, the receiver's ability is $1-t$, the receiver's ``bad luck'' $\varepsilon$ has distribution $J$, and the project succeeds iff $1-\theta \leq 1-t -\varepsilon$, or equivalently $\varepsilon \leq \theta-t$. This example fits the current subcase with $V$ equal to the distribution of $t$ and $T(\theta -a)=J(\theta -a)-\kappa$.

(3b) The \emph{quantile sub-subcase}: $u(a,\theta)=\1 \{\theta\geq a\}-\kappa$, with $\kappa\in (0,1)$. This subcase corresponds to the previous example with $J(\theta -a)=\1 \{\theta\geq a\}$, so the project succeeds iff the receiver's ability exceeds the project's difficulty. While $u$ is now discontinuous, we can admit this subcase as a limit of the translation-invariant case. \citet{FH} focused on the translation-invariant case where $T$ is a continuous approximation of the step function $\1 \{\theta\geq a\}-1/2$.

The mapping between our model and \citet{BHP}, \citet{BJ}, or \citet{FH} is not entirely obvious. We explain the connection following Theorem \ref{t:SDPD}, which is the closest point of contact with their results.

\section{Duality}

\label{s:duality}

We set up the sender's problem, and then derive a duality theorem that forms the basis of our analysis.

The sender's (primal) problem is to choose an outcome $\pi \in \Delta( A \times \Theta)$ to
\begin{align}
\text{maximize}\quad & \int_{A\times \Theta }V(a,\theta )\df\pi (a,\theta ) 
\tag{P} \\
\text{subject to}\quad & \int_{A\times \widetilde{\Theta }}\df\pi (a,\theta
)=\int_{\widetilde{\Theta }}\df\phi (\theta ),\quad  \text{for all measurable }\widetilde{\Theta }\subset
\Theta ,  \tag{P1} \\
& -\int_{\tilde{A}\times \Theta }u(a,\theta )\df\pi (a,\theta )=0,\quad 
\text{for all measurable } \widetilde{A}\subset A.  \tag{P2}
\end{align}

(P1) is the \textit{feasibility} constraint that the marginal of $\pi$ on $\Theta$ equals the prior, $\phi$. (P2) is the \textit{obedience} constraint that the receiver's action is $a^\star(\mu)$ at each posterior $\mu$. An outcome $\pi$ that violates (P2) is inconsistent with optimal play by the receiver, as there exists $\tilde{A}\subset A$ such that the receiver's play is suboptimal conditional on the event $\{a \in \tilde{A}\}$. Conversely, for any outcome $\pi$ that satisfies (P1) and (P2), if the sender designs a mechanism that draws $(a,\theta)$ according to $\pi$ and recommends action $a$ to the receiver, it is optimal for the receiver to obey the recommendation. We therefore say that an outcome is \emph{implementable} iff it satisfies (P1) and (P2), and \emph{optimal} iff it solves (P).

We can compare (P) to the standard optimal transport (Monge-Kantorovich) problem (e.g., \citealt{villani}). In optimal transport, two marginal distributions are given (e.g., of men and women, or workers and firms), and the problem is to find an optimal joint distribution with the given marginals. In persuasion, the marginal distribution over states is given (by the prior $\phi$), and the problem is to find an optimal joint distribution with this marginal (so (P1) holds), where for each action the conditional distribution over states satisfies obedience (so (P2) holds).

The dual problem is to find a continuous function $p:\Theta \to \mathbb{R}$ and a bounded, measurable function $q:A \to \mathbb{R}$ to
\begin{align}
\text{minimize}\quad & \int_{\Theta }p(\theta )\df\phi (\theta )  \tag{D} \\
\text{subject to}\quad & p(\theta )-q(a)u(a,\theta )\geq V(a,\theta ),\quad
\text{for all }(a,\theta )\in A\times \Theta .  \tag{D1}
\end{align}

We say that $(p,q)$ is \emph{feasible} iff it satisfies (D1), and \emph{optimal} iff it solves (D).
The interpretation of the dual problem is that $p(\theta)$ is the shadow price of state $\theta$; $q(a)$ is the value of relaxing the obedience constraint at action $a$;
and the dual constraint (D1) says that $p(\theta)$ is no less than the sender's value from assigning state $\theta$ to any action $a$, where this value is the sum of the sender's utility, $V(a,\theta)$, and the product of $q(a)$ and the amount by which the obedience constraint at $a$ is relaxed when state $\theta$ is assigned to action $a$, $u(a,\theta)$.

A first result is that solutions to (P) and (D) exist, and there is no duality gap. Let $C(\Theta)$ denote the set of continuous functions on $\Theta$, and let $B(A)$ denote the set of bounded, measurable functions on $A$. We say that a price function $p \in C(\Theta)$ \emph{solves} (D) iff there exists $q \in B(A)$ such that $(p,q)$ is a solution to (D).

\begin{lemma} \label{l:dual}
Let Assumptions \ref{a:smooth}--\ref{a:int} hold.
\begin{enumerate}
\item There exists $\pi \in \Delta (A \times \Theta)$ that solves (P).
\item There exists $p \in C(\Theta)$ that solves (D).
\item The values of (P) and (D) are the same: for any solutions $\pi$ of (P) and $p$ of (D), we have 
\[\int_{A\times \Theta }V(a,\theta )\df\pi (a,\theta )=\int_{\Theta }p(\theta )\df\phi (\theta ).\]
\end{enumerate}
\end{lemma}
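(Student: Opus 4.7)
My plan is to handle the three claims in order, with part (3) dropping out of parts (1) and (2) together with a weak-duality inequality.

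\emph{Weak duality and primal existence.} Weak duality is immediate: for any primal-feasible $\pi$ and dual-feasible $(p,q)$, (D1) gives
\[
\int_{A\times\Theta} V(a,\theta)\,\df\pi \leq \int_{A\times\Theta} \bigl[p(\theta) - q(a)u(a,\theta)\bigr]\,\df\pi = \int_\Theta p\,\df\phi,
\]
where the equality uses (P1) on the first summand and (P2) on the second (rewriting (P2) as $\int f(a)u(a,\theta)\,\df\pi = 0$ for every $f \in B(A)$). For primal existence, the feasible set $\mathcal{F}$ is non-empty: full disclosure---the pushforward of $\phi$ under $\theta \mapsto (a^{\star}(\delta_\theta),\theta)$---lies in $\mathcal{F}$, because Assumption \ref{a:int} together with the first-order characterization of $a^{\star}$ forces $u(a^{\star}(\delta_\theta),\theta)=0$ for every $\theta$. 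Both (P1) and (P2) can be recast as the vanishing of $\pi$-integrals of continuous functions on the compact set $A\times\Theta$ (for (P2), replace test sets $\tilde A$ by continuous $f \in C(A)$), so $\mathcal{F}$ is weak*-closed in $\Delta(A\times\Theta)$ and hence weak*-compact. The objective is weak*-continuous, and Weierstrass delivers an optimum $\pi^{\star}$.

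\emph{Strong duality and dual existence.} This is the main obstacle. The route I would take is infinite-dimensional LP duality in the style of Anderson--Nash / Rockafellar, as adapted to persuasion by \citet{Kolotilin2017}. Define the perturbation value function
\[
\omega(\psi,\nu) = \sup\Bigl\{\int V\,\df\pi : \mathrm{marg}_\Theta \pi = \phi + \psi,\ \int_{\tilde A \times \Theta} u\,\df\pi = \nu(\tilde A)\ \forall \tilde A\Bigr\}
\]
on a neighborhood of $(0,0)$ in a suitable space of signed perturbations of the marginal and obedience constraints. Concavity of $\omega$ is immediate from mixing outcomes; the content is upper semi-continuity at the origin, which I would extract from the weak*-compactness of $\mathcal{F}$ together with continuity of the constraint maps. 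A Hahn--Banach separation of the hypograph of $\omega$ from the ray $\{(0,0,t): t > \omega(0,0)\}$ then produces a continuous linear functional on $C(\Theta)\times C(A)$ which, via Riesz representation, yields a dual-feasible pair $(p^{\star},q^{\star})$ with $p^{\star} \in C(\Theta)$ and $\int p^{\star}\,\df\phi = \omega(0,0)$ matching the primal value. Strong duality and dual existence are thereby established simultaneously.

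Part (3) then follows automatically: weak duality combined with $\int V\,\df\pi^{\star} = \int p^{\star}\,\df\phi$ forces the stated equality for every primal-optimal $\pi$ and every dual-optimal $p$. I expect the technical heart of the proof to lie in the upper semi-continuity of $\omega$ at the origin---showing that small feasibility perturbations can be absorbed by small modifications of $\pi$ without sacrificing too much sender value---and in verifying that the obedience multiplier can be represented by an element of $B(A)$ rather than a larger dual space, which in turn rests on Assumption \ref{a:smooth} and the compactness of $\mathcal{F}$.
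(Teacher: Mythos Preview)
Your proposal is a valid high-level roadmap, but it takes a genuinely different route from the paper's proof, and the comparison is instructive.

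The paper separates the three claims sharply. For strong duality (Point~3) it does \emph{not} use a perturbation value function; instead it tightens the dual to $q\in C(A)$, exhibits an \emph{interior} feasible point $(p,q)=(1+\overline V,\,0)$ for the tightened dual, and invokes the Slater-type no-gap theorem (Anderson--Nash, Corollary~3.14). For dual attainment (Point~2) it works constructively: restrict $q$ to a compact interval $I=[-C,C]$, show the induced family $F_I=\{p:\ p(\theta)=\sup_a[V(a,\theta)+q(a)u(a,\theta)],\ q\in B(A,I)\}$ is uniformly bounded and equicontinuous, apply Arzel\`a--Ascoli to extract a minimizer $p\in C(\Theta)$, and then prove (Lemma~\ref{l:D=D'}) that the box constraint on $q$ is non-binding by writing down explicit finite bounds $\underline C,\overline C$ in terms of $v,u,u_a$. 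Your Fenchel/Rockafellar separation approach is closer to the Banach--Alaoglu route of \citet{Kolotilin2017} that the paper explicitly flags as an alternative in a footnote.

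What each approach buys: your route is conceptually economical---one separation argument delivers no-gap and a multiplier simultaneously---but it leaves exactly the step you flag unresolved, namely why the separating functional on the perturbation space lands in $C(\Theta)\times B(A)$ rather than a larger bidual. The paper's route is longer but makes this completely explicit: $p\in C(\Theta)$ because it is a uniform limit via Arzel\`a--Ascoli, and $q\in B(A)$ because any measurable selection from the (upper hemicontinuous, compact-valued) correspondence $Q(a)$ suffices, with boundedness of $Q$ supplied by the concrete estimate in Lemma~\ref{l:D=D'}. If you pursue your route, that boundedness argument is precisely the missing ingredient you would need to import.
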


Lemma \ref{l:dual} is similar to Lemmas 1 and 2 of \citet{Kolotilin2017}. We provide a more detailed alternative proof that applies under slightly weaker assumptions.\footnote{The proof in \citet{Kolotilin2017} uses the Banach-Alaoglu theorem, as in linear programming references such as \citet{anderson1987}. Our proof uses the Arzela-Ascoli theorem, as in optimal transport references such as \citet{villani} and \citet{santambrogio}. A key step in the proof (Lemma \ref{l:D=D'})---which was left somewhat implicit in \citeauthor{Kolotilin2017}---is showing that $q$ may be assumed bounded in (D). Our proof also remains valid when $\Theta$ is an arbitrary compact metric space.  \citet{DK} prove a related duality result which allows the receiver's action to be multi-dimensional but requires Lipschitz continuity of the sender's indirect utility.}

\section{Contact Set} \label{s:contact}

In this section, we define a compact set $\Gamma \subset A \times \Theta$ with the properties that an implementable outcome $\pi$ is optimal iff $\supp(\pi)\subset \Gamma$, and the first-order condition of the dual problem (equation \eqref{e:FOC}) holds at any pair $(a,\theta)$ in a full-measure subset $\Gamma^\star \subset \Gamma$. Following the optimal transport literature (e.g., Chapter 3 in \citealt{abs}), we refer to this set $\Gamma$ as the \emph{contact set}. Readers who wish to skip the technical details can just familiarize themselves with Theorem \ref{t:contact} and equation \eqref{e:FOC} before moving on to the next section. In particular, the specification of the multipliers $q(a)$ and the distinction between the sets $\Gamma$ and $\Gamma^\star$ can be elided on a first reading.

We henceforth assume that the receiver prefers higher actions at higher states.

\begin{assumption}
\label{a:sc}
$u(a,\theta)$ satisfies \emph{strict single-crossing} in $\theta$: for all $a$ and $\theta<\theta'$, \[u(a,\theta)\geq 0 \implies u(a,\theta')> 0.\]
\end{assumption}

Together with Assumptions \ref{a:smooth}--\ref{a:int}, Assumption \ref{a:sc} ensures that for each action $a$ there is a unique state $\theta^\star(a)$ such that $u(a,\theta^\star(a))=0$, and that $\theta^\star(a)$ is a strictly increasing, continuous function from $A$ onto $\ol \Theta$.%

Let $p$ be the optimal price function (which we will see is unique under Assumptions \ref{a:smooth}--\ref{a:sc}), and let $I$ be any sufficiently large compact interval (e.g., as defined in Lemma \ref{l:D=D'}). 
Let 
\[Q(a):=\left\{r \in I : \; p(\theta) \geq V(a,\theta) + ru(a,\theta)\text{ for all }\theta \in \Theta \right\}, \quad \text{for all } a \in A.\] 

This is the set of possible values for $q(a)\in I$ that satisfy (D1) for all $\theta$, given the optimal price function $p$. Note that for any measurable selection $q$ from $Q$, the pair $(p,q)$ is a solution to (D).

By part (3) of Lemma \ref{l:dual}, together with (P1) and (P2), any optimal $\pi$ and $(p,q)$ satisfy
\[
\int_{A\times \Theta }(p(\theta)-V(a,\theta) -q(a)u(a,\theta))\df\pi (a,\theta ) =0.
\]
By (D1), the integrand is non-negative, and hence any optimal $\pi$ is concentrated on the set $\Gamma$ of points $(a,\theta)$ that satisfy (D1) with equality. We call any such set $\Gamma$ \emph{a contact set}. Note that $\Gamma$ depends on the selection $q$ from $Q$.

Our first main result (Theorem \ref{t:contact}) shows that $q$ given by
\[
q(a):=
\begin{cases}
-\frac{v(a,\theta^\star(a))}{u_a(a,\theta^\star(a))}, &\text{$\theta^\star (a)\in \Theta$ and $p(\theta^\star(a))=V(a,\theta^\star(a))$,}\\
\frac{\min Q(a) +\max  Q(a)}{2}, &\text{otherwise},
\end{cases}
\]
is a measurable selection from $Q$, and the associated contact set $\Gamma$ given by
\[
\Gamma: = \{(a,\theta)\in A\times \Theta:p(\theta)=V(a,\theta)+q(a)u(a,\theta)\}
\]
has the desired properties. We call this set $\Gamma$ \emph{the} contact set, to distinguish it from contact sets that result from different choices of $q$.  We explain the role of our chosen multipliers $q$ after stating our result.

Some notation is in order. First, for each $a$, the $a$-section of $\Gamma$ is defined as
\[
\Gamma_a:=\{\theta\in \Theta:(a,\theta)\in \Gamma\}.
\]
Intuitively, $\Gamma_a$ is the set of states that it is optimal to pool together to induce action $a$. Next, the projection of $\Gamma$ on $A$ is defined as
\[
A_{\Gamma}:=\{a\in A:(a,\theta)\in \Gamma\text{ for some }\theta\in \Theta\}.
\]
Intuitively, $A_{\Gamma}$ is the set of actions that it is ever optimal to induce. Finally, the set $\Gamma^\star\subset \Gamma$ is defined by letting its $a$-section be given by
\[
\Gamma^\star_a:=
\begin{cases}
\{\theta^\star(a)\}, & a \in A_{\Gamma} \text{ and } \theta^\star (a)\in \{\min \Gamma_a,\max \Gamma_a\},\\
\Gamma_a, &\text{otherwise},
\end{cases}
\; \; \; \; \text{for all } a\in A.
\]
As we will explain, $\Gamma^\star$ is a subset of $\Gamma$ that removes ``redundant'' states from each $a$-section.
\begin{theorem}\label{t:contact} Let Assumptions \ref{a:smooth}--\ref{a:sc} hold.
\begin{enumerate}
	\item The set $\Gamma$ is compact and satisfies $\min \Gamma_a\leq \theta^\star (a)\leq \max \Gamma_a$ for all $a\in A_\Gamma$. Moreover, $(p,q)$ solves (D). Consequently, an implementable outcome $\pi$ solves (P) iff $\supp(\pi)\subset \Gamma$.
	\item The set $\Gamma^\star$ is a Borel subset of $\Gamma$, and
\begin{equation}\label{e:FOC}
v(a,\theta)+ q(a)u_a(a,\theta)+q'(a)u(a,\theta)=0, \quad \text{for all $(a,\theta)\in \Gamma^\star$},
\end{equation}
with the convention that $q'(a)\cdot 0=0$, even if $q$ is not differentiable at $a$. 	
Moreover, an implementable outcome $\pi$ solves (P) iff there exists a conditional probability $\pi_a$ of $\pi$ given $a$ such that $\supp (\pi_a)\subset \Gamma_a^\star$ and $\int_\Theta u(a,\theta) \df \pi_a(\theta)=0$ for all $a\in \supp (\alpha_\pi)$, where $\alpha_\pi$ denotes the marginal distribution of $\pi$ on $A$. 
\end{enumerate}
\end{theorem}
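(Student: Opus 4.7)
The plan has three main movements: establish the properties of $Q$ and verify that the prescribed $q$ makes $(p,q)$ solve (D); prove $\Gamma$ is compact and straddles $\theta^\star$, giving the primal--dual equivalence of part~1; and then refine to $\Gamma^\star$ using obedience, yielding the FOC and the sharper characterization of part~2.

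First I would observe that under Assumption~\ref{a:sc}, (D1) reads $r\,u(a,\theta)\leq p(\theta)-V(a,\theta)$ with $u(a,\cdot)$ single-crossing at $\theta^\star(a)$, so $Q(a)$ is a compact interval $[\underline q(a),\overline q(a)]$, non-empty by Lemma~\ref{l:dual}. Its endpoints are an $\inf$ and $\sup$ of continuous ratios over the compact set $\Theta$, hence $\underline q,\overline q$ are continuous in $a$. I would then verify that $q$ is a measurable selection from $Q$: case~1 applies on the closed set $\{a:\theta^\star(a)\in\Theta,\,p(\theta^\star(a))=V(a,\theta^\star(a))\}$, on which a Taylor expansion of (D1) at $\theta^\star(a)$ shows $-v/u_a\in Q(a)$; on the complement, the midpoint of $[\underline q(a),\overline q(a)]$ is continuous and patently in $Q(a)$. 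Hence $(p,q)$ satisfies (D1) and, by Lemma~\ref{l:dual}, solves (D).

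Next I would show $\Gamma$ is compact. For $(a_n,\theta_n)\to(a,\theta)$ in $\Gamma$ with $u(a,\theta)\neq 0$, one can solve $q(a_n)=(p(\theta_n)-V(a_n,\theta_n))/u(a_n,\theta_n)$; whenever $\Gamma_{a_n}$ contains a $\theta$ with $u(a_n,\theta)>0$ (resp.\ $<0$), $q(a_n)$ must equal $\overline q(a_n)$ (resp.\ $\underline q(a_n)$), and continuity of the envelopes together with the definition of $q$ forces $q(a)=\lim q(a_n)$, putting $(a,\theta)\in\Gamma$. When $u(a,\theta)=0$, the limit reads $p(\theta^\star(a))=V(a,\theta^\star(a))$, so case~1 applies at $a$ and $(a,\theta^\star(a))\in\Gamma$ automatically. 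For the straddle property, suppose $a\in A_\Gamma$ with $\theta^\star(a)\notin[\min\Gamma_a,\max\Gamma_a]$: in case~1, $\theta^\star(a)\in\Gamma_a$, a contradiction; in case~2, $\Gamma_a\neq\emptyset$ forces $\underline q(a)=\overline q(a)$ (otherwise the midpoint is strictly interior to $Q(a)$ and all constraints are strict), and with $I$ chosen so its endpoints never bind, $\underline q(a)=\overline q(a)$ must be attained at $\theta$'s on both sides of $\theta^\star(a)$. Part~1's characterization of optimal $\pi$ then follows from part~(3) of Lemma~\ref{l:dual}: non-negativity of $p-V-qu$ combined with zero integral forces $\supp(\pi)\subset\Gamma$; conversely, implementable $\pi$ with $\supp(\pi)\subset\Gamma$ attains the dual value and is optimal.

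For part~2, $\Gamma^\star$ is Borel because $\min\Gamma_a$ and $\max\Gamma_a$ are Borel functions of $a$ (lower and upper semicontinuous respectively, as envelopes of continuous functions). For \eqref{e:FOC}, fix $(a,\theta)\in\Gamma^\star$: the nonnegative map $a'\mapsto p(\theta)-V(a',\theta)-q(a')u(a',\theta)$ attains its minimum $0$ at $a'=a$. When $u(a,\theta)\neq 0$, $q$ is differentiable at $a$ by the implicit function theorem applied to the defining equation of the attained envelope, and the vanishing derivative is exactly \eqref{e:FOC}. When $\theta=\theta^\star(a)\in\Gamma^\star_a$, case~1 applies and $q(a)=-v/u_a$ combined with $u(a,\theta)=0$ gives \eqref{e:FOC} under the stated convention. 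Finally, disintegrating an optimal $\pi$ yields conditionals $\pi_a$ with $\supp(\pi_a)\subset\Gamma_a$ and $\int u(a,\theta)\,\df\pi_a=0$ for $\alpha_\pi$-a.e.\ $a$; when $\theta^\star(a)$ is an endpoint of $\Gamma_a$, so $\Gamma_a$ lies entirely on one side of $\theta^\star(a)$, the constant sign of $u(a,\cdot)$ there combined with the obedience integral forces $\pi_a=\delta_{\theta^\star(a)}$, refining the support to $\Gamma^\star_a$. The most delicate step is the closedness of $\Gamma$: because $q$ may jump between the two cases, one needs careful use of the continuity of $\underline q,\overline q,p,\theta^\star,v,u_a$ to control $q(a_n)$ along convergent sequences; verifying that the case~1 value $-v/u_a$ actually lies in $Q(a)$, via the Taylor expansion near $\theta^\star(a)$, is the other main technical hurdle.
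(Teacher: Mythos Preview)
Your overall architecture matches the paper's, but two of the technical steps you flag as ``hurdles'' are not resolved by the methods you propose.

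First, the claim that a Taylor expansion of (D1) near $\theta^\star(a)$ shows $-v(a,\theta^\star(a))/u_a(a,\theta^\star(a))\in Q(a)$ is not valid. Membership in $Q(a)$ is a \emph{global} inequality over all $\theta\in\Theta$, whereas a Taylor expansion at $\theta^\star(a)$ only constrains behavior locally. The paper's argument (Lemma~\ref{l:inQ}) is quite different: for any fixed $\theta$, it perturbs the \emph{action} to $a_\varepsilon$ solving $(1-\varepsilon)u(a_\varepsilon,\theta^\star(a))+\varepsilon u(a_\varepsilon,\theta)=0$, applies (D1) with the midpoint selection $\mathring q(a_\varepsilon)$ at both $\theta^\star(a)$ and $\theta$, combines the inequalities to eliminate $\mathring q(a_\varepsilon)$, and sends $\varepsilon\to 0$. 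This uses the known equality $p(\theta^\star(a))=V(a,\theta^\star(a))$ as an anchor and the implicit-function derivative of $a_\varepsilon$ to produce the $-v/u_a$ coefficient in the limit.

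Second, your compactness argument rests on the assertion that $\underline q,\overline q$ are continuous. An infimum/supremum of continuous functions is only semicontinuous, and here the domain of the ratio also varies with $a$; the paper establishes only upper hemicontinuity of $Q$. The difficult case for closedness of $\Gamma$ is $a_n\in\Psi_2$ (midpoint case), $a\in\Psi_1$ (case-1), with $\tilde\theta_n\to\theta^\star(a)$: one must show $\mathring q(a_n)\to -v(a,\theta^\star(a))/u_a(a,\theta^\star(a))$. The paper does this by first proving the FOC \eqref{e:FOC} (via a Dini-derivative argument, not the implicit function theorem) for actions with $\min\Gamma_a<\theta^\star(a)<\max\Gamma_a$; since each $a_n\in\Psi_2$ has such straddling points $\tilde\theta_n<\theta^\star(a_n)<\theta_n$, the FOC at both yields a closed-form expression for $\mathring q(a_n)$ in terms of $v,u,u_a$ evaluated at $(a_n,\tilde\theta_n)$ and $(a_n,\theta_n)$, which converges to the case-1 value by continuity of $v,u,u_a$. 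Your proposal reverses the logical order (compactness before FOC) and loses access to this mechanism.
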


Equation \eqref{e:FOC} is the first-order condition of the dual problem: by (D1), the sender chooses an action $a$ to induce at state $\theta$ so as to maximize $V(a,\theta)+q(a)u(a,\theta)$, and taking the FOC with respect to $a$ yields \eqref{e:FOC}. Thus, Theorem \ref{t:contact} says that there is a compact contact set $\Gamma$ such that an implementable outcome is optimal iff it is supported on $\Gamma$; and there is a measure-1 subset $\Gamma^\star \subset \Gamma$ such that the sender's FOC holds on $\Gamma^\star$. Theorem \ref{t:contact} is our key tool for characterizing optimal outcomes: by showing that points $(a,\theta)$ violate \eqref{e:FOC}, we can exclude them from $\Gamma ^\star$, and hence from the support of any optimal outcome. 

Taking the expectation of \eqref{e:FOC} with respect to an optimal conditional probability $\pi_a$ yields a useful formula for $q(a)$:
\begin{equation} \label{eqn:q} 
q(a)=-\frac{\E_{\pi_{a}}[v(a,\theta)] }{\E_{\pi_{a}}[u_a(a,\theta)]}, \quad \text{for all } a \in \supp (\alpha_{\pi}).
\end{equation}
This says that $q(a)$ equals the product of the sender's expected marginal utility at $a$ and the rate at which $a$ increases as the obedience constraint is relaxed, where the latter term equals $-1/\E_{\pi_{a}}[u_a(a,\theta)]$ by the implicit function theorem applied to the obedience constraint. Note that we defined $q$ so that \eqref{eqn:q} holds for $a$ where $\pi_{a}=\delta_{\theta^\star (a)}$ (i.e., for actions induced at disclosed states); here we see that this equation also holds for $a$ where $\pi_{a}$ is non-degenerate (i.e., for actions induced at pooled states).

The technical aspects of Theorem \ref{t:contact}---the particular choice of $q$ and the distinction between $\Gamma$ and $\Gamma^\star$---are specified so that $\Gamma$ and $\Gamma^\star$ have the desired properties of compactness and satisfaction of \eqref{e:FOC}, respectively. Intuitively, by selecting $q(a)$ from the interior of $Q(a)$ (when $p(\theta^\star(a))>V(a,\theta^\star(a))$ and $Q(a)$ is multivalued), we ensure that $A_\Gamma$ does not contain any actions $a$ that are ``redundant,'' in the sense that $\theta^\star (a)\notin [\min \Gamma_a,\max \Gamma_a]$---for such actions, $\int _\Theta u(a,\theta)\df \pi_a (\theta)\neq 0$ for all $\pi_a\in \Delta (\Gamma_a)$, so these actions are not induced by any optimal outcome. In turn, $\Gamma^\star$ is obtained from $\Gamma$ by removing redundant states from each $a$-section---if $\theta^\star (a)\in \{\min \Gamma_a,\max \Gamma_a\}$, then $\pi_a(\theta^\star(a))=1$ for any $\pi_a\in \Delta(\Gamma_a)$ such that $\int_\Theta u(a,\theta)\df \pi_a(\theta)=0$, so any states $\theta \neq \theta^\star(a)$ can be removed from $\Gamma_a$.\footnote{Thus, $\Gamma\setminus\Gamma^\star$ is the \emph{polar} subset of $\Gamma$ with respect to (P2), in the sense that a set $\Gamma^0\subset \Gamma$ satisfies $\pi(\Gamma^0)=0$ for all $\pi\in \Delta (\Gamma)$ satisfying (P2) iff $\Gamma^0\subset \Gamma\setminus \Gamma^\star$.} We provide examples illustrating these and other technical points in Appendix \ref{s:examples}.

\begin{remark}
Under Assumptions \ref{a:smooth}--\ref{a:sc}, there is a unique solution $p$ to (D). We give a proof of this fact following the proof of Theorem \ref{t:contact}. As shown by the examples in Appendix \ref{s:examples}, while the price function $p$ is unique, there can be multiple functions $q$ such that $(p,q)$ is a solution to (D). %
\end{remark}

Lemma \ref{l:dual} and Theorem \ref{t:contact} can be compared to results in the literature on \emph{martingale optimal transport (MOT)}. The MOT problem is to find an optimal joint distribution of two variables (say, $a$ and $\theta$) with given marginals, subject to the martingale constraint $\mathbb{E}_{\pi_a}[\theta]=a$ for all $a$. This problem coincides with our linear receiver case, but with an exogenously fixed distribution of the receiver's action. Motivated by problems in mathematical finance, \citet{BHP} (see also \citealt{BNT}) introduce MOT and prove that the primal and dual problems have the same value; however, they also show that their dual problem may not have a solution, unlike in our model with endogenous actions. Results in MOT also do not establish compactness of the contact set, which holds in our model as well as in standard optimal transport. Thus, MOT is related to our linear receiver case, but the endogenous action distribution apparently makes our model more tractable.

\section{Pairwise Disclosure and the Twist Condition}

\label{s:pairwise}

The contact set $\Gamma$ introduced above describes the set of pairs of actions $a$ and states $\theta$ that it is optimal for the sender to match together, in the sense that $(a,\theta)$ is contained in the support of an optimal outcome. At the same time, the $a$-section $\Gamma_{a}$ describes the set of states $\theta$ that it is optimal for the sender to pool together to induce action $a$. We say that the contact set is \emph{pairwise} if $|\Gamma^{\star}_{a}|\leq 2$ for all $a$. When the contact set is pairwise, it is strictly suboptimal for the sender to ever pool more than two states. In this section, we show that there always exist optimal signals that never pool more than two states, and we give conditions under which the contact set is pairwise, so that every optimal signal has this property.

A \textit{signal} $\tau\in \Delta (\Delta (\Theta))$ is a distribution over posterior beliefs $\mu \in \Delta (\Theta)$ such that the average posterior equals the prior: $\int\mu \mathrm{d} \tau=\phi$ (\citealt{AM}, \citealt{KG}). Uniqueness of the receiver's optimal action implies that any signal $\tau$ induces a unique outcome $\pi _{\tau }$ through the map $\mu \mapsto a^{\star}\left( \mu \right)$.\footnote{Conversely, any implementable outcome $\pi$ is induced by a signal $\tau ^{\pi }$ through the map $a\mapsto \pi_{a}$.}  
A signal $\tau $ is \emph{pairwise} if it induces posterior beliefs with at most binary support: $|\supp (\mu) |\leq 2$ for each $\mu \in \supp(\tau )$.%

For example, with a uniform prior $\phi$, for any cutoff $\hat\theta \in [0,1]$ the signal that reveals states below the cutoff and pools each pair of states $\theta$ and $1+\hat\theta-\theta$ for $\theta \in [\hat\theta,(1+\hat\theta)/2]$ to induce posterior $\mu =\delta _{\theta}/2+\delta _{1+\hat\theta-\theta}/2$ is pairwise. The special case where $\hat\theta =1$ is full-disclosure, which is also pairwise. In contrast, no-disclosure, where $\tau(\phi)=1$, is not pairwise.

If the receiver's utility is not quasi-concave, pairwise signals may be suboptimal. For example, suppose the sender rules three castles, one of which is undefended. The state $\theta$---the identity of the undefended castle---is uniformly distributed. Suppose the receiver can attack any two castles, and payoffs are $(-1,+1)$ for the sender and receiver, respectively, if the receiver attacks the undefended castle, and are $(+1,-1)$ otherwise. Then any pairwise signal narrows the set of possibly undefended castles to at most two, so the receiver always wins. But if the sender discloses nothing, the receiver wins only with probability $2/3$.\footnote{Another example of a persuasion problem where pairwise signals are suboptimal is the price-discrimination problem of \citeasnoun{BBM}. Note that the receiver's utility is not quasi-concave in the three-castles or price-discrimination examples.}

Our second main result is that pairwise signals are without loss under Assumptions \ref{a:smooth}--\ref{a:int}.\footnote{Our proof of this result does not require Assumption \ref{a:sc}, and also remains valid when $\Theta$ is an arbitrary compact metric space.} Moreover, equation \eqref{e:FOC} implies that if it is optimal to induce the same action $a$ at three states $\theta_1$, $\theta_2$, and $\theta_3$, then the vector $(v(a,\theta_1),v(a,\theta_2),v(a,\theta_3))$ must be a linear combination of the vectors $(u(a,\theta_1),u(a,\theta_2),u(a,\theta_3))$ and $(u_a(a,\theta_1),u_a(a,\theta_2),u_a(a,\theta_3))$. This observation gives a condition---which we call the \emph{twist condition}---under which pooling more than two states is suboptimal, so that every optimal signal is pairwise.\footnote{We use the notation $|\cdot |$ for both the cardinality of a set and the determinant of a matrix.}

\textbf{Twist Condition} \emph{For all $a$ and $\theta_1<\theta_2<\theta_3$ such that $\theta_1<\theta^\star(a)<\theta_3$, we have}
\begin{equation}\label{e:sing}
|S|:=
\begin{vmatrix}
v(a,\theta_1) & v(a,\theta_2) & v(a,\theta_3)\\
u(a,\theta_1) &u(a,\theta_2) &u(a,\theta_3)\\
u_a(a,\theta_1) &u_a(a,\theta_2) &u_a(a,\theta_3)
\end{vmatrix}
\neq 0.%
\end{equation}

We will apply this condition extensively in Section \ref{s:assortative}.

\begin{theorem}\label{p:S}
\label{t:pairwise} Let Assumptions \ref{a:smooth}--\ref{a:sc} hold.
\begin{enumerate}
\item For any signal $\tau$, there exists a pairwise signal $\hat{\tau}$ such that $\pi_{\hat{\tau}}=\pi_{\tau}$. 
\item If the twist condition holds, then $|\Gamma_a^\star|\leq 2$ for all $a$, and hence every optimal signal is pairwise.
\end{enumerate}
\end{theorem}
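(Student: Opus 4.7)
For Part (1), my plan is to disintegrate the induced outcome $\pi_\tau$ along its action marginal and decompose each conditional into a mixture of at-most-two-point obedient posteriors. Writing $\pi_\tau = \int_A (\delta_a\otimes \pi_a)\,d\alpha(a)$ with $\alpha = \alpha_{\pi_\tau}$, each $\pi_a$ satisfies $\int u(a,\theta)\,d\pi_a = 0$. Consider the weak*-compact convex set
\[
\mathcal{M}_a = \bigl\{\nu \in \Delta(\supp(\pi_a)) : \int u(a,\theta)\,d\nu(\theta) = 0\bigr\}.
\]
A standard Carath\'eodory-type dimension count (probability constraint plus one linear equality) shows that any $\nu\in\mathcal{M}_a$ with $|\supp(\nu)|\geq 3$ admits a nontrivial perturbation within $\mathcal{M}_a$, so the extreme points of $\mathcal{M}_a$ are supported on at most two atoms. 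Choquet's theorem then yields a representing measure $\lambda_a$ on $\mathrm{ext}(\mathcal{M}_a)$ with $\pi_a = \int \nu\,d\lambda_a(\nu)$. After choosing $a \mapsto \lambda_a$ measurably via a measurable-selection theorem, define $\hat\tau$ to be the law of $\mu$ obtained by drawing $a\sim \alpha$ and then $\mu \sim \lambda_a$. By construction each such $\mu$ has at most two-point support, $\int u(a,\theta)\,d\mu = 0$ combined with Assumption \ref{a:qc} gives $a^\star(\mu) = a$, and averaging the conditionals yields $\pi_{\hat\tau} = \pi_\tau$.

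For Part (2), I argue by contradiction: suppose the twist condition holds but $|\Gamma_a^\star|\geq 3$ for some $a$. By the definition of $\Gamma_a^\star$, this forces $\Gamma_a^\star = \Gamma_a$ with $\theta^\star(a)\in(\min\Gamma_a, \max\Gamma_a)$. Pick $\theta_1 = \min\Gamma_a^\star$, $\theta_3 = \max\Gamma_a^\star$, and any $\theta_2 \in \Gamma_a^\star \cap (\theta_1,\theta_3)$; then $\theta_1 < \theta^\star(a) < \theta_3$ and Assumption \ref{a:sc} gives $u(a,\theta_1)<0<u(a,\theta_3)$. By Theorem \ref{t:contact}(2), the first-order condition \eqref{e:FOC} holds at each $(a,\theta_i)$:
\[
v(a,\theta_i) + q(a)\,u_a(a,\theta_i) + q'(a)\,u(a,\theta_i) = 0, \qquad i=1,2,3.
\]
Because $u(a,\theta_1)\neq 0$, the convention $q'(a)\cdot 0=0$ stated with \eqref{e:FOC} forces $q'(a)$ to be a well-defined number (its value is already pinned down by the FOC at $\theta_1$). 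Reading the three equations as a linear relation among the rows of the matrix $S$ in \eqref{e:sing} shows that the first row equals $-q(a)$ times the second plus $-q'(a)$ times the third, so $|S|=0$, contradicting the twist condition. Thus $|\Gamma_a^\star|\leq 2$ for all $a$.

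To deduce that every optimal signal is pairwise: for any optimal outcome $\pi$ with disintegration $\{\hat\pi_a\}$, Theorem \ref{t:contact}(1) gives $\supp(\hat\pi_a)\subset \Gamma_a$ for $\alpha$-a.e.\ $a$. If $\Gamma_a^\star = \Gamma_a$, then $|\supp(\hat\pi_a)|\leq|\Gamma_a^\star|\leq 2$. Otherwise $\theta^\star(a)\in\{\min\Gamma_a,\max\Gamma_a\}$, in which case Assumption \ref{a:sc} makes $u(a,\cdot)$ strictly single-signed on $\Gamma_a\setminus\{\theta^\star(a)\}$ and obedience forces $\hat\pi_a=\delta_{\theta^\star(a)}$. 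In either case $|\supp(\hat\pi_a)|\leq 2$; the single obedient two- or one-point posterior with the given support is then pinned down uniquely by Assumption \ref{a:sc}, so any signal inducing $\pi$ must place all its mass on such posteriors and is therefore pairwise. The main obstacle is the measurable-selection step in Part (1); everything else follows directly from Theorem \ref{t:contact} and the definitions.
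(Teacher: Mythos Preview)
Your proposal is correct and follows essentially the same route as the paper: Part (2) is virtually identical (contradict the twist condition via the FOC \eqref{e:FOC} at three points of $\Gamma_a^\star$), and Part (1) uses the same Choquet/extreme-point/measurable-selection machinery, the only cosmetic difference being that the paper decomposes each posterior $\mu\in\supp(\tau)$ within the moment set $P(\mu)=\{\eta:a^\star(\eta)=a^\star(\mu)\}$, whereas you first pass to the action-disintegration $\pi_a$ and decompose that. Your final paragraph, deducing that every optimal \emph{signal} (not just outcome) is pairwise from $|\Gamma_a^\star|\le 2$, correctly spells out what the paper leaves implicit.
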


The intuition for part (1) is that for any posterior, there exists a hyperplane passing through it such that all posteriors on the hyperplane induce the same action, and the extreme points of the hyperplane in the simplex have at most binary support. Thus, any posterior that puts weight on more than two states can be split into posteriors with at most binary support without affecting the induced distribution on $A\times \Theta$. Figure 1 illustrates this argument for a posterior with weight on three states.

\begin{figure}[t]
\centering
\begin{tikzpicture}[scale=0.7]
\node[circle,fill=black,inner sep=0pt,minimum size=3pt,label=below:{$\theta_1$}] (a) at (0,0) {};
\node[circle,fill=black,inner sep=0pt,minimum size=3pt,label=above:{$\theta_2$}] (b) at (4,6.928) {};
\node[circle,fill=black,inner sep=0pt,minimum size=3pt,label=below:{$\theta_3$}] (c) at (8,0) {};
\draw (a) -- (b);
\draw (b) -- (c);
\draw (c) -- (a);
\node[circle,fill=black,inner sep=0pt,minimum size=3pt,label=left:{$\mu'$}] (d) at (2.664,4.616) {};
\node[circle,fill=black,inner sep=0pt,minimum size=3pt,label=below:{$\mu''$}] (e) at (3.2,0) {};
\draw (d) -- (e);
\node[circle,fill=black,inner sep=0pt,minimum size=3pt] (f) at (2.936,2.312) {};
\node (g) at (4,3) {$\leftarrow a^{\star}(\mu)$};
\node(h) at (2.6,2) {$\mu$};
\end{tikzpicture}
\caption{Pairwise Signals are Without Loss}
\caption*{\emph{Notes:} The optimal action at any posterior on the line between $\mu'$ and $\mu''$ equals $a^{\star}(\mu)$, so splitting $\mu$ into $\mu'$ and $\mu''$ eliminates a non-binary-support posterior without changing the outcome.}
\label{f:simplex}
\end{figure}

To get a sense of the proof of part (1), note that, for a given posterior $\mu$, another posterior $\mu'$ induces the same action as $\mu$ iff the action $a^\star(\mu)$ satisfies the first-order condition $\int u(a^\star(\mu),\theta)\df \mu'=0$. Since the first-order condition is a moment condition, the set of posteriors that induce action $a^\star(\mu)$ is the set of probability distributions that satisfy one moment condition. By Richter-Rogosinsky's theorem, the extreme points of this set have at most binary support. Hence, by Choquet's theorem, $\mu$ can be written as an expectation, with respect to some measure $\lambda_{\mu}\in \Delta(\Delta(\Theta))$, of distributions with at most binary support that all induce action $a^\star(\mu)$. Finally, by the measurable selection theorem, the mapping from $\mu$ to $\lambda_{\mu}$ can be taken to be measurable, and can thus be used to define a pairwise signal that induces the same distribution on $A\times \Theta$ as any given signal $\tau$.\footnote{This argument indicates how part (1) generalizes when actions are multi-dimensional: if $A$ is a compact, convex subset of $\mathbb{R}^N$ and the receiver's utility is strictly concave, then the receiver's optimal action is characterized by $N$ first-order conditions, so it is without loss to consider signals that induce posteriors which are supported on at most $N+1$ states.}

Part (2) follows easily from Theorem \ref{t:contact} (in particular equation \eqref{e:FOC}), but it also has a simple intuition based on pairwise signals. Consider a posterior distribution $\mu $ with $\supp(\mu )=\left\{ \theta_1 ,\theta_2,\theta_3 \right\} $. By part (1), we can split $\mu $ into posterior distributions $\mu_1$ and $\mu _2$ with at most binary support that both induce action $a ^{\ast }(\mu)$. For example, suppose that $\supp (\mu_1)=\{\theta_1 ,\theta_2\}$ and $\supp(\mu_2)=\{\theta_1 ,\theta_3\}$. Consider a perturbation that moves probability mass $\mathrm{d}p$ on $\theta_1 $ from $ \mu _1$ to $\mu _2$. This perturbation induces non-zero marginal changes in the receiver's action at $\mu _1$ and $ \mu _2$. Under the twist condition, these changes have a non-zero marginal effect on the sender's expected utility by the implicit function theorem. Therefore, either this perturbation or the reverse perturbation, where $\mathrm{d}p$ is replaced with $-\mathrm{d}p$, is strictly profitable.

Prior results by \citet{RS}, \citet{AC}, and \citet{ZZ} also give conditions under which all optimal signals are pairwise. Theorem \ref{p:S} easily implies these earlier results.\footnote{Proposition 4 in \citet{AC} states that if $u(a,\theta)=\theta - a$ and there do not exist $\zeta\leq 0$ and $\iota\in \R $ such that $v(a,\theta_i)=\zeta \theta_i+\iota$ for $i=1,2,3$, then it is not optimal to induce action $a$ at states $\theta_1$, $\theta_2$, and $\theta_3$. This result is too strong as stated, and it is not correct unless $\zeta$ is also allowed to be positive. Theorem \ref{p:S} implies this corrected version of \citeauthor{AC}'s result.} Note that the twist condition always fails in the linear case (i.e., $|S|=0$). Hence, in the linear case, Theorem \ref{p:S} never rules out pooling multiple states, and indeed pooling multiple states is often optimal (e.g., \citealt{KMZL}).\footnote{Of course, Theorem \ref{t:pairwise} shows that even when pooling multiple states is optimal, there also exists an optimal pairwise signal, where the ``multi-state pool'' is split into pairs. Conversely, if multiple posteriors all induce the same action, they can be pooled without affecting the outcome.} 

An immediate corollary of Theorem \ref{t:contact} is that no disclosure is generically suboptimal when there are at least three states, because for a fixed action $a$ a generic vector $(v(a,\theta))_{\theta \in \Theta}$ with $|\Theta|\geq 3$ coordinates cannot be expressed as a linear combination of two vectors $(u(a,\theta))_{\theta \in \Theta}$ and $(u_a(a,\theta))_{\theta \in \Theta}$, as is required by \eqref{e:FOC}.

\begin{corollary}\label{c:nodisc}
Let Assumptions \ref{a:smooth}--\ref{a:sc} hold.
For any $\phi$ with $|\supp(\phi)|\geq 3$ and any $u$, no disclosure is suboptimal for generic $v.$
\end{corollary}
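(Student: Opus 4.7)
The plan is to show that no-disclosure optimality imposes a linear constraint on $v$ that defines a proper closed, and hence nowhere dense, subspace of the space of admissible sender marginal utilities.

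First, I would apply Theorem \ref{t:contact}(2). Let $a^\circ = a^\star(\phi)$, which is determined by $u$ and $\phi$ alone. The no-disclosure outcome $\pi$ has $\alpha_\pi = \delta_{a^\circ}$ and conditional $\pi_{a^\circ} = \phi$, and obedience holds by the definition of $a^\circ$. Theorem \ref{t:contact}(2) then says $\pi$ is optimal if and only if $\supp(\phi) \subset \Gamma^\star_{a^\circ}$, in which case the FOC \eqref{e:FOC} holds at every pair $(a^\circ, \theta)$ with $\theta \in \supp(\phi)$.

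Second, I would derive a finite-dimensional linear constraint on $v$. Fix any three distinct $\theta_1, \theta_2, \theta_3 \in \supp(\phi)$. Since $|\supp(\phi)| \geq 3$ while $u(a^\circ, \cdot)$ vanishes at only one point by Assumption \ref{a:sc}, at least two of these states satisfy $u(a^\circ, \theta_i) \neq 0$, so the FOC \eqref{e:FOC} holds at them without invoking the convention; in particular, $q$ is differentiable at $a^\circ$. The three FOCs then yield
\[
v(a^\circ, \theta_i) = -q(a^\circ)\, u_a(a^\circ, \theta_i) - q'(a^\circ)\, u(a^\circ, \theta_i), \qquad i = 1, 2, 3,
\]
so the vector $(v(a^\circ, \theta_i))_{i=1,2,3} \in \R^3$ lies in the (at most two-dimensional) linear span $W$ of the two fixed vectors $(u_a(a^\circ, \theta_i))_i$ and $(u(a^\circ, \theta_i))_i$, which is a proper subspace of $\R^3$.

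Third, I would conclude by genericity. The evaluation map $E(v) = (v(a^\circ, \theta_1), v(a^\circ, \theta_2), v(a^\circ, \theta_3))$ is a continuous linear surjection from the space of admissible marginal utilities (endowed, say, with the sup-norm topology) onto $\R^3$, so $E^{-1}(W)$ is a proper closed linear subspace of this space, and is therefore nowhere dense. Since the set of $v$ for which no disclosure is optimal is contained in $E^{-1}(W)$, it is non-generic in any standard sense (Baire category, prevalence, etc.). The main step requiring care is justifying that the FOC is an honest linear equation with real coefficients, which would fail if $q$ were not differentiable at $a^\circ$; this is handled by combining $|\supp(\phi)| \geq 3$ with Assumption \ref{a:sc} to produce at least two points in $\supp(\phi) \subset \Gamma^\star_{a^\circ}$ at which $u \neq 0$, so that the FOCs there pin down $q'(a^\circ) \in \R$.
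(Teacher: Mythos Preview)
Your proposal is correct and follows essentially the same approach as the paper's proof: invoke Theorem~\ref{t:contact}(2) to obtain that no-disclosure optimality forces $\supp(\phi)\subset\Gamma^\star_{a^\circ}$, deduce from the FOC that $v(a^\circ,\cdot)$ lies in the at-most-two-dimensional span of $u(a^\circ,\cdot)$ and $u_a(a^\circ,\cdot)$, and conclude that this is a proper (hence non-generic) subspace since $|\Theta|\ge 3$. Your treatment of why $q'(a^\circ)$ is a genuine real number and your explicit pullback via the evaluation map are a touch more careful than the paper's presentation, but the substance is the same.
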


Given \citeauthor{KG}'s concavification result, Corollary \ref{c:nodisc} implies that, generically, the sender's indirect utility is not concave in the posterior when there are more than two states. Also, observe that Corollary \ref{c:nodisc} allows the case where $u$ and $v$ always have the opposite sign, so the sender's and receiver's ordinal preferences are diametrically opposed. Hence, even in this case no-disclosure is generically suboptimal.

While Theorem \ref{t:contact} shows that the contact set always characterizes optimal outcomes---in that an implementable outcome $\pi$ is optimal iff $\supp(\pi)\in \Gamma$---when the contact set is pairwise it also directly determines the optimal conditional probability $\pi_a = \rho_a\delta_{t_1(a)}+(1-\rho_a)\delta_{t_2(a)}$, where $\Gamma^\star_a = \{t_1(a),t_2(a)\}$ with $t_1(a)\leq t_2(a)$, and $\rho_a\in [0,1)$ satisfies the obedience condition $\rho_a u(a,t_1(a))+(1-\rho_a)u(a,t_2(a))=0$. Thus, when $\Gamma^\star$ is pairwise all optimal outcomes have the same pairwise conditional probability $\pi_a$, and may differ only in the marginal distribution of actions $\alpha_\pi$.

\section{Single-Dipped and Single-Peaked Disclosure} \label{s:assortative}

The next two sections investigate optimal disclosure patterns: which actions $a$ should the sender induce at which states $\theta$?
In this section, we ask when it is optimal for the sender to induce higher actions at more or less extreme states: that is, when optimal outcomes are ``single-dipped'' or ``single-peaked.''\footnote{Mathematically, positive/negative assortativity correspond to monotonicity in the FOSD order, while single-dippedness/-peakedness correspond to monotonicity in a variability order that depends on $u$; when $u(a,\theta)=\theta-a$, this variability order is the usual convex order. %
}

Formally, a triple $(a_1,\theta_1)$, $(a_2,\theta_2)$, $(a_1,\theta_3)$ is \emph{single-dipped} (\emph{-peaked}) if $a_1\geq(\leq)a_2$ and $\theta_1<\theta_2<\theta_3$; similarly, such a triple is \emph{strictly single-dipped} (\emph{-peaked}) if $a_1 >(<)a_2$. A set $\Gamma^\dagger \subset A\times \Theta$ is \emph{single-dipped} (\emph{-peaked}) if it does not contain a strictly single-peaked (-dipped) triple of points; similarly, such a set is \emph{strictly single-dipped} (\emph{-peaked}) if it does not contain a single-peaked (-dipped) triple. Finally, an outcome $\pi$ is (\emph{strictly}) \emph{single-dipped} if it is concentrated on a (strictly) single-dipped set,\footnote{That is, there exists a Borel (strictly) single-dipped set $\Gamma^\dagger$ such that $\pi(\Gamma^\dagger)=1$.} and similarly for single-peakedness. In particular, by Theorem \ref{t:contact}, if $\Gamma$ or $\Gamma^\star$ is single-dipped/-peaked, then so is every optimal outcome. Most of our results for single-dippedness/peakedness are symmetric, in which case we provide proofs only for the single-dipped case.

\subsection{Variational Theorem}

Characterizing when optimal signals are single-dipped/-peaked involves some additional conditions on the sender's and receiver's preferences. The simplest of these is that the sender prefers higher actions.
\begin{assumption} 
\label{a:v>0}
$v(a,\theta)>0$ for all $(a,\theta)$.
\end{assumption}

We now introduce a matrix $R$, which is a non-local analog of the matrix $S$ from the twist condition. For any $a_1,a_2$ and $\theta_1<\theta_2<\theta_3$, we define 
\begin{equation*}\label{e:A}
R:=
\begin{pmatrix}
V(a_2,\theta_1)-V(a_1,\theta_1) &-(V(a_2,\theta_2)-V(a_1,\theta_2)) &V(a_2,\theta_3)-V(a_1,\theta_3)\\
-u(a_1,\theta_1) &u(a_1,\theta_2) &-u(a_1,\theta_3) \\
u(a_2,\theta_1) &-u(a_2,\theta_2) &u(a_2,\theta_3)
\end{pmatrix}.
\end{equation*}

The next result is our main tool for determining when optimal outcomes are single-dipped/-peaked.

\begin{theorem}\label{l:ssdd} 
Let Assumptions \ref{a:smooth}--\ref{a:v>0} hold.
Suppose that for all $\theta_1<\theta_2<\theta_3$ and all $a_2>(<)a_1$ such that $\theta_1\leq \theta^\star(a_1)\leq \theta_3$, there exists a vector $y\geq 0$ such that $Ry\geq 0$ and $Ry\neq 0$.
Then $\Gamma$ is single-dipped (-peaked), and hence so is any optimal outcome.
\end{theorem}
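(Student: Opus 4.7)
The plan is to prove the contrapositive via duality. Suppose $\Gamma$ is not single-dipped, so it contains a strictly single-peaked triple $(a_1,\theta_1),(a_2,\theta_2),(a_1,\theta_3)$ with $a_1<a_2$ and $\theta_1<\theta_2<\theta_3$. First I would ensure the auxiliary condition $\theta_1\leq\theta^\star(a_1)\leq\theta_3$ holds: if it fails, replace $\theta_1,\theta_3$ by $\min\Gamma_{a_1}$ and $\max\Gamma_{a_1}$. These exist because $\Gamma$ is compact (Theorem~\ref{t:contact}(1)), preserve strict single-peakedness since $\min\Gamma_{a_1}\leq\theta_1<\theta_2<\theta_3\leq\max\Gamma_{a_1}$, and satisfy $\min\Gamma_{a_1}\leq\theta^\star(a_1)\leq\max\Gamma_{a_1}$ by the same part of Theorem~\ref{t:contact}.

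Next, I would extract a linear-algebraic inequality from the dual. The three triple points saturate (D1), while the three swapped points $(a_2,\theta_1),(a_1,\theta_2),(a_2,\theta_3)$ satisfy (D1) as weak inequalities. Pairwise subtraction cancels the prices $p(\theta_i)$, and direct rearrangement shows that the resulting three inequalities are exactly the components of $R^{T}z\leq 0$ for $z=(1,q(a_1),q(a_2))^{T}$. Independently, I would show $z>0$: since $a_1,a_2\in A_\Gamma$, equation~\eqref{eqn:q} expresses $q(a_i)$ as $-\mathbb{E}_{\pi_{a_i}}[v(a_i,\theta)]/\mathbb{E}_{\pi_{a_i}}[u_a(a_i,\theta)]$ for an optimal conditional $\pi_{a_i}$; the numerator is positive by Assumption~\ref{a:v>0}, and the denominator is negative by Assumption~\ref{a:qc} applied to the obedient $\pi_{a_i}$.

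Applying the hypothesis to this (possibly modified) triple produces $y\geq 0$ with $Ry\geq 0$ and $Ry\neq 0$. Then $(Ry)^{T}z>0$ because $z$ is strictly positive componentwise and $Ry$ has at least one strictly positive coordinate, whereas $y^{T}(R^{T}z)\leq 0$ since $y\geq 0$ and $R^{T}z\leq 0$. But $(Ry)^{T}z=y^{T}(R^{T}z)$, a contradiction. Hence $\Gamma$ contains no strictly single-peaked triple, so $\Gamma$ is single-dipped, and by Theorem~\ref{t:contact}(1) every optimal outcome, being supported on $\Gamma$, is also single-dipped. The single-peaked case is symmetric. The main non-routine step is verifying strict positivity of $q$ on $A_\Gamma$ (needed for $z>0$), which couples Assumption~\ref{a:v>0} with strict aggregate quasi-concavity; the sign-bookkeeping that identifies the dual inequalities with $R^{T}z\leq 0$ is mechanical.
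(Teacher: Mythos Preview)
Your proof is correct and follows essentially the same approach as the paper's. The only stylistic difference is that the paper packages the final contradiction as a formal theorem of the alternative (Lemma~\ref{l:alter}: either $xR\leq 0$ for some $x>0$, or $Ry\geq 0$, $Ry\neq 0$ for some $y\geq 0$), whereas you derive the contradiction directly via $(Ry)^{T}z=y^{T}(R^{T}z)$; but this is exactly the easy direction of that alternative, so the arguments coincide.
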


\begin{figure}[t]
\centering
\begin{tikzpicture}
\node (a1) at (0,0) {};
\node[label=right:{$a_1$}] (a2) at (10,0) {};
\node[circle,fill=black,inner sep=0pt,minimum size=3pt,label=below:{$\theta_1$}] (a3) at (1,0) {};
\node[circle,fill=black,inner sep=0pt,minimum size=3pt,label=below:{$\theta_2$}] (a4) at (5,0) {};
\node[circle,fill=black,inner sep=0pt,minimum size=3pt,label=below:{$\theta_3$}] (a5) at (9,0) {};
\draw (a1) -- (a2);
\node (b1) at (0,2) {};
\node[label=right:{$a_2$}] (b2) at (10,2) {};
\draw (b1) -- (b2);
\node[circle,fill=black,inner sep=0pt,minimum size=3pt] (b3) at (1,2) {};
\node[circle,fill=black,inner sep=0pt,minimum size=3pt] (b4) at (5,2) {};
\node[circle,fill=black,inner sep=0pt,minimum size=3pt] (b5) at (9,2) {};
\draw[->] (a3) -- (b3);
\node (c3) at (1.3,1) {$y_1$};
\draw[->] (b4) -- (a4);
\node (c3) at (5.3,1) {$y_2$};
\draw[->] (a5) -- (b5);
\node (c3) at (9.3,1) {$y_3$};
\end{tikzpicture}
\caption{A Profitable Perturbation of a Non-Single-Dipped Outcome}
\caption*{\emph{Notes:} The figure shows a perturbation of an outcome that shifts
 weights $y_1$ and $y_3$ on $\theta_1$ and $\theta_3$ from $a_1 $ to $a_2$, and shifts weight $y_2$ on $\theta_2$ from $a_2$ to $a_1$. This perturbation is profitable if it increases the receiver's expected marginal utility at $a_1$ and $a_2$ and also increases the sender's expected utility for fixed $a_1$ and $a_2$.}
\label{f:sdd}
\end{figure}

The economic idea behind Theorem \ref{l:ssdd} is very simple. The condition for single-dippedness says that an outcome that assigns positive probability to a strictly single-peaked triple $(a_1,\theta_1)$, $(a_2,\theta_2)$, $(a_1,\theta_3)$ can be improved by re-allocating mass $y_1$ on $\theta_1$ and mass $y_3$ on $\theta_3$ from $a_1$ to $a_2$, while re-allocating mass $y_2$ on $\theta_2$ from $a_2$ to $a_1$. See Figure \ref{f:sdd} for an illustration. Indeed, this re-allocation is profitable for the sender, because the sender's expected utility increases when $a_1$ and $a_2$ are held fixed (i.e., the first coordinate of $Ry$ is non-negative); the receiver's marginal utility conditional on being recommended $a_1$ increases (i.e., the second coordinate of $Ry$ is non-negative), which increases the receiver's action, and hence increases the sender's expected utility by Assumption \ref{a:v>0}; and the receiver's marginal utility conditional on being recommended $a_2$ also increases (i.e., the third coordinate of $Ry$ is non-negative), which again increases the sender's expected utility. Moreover, at least one of these improvements is strict (i.e., $Ry \neq 0$). The same logic applies for an outcome whose support contains a strictly single-peaked triple (even if this triple occurs with $0$ probability), except now mass must be re-allocated from small intervals around $\theta_1$, $\theta_2$, and $\theta_3$.

We also make use of the following stability result, which implies that if the conditions of Theorem \ref{l:ssdd} hold only weakly but can be approximated by strict conditions, then there exists an optimal single-dipped/-peaked outcome $\pi$ (however, in this case there could also be other optimal outcomes that are not single-dipped/-peaked). 
For example, this result implies that in the linear case there is an optimal single-dipped outcome as well as an optimal single-peaked outcome.\footnote{Section 4.3 in \citet{KMS} and Theorem 1 in \citet{ABSY20} establish a result in the linear case that is somewhat related to this observation. They show that there exists an optimal signal that partitions the state space into singletons and intervals, with each singleton state being disclosed and each interval of states being pooled into one or two distinct posterior means. This result easily implies that there exist both an optimal single-dipped outcome and an optimal single-peaked outcome (see, e.g., Corollary 2 in \citealt{ABSY20}).}$^,$\footnote{The proof of Lemma \ref{l:stab} is complicated by the fact that the Hausdorff limit of single-dipped sets is not necessarily single-dipped. This point is illustrated in Example \ref{ex:stab} in Appendix \ref{s:examples}, which also shows that the lemma's conclusion cannot be strengthened to the claim that there exists an optimal outcome that is \emph{supported} on a single-dipped/-peaked set (rather than merely being concentrated on such a set).}

\begin{lemma}\label{l:stab}
Let Assumptions \ref{a:smooth}--\ref{a:sc} hold.
Suppose that $v^n$ is a sequence of continuous functions converging uniformly to $v$, and suppose that the corresponding contact sets $\Gamma^n$ are single-dipped (-peaked). Then there exists a single-dipped (-peaked) optimal outcome.
\end{lemma}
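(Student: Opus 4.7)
The plan is the direct method of the calculus of variations. For each $n$, let $\pi^n$ be an optimal outcome for the problem with sender's marginal utility $v^n$ (existence by Lemma~\ref{l:dual}). Part~(1) of Theorem~\ref{t:contact} gives $\supp(\pi^n)\subset \Gamma^n$, so by hypothesis $\pi^n$ is single-dipped; and by part~(1) of Theorem~\ref{p:S}, $\pi^n = \pi_{\tau^n}$ for some pairwise signal $\tau^n$. Weak*-compactness of $\Delta(A\times \Theta)$ and $\Delta(\Delta(\Theta))$ lets me pass to a subsequence with $\pi^{n_k}\to \pi^\infty$ and $\tau^{n_k}\to \tau^\infty$. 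Implementability of $\pi^\infty$ and the identity $\pi^\infty = \pi_{\tau^\infty}$ follow from the weak*-continuity of $\tau\mapsto \pi_\tau$, itself a consequence of continuity of $a^\star$ on $\Delta(\Theta)$ (implicit function theorem under Assumption~\ref{a:qc}). For optimality, normalize $V^n(0,\theta)=V(0,\theta)$ (immaterial by (P1)); then $v^n\to v$ uniformly gives $V^n\to V$ uniformly, so $\int V^{n_k}\,\df \pi^{n_k}\to \int V\,\df \pi^\infty$, and passing to the limit in $\int V^{n_k}\,\df \pi^{n_k}\ge \int V^{n_k}\,\df \pi$ for any implementable $\pi$ gives optimality of $\pi^\infty$.

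The main obstacle is showing $\pi^\infty$ is single-dipped. The fragility is that a strictly single-peaked triple $(a_1,\theta_1),(a_2,\theta_2),(a_1,\theta_3)$ demands the exact coincidence of the first and third actions, and this can be destroyed in weak* limits when approximating triples in $\supp(\pi^{n_k})$ have distinct nearby actions. The remedy is to exploit the pairwise disintegration $\pi^\infty = \int_A \delta_a\otimes \pi^\infty_a\,\df\alpha^\infty(a)$, valid because the set of pairwise posteriors is weak*-closed (being the continuous image of $[0,1]\times \Theta\times \Theta$), so $|\supp(\pi^\infty_a)|\le 2$ for $\alpha^\infty$-a.e.~$a$. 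Suppose for contradiction $\pi^\infty$ is not single-dipped. A case analysis then forces a positive-$\alpha^\infty$-measure set $\mathcal{B}$ of ``pooling'' actions $a$ with $\supp(\pi^\infty_a)=\{\theta^-_a,\theta^+_a\}$, $\theta^-_a<\theta^+_a$, and some witness $a'>a$ in $\supp(\alpha^\infty)$ satisfying $\supp(\pi^\infty_{a'})\cap (\theta^-_a,\theta^+_a)\neq \emptyset$; otherwise, the set $\{(a,\theta^\pm_a):a\in \mathcal{B}\}$ would be $\pi^\infty$-null, and removing it from $\supp(\pi^\infty)$ would yield a Borel single-dipped full-measure subset. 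Fix any $a\in \mathcal{B}$ with a witness $a'$, and let $\mu = p\delta_{\theta^-}+(1-p)\delta_{\theta^+}\in \supp(\tau^\infty)$ be a posterior inducing action $a$, and $\nu\in \supp(\tau^\infty)$ a posterior inducing action $a'$ with some $\theta' \in \supp(\nu)\cap (\theta^-,\theta^+)$.

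Weak* convergence $\tau^{n_k}\to \tau^\infty$ yields approximating posteriors $\mu^{n_k}\in \supp(\tau^{n_k})$ with $\mu^{n_k}\to \mu$; pairwise-ness together with $\theta^- < \theta^+$ force $\mu^{n_k}=p_k\delta_{\theta^-_k}+(1-p_k)\delta_{\theta^+_k}$ with $\theta^\pm_k\to \theta^\pm$. Similarly there exist $\nu^{n_k}\in \supp(\tau^{n_k})$ with $\nu^{n_k}\to \nu$ and $\theta'_k\in \supp(\nu^{n_k})$ with $\theta'_k\to \theta'$. Continuity of $a^\star$ then gives $a^\star(\mu^{n_k})\to a$ and $a^\star(\nu^{n_k})\to a^\star(\nu)>a$; hence for large $k$, the triple $\bigl(a^\star(\mu^{n_k}),\theta^-_k\bigr), \bigl(a^\star(\nu^{n_k}), \theta'_k\bigr), \bigl(a^\star(\mu^{n_k}), \theta^+_k\bigr)$ is strictly single-peaked and lies in $\supp(\pi^{n_k})\subset \Gamma^{n_k}$, contradicting single-dippedness of $\Gamma^{n_k}$. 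Thus $\pi^\infty$ is a single-dipped optimal outcome, as required.
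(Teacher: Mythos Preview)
Your approach via pairwise signals is a genuinely different and elegant idea: by approximating the \emph{posterior} $\mu$ rather than individual points of $\supp(\pi^\infty)$, you automatically get matching actions $a^\star(\mu^{n_k})$ at the two outer states of the approximating triple, sidestepping the main obstacle you correctly identify. The paper does not do this; it works with the Hausdorff limit $\ol\Gamma$ of the contact sets, identifies the set $\Phi$ of actions at which strictly single-peaked triples in $\ol\Gamma$ can occur, proves $\Phi$ is countable (via a delicate argument using $\min\Gamma_a^n\le\theta^\star(a)\le\max\Gamma_a^n$), and then shows each such action is $\alpha_\pi$-null through measure estimates based on (P2).

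However, your argument has a genuine gap at the disintegration step. The claim that $|\supp(\pi^\infty_a)|\le 2$ for $\alpha^\infty$-a.e.\ $a$ does \emph{not} follow from the pairwise-ness of $\tau^\infty$: distinct binary posteriors in $\supp(\tau^\infty)$ can induce the same action $a$, in which case $\pi^\infty_a$ is their mixture and can have support of any size. (Concretely, in the simple receiver case two different pairs with the same mean both contribute to $\pi^\infty_a$.) This breaks the subsequent step where you pick ``the'' posterior $\mu=p\delta_{\theta^-}+(1-p)\delta_{\theta^+}\in\supp(\tau^\infty)$: there is no guarantee that $\theta^-_a,\theta^+_a\in\supp(\pi^\infty_a)$ come from a \emph{single} posterior in $\supp(\tau^\infty)$. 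The ``case analysis'' you invoke is thus resting on a false premise.

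The approach can be repaired, but it requires real additional work. Define $\Sigma=\{(a^\star(\mu),\theta):\mu\in\supp(\tau^\infty),\ \theta\in\supp(\mu)\}$ and argue directly that $\Sigma$ is single-dipped. Your approximation argument handles the case where the two outer states of a putative single-peaked triple in $\Sigma$ come from the \emph{same} posterior $\mu$. When they come from different posteriors $\mu_1,\mu_2$ (both inducing $a$), you must use the straddling property $\min\supp(\mu_i)\le\theta^\star(a)\le\max\supp(\mu_i)$ (from obedience and Assumption~\ref{a:sc}) to show the middle state $\theta'$ already lies strictly between the support endpoints of one of $\mu_1,\mu_2$, reducing to the single-posterior case. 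Finally, $\Sigma$ is only analytic, not obviously Borel, so you need inner regularity to extract a Borel single-dipped set of full $\pi^\infty$-measure. None of this is in your write-up.
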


\subsection{Sufficient Conditions}
We now impose an additional assumption requiring some extra smoothness (cf. Assumption \ref{a:smooth}) and, more substantively, strengthening strict single-crossing of $u$ in $\theta$ (Assumption \ref{a:sc}) to strict monotonicity.
\begin{assumption}\label{a:u_s}
$v(a,\theta)$, $u(a,\theta)$, and $u_a(a,\theta)$ have partial derivatives in $\theta$, denoted by $v_{\theta}(a,\theta)$, $u_{\theta}(a,\theta)$, and $u_{a\theta}(a,\theta)$. In addition, $u_{\theta}(a,\theta)>0$ for all $(a,\theta)$.
\end{assumption}

The following result gives general sufficient conditions for optimality of single-dipped/-peaked disclosure. As we will see, these conditions cover several prior models, as well as some new applications.

\begin{theorem}\label{t:SDPD}
Let Assumptions \ref{a:smooth}--\ref{a:u_s} hold. If
$u_{a\theta}(a,\theta)/u_{\theta}(a,\theta)$ and  $v_\theta (a_2,\theta)/u_\theta(a_1,\theta)$  are increasing (decreasing) in $\theta$ for all $a$ and $a_2\geq (\leq) a_1$,
then there exists an optimal single-dipped (-peaked) outcome.

If in addition either $u_{a\theta}(a,\theta)/u_{\theta}(a,\theta)$ or $v_\theta (a_2,\theta)/u_\theta(a_1,\theta)$  is strictly increasing (decreasing) in $\theta$ for all $a$ and $a_2\geq (\leq) a_1$, then $\Gamma$ is single-dipped (-peaked) and $\Gamma^\star$ is strictly single-dipped (-peaked), and hence every optimal outcome is strictly single-dipped (-peaked).
\end{theorem}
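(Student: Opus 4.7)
I would derive the strict part directly from Theorem~\ref{l:ssdd}, combined with a Rolle-type argument applied to the dual first-order condition \eqref{e:FOC} to rule out three pooled states in $\Gamma^\star$, and the weak part from the strict part via the stability Lemma~\ref{l:stab} by perturbing $v$ so that the strict hypothesis is met along the approximating sequence. The conceptual key is the change of variable $\tilde\theta := u(a_1,\theta)$, which linearises the obedience constraint at $a_1$ and translates the hypotheses on $u_{a\theta}/u_\theta$ and $v_\theta/u_\theta$ into ordinary convexity.

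\textbf{Constructing $y$ for Theorem~\ref{l:ssdd}.} For a strictly single-peaked configuration $a_2>a_1$, $\theta_1<\theta_2<\theta_3$ with $\theta_1\leq\theta^\star(a_1)\leq\theta_3$, set $\tilde\theta_i:=u(a_1,\theta_i)$, which is strictly ordered by $u_\theta>0$ and satisfies $\tilde\theta_1\leq 0\leq\tilde\theta_3$. I propose
\[
y_1 := \tilde\theta_3 - \tilde\theta_2,\qquad y_2 := \tilde\theta_3 - \tilde\theta_1,\qquad y_3 := \tilde\theta_2 - \tilde\theta_1,
\]
all non-negative, and verify $Ry\geq 0$ row by row. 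The second row of $R y$ expands to $-\tilde\theta_1 y_1 + \tilde\theta_2 y_2 - \tilde\theta_3 y_3 = 0$ by direct cancellation. Writing $\tilde u$ for $u(a_2,\theta)$ viewed as a function of $\tilde\theta$, one has $\tilde u'(\tilde\theta) = u_\theta(a_2,\theta)/u_\theta(a_1,\theta)$; the first hypothesis is equivalent to log-supermodularity of $u_\theta$ in $(a,\theta)$, which makes this ratio increasing in $\theta$ for $a_2\geq a_1$, so $\tilde u$ is convex. With $\lambda:=y_1/y_2$, so that $\tilde\theta_2=\lambda\tilde\theta_1+(1-\lambda)\tilde\theta_3$,
\[
R_3 y = y_2\bigl[\lambda\tilde u(\tilde\theta_1) + (1-\lambda)\tilde u(\tilde\theta_3) - \tilde u(\tilde\theta_2)\bigr] \geq 0.
\]
Letting $\tilde F$ denote $V(a_2,\theta)-V(a_1,\theta)$ as a function of $\tilde\theta$, one has $\tilde F'(\tilde\theta)=\int_{a_1}^{a_2}[v_\theta(a,\theta)/u_\theta(a_1,\theta)]\,\df a$, which is increasing in $\tilde\theta$ by applying the second hypothesis to each $a\in[a_1,a_2]$; hence $\tilde F$ is convex and analogously $R_1 y\geq 0$. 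If either hypothesis is strict, the corresponding $\tilde u$ or $\tilde F$ is strictly convex at $\tilde\theta_2$, yielding $R_3 y>0$ or $R_1 y>0$, so $Ry\neq 0$. Theorem~\ref{l:ssdd} then gives that $\Gamma$ is single-dipped.

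\textbf{Ruling out triple pools in $\Gamma^\star$.} Suppose towards a contradiction that three states $\theta_1<\theta_2<\theta_3$ lie in $\Gamma^\star_a$. At most one of them equals $\theta^\star(a)$, so \eqref{e:FOC} forces $q$ to be differentiable at $a$, and the differentiable (in $\theta$, by Assumption~\ref{a:u_s}) function $G(\theta):=v(a,\theta)+q(a)u_a(a,\theta)+q'(a)u(a,\theta)$ vanishes at all three points. Rolle's theorem then yields $\xi_1\in(\theta_1,\theta_2)$ and $\xi_2\in(\theta_2,\theta_3)$ with $G'(\xi_j)=0$; dividing by $u_\theta>0$,
\[
\frac{v_\theta(a,\xi_j)}{u_\theta(a,\xi_j)} + q(a)\cdot\frac{u_{a\theta}(a,\xi_j)}{u_\theta(a,\xi_j)} = -q'(a),\qquad j=1,2.
\]
By Assumptions~\ref{a:qc} and \ref{a:v>0}, equation \eqref{eqn:q} forces $q(a)>0$. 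Under the strict version of either hypothesis, the left-hand side is strictly larger at $\xi_2$ than at $\xi_1$, contradicting the common value $-q'(a)$. Hence $|\Gamma^\star_a|\leq 2$, and combined with the previous paragraph this yields that $\Gamma^\star$ is strictly single-dipped.

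\textbf{Weak existence and main obstacle.} For the first (weak) statement, I would set $v^n(a,\theta):=v(a,\theta)+n^{-1}h(\theta)$ for a suitable $h$ chosen so that $v^n_\theta(a_2,\cdot)/u_\theta(a_1,\cdot)$ is \emph{strictly} increasing in $\theta$ for all $a_1\leq a_2$ (which is possible because $u_\theta$ is strictly positive and continuous on the compact domain). Then $v^n\to v$ uniformly; by the strict case just established, the contact set $\Gamma^n$ of the perturbed problem is single-dipped; and Lemma~\ref{l:stab} delivers an optimal single-dipped outcome for $v$. The single-peaked case is symmetric throughout: replace $a_2>a_1$ by $a_2<a_1$, and convexity by concavity. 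The main obstacle is the choice of $y$ in the second paragraph: once $\tilde\theta=u(a_1,\theta)$ is identified as the right change of variable, both hypotheses collapse to ordinary convexity of $\tilde u$ and $\tilde F$, and each row of $Ry$ becomes a Jensen-type inequality at $\tilde\theta_2=\lambda\tilde\theta_1+(1-\lambda)\tilde\theta_3$; identifying this linearisation is where the non-linear matching content of the hypotheses is encoded.
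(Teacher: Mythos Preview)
Your proposal is correct and follows a genuinely different route from the paper's proof.

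\textbf{Comparison.} The paper chooses the perturbation vector $y$ so that \emph{both} the second and third rows of $Ry$ vanish (fixing $a_1$ and $a_2$ simultaneously), which forces row~1 to equal a multiple of $|R|$; the bulk of the work then goes into the determinant identity of Lemma~\ref{l:R>0} and a separate case analysis (Lemmas~\ref{l:ySDD}/\ref{l:ySPD}) to verify $y\geq 0$. Your $y$ fixes only $a_1$ (row~2 $=0$), and the change of variable $\tilde\theta=u(a_1,\theta)$ reduces rows~1 and~3 to Jensen inequalities for the convex functions $\tilde F$ and $\tilde u$; positivity of $y$ is then immediate from $u_\theta>0$. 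This is shorter and more transparent, and it treats the single-dipped and single-peaked cases with literally the same $y$. For pairwiseness, the paper computes the $|S|$ determinant (Lemma~\ref{l:S><0}) and invokes the twist condition, whereas your Rolle argument applied to $G(\theta)=v+q\,u_a+q'\,u$ bypasses the determinant entirely; note that you implicitly use $q(a)>0$ (from \eqref{eqn:q} and Assumptions~\ref{a:qc},~\ref{a:v>0}) to sign the coefficient on $u_{a\theta}/u_\theta$, which is the same place the paper uses $v>0$.

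\textbf{One small gap.} Your justification for the existence of a suitable $h$ in the perturbation step (``possible because $u_\theta$ is strictly positive and continuous on the compact domain'') is not sufficient: positivity and continuity alone do not guarantee that $h'(\theta)/u_\theta(a_1,\theta)$ can be made strictly increasing \emph{uniformly in $a_1$}. You need to invoke the weak monotonicity of $u_{a\theta}/u_\theta$ itself---for instance, take $h'(\theta)=\psi(\theta)\,u_\theta(1,\theta)$ with $\psi>0$ strictly increasing, so that $h'/u_\theta(a_1,\cdot)=\psi\cdot u_\theta(1,\cdot)/u_\theta(a_1,\cdot)$ is a product of a strictly increasing positive function and an increasing positive function (the latter by the weak log-supermodularity hypothesis). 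Alternatively, allow the perturbation to depend on $a$ as the paper does: $v^n(a,\theta)=v(a,\theta)+n^{-1}\int_0^\theta \tilde v(s)u_\theta(a,s)\,\df s$.
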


The proof of Theorem \ref{t:SDPD} verifies the conditions in Theorems \ref{t:pairwise} and \ref{l:ssdd} and Lemma \ref{l:stab}, with a perturbation that holds fixed actions $a_1$ and $a_2$ while increasing the sender's expected utility in the single-dipped case, and a perturbation that holds fixed a higher action $a_1$ and the sender's expected utility (for fixed $a_1 , a_2$) while increasing a lower action $a_2$ in the single-peaked case.

The intuition for Theorem \ref{t:SDPD} is relatively straightforward in the linear receiver and state-independent sender cases. In the linear receiver case, $u_{a\theta}(a,\theta)/u_{\theta}(a,\theta)=0$ and $v_\theta (a_2,\theta)/u_\theta(a_1,\theta)=v_\theta (a_2,\theta)$, so our sufficient conditions for single-dipped disclosure to be optimal are satisfied iff $v$ is convex in $\theta$.\footnote{In the separable and translation-invariant subcases, convexity of $v$ simplifies to convexity of $w$ and $P'$, respectively.} To see why, note that for any strictly single-peaked triple $(a_1,\theta_1)$, $(a_2,\theta_2)$, $(a_1,\theta_3)$, the perturbation that moves mass on $\theta_1$ and $\theta_3$ from $a_1$ to $a_2$ and moves mass on $\theta_2$ in the opposite direction, so as to hold fixed the receiver's marginal utility conditional on being recommended either action, has the effect of also holding fixed the probability of each recommendation, while spreading out the state conditional on action $a_2$ and concentrating the state conditional on action $a_1$. This perturbation is profitable when the difference $V(a_2,\theta)-V(a_1,\theta)$ is convex in $\theta$, which holds whenever $v$ is convex in $\theta$.\footnote{The careful reader may notice that this argument did not invoke Assumption \ref{a:v>0}, because the receiver's actions $a_1$ and $a_2$ were held fixed in the relevant perturbation. Indeed, in the linear receiver case, Theorem \ref{t:SDPD} holds even without Assumption \ref{a:v>0}, as shown in Appendix \ref{app:SR}.}

In the state-independent sender case, $v_\theta (a_2,\theta)/u_\theta(a_1,\theta)=0$, so our sufficient conditions for single-dipped disclosure to be optimal are satisfied iff $u_\theta$ is log-supermodular in $(a,\theta)$, or equivalently $u$ is more log-convex in $\theta$ at higher actions $a$.\footnote{In the translation-invariant subcase, log-supermodularity of $u_\theta$ simplifies to log-concavity of $T'$.} To see why, note that for any strictly single-peaked triple $(a_1,\theta_1)$, $(a_2,\theta_2)$, $(a_1,\theta_3)$, the perturbation that moves mass on $\theta_1$ and $\theta_3$ from $a_1$ to $a_2$ and moves mass on $\theta_2$ in the opposite direction, so as to hold fixed the receiver's marginal utility conditional on being recommended $a_1$ as well as the total probability of each recommendation, has the effect of increasing the receiver's marginal utility conditional on being recommended $a_2$. This follows because, by log-supermodularity of $u_\theta$, for the receiver's expected marginal utility the marginal rate of substitution between ``shifting weight from $\theta_1 $ to $\theta_2$'' and ``shifting weight from $\theta_2$ to $\theta_3$'' is higher at $a_1$ than $a_2$. Finally, when $V$ is state-independent and increasing in $a$, this perturbation increases the sender's expected utility.\footnote{In the linear receiver and state-independent sender cases, the sufficient conditions for the optimality of strict single-dipped/-peaked disclosure in Theorem \ref{t:SDPD} are ``almost necessary,'' because the condition $|S|\neq 0$ on  $A\times \ol\Theta$ implies that $|S|$ has a constant sign on $A\times \ol\Theta$, which can be shown to be equivalent to strict convexity/concavity of $v$ in the linear receiver case, and to strict log-supermodularity/log-submodularity of $u_\theta$ in the state-independent sender case. By Theorem \ref{t:pairwise}, a necessary condition for the optimality of strictly single-dipped/-peaked disclosure is that $|S|\neq 0$ on the restricted domain where $\theta_1<\theta^\star (a)<\theta_3$.}

There are close antecedents to the conditions in Theorem \ref{t:SDPD} for the linear receiver and state-independent sender cases. In MOT, \citet{BJ} introduce the notions of single-dipped/-peaked outcomes under the names ``left-curtain/right-curtain couplings,'' and show that these outcomes are optimal when the planner's (sender's) marginal utility is convex in $\theta$---a condition referred to in this literature as the ``martingale Spence-Mirrlees condition.''\footnote{More precisely, \citet{BJ} show that the unique optimal outcome is single-dipped in the translation-invariant subcase if $P'$ is strictly convex (Theorem 6.1), and in the separable subcase if $w$ is strictly convex (Theorem 6.3). Theorem 5.1 in \citet{HT} and Theorem 3.3 in \citet{BHT} extend this conclusion to the general linear receiver case where $v(a,\theta)$ is strictly convex in $\theta$. In all these papers, the marginal distribution over actions is fixed.} Earlier, in a model of \emph{partisan gerrymandering}, \citet{FH} show that, under an ``informative signal property,'' if it is optimal to assign two voter types to the same district, then all voter types in between these two must be assigned to districts with less favorable median voters. Partisan gerrymandering is equivalent to the state-independent sender case (as the map-maker cares only about winning seats, and not directly about the composition of districts), the above property of districting is equivalent to single-dippedness, and the informative signal property is equivalent to log-supermodularity of $u_\theta$.\footnote{We further investigate the connection between gerrymandering and persuasion in a companion paper, \citet{KW}.} Theorem \ref{t:SDPD} thus unifies and generalizes these disparate contributions.

\subsection{Uniqueness} 

We now show that strict single-dippedness/-peakedness implies that there is a unique optimal outcome (under a regularity condition).

\begin{theorem}\label{t:brenier}
Let Assumptions \ref{a:smooth}--\ref{a:sc} hold. If $\Gamma^\star$ is strictly single-dipped (-peaked), then $\Gamma^\star_a=\{t_1(a),t_2(a)\}$ for all $a\in A_\Gamma$, where $t_1,t_2:A_\Gamma\rightarrow \Theta$ are measurable functions satisfying $t_1(a)\leq \theta^\star (a)\leq t_2(a)$, $t_2(a)\leq t_2(a')$, and $t_1(a')\notin (t_1(a),t_2(a))$ ($t_1(a)\leq t_1(a')$, and $t_2(a)\notin (t_1(a'),t_2(a'))$) for all $a<a'$ in $A_\Gamma$.	Moreover, if $\phi$ has a density and the set $\{a\in A_\Gamma:t_1(a)<t_2(a)\}$ is the union of finitely many intervals, then the optimal outcome is unique.
\end{theorem}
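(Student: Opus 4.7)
My plan is to decompose the proof into three stages: (i) derive the pairwise structure and order properties of $\Gamma^\star$ from strict single-dippedness; (ii) establish measurability of $t_1, t_2$; and (iii) prove uniqueness of the optimal outcome, isolating the density of $\phi$ as the key input.

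For (i), first observe that $|\Gamma^\star_a|\leq 2$: three states $\theta_1<\theta_2<\theta_3$ in $\Gamma^\star_a$ would yield the triple $(a,\theta_1),(a,\theta_2),(a,\theta_3)\in\Gamma^\star$, which is single-peaked (trivially, since $a_1=a_2$), contradicting strict single-dippedness. Write $\Gamma^\star_a=\{t_1(a),t_2(a)\}$ with $t_1(a)\leq t_2(a)$; the bounds $t_1(a)\leq\theta^\star(a)\leq t_2(a)$ follow because $\Gamma^\star_a$ equals either $\Gamma_a$ (with $\min\Gamma_a\leq\theta^\star(a)\leq\max\Gamma_a$ by Theorem \ref{t:contact}) or $\{\theta^\star(a)\}$. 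To show $t_2(a)\leq t_2(a')$ for $a<a'$ in $A_\Gamma$, suppose for contradiction that $t_2(a)>t_2(a')$. Since $\theta^\star$ is strictly increasing and $t_1(a)\leq\theta^\star(a)<\theta^\star(a')\leq t_2(a')$, I get $t_1(a)<t_2(a')<t_2(a)$, so $(a,t_1(a)),(a',t_2(a')),(a,t_2(a))\in\Gamma^\star$ is a strictly single-peaked triple, contradiction. (When $\Gamma^\star_a$ is a singleton, $t_2(a)=\theta^\star(a)<\theta^\star(a')\leq t_2(a')$ automatically.) An analogous single-peaked triple $(a,t_1(a)),(a',t_1(a')),(a,t_2(a))$ rules out $t_1(a')\in(t_1(a),t_2(a))$.

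For (ii), I use that $\Gamma$ is compact (Theorem \ref{t:contact}), so the set-valued map $a\mapsto\Gamma_a$ has closed graph; consequently $a\mapsto\max\Gamma_a$ and $a\mapsto\min\Gamma_a$ are upper/lower semicontinuous, hence Borel. Combining with continuity of $\theta^\star$ and the case split in the definition of $\Gamma^\star_a$, both $t_1$ and $t_2$ are Borel measurable.

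For (iii), by Theorem \ref{t:contact}(2) any optimal $\pi$ admits a conditional $\pi_a$ with $\supp(\pi_a)\subseteq\Gamma^\star_a$ and $\int u(a,\theta)\df\pi_a=0$. Because $|\Gamma^\star_a|\leq 2$ and Assumption \ref{a:sc} gives $u(a,t_1(a))\leq 0\leq u(a,t_2(a))$, obedience uniquely pins down $\pi_a=w(a)\delta_{t_1(a)}+(1-w(a))\delta_{t_2(a)}$ with $w(a)=u(a,t_2(a))/[u(a,t_2(a))-u(a,t_1(a))]$ when $t_1(a)<t_2(a)$, and $\pi_a=\delta_{\theta^\star(a)}$ otherwise. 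Thus $\pi$ is determined by its marginal $\alpha_\pi$ on $A$, and it suffices to show that the constraint $\int\pi_a\df\alpha_\pi=\phi$ has a unique solution when $\phi$ has density $f$. This last step is the main obstacle. The laminar structure from (i)---for $a<a'$, the intervals $[t_1(a),t_2(a)]$ and $[t_1(a'),t_2(a')]$ are either nested or essentially disjoint---plus atomlessness of $\phi$ first forces $\alpha_\pi$ to be atomless, since any atom of $\alpha_\pi$ at $a$ would produce atoms of $\phi$ at $t_1(a)$ and/or $t_2(a)$. One then runs a shooting/differentiation argument along each maximal nested chain, starting from the outermost (most extreme) interval and working inward: matching the prescribed density $f$ at the endpoints $t_1(a),t_2(a)$ against the contribution $w(a),1-w(a)$ from action $a$ uniquely recovers the density of $\alpha_\pi$ at $a$. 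This is the persuasion analog of Brenier-style uniqueness for left-curtain martingale couplings; the technical difficulty is handling potential non-smoothness of $t_1,t_2$ and the transitions between nested and disjoint chains carefully, which I would address by working with the associated cumulative distribution $F_\phi$ rather than with densities directly.
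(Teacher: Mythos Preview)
Your proposal is correct and parts (i)--(ii) match the paper's proof almost verbatim: the same three-point contradictions for $|\Gamma^\star_a|\leq 2$, for monotonicity of $t_2$, and for $t_1(a')\notin(t_1(a),t_2(a))$, and the same measurability argument via compactness of $\Gamma$ and continuity of $\theta^\star$.

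For part (iii) there is a minor difference in framing. You first argue that $\alpha_\pi$ must be atomless (any atom would push an atom onto $\phi$) and then propose a shooting/differentiation argument along maximal nested chains, working from the outermost interval inward. The paper instead isolates the countable exceptional set $\Theta_0=\{\theta:\exists\,a<a'\in A_\Gamma,\ t_2(a)=t_2(a')=\theta\}$ (countable because $t_2$ is monotone), notes $\phi(\Theta_0)=0$ under the density assumption, and then describes a ``sand and levers'' allocation working from $\max\Theta$ downward: all mass at the rightmost state goes to the highest action $\bar a$, which forces the mass at $t_1(\bar a)$, and so on. Both arguments exploit the same laminar/monotone structure and both are left informal at the final step---the paper explicitly writes ``A rigorous version of this argument will be included in the next version of the paper.'' Your route via atomlessness of $\alpha_\pi$ is a small extra observation the paper does not make, while the paper's countability of $\Theta_0$ is a cleaner way to dispose of the non-injectivity of $t_2$ than your reference to ``transitions between nested and disjoint chains''; substantively the two sketches are equivalent.
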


The regularity condition that the set $\{a\in A_\Gamma:t_1(a)<t_2(a)\}$ is the union of finitely many intervals rules out pathological cases, such as when this set is the complement of the Cantor set. This condition is satisfied in every example in the literature that we know of.

Theorem \ref{t:brenier} is somewhat akin to Brenier's theorem in optimal transport, which shows that the optimal transport plan is unique under a suitable complementarity-type condition, called the twist or generalized Spence-Mirrlees condition (\citealt{Brenier}, \citealt{GangboMcCann}; or see Section 1.3 in \citealt{santambrogio}). In martingale optimal transport, the optimal plan is unique under the martingale Spence-Mirrlees condition (e.g., Proposition 3.5 in \citealt{BHT}), which as noted above coincides with our condition for the optimality of strict single-dippedness in the linear receiver case. The key implication of Theorem \ref{t:brenier} is that the optimal marginal distribution of actions $\alpha_\pi$ is unique; there is no analog of this result in optimal transport, where both marginals are fixed.

To see the intuition, consider the case where $\phi$ is discrete and $t_2$ is strictly increasing: when the recommended action is higher, the highest possible state under the induced posterior is also (strictly) higher. Suppose toward a contradiction that there are two distinct optimal outcomes, $\pi$ and $\pi'$. Since every optimal signal is pairwise, we know that $\pi$ and $\pi'$ have the same conditional distribution: $\pi_a=\pi_a'=\rho_a \delta_{t_1(a)}+(1-\rho_a)\delta_{t_2(a)}$ for all $a\in A_\Gamma$, where $\rho_a$ is pinned down by obedience. Thus, the marginal distributions $\alpha_\pi$ and $\alpha_{\pi'}$ must differ. So let $\hat a=\sup \{a\in A:\alpha_\pi([0,a])\neq\alpha_{\pi'}([0,a])\}$, and consider the state $t_2(\hat a)$. Since $t_2$ is strictly increasing, the state $t_2(\hat a)$ can only induce actions $a\geq \hat a$: thus, $\pi ([0,\hat a),t_2(\hat a))=\pi' ([0,\hat a),t_2(\hat a))=0$. Since the marginals $\alpha_\pi$ and $\alpha_{\pi'}$ coincide on $(\hat a, 1]$ (by the definition of $\hat a$), and the conditionals $\pi_a$ and $\pi'_a$ coincide everywhere, we also have $\pi ((\hat a,1],t_2(\hat a))=\pi' ((\hat a,1],t_2(\hat a))$. Thus, since $\pi (A,t_2(\hat a))=\pi' (A,t_2(\hat a))$ by (P1), we can conclude that $\pi (\hat a,t_2(\hat a))=\pi' (\hat a,t_2(\hat a))$, and hence
\begin{align*}
0=\pi(\hat a,t_2(\hat a))-\pi'(\hat a,t_2(\hat a)) = (1-\rho_{\hat a}) (\alpha_{\pi} (\hat a)-\alpha_{\pi'}(\hat a)).
\end{align*}
As, by convention, we have $\rho_{\hat a} <1$, it follows that $\alpha_{\pi} (\hat a)=\alpha_{\pi'}(\hat a)$, and hence
$\alpha_\pi ([0,\hat a))=\alpha_{\pi'} ([0,\hat a))$. Finally, when $\phi$ is discrete, this implies that $\alpha_\pi$ and $\alpha_{\pi'}$ coincide on $(\hat a-\varepsilon,1]$ for some $\varepsilon >0$, which contradicts the definition of $\hat a$. When instead $\phi$ has a density and our regularity condition holds, a similar argument delivers the same conclusion. Moreover, when $\phi$ has a density, the possibility that $t_2$ may be only weakly increasing does not threaten uniqueness of the optimal joint distribution $\pi$, because the set of states corresponding to flat regions of $t_2$ has measure $0$ (i.e., $\phi(\{\theta : \exists a<a' \text{ s.t. } \theta=\theta_2 (a)=\theta_2 (a')\})=0$).

\section{Full Disclosure and Negative Assortative Disclosure}\label{s:MNAD}

Our last set of results gives conditions for the optimality of two simple disclosure patterns: \emph{full disclosure}, where each state is disclosed, and \emph{negative assortative disclosure}, where all states are paired in a negatively assortative manner.

A note on terminology: in Section \ref{s:assortative}, we considered ``assortativity'' between states and actions, asking whether higher actions should be induced at more or less extreme states. In the current section, ``negative assortative disclosure'' refers to assortativity between pairs of states. One can also view full disclosure as capturing ``positive assortativity'' between states, by matching identical states to form degenerate ``pairs.''

\subsection{Full Disclosure}
An implementable outcome $\pi$ is \emph{full disclosure} if its support is $\cup_{\theta \in \Theta}(a^\star (\delta_\theta),\theta)$, so that each state $\theta$ induces action $a^\star (\delta_\theta)$. There is a unique such outcome.

If for all states $\theta_1$ and $\theta_2$, and all probabilities $\rho $, the sender prefers to split the posterior $\mu=\rho \delta_{\theta_1}+(1-\rho )\delta_{\theta_2}$ into degenerate posteriors
 $\delta_{\theta_1}$ and $\delta_{\theta_2}$, then the
sender prefers full disclosure to any pairwise signal. Since pairwise signals are without loss by part (1) of Theorem \ref{t:pairwise}, full
disclosure is then optimal. Conversely, if the sender strictly prefers not to split $\mu =\rho \delta
_{\theta_1 }+(1-\rho )\delta _{\theta_2}$ into $\delta _{\theta_1 }$ and $%
\delta _{\theta _2}$ for some states $\theta_1$ and $\theta _2$ and some probability $\rho $, then the sender strictly
prefers the pairwise signal that differs from full disclosure only in that it
pools states $\theta_1$ and $\theta_2$ into $\mu $; so full
disclosure is not optimal.\footnote{This argument is valid when $\phi $ has finite support. The general case (Lemma \ref{p:full}) uses duality and is adaptated from part (2) of Proposition 1 in \citet{Kolotilin2017}; we give a simpler proof using Theorem \ref{t:contact} and also establish uniqueness.} Recalling that belief $\mu
=\rho \delta _{\theta_1 }+(1-\rho )\delta _{\theta_2}$ induces action $a ^{\star }(\mu )$ satisfying $\rho u(a ^{\star }(\mu ),\theta_1 )+(1-\rho )u(a ^{\star }(\mu ) ,\theta_2)= 0$, we obtain the following result.

\begin{lemma}
\label{p:full}
Let Assumptions \ref{a:smooth}--\ref{a:sc} hold. Full disclosure is optimal iff, for all $\mu=\rho \delta _{\theta_1}+(1-\rho )\delta _{\theta_2}$ with $\theta_1 <\theta_2$ in $\Theta$ and $\rho \in (0,1)$, we have
\begin{equation}
\rho V(a ^{\star }(\mu),\theta_1) +(1-\rho )V(a ^{\star }(\mu),\theta_2)\leq \rho V(a ^{\star }(\delta_{\theta_1}),\theta_1) +(1-\rho )V(a ^{\star }(\delta_{\theta_2}),\theta_2).   \label{e:full}
\end{equation}
Moreover, full disclosure is uniquely optimal if \eqref{e:full} holds with strict inequality for all such $\mu$.
\end{lemma}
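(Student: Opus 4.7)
The plan is to combine strong duality (Lemma~\ref{l:dual}) with the contact-set characterization (Theorem~\ref{t:contact}). The key observation is that if full disclosure is optimal, then the optimal price function must be
\[
p(\theta)=V(a^\star(\delta_\theta),\theta),\quad\theta\in\Theta.
\]
Indeed, Theorem~\ref{t:contact}(1) forces $\supp(\pi^{FD})\subset\Gamma$, and at each support point $(a^\star(\delta_\theta),\theta)$ one has $u(a^\star(\delta_\theta),\theta)=0$, so the contact equation $p(\theta)=V(a,\theta)+q(a)u(a,\theta)$ collapses to the displayed formula. This $p$ is continuous on $\Theta$ by Berge's theorem applied to the receiver's problem together with continuity of $V$.

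For necessity, I would assume full disclosure is optimal, take $p$ as above, and, for any $\mu=\rho\delta_{\theta_1}+(1-\rho)\delta_{\theta_2}$ with $a=a^\star(\mu)$, combine the dual feasibility inequalities (D1) at $(a,\theta_1)$ and $(a,\theta_2)$ with weights $\rho$ and $1-\rho$. The obedience identity $\rho u(a,\theta_1)+(1-\rho)u(a,\theta_2)=0$ cancels the $q(a)$ terms and delivers \eqref{e:full} immediately.

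For sufficiency, I would assume \eqref{e:full} and construct a bounded measurable $q:A\to\R$ making $(p,q)$ feasible in (D). For each $a$, the constraint $p(\theta)-q(a)u(a,\theta)\geq V(a,\theta)$ is automatic at $\theta=\theta^\star(a)\in\Theta$ (since then $a=a^\star(\delta_\theta)$ and both sides equal $V(a,\theta)$) and, at other $\theta\in\Theta$, rewrites as the interval constraint
\[
\sup_{\theta\in\Theta,\,u(a,\theta)<0}\frac{p(\theta)-V(a,\theta)}{u(a,\theta)}\ \leq\ q(a)\ \leq\ \inf_{\theta\in\Theta,\,u(a,\theta)>0}\frac{p(\theta)-V(a,\theta)}{u(a,\theta)}.
\]
A one-line rearrangement, using $\rho=u(a,\theta_2)/(u(a,\theta_2)-u(a,\theta_1))$ to eliminate $\rho$ from \eqref{e:full}, shows that \eqref{e:full} is precisely the statement that this interval is non-empty for every $a$. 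A measurable selection $q(a)$ then exists, with boundedness following from finiteness of the sup and inf (the indeterminate forms as $\theta\to\theta^\star(a)$ have finite L'H\^{o}pital-type limits). With $(p,q)$ dual-feasible and the full-disclosure outcome attaining $\int V(a,\theta)\df\pi^{FD}=\int p(\theta)\df\phi$, Lemma~\ref{l:dual}(3) closes the loop.

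For uniqueness under strict \eqref{e:full}, I would suppose some optimal $\tilde\pi\neq\pi^{FD}$, apply Theorem~\ref{t:pairwise}(1) to obtain a pairwise signal $\hat\tau$ with $\pi_{\hat\tau}=\tilde\pi$, and observe that $\hat\tau$ must place positive mass on some non-degenerate binary posterior $\mu=\rho\delta_{\theta_1}+(1-\rho)\delta_{\theta_2}$. Splitting that mass into the martingale-equivalent atoms $\rho\delta_{\theta_1}$ and $(1-\rho)\delta_{\theta_2}$ preserves Bayes plausibility but strictly increases the sender's value by strict \eqref{e:full}, contradicting optimality. The main obstacle is the sufficiency step: verifying cleanly that \eqref{e:full} is equivalent to non-emptiness of the $q(a)$-interval and producing a measurable bounded selection from the resulting interval-valued map; everything else reduces to careful bookkeeping with the duality and contact-set results already in hand.
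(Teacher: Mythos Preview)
Your iff characterization follows the paper's proof essentially line by line: both identify $p(\theta)=V(a^\star(\delta_\theta),\theta)$, reduce dual feasibility of $(p,q)$ to an interval condition on $q(a)$, and show that interval is non-empty precisely when \eqref{e:full} holds. (The paper invokes Lemma~\ref{l:D=D'} for boundedness of $q$ rather than an ad hoc L'H\^{o}pital argument, but that is cosmetic.)

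Your uniqueness argument is different from the paper's and, as written, has a gap. You claim $\hat\tau$ must place positive mass on \emph{some} non-degenerate binary posterior $\mu$ and then split that single $\mu$. For a continuous prior $\phi$, the induced $\hat\tau$ need not have any atoms, so there may be no individual $\mu$ with $\hat\tau(\{\mu\})>0$. (The paper itself flags the analogous issue in the text just before the lemma: the direct pooling/splitting heuristic ``is valid when $\phi$ has finite support.'') The argument can be rescued by replacing the single posterior with the \emph{set} of non-degenerate binary posteriors---which has positive $\hat\tau$-measure whenever $\pi_{\hat\tau}\neq\pi^{FD}$---splitting all of them simultaneously (this yields full disclosure), and integrating the strict inequality over that set; you then need $(\theta_1,\theta_2,\rho)$ measurable as functions of $\mu\in\Delta_1^{Bin}$, which they are. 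The paper avoids this entirely: it argues that if another optimal outcome exists, Theorem~\ref{t:contact} produces $\theta_1,\theta_2\in\Gamma_a^\star$ with $\theta_1<\theta^\star(a)<\theta_2$; the contact equations $p(\theta_i)=V(a,\theta_i)+q(a)u(a,\theta_i)$ together with $p(\theta_i)=V(a^\star(\delta_{\theta_i}),\theta_i)$ and the obedience weights $\rho,1-\rho$ then give the reverse of \eqref{e:full} with weak inequality, contradicting strictness. That route is cleaner and needs no measure-theoretic bookkeeping on signals.
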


In the linear case, condition \eqref{e:full} holds iff $V$ is convex in $a$. In the state-independent sender case, condition \eqref{e:full} simplifies as follows:
\setcounter{corollary}{1}
\begin{corollary}
\label{c:fd} In the state-independent sender case, full disclosure is optimal iff, for all $\mu =\rho \delta _{\theta_1 }+(1-\rho )\delta _{\theta_2}$ with $\theta_1,\theta_2\in \Theta$ and $\rho\in (0,1) $, we have
\begin{equation}
V\left( a ^{\star }\left( \mu \right) \right) \leq \rho V\left(a ^{\star }\left( \delta _{\theta_1 }\right) \right) +\left(1-\rho \right) V\left( a ^{\star }\left( \delta _{\theta_2}\right)
\right).
\label{e:fullsi}
\end{equation}
\end{corollary}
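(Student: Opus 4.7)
The plan is to derive Corollary \ref{c:fd} as a direct specialization of Lemma \ref{p:full} to the simple sender case, where $V(a,\theta)=V(a)$ is state-independent. There is essentially no obstacle here: the proof is a one-line substitution, and my task is to make sure the two sides of \eqref{e:full} collapse to the two sides of \eqref{e:fullsi}.

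Concretely, I would start by recalling that Lemma \ref{p:full} gives the equivalence between optimality of full disclosure and inequality \eqref{e:full} holding for every two-point posterior $\mu=\rho\delta_{\theta_1}+(1-\rho)\delta_{\theta_2}$ with $\theta_1<\theta_2$ in $\Theta$ and $\rho\in(0,1)$. Then I would substitute $V(a,\theta_i)=V(a)$ into both sides of \eqref{e:full}. The left-hand side becomes
\[
\rho V(a^\star(\mu))+(1-\rho)V(a^\star(\mu))=V(a^\star(\mu)),
\]
and the right-hand side becomes
\[
\rho V(a^\star(\delta_{\theta_1}))+(1-\rho)V(a^\star(\delta_{\theta_2})),
\]
yielding \eqref{e:fullsi} exactly.

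The only minor bookkeeping issue is that Corollary \ref{c:fd} is stated for all $\theta_1,\theta_2\in\Theta$ (not just $\theta_1<\theta_2$), but the inequality \eqref{e:fullsi} is symmetric in the roles of $(\theta_1,\rho)$ and $(\theta_2,1-\rho)$, and it trivially holds with equality when $\theta_1=\theta_2$; so the quantification in Corollary \ref{c:fd} is equivalent to the one in Lemma \ref{p:full}. This completes the argument, so the corollary reduces to a clean substitution with no further work needed.
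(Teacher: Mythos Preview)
Your proposal is correct and matches the paper's approach: the paper presents Corollary \ref{c:fd} as an immediate specialization of Lemma \ref{p:full} (introduced with ``condition \eqref{e:full} simplifies as follows'') and gives no separate proof, so your one-line substitution of $V(a,\theta)=V(a)$ into \eqref{e:full}, together with the harmless symmetry/equality remark on the quantification over $\theta_1,\theta_2$, is exactly what is intended.
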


In a classical one-to-one matching model, \citeasnoun{B73} showed that if the utility from matching two
types $h\left( \theta_1 ,\theta _2\right) $ is supermodular, then it
is optimal to match like types. \citeasnoun{LN} refer to this extreme form
of positive assortative matching as \emph{segregation}. Their Propositions 4
and 9 show that segregation is optimal iff $h\left( \theta_1
,\theta_1 \right) +h\left( \theta _2,\theta_2\right) \geq
2h\left( \theta _1,\theta_2\right) $ for all $\theta_1 ,\theta_2$ (which is a strictly weaker property than supermodularity). In the
context of persuasion, segregation corresponds to full disclosure. Note that
if we fix $p=1/2$ and let $h\left( \theta_1 ,\theta _2\right) =V\left(
a^{\star }\left( \delta _{\theta_1 }/2+\delta _{\theta _2}/2\right)
\right) $, then (\ref{e:fullsi}) reduces to \citeauthor{LN}'s
condition. Intuitively, full disclosure is ``less likely'' be optimal in persuasion than in classical matching, because in persuasion the designer has an extra degree of freedom $\rho $ in designing matches.

In the linear receiver case, there is a simple sufficient condition for \eqref{e:full}:
\begin{manualcorollary}{\ref{c:fd}'}
\label{c:fds} In the linear receiver case, full
disclosure is optimal if $V(a ,\theta )$ is convex in $a$ and
satisfies $V(\theta_1 ,\theta_2)+V(\theta_2,\theta_1 )\leq V(\theta_1 ,\theta_1 )+V(\theta_2,\theta_2)$ for all $\theta_1,\theta_2\in \Theta$.
\end{manualcorollary}

A sufficient condition for $V(\theta_1 ,\theta_2)+V(\theta_2,\theta_1 )\leq V(\theta_1 ,\theta_1 )+V(\theta_2,\theta_2)$
is supermodularity of $V$: for all $\theta_1 <\theta_2$ and $a_1<a_2$, $V(a_1 ,\theta_1 )+V(a_2,\theta_2)\geq V(a_1 ,\theta_2)+V(a_2,\theta_1 )$. Thus, in the linear receiver case, full
disclosure is optimal whenever the sender's utility is convex in $a$ and supermodular in $\left(a ,\theta \right) $. This sufficient condition for full disclosure
generalizes that given by \citet{RS} for the separable subcase.\footnote{Their condition is that $w$ is increasing in $\theta$ and $G$ is convex in $a$, where $V(a,\theta)=w(\theta)G(a)$. In the sub-subcase with $G(a)=a$, \eqref{e:full} holds iff $w$ is increasing in $\theta$, because \eqref{e:full} simplifies to $\rho (1-\rho)(w(\theta_2)-w(\theta_1))(\theta_2-\theta_1)\geq 0$.}

When the prior has full support and the contact set is pairwise (e.g., the twist condition holds), full disclosure is uniquely optimal whenever it is optimal.\footnote{Full support is necessary for this result, as shown by Example \ref{e:1}.} To see the intuition, suppose full disclosure is optimal, and suppose there is another optimal signal that pools some states $\theta_1$ and $\theta_2$ to induce an action $a$. Then the signal that discloses all other states while pooling $\theta_1$ and $\theta_2$ to induce $a$ is also optimal. But then the signal that discloses all other states while pooling $\theta_1$, $\theta_2$, and the third state $\theta^\star (a)\neq \theta_1, \theta_2$  to induce $a$ is also optimal, and this signal is not pairwise.
\begin{theorem}\label{t:FDu} Let $\Theta=[0,1]$ and let Assumptions \ref{a:smooth}--\ref{a:sc} hold. If the contact set is pairwise and full disclosure is optimal, then full disclosure is uniquely optimal.
\end{theorem}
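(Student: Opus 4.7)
The plan is to argue by contradiction along the lines of the informal sketch preceding the theorem. Suppose that full disclosure $\pi^{FD}$ is optimal and that $\pi\neq \pi^{FD}$ is another optimal outcome; my goal is to produce an action $a$ for which the section $\Gamma^\star_a$ contains three distinct states, contradicting the pairwiseness hypothesis.

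First I disintegrate $\pi$ as $\pi=\int_A \pi_a\,\df\alpha_\pi(a)$. If $\pi_a=\delta_{\theta^\star(a)}$ for $\alpha_\pi$-a.e.\ $a$, then $\pi$ is concentrated on the graph of $\theta^\star$, and the feasibility constraint (P1) pins down $\alpha_\pi$ as the pushforward of $\phi$ under $\theta\mapsto a^\star(\delta_\theta)$, forcing $\pi=\pi^{FD}$, a contradiction. Hence there exists $a\in\supp(\alpha_\pi)$ with $\pi_a$ non-degenerate. The obedience constraint $\int u(a,\theta)\,\df\pi_a(\theta)=0$, combined with strict single-crossing of $u$ in $\theta$ (Assumption \ref{a:sc}), forces $\supp(\pi_a)$ to straddle $\theta^\star(a)$: otherwise $u(a,\cdot)$ would have a constant non-zero sign on $\supp(\pi_a)$. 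Thus there exist $\theta_1,\theta_2\in\supp(\pi_a)$ with $\theta_1<\theta^\star(a)<\theta_2$.

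Next I use that both outcomes are concentrated on $\Gamma$. By part (1) of Theorem \ref{t:contact}, $\supp(\pi)\cup\supp(\pi^{FD})\subset\Gamma$, so $(a,\theta_1),(a,\theta_2)\in\Gamma$. Because $\Theta=[0,1]$, the state $\theta^\star(a)$ lies in $\Theta$, and since $a^\star(\delta_{\theta^\star(a)})=a$, the point $(a,\theta^\star(a))$ belongs to $\supp(\pi^{FD})\subset\Gamma$ as well. Hence $\Gamma_a$ contains the three distinct states $\theta_1<\theta^\star(a)<\theta_2$, and in particular $\theta^\star(a)$ lies strictly between $\min\Gamma_a$ and $\max\Gamma_a$. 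By the construction of $\Gamma^\star$ in Section \ref{s:contact}, this gives $\Gamma^\star_a=\Gamma_a$, so $|\Gamma^\star_a|\geq 3$, contradicting the assumption that the contact set is pairwise.

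The main obstacle is the first step: one must verify that $\pi\neq \pi^{FD}$ as measures really forces some conditional $\pi_a$ to have non-singleton support, rather than differing from $\pi^{FD}$ only on a null set. This is where the disintegration theorem and the prior-matching constraint (P1) do the work, and where the full-support hypothesis $\Theta=[0,1]$ is implicitly used (so that $\theta^\star(a)\in\Theta$ for every $a$, ensuring $(a,\theta^\star(a))\in\supp(\pi^{FD})$). Once this non-degeneracy is secured, the combinatorial conclusion about $\Gamma^\star_a$ is immediate and requires no additional structure on utilities beyond Assumptions \ref{a:smooth}--\ref{a:sc}.
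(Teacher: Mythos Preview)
Your argument follows the same contrapositive idea as the sketch preceding the theorem, and is essentially correct. There is one soft spot worth tightening. From $\theta_1,\theta_2\in\supp(\pi_a)$ you conclude ``$(a,\theta_1),(a,\theta_2)\in\Gamma$'' via $\supp(\pi)\subset\Gamma$, but $\theta_i\in\supp(\pi_a)$ does \emph{not} in general imply $(a,\theta_i)\in\supp(\pi)$ when $\alpha_\pi(\{a\})=0$: the disintegration is only determined $\alpha_\pi$-a.e., so a single $a$ can carry an arbitrary conditional. The clean fix is to invoke part~(2) of Theorem~\ref{t:contact} from the outset: choose the conditional $\pi_a$ with $\supp(\pi_a)\subset\Gamma^\star_a$ for \emph{all} $a\in\supp(\alpha_\pi)$, and then select your non-degenerate $a$ from this version (intersected with the full-measure set where obedience holds). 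Then $\theta_1,\theta_2\in\Gamma^\star_a$ directly, and the three-point contradiction goes through.

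The paper's proof runs the same idea in the forward direction and thereby avoids disintegration altogether: since $\supp(\pi^{FD})\subset\Gamma$ and $\Theta=[0,1]$, one has $\theta^\star(a)\in\Gamma_a$ (hence $\theta^\star(a)\in\Gamma^\star_a$) for \emph{every} $a$; pairwiseness plus the definition of $\Gamma^\star$ then forces $\Gamma^\star_a=\{\theta^\star(a)\}$ for every $a$, so by part~(2) of Theorem~\ref{t:contact} any optimal outcome has $\pi_a=\delta_{\theta^\star(a)}$ identically, whence $\pi=\pi^{FD}$. This is the same contradiction you derive, just packaged without needing to pick a particular $\pi$ or a particular $a$.
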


\subsection{Negative Assortative Disclosure}
A set $\Gamma^\dagger$ is \emph{single-dipped (-peaked) negative assortative} if there exist a decreasing (increasing) function $t_1:A_{\Gamma^\dagger} \rightarrow \Theta$ and an increasing (decreasing) function $t_2:A_{\Gamma^\dagger} \rightarrow \Theta$ such that $t_1(a)\leq \theta^\star (a)\leq t_2(a)$ and $\Gamma^\dagger_a =\{t_1(a),t_2(a)\}$ for all $a$. An outcome $\pi$ is \emph{single-dipped (-peaked) negative assortative} if it is concentrated on such a set, so that states $t_1 (a)$ and $t_2 (a)$ are pooled to induce action $a$.

The main result of this section is that if strictly single-dipped (-peaked) disclosure is optimal and the sender strictly prefers to pool any two states, then single-dipped (-peaked) negative assortative disclosure is optimal. Moreover, if the prior has a density, then the optimal outcome is unique (by Theorem \ref{t:brenier}) and is characterized as the solution to a system of two ordinary differential equations.

To see the intuition, note that if strictly single-dipped disclosure is optimal, then any two pairs of pooled states $\{\theta_1,\theta_3\}$ and $\{\theta'_1,\theta'_3\}$ with (without loss) $\theta_1<\theta_3$, $\theta'_1<\theta'_3$, and $\theta_1\leq \theta'_1$, must be either ordered (i.e., $\theta_1<\theta_3\leq\theta'_1<\theta'_3$) or nested (i.e., $\theta_1\leq \theta'_1<\theta'_3\leq\theta_3$). This follows because if the pairs overlap (i.e., $\theta_1<\theta'_1<\theta_3<\theta'_3$), then either $(\theta_1, \theta'_1, \theta_3)$ or $(\theta'_1, \theta_3, \theta'_3)$, together with the corresponding actions, would form a single-peaked triple. Hence, for any pair of pooled states $\{\theta_1,\theta_3\}$, there must exist a disclosed state $\theta_2 \in (\theta_1,\theta_3)$: intuitively, there must exist pairs of pooled states in the interval $(\theta_1,\theta_3)$ that are closer and closer together, until the pair degenerates into a single disclosed state. Therefore, if any two pairs of pooled states $\{\theta_1,\theta_3\}$ and $\{\theta'_1,\theta'_3\}$ are ordered, there would exist two distinct disclosed states $\theta_2 \in (\theta_1,\theta_3)$ and $\theta'_2 \in (\theta'_1,\theta'_3)$. But if the sender strictly prefers to pool any two states, this is impossible. Finally, if pairs of pooled states cannot overlap or be ordered, the only remaining possibility is that all pairs of pooled states are nested: that is, disclosure is negative assortative.\footnote{In this argument, the existence of the two disclosed states relies on the assumption that $\Theta=[0,1]$. Example \ref{ex:NAD} in Appendix \ref{s:examples} shows that when $\Theta\neq [0,1]$, the set $\Gamma^\star$ is not necessarily negative assortative even if $\Gamma^\star$ is strictly single-dipped (-peaked) and \eqref{e:nd} holds for all $\theta_1<\theta_2$.}

\begin{theorem}\label{t:NAD}
Let $\Theta=[0,1]$, and let Assumptions \ref{a:smooth}--\ref{a:sc} hold. If $\Gamma^\star$ is strictly single-dipped (-peaked) and for all $\theta_1<\theta_2$ there exists $\rho\in (0,1)$ such that
\begin{equation}
\rho V(a ^{\star }(\mu),\theta_1) +(1-\rho )V(a ^{\star }(\mu),\theta_2)> \rho V(a ^{\star }(\delta_{\theta_1})),\theta_1) +(1-\rho )V(a ^{\star }(\delta_{\theta_2}),\theta_2),   \label{e:nd}
\end{equation}
with $\mu=\rho \delta _{\theta_1}+(1-\rho )\delta _{\theta_2}$, then $\Gamma^\star $ is single-dipped (-peaked) negative assortative. Moreover, if $\phi$ has a density $f$, then the functions $t_1$ and $t_2$ are continuous and solve the system of the two differential equations,
\begin{gather}
u(a,t_1(a))(-\df \phi ([0,t_1(a)]))+u(a,t_2(a))\df \phi ([0,t_2(a)])=0, \label{e:obed} \\
\begin{gathered}
\frac{\df}{\df a} \left(\frac{v(a,t_1(a))u(a,t_2(a))-v(a,t_2(a))u(a,t_1(a))}{u(a,t_1(a))u_a(a,t_2(a))-u(a,t_2(a))u_a(a,t_1(a))}\right) \\
=\frac{v(a,t_1(a))u_a(a,t_2(a))-v(a,t_2(a))u_a(a,t_1(a))}{u_a(a,t_1(a))u(a,t_2(a))-u_a(a,t_2(a))u(a,t_1(a))},	\label{e:q'} 
\end{gathered}	
\end{gather}
for all $a\in (\ul a,\ol a]$ where $\ul a=\min A_\Gamma$ and $\ol a=\max A_\Gamma$, with the boundary conditions
\begin{equation}
\begin{gathered}
	(t_1(\ol a),t_1(\ul a),t_2(\ul a),t_2(\ol a))=(0,\theta^\star (\ul a),\theta^\star (\ul a),1)\\
	\left((t_1(\ul a),t_1(\ol a),t_2(\ol a),t_2(\ul a))=(0,\theta^\star (\ol a),\theta^\star (\ol a),1)\right). \label{e:boundary}
\end{gathered}
\end{equation}
\end{theorem}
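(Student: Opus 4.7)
My plan is to prove the structural claim that $\Gamma^\star$ is negative-assortative first, and then derive the ODE characterization; I work throughout in the single-dipped case as the single-peaked case is symmetric. By strict single-dippedness of $\Gamma^\star$ and Theorem \ref{t:brenier}, there exist measurable $t_1, t_2 : A_\Gamma \to \Theta$ with $\Gamma^\star_a = \{t_1(a), t_2(a)\}$, $t_1(a) \leq \theta^\star(a) \leq t_2(a)$, $t_2$ increasing, and $t_1(a') \notin (t_1(a), t_2(a))$ for all $a < a'$. For single-dipped NAD I need in addition that $t_1$ be decreasing, equivalently, that any two pooled pairs be nested. I would suppose for contradiction that there exist $a < a'$ in $A_\Gamma$ with disjoint pairs $t_2(a) \leq t_1(a')$. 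A key lemma is that every non-degenerate pair contains at least one disclosed state in its interior: given a pair $\{t_1(a), t_2(a)\}$ with $t_1(a) < t_2(a)$, feasibility (P1) requires states in the subinterval $(t_1(a), t_2(a))$ to be transported to some actions in $A_\Gamma$, and the non-intersection property for $t_1$ together with monotonicity of $t_2$ forces the corresponding pairs to be contained in $(t_1(a), t_2(a))$; iterating produces a nested sequence whose diameter shrinks to zero, whose limit is a disclosed state $\theta^\circ(a)\in (t_1(a), t_2(a))$. Applying this lemma inside each of the two disjoint pairs yields distinct disclosed states $\theta^\circ(a) \neq \theta^\circ(a')$. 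Condition \eqref{e:nd} supplies $\rho \in (0,1)$ such that pooling $\theta^\circ(a)$ and $\theta^\circ(a')$ strictly improves the sender relative to disclosing them separately; this is a profitable pairwise perturbation contradicting optimality via Theorem \ref{t:contact}. Hence $t_1$ is decreasing, so $\Gamma^\star$ is single-dipped NAD.

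For the ODE characterization, assume $\phi$ has density $f$, and let $\alpha$ be the density of the action marginal $\alpha_\pi$ and $\rho : A_\Gamma \to (0,1)$ the weight on $t_1(a)$ in $\pi_a$. Local mass conservation in (P1) gives
\[
\rho(a)\alpha(a)\,\df a = -f(t_1(a))\,t_1'(a)\,\df a, \qquad (1-\rho(a))\alpha(a)\,\df a = f(t_2(a))\,t_2'(a)\,\df a.
\]
Multiplying the obedience condition $\rho(a) u(a,t_1(a)) + (1-\rho(a)) u(a,t_2(a)) = 0$ by $\alpha(a)\,\df a$ and substituting yields \eqref{e:obed}. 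For \eqref{e:q'}, solving obedience for $\rho(a)$ and plugging into \eqref{eqn:q} produces the closed form
\[
q(a) = \frac{v(a,t_1(a))\,u(a,t_2(a)) - v(a,t_2(a))\,u(a,t_1(a))}{u(a,t_1(a))\,u_a(a,t_2(a)) - u(a,t_2(a))\,u_a(a,t_1(a))},
\]
so the left-hand side of \eqref{e:q'} is literally $q'(a)$. Separately, writing \eqref{e:FOC} at $(a,t_1(a))$ and $(a,t_2(a))$ produces a $2\times 2$ linear system in $(q(a), q'(a))$; solving for $q'(a)$ by Cramer's rule recovers exactly the right-hand side of \eqref{e:q'}. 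Equating the two expressions gives the ODE. The boundary conditions follow because $\Theta = [0,1]$ forces $0$ and $1$ to be paired, giving $(t_1(\bar a), t_2(\bar a)) = (0,1)$, while single-dipped NAD combined with \eqref{e:nd} permits at most one disclosed state, achieved at $\underline a$ with $t_1(\underline a) = t_2(\underline a) = \theta^\star(\underline a)$. Continuity of $t_1, t_2$ is obtained from the atomlessness of $\phi$: any discontinuity would leave an interval of states of positive $\phi$-mass unassigned to any action, violating (P1).

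The main obstacle is the shrinking-pair lemma used to identify a disclosed state inside each pooled pair. Making it rigorous requires iterating Theorem \ref{t:brenier}'s monotonicity and non-intersection constraints together with compactness of $\Gamma$ and Borel-measurability of $\Gamma^\star$ from Theorem \ref{t:contact}, and ruling out the edge case where the relevant interior state-subinterval is $\phi$-null (which is precisely where the assumption $\Theta = [0,1]$ is used, cf.\ Example \ref{ex:NAD}). A secondary subtlety is establishing continuity of $t_1, t_2$ from atomlessness before invoking the chain rule in the derivation of the ODEs; absent continuity the formal manipulations with $t_1', t_2'$ would not be justified.
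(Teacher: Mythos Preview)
Your overall architecture mirrors the paper's: locate a disclosed state inside each ordered pair, then contradict \eqref{e:nd}; afterwards derive the ODEs from obedience plus the FOC system. The ODE portion is essentially identical to the paper's. There is, however, one genuine gap and one point where the paper's route is cleaner.

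\textbf{The gap: the primal ``profitable perturbation'' step does not go through.} You argue that once two distinct disclosed states $\theta^\circ(a)\neq\theta^\circ(a')$ are identified, \eqref{e:nd} supplies a pooling that strictly improves the sender, contradicting optimality. But the conclusion you need is a statement about the contact set $\Gamma^\star$, not about any particular outcome; and if $\phi$ is atomless (precisely the density case in the second half), the two disclosed states carry zero mass, so re-matching them leaves $\pi$ and the objective unchanged---there is no strict improvement to harvest. The paper closes this with a dual argument that works regardless of atoms: from $(a^\star(\delta_{\theta_i}),\theta_i)\in\Gamma$ one has $p(\theta_i)=V(a^\star(\delta_{\theta_i}),\theta_i)$, and (D1) at the pooling action $a^\star(\mu)$ gives
\[
p(\theta_i)\ \ge\ V(a^\star(\mu),\theta_i)+q(a^\star(\mu))\,u(a^\star(\mu),\theta_i),\qquad i=1,2.
\]
Taking the $\rho$-weighted average and using $\rho\,u(a^\star(\mu),\theta_1)+(1-\rho)\,u(a^\star(\mu),\theta_2)=0$ yields the negation of \eqref{e:nd} for every $\rho\in(0,1)$, the desired contradiction. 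Replace your primal perturbation with this two-line dual computation.

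\textbf{The shrinking-pair lemma.} Your iteration idea is sound but, as you acknowledge, needs work: nothing in the step guarantees diameters shrink to zero. The paper bypasses this by taking $\underline a_i=\inf\{a\in A_\Gamma:\ [t_1(a),t_2(a)]\subset[t_1(a_i),t_2(a_i)]\}$ directly; compactness of $A_\Gamma$ puts $\underline a_i\in A_\Gamma$, and the fact that $\Theta_{\Gamma^\star}$ is dense in $\Theta=[0,1]$ then forces $\Gamma^\star_{\underline a_i}=\{\theta^\star(\underline a_i)\}$ (any non-degenerate pair at $\underline a_i$ would admit a strictly smaller nested pair by single-dippedness, contradicting the infimum). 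This is both shorter and avoids the limit-of-diameters issue.
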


Similarly to equation \eqref{e:full} in the previous subsection, equation \eqref{e:nd} simplifies in special cases. In the linear case, \eqref{e:nd} holds iff $V$ is strictly concave in $a$.\footnote{In the linear case, $V$ is strictly concave iff no disclosure is uniquely optimal for all priors, by Corollary 1 in \citet{KMZ}.} In the state-independent sender case, it holds iff $V(a ^{\star }(\mu))> \rho V(a ^{\star }(\delta_{\theta_1})) +(1-\rho )V(a ^{\star }(\delta_{\theta_2}))$. In the linear receiver case, it holds if $V(a,\theta)$ is concave in $a$ and satisfies $V(\theta_1 ,\theta_2)+V(\theta_2,\theta_1 )> V(\theta_1 ,\theta_1 )+V(\theta_2,\theta_2)$ for all $\theta_1<\theta_2$; a sufficient condition for the latter property is strict submodularity of $V$. These conditions generalize the sufficient condition for pooling given by \citet{RS} for the separable subcase.\footnote{Their condition is that $w$ is strictly decreasing in $\theta$ and $G$ is concave in $a$, where $V(a,\theta)=w(\theta)G(a)$.  In the sub-subcase with $G(a)=a$, \eqref{e:nd} holds iff $w$ is strictly decreasing in $\theta$.}
 
To understand the differential equations, note that if $t_1$ and $t_2$ are differentiable then \eqref{e:obed} can be written as $$u(a,t_1(a))f(t_1(a))t_1'(a)=u(a,t_2(a))f(t_2(a))t_2'(a).$$ This is the obedience constraint conditional on recommendation $a$, as the posterior conditional on recommendation $a$ is 
$$
\pi_a=\frac{-f(t_1(a))t_1'(a)}{-f(t_1(a))t_1'(a)+f(t_2(a))t_2'(a)}\delta_{t_1(a)}+\frac{f(t_2(a))t_2'(a)}{-f(t_1(a))t_1'(a)+f(t_2(a))t_2'(a)}\delta_{t_2(a)}.\footnote{This equation is a version of the Monge-Ampere equation in optimal transport.}
$$ 
In addition, \eqref{e:q'} results from solving the system of equations (from the sender's FOC, \eqref{e:FOC}), 
\begin{gather*}
v(a,t_1(a))+ q(a)u_a(a,t_1(a))+q'(a)u(a,t_1(a))=0, \\
v(a,t_2(a))+ q(a)u_a(a,t_2(a))+q'(a)u(a,t_2(a))=0,
\end{gather*}
for $q(a)$ and $q'(a)$, and recalling that $q'$ is the derivative of $q$.\footnote{This argument shows that \eqref{e:q'} holds for all $a$ with $t_1(a)<\theta^\star (a)<t_2(a)$ even if $\Gamma^\star$ is only pairwise and not also single-dipped/-peaked.} Finally, the boundary condition \eqref{e:boundary} for the single-dipped case follows because the lowest induced action $\underline{a}$ is induced at the disclosed state $\theta^\star (\ul{a})=t_1 (\ul{a})=t_2 (\ul{a})$, and the highest induced action $\ol a$ is induced at states $0=t_1 (\ol a)$ and $1=t_2 (\ol a)$. The boundary condition for the single-peaked case is analogous.\footnote{In the linear receiver case, \eqref{e:q'} simplifies to
\[
\frac{\df}{\df a} \left(v(a,t_1(a))\frac{t_2(a)-a}{t_2(a)-t_1(a)}+v(a,t_2(a))\frac{a-t_1(a)}{t_2(a)-t_1(a)}\right) =-\frac{v(a,t_2(a))-v(a,t_1(a))}{t_2(a)-t_1(a)}.
\]
Geometrically, this says that the slope of the curve $a\rightarrow \E_{\pi_a} [v(a,\theta)]$ is equal to the negative of the slope of the secant passing through the points $(t_1(a),v(a,t_1(a)))$ and  $(t_2(a),v(a,t_2(a)))$. \citet{NP} derive this condition for the separable sub-subcase with $v(a,\theta)=w(\theta)$.}

Next, we give primitive conditions on $V$ and $u$ for \eqref{e:nd} to hold, and hence for negative assortative disclosure to be optimal.

\begin{corollary}\label{c:ndSDD}
Let $\Theta=[0,1]$, let all partial derivatives of $V(a,\theta)$ and $u(a,\theta)$ of order at most $2$ be differentiable, and let Assumptions \ref{a:smooth}--\ref{a:u_s} hold. Furthermore, let $u_{a\theta}(a,\theta)/u_{\theta}(a,\theta)$ and  $v_\theta (a,\theta)/u_\theta(a,\theta)$  be increasing (decreasing) in $\theta$ for all $a$, with at least one of these functions being strictly increasing (decreasing). Then for all $\theta_1<\theta_2$ there exists $\rho\in (0,1)$ such that \eqref{e:nd} holds iff 
\begin{equation}\label{e:ndSDD}
\begin{gathered}
v_a(a,\theta^\star (a)) \leq \tfrac{v(a,\theta^\star (a)) u_{aa}(a,\theta^\star (a))}{u_a(a,\theta^\star (a))}
+2\tfrac{v_\theta(a,\theta^\star (a)) u_a(a,\theta^\star (a)) - v(a,\theta^\star (a)) u_{a\theta}(a,\theta^\star (a))}{u_\theta(a,\theta^\star (a))},
\end{gathered}
\end{equation}
for all $a\in A$. In particular, if \eqref{e:ndSDD} holds in addition to the above conditions (with monotonicity of $v_\theta (a,\theta)/u_\theta (a,\theta)$  strengthened to monotonicity of $v_\theta (a_2,\theta)/u_\theta (a_1,\theta)$ for all $a_2 \geq (\leq) a_1$), then $\Gamma^\star$ is single-dipped (-peaked) negative assortative.
\end{corollary}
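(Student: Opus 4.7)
The corollary has two parts: the iff between the global existence of $\rho\in(0,1)$ satisfying \eqref{e:nd} for all $\theta_1<\theta_2$ and the pointwise inequality \eqref{e:ndSDD}; and the ``in particular'' clause, which combines this equivalence with Theorems \ref{t:SDPD} and \ref{t:NAD}.

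For the necessity direction, I would fix $a_0\in A$ with $\theta_0:=\theta^\star(a_0)$ in the interior, write $\alpha:=(\theta^\star)^{-1}$, and set $\theta_1=\theta_0-\epsilon_1$, $\theta_2=\theta_0+\epsilon_2$ for small $\epsilon_i>0$. Choose the unique $\rho\in(0,1)$ that induces $\bar a=a_0$, namely $\rho=u(a_0,\theta_2)/[u(a_0,\theta_2)-u(a_0,\theta_1)]$. Setting $h(\theta):=V(a_0,\theta)-V(\alpha(\theta),\theta)$, the left-minus-right side of \eqref{e:nd} equals $\rho h(\theta_1)+(1-\rho)h(\theta_2)$, which vanishes to first order in $\eta:=\theta_2-\theta_1$ by the choice of $\rho$. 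A second-order Taylor expansion then yields
\[
\rho h(\theta_1)+(1-\rho)h(\theta_2)=\tfrac12\sigma(1-\sigma)\eta^2\bigl[h''(\theta_0)-(u_{\theta\theta}/u_\theta)(a_0,\theta_0)\,h'(\theta_0)\bigr]+O(\eta^3),
\]
where $\sigma:=\epsilon_2/\eta$. Combining $h'(\theta_0)=-v\alpha'$, $h''(\theta_0)=-2v_\theta\alpha'-v_a(\alpha')^2-v\alpha''$, and the formulas $\alpha'=-u_\theta/u_a$, $\alpha''=-[u_{aa}(\alpha')^2+2u_{a\theta}\alpha'+u_{\theta\theta}]/u_a$, algebraic simplification reduces the bracket to a sign-definite expression that is non-negative iff \eqref{e:ndSDD} holds at $a_0$. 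Hence if \eqref{e:ndSDD} fails at some $a_0$, continuity forces failure on an open $\theta_0$-neighborhood, and no $\rho\in(0,1)$ satisfies \eqref{e:nd} for any $(\theta_1,\theta_2)$ chosen inside that neighborhood.

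For sufficiency, I would fix $\theta_1<\theta_2$ and introduce
\[
\tilde\Phi(a):=u(a,\theta_2)\bigl[V(a,\theta_1)-V(\alpha(\theta_1),\theta_1)\bigr]-u(a,\theta_1)\bigl[V(a,\theta_2)-V(\alpha(\theta_2),\theta_2)\bigr]
\]
on $[\alpha(\theta_1),\alpha(\theta_2)]$, so that \eqref{e:nd} holds for the $\rho$ that induces $\bar a=a$ iff $\tilde\Phi(a)>0$; $\tilde\Phi$ vanishes at both endpoints. Supposing for contradiction that $\tilde\Phi\le 0$ throughout, I would reapply the Taylor analysis at either endpoint: the strict form of \eqref{e:ndSDD} combined with the strict monotonicity of at least one of $u_{a\theta}/u_\theta$ or $v_\theta/u_\theta$ hypothesized in the corollary forces $\tilde\Phi$ strictly positive at some interior point, contradicting the assumption. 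The main obstacle of the proof lies precisely here: without the monotonicity hypotheses, the local inequality \eqref{e:ndSDD} need not propagate to global positivity of $\tilde\Phi$; the hypotheses preclude cancellations between the sender-side ratio $v_\theta/u_\theta$ and the receiver-side ratio $u_{a\theta}/u_\theta$ as the pair $(\theta_1,\theta_2)$ separates. Finally, the ``in particular'' clause is immediate: under the strengthened monotonicity of $v_\theta(a_2,\cdot)/u_\theta(a_1,\cdot)$ for all $a_2\geq(\leq)a_1$, Theorem \ref{t:SDPD} gives that $\Gamma^\star$ is strictly single-dipped (-peaked); the main equivalence then yields \eqref{e:nd} for every $\theta_1<\theta_2$, so Theorem \ref{t:NAD} concludes that $\Gamma^\star$ is single-dipped (-peaked) negative assortative.
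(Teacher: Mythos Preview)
Your necessity argument is essentially the paper's: both do a local Taylor expansion near $\theta^\star(a_0)$ and identify the second-order coefficient with the expression in \eqref{e:ndSDD}. The paper parametrizes by $a$ and obtains a third-order product $(a-a^\star(\delta_{\theta_1}))(a^\star(\delta_{\theta_2})-a)(a^\star(\delta_{\theta_2})-a^\star(\delta_{\theta_1}))$, but after dividing by $u(a,\theta_2)-u(a,\theta_1)$ this matches your $\sigma(1-\sigma)\eta^2$ term. The ``in particular'' clause is also handled the same way.

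The sufficiency direction, however, has a genuine gap. You correctly introduce $\tilde\Phi(a)$ and note it vanishes at both endpoints, but ``reapply the Taylor analysis at either endpoint'' does not mean the same thing as in the necessity step: there you expanded in $\eta=\theta_2-\theta_1\to 0$, whereas here $\theta_1<\theta_2$ are fixed and you must expand in $a$ near, say, $a^\star(\delta_{\theta_2})$. The first-order coefficient of $\tilde\Phi$ at that endpoint is
\[
u_a(a^\star(\delta_{\theta_2}),\theta_2)\Bigl[V(a^\star(\delta_{\theta_2}),\theta_1)+q(a^\star(\delta_{\theta_2}))\,u(a^\star(\delta_{\theta_2}),\theta_1)-V(a^\star(\delta_{\theta_1}),\theta_1)\Bigr],
\]
with $q(a)=-v(a,\theta^\star(a))/u_a(a,\theta^\star(a))$. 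The whole difficulty is showing the bracket is \emph{strictly positive} for arbitrary $\theta_1<\theta_2$---this is a global statement, not a local one, and neither \eqref{e:ndSDD} alone nor a local expansion delivers it. The paper proves it by writing the bracket as $\int_{a^\star(\delta_{\theta_1})}^{a^\star(\delta_{\theta_2})}[v(\tilde a,\theta_1)+q(\tilde a)u_a(\tilde a,\theta_1)+q'(\tilde a)u(\tilde a,\theta_1)]\,\df\tilde a$, using \eqref{e:ndSDD} to bound $q'(\tilde a)$, and then rewriting the integrand as a double integral over $[\theta_1,\theta^\star(\tilde a)]$ whose sign is controlled termwise by the monotonicity of $v_\theta/u_\theta$ and $u_{a\theta}/u_\theta$ (with at least one strict). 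Your proposal does not identify this computation, and your invocation of ``the strict form of \eqref{e:ndSDD}'' is not available---the hypothesis is only the weak inequality, and strictness must come from the monotonicity of the ratios via the integral representation.
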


Equation \eqref{e:ndSDD} is a local necessary condition for \eqref{e:nd}: if  \eqref{e:ndSDD} fails, then  \eqref{e:nd} also fails for $\theta_1<\theta_2$ sufficiently close to $\theta^\star(a)$. When $\Gamma^\star$ is strictly single-dipped (-peaked), this local necessary condition turns out to be globally sufficient for \eqref{e:nd}. Equation \eqref{e:ndSDD} simplifies dramatically in some special cases. In the linear receiver case, \eqref{e:ndSDD} simplifies to $v_a(a,a)+2v_\theta(a,a) \leq 0$; in the translation-invariant subcase of the linear receiver case, this simplifies further to $P''(0)\geq 0$. In the translation-invariant subcase of the state-independent sender case, \eqref{e:ndSDD} simplifies to ${v_a(a)}/{v(a)}\leq {T''(0)}/{T'(0)}$.

We give some examples of optimal single-dipped negative assortative disclosure.

\begin{example}\label{e:RS} Consider the linear receiver case with $A=\Theta=[1/e,e]$, $f(\theta)=1/(2\theta)$, and $V(a,\theta)=a/\theta$.\footnote{In Examples \ref{e:RS} and \ref{ex:segpair} and Appendix \ref{s:ZZ}, $\Theta$ and $A$ are compact intervals, which can be rescaled to the unit interval.} We claim that the unique optimal outcome matches each state $\theta_1\in [1/e, 1]$ with state $\theta_2=1/\theta_1$ with equal weights, so that the induced action is $a={\theta_1}/{2}+{1}/({2\theta_1})$. Thus, $t_1(a)=a-\sqrt{a^2-1}$, and $t_2=a+\sqrt{a^2-1}$ for all $a\in A_{\Gamma} =[1,{e}/{2}+{1}/(2e)]$.

 Indeed, by Theorem \ref{t:SDPD}, $\Gamma^\star$ is strictly single-dipped, since $w(\theta)=1/\theta$ is strictly convex. By Corollary \ref{c:ndSDD}, \eqref{e:nd} holds, since $w'<0$. Hence, by Theorem \ref{t:NAD}, $\Gamma^\star $ is single-dipped negative assortative and satisfies \eqref{e:obed}--\eqref{e:boundary}. For $\theta_2=1/\theta_1$ and $a={\theta_1}/{2}+{1}/({2\theta_1})$, \eqref{e:obed} holds because
\begin{gather*}
u(a,\theta_2)=\left( \frac{1}{2\theta_1}-\frac {\theta_1}{2}\right)=-\left(\frac{\theta_1}{2}-\frac{1}{2\theta_1}\right)=-u(a,\theta_1),\\
f(\theta_2)\frac {\df \theta_2}{\df a}=\frac{1}{{2}/{\theta_1}} \left(-\frac {1}{\theta_1^2}\frac{\df \theta_1}{\df a}\right)=-\frac {1}{2\theta_1}\frac{\df \theta_1}{\df a}=-f(\theta_1)\frac{\df \theta_1}{\df a},
\end{gather*}
\eqref{e:q'} holds because
\begin{gather*}
\frac{\df}{\df a} \left(w(\theta_1)\frac{1}{2}+w(\theta_2)\frac 12\right)=\frac{\df}{\df a}\left(\frac{1}{2\theta_1}+\frac {\theta_1}{2}\right)=\frac{\df }{\df a}a=1,\\
\frac{w(\theta_2)-w(\theta_1)}{\theta_2-\theta_1}=\frac{\theta_1-{1}/{\theta_1}}{{1}/{\theta_1}-\theta_1}=-1,
\end{gather*}
and \eqref{e:boundary} holds because $1/(1/e)=e$ and $1/1=1$. Note that we can instead solve this example by using Theorem \ref{t:contact} directly, because, for $q(a)=a$, the function
$V(a,\theta)+q(a)u(a,\theta)={a}/{\theta} +a(\theta -a)$ is maximized at $a=\theta/2+1/(2\theta)$ for all $\theta \in [1/e,e]$.
\end{example}
  
\begin{example}[Quantile Persuasion] \label{e:quantile} Consider the quantile sub-subcase of the state-independent sender case, where $u(a,\theta)= \1 \{\theta\geq a\}-\kappa$ with $\kappa\in (0,1)$.
Let $\phi$ have a density on $[0,1]$. Assuming that the receiver breaks ties in favor of the sender, we obtain that, for $\theta_1<\theta_2$,
\[
a^\star(\rho \delta_{\theta_1}+(1-\rho )\delta_{\theta_2})=
\begin{cases}
\theta_2, &\rho \leq 1-\kappa,\\
\theta_1, &\rho >1-\kappa.
\end{cases}
\]
Note that \eqref{e:nd} always holds for $\rho \in (0,1-\kappa)$. We claim that there exists an optimal single-dipped negative assortative outcome $\pi$ with $\alpha_\pi([a,1])=\phi ([a,1])/\kappa$ and $\pi_a=(1-\kappa)\delta_{t_1(a)}+\kappa \delta_{t_2(a)}$ for all $a\in \supp (\alpha_\pi)=[\ul a,1]$, where $t_1(a)$ solves $\kappa \phi ([0,t_1(a)])=(1-\kappa)\phi ([a,1])$, and $\ul a$ solves $\kappa \phi ([0,\ul a])=(1-\kappa)\phi ([\ul a,1])$. See Section \ref{s:quantileproof} for the proof. A notable feature of this outcome is that, with the informed receiver interpretation, it would remain optimal even if the sender knew the receiver's type and could condition disclosure on it.
\end{example}

\begin{example}[A Stochastic Optimal Signal\footnote{This example is an adaptation of Example 2 in \citet{KW}.}] \label{ex:segpair}
Consider the translation-invariant subcase of the state-independent sender case. Let $A=\Theta=[-1,3]$, let $\phi$ have a density $f$ with $f(-a)\geq 3f(3a)$ for all $a\in (0,1]$, let $u(a,\theta)=T(\theta-a$) with $T(0)=0$ and strictly log-concave $T'$, and let $V(a,\theta)=T(2a)$. With the informed receiver interpretation, this captures a case where, for example, $\kappa=1/2$, the distribution of $\varepsilon$ is $N(0,\sigma^2)$, and the distribution of $t$ is $N(0,(\sigma/2)^2)$.\footnote{By symmetry and strict log-concavity of $T'$, $v_a(a)/v(a)=2T''(a)/T'(a) >(<)T''(0)/T'(0)=0$ for $0>(<)a,$ showing that \eqref{e:ndSDD} fails for $a<0$, and thus Theorem \ref{t:NAD} does not apply.}

 We claim that $A_\Gamma=[-1,1]$ and $\Gamma_a=\{t_1(a),t_2(a)\}$ for all $a\in A_\Gamma$ where
\[
t_1(a)=
\begin{cases}
a, &a\in [-1,0],\\
-a, &a\in (0,1],
\end{cases}
\quad \text{and}\quad 
t_2(a)=
\begin{cases}
a, &a\in [-1,0),\\
3a, &a\in (0,1],
\end{cases}
\]
so that $\pi_a=\rho_a\delta_{t_1(a)}+(1-\rho_a)\delta_{t_2(a)}$ with $\rho_a=1/2$ for all $a\in A_\Gamma$, and $\alpha_\pi$ has a density $h$ given by
\begin{align*}
h(a)&=
\begin{cases}
6f(3a), &a\in (0,1],\\
f(-a)-3f(3a), &a\in [-1,0). 
\end{cases}
\end{align*}
Note that the unique optimal outcome is single-dipped negative assortative iff $f(-a)=3f(3a)$ for all $a\in (0,1]$. In contrast, if $f(-a)>3f(3a)$ for all $a\in (0,1]$, then each state $\theta\in [-1,0)$ is mixed between recommendations $a=\theta$ and $a=-\theta$. Specifically, the conditional distribution $\pi_\theta$ of $a$ given $\theta$ is
\[
\pi_\theta=
\begin{cases}
\delta_{\theta/3}, &\theta\in [0,3],\\
\frac{f(\theta)-3f(-3\theta)}{f(\theta)}\delta_{\theta}+\frac{3f(-3\theta)}{f(\theta)}\delta_{-\theta}, &\theta\in [-1,0). 
\end{cases}
\]
See Figure \ref{f:segpair}. Note that in this case the unique optimal signal randomizes conditional on the state, even though the state is atomless. See Section \ref{s:segpair} for the proof.

\begin{figure}[t]
\centering
\begin{tikzpicture}[scale = 1]
	\small
	\begin{axis}		
		[axis x line = middle,
		axis y line = middle,
		xmin = -1.3, xmax = 3.3,
		ymin = -1.3, ymax = 1.3,
		xlabel=$\theta$,
		ylabel=$a$,
		xtick={-1.01, 0, 3.01},
		xticklabels={$-1$, $0$, $3$},
		ytick={-1.01, 0, 1.01},
		yticklabels={$-1$, , $1$},
		clip=false]

		\draw [dashed, black]			(-1, -1) -- (-1, 1);
		\draw [dashed, black]			(-1, -1) -- (0, -1);
		\draw [dashed, black]			(3, 0) -- (3,1);

		\draw [thick, solid, black]		(-1,-1) -- (0,0);
		\draw [thick, solid, black]		(-1,1) -- (0,0);
		\draw [thick, solid, black]		(0,0) -- (3,1);
		\draw [solid, red]				(-1, 1) -- (3,1);
		\draw [solid, red]				(-3/4, 3/4) -- (9/4, 3/4);
		\draw [solid, red]				(-2/4, 2/4) -- (6/4, 2/4);
		\draw [solid, red]				(-1/4, 1/4) -- (3/4, 1/4);
	\end{axis}
\end{tikzpicture}
	\caption{The Optimal Outcome in Example \ref{ex:segpair}}
	\caption*{\emph{Notes:} The contact set equals the three black line segments. The red line segments indicate pairs of states that may be pooled at an optimal outcome. If the prior density satisfies $f(-a)>3f(3a)$ for all $a\in (0,1]$, the unique optimal outcome is supported on the entire contact set. In this case, for each state $\theta <0$, the unique optimal signal randomizes between disclosing $\theta$ (and inducing action $\theta$) and pooling $\theta$ with state $-3\theta$ (and inducing action $-\theta$).}
	\label{f:segpair}
\end{figure}

\end{example}

\section{Conclusion}
\label{s:conclusion}

This paper has developed a first-order approach to persuasion with non-linear preferences, based on duality and connections to optimal transport. Our substantive results provide conditions for all optimal signals to be pairwise, for higher actions to be induced at more or less extreme states, and for full or negative assortative disclosure to be optimal. In some cases, we can characterize optimal signals as the solution to a pair of ordinary differential equations, or even solve them in closed form.

Our analysis generalizes prior studies that have found single-dipped or single-peaked disclosure to be optimal in various contexts. We have already described how our results generalize those of \citet{FH} in the gerrymandering literature and \citet{BJ} and others in the MOT literature. In Appendix \ref{s:other}, we show how our analysis accommodates three well-known applications in the persuasion literature: \citeauthor{ZZ}'s \citeyear{ZZ} model of information disclosure in contests, \citeauthor{GS2018}'s \citeyear{GS2018} model of persuasion with a privately informed receiver whose type is affiliated with the state, and \citeauthor{GL}'s \citeyear{GL} model of optimal stress tests.\footnote{These applications also illustrate some new technical points. Appendix \ref{s:ZZ} illustrates how directly applying Theorem \ref{l:ssdd} can yield weaker sufficient conditions for the optimality of single-dipped/-peaked disclosure than those presented in Theorem \ref{t:SDPD}. Appendices \ref{s:GS} and \ref{s:GL} illustrate how our analysis extends when some of our assumptions are violated: in Appendix \ref{s:GS}, Assumption \ref{a:int} fails, so the receiver's optimal action may be at the boundary and thus violate the first-order condition; in Appendix \ref{s:GL}, Assumption \ref{a:v>0} fails, so the sender only weakly prefers higher actions.} We hope that an awareness of the common theoretical structure in these papers will facilitate further progress on such models.

We close with a few open issues. First, while the persuasion literature has made progress by allowing unrestricted disclosure policies, the pairwise signals that we have highlighted are not always realistic. (For example, in reality it is probably not feasible to design a stress test that pools only the weakest and strongest banks.) An alternative, complementary approach is to restrict the sender to partitioning the state space into intervals, as in \citet{Rayo} or \citet{OR}. An interesting observation is that, at least in the separable subcase of our model considered by \citeauthor{Rayo} and \citeauthor{OR}, our condition \eqref{e:nd} is equivalent to the condition that complete pooling is uniquely optimal among monotone partitions for all prior distributions. This suggests that, under our conditions for the optimality of single-dipped/-peaked disclosure, negative assortative disclosure might be the optimal unrestricted disclosure policy for all priors iff no-disclosure is the optimal monotone policy for all priors. More generally, analyzing the relationship between the optimal pairwise signals we have characterized and simpler signals such as monotone partitions is an important direction for future research.

Second, in the informed receiver interpretation of our model mentioned in Section \ref{s:model}, our analysis pertains to disclosure mechanisms that do not first elicit the receiver's type, or \emph{public persuasion} in the language of \citet{KMZL}. Public persuasion turns out to be without loss in \citet{KMZL}, as well as in \citet{GS2018}. It would be interesting to investigate conditions for the optimality of public persuasion in our more general model, and in particular to see how they relate to our conditions for the optimality of full or negative assortative disclosure.

Finally, our model could be generalized to allow multidimensional states or actions. We suspect that the results of Sections \ref{s:duality}--\ref{s:pairwise} can be generalized, although our analysis is facilitated by the existence of a bijection between actions $a$ and states $\theta^\star (a)$ such that $u(a,\theta^\star (a))=0$ (cf. Assumption \ref{a:sc}). Generalizing our other results would require a more general notion of single-dippedness/-peakedness. With a unidimensional action and a multidimensional state, one can still define a notion of single-dippedness as inducing higher actions at more extreme states; with multidimensional actions, the appropriate generalization is unclear.\footnote{Possibly relevant recent work on multidimensional martingale optimal transport includes \citet{GKL} and \citet{DeMarchTouzi}.} For results on multidimensional persuasion focusing on the linear case, see \citet{DK}.

\bibliographystyle{econometrica}
\bibliography{persuasionlit}

@article{GT,
  title={Information Design in Competitive Insurance Markets},
  author={Garcia, Daniel and Tsur, Matan},
  journal={Journal of Economic Theory},
  volume={191},
  pages={105--160},
  year={2021},
  publisher={Elsevier}
}

@article{Mirrlees75,
  title={The Theory of Moral Hazard and Unobservable Behaviour: Part I},
  author={Mirrlees, James A},
  journal={Review of Economic Studies},
  volume={66},
  number={1},
  pages={3--21},
  year={1999},
  publisher={Wiley-Blackwell}
}

@article{DeMarchTouzi,
  title={Irreducible Convex Paving for Decomposition of Multidimensional Martingale Transport Plans},
  author={De March, Hadrien and Touzi, Nizar},
  journal={Annals of Probability},
  volume={47},
  number={3},
  pages={1726--1774},
  year={2019}
}

@article{GKL,
  title={Structure of Optimal Martingale Transport Plans in General Dimensions},
  author={Ghoussoub, Nassif and Kim, Young-Heon and Lim, Tongseok},
  journal={Annals of Probability},
  volume={47},
  number={1},
  pages={109--164},
  year={2019}
}

@article{Brenier,
  title={Polar Factorization and Monotone Rearrangement of Vector-Valued Functions},
  author={Brenier, Yann},
  journal={Communications on Pure and Applied Mathematics},
  volume={44},
  number={4},
  pages={375--417},
  year={1991}
}

@article{GangboMcCann,
  title={The Geometry of Optimal Transportation},
  author={Gangbo, Wilfrid and McCann, Robert},
  journal={Acta Mathematica},
  volume={177},
  number={2},
  pages={113--161},
  year={1996}
}

@article{MCS,
  title={Persuasion by Dimension Reduction},
  author={Malamud, Semyon and Schrimpf, Andreas},
  journal={Working paper},
  year={2021}
}

@article{ABS22,
  title={Persuasion as Transportation},
  author={Arieli, Itai and Babichenko, Yakov and Sandomirskiy, Fedor},
  journal={Working paper},
  year={2022}
}

@article{LinLiu,
  title={Credible Persuasion},
  author={Lin, Xiao and Liu, Ce},
  journal={Working paper},
  year={2022}
}

@article{PRS,
  title={Test Design Under Falsification},
  author={Perez-Richet, Eduardo and Skreta, Vasiliki},
  journal={Econometrica},
  volume={90},
  number={3},
  pages={1109--1142},
  year={2022}
}

@article{GLP,
  title={The Value of Data Records},
  author={Galperti, Simone and Levkun, Aleksandr and Perego, Jacopo},
  journal={Working paper},
  year={2021}
}

@article{DizdarKovac,
  title={A Simple Proof of Strong Duality in the Linear Persuasion Problem},
  author={Dizdar, Deniz and Kov{\'a}{\v{c}}, Eugen},
  journal={Games and Economic Behavior},
  volume={122},
  pages={407--412},
  year={2020},
  publisher={Elsevier}
}

@article{KMZ,
  title={Censorship as Optimal Persuasion},
  author={Kolotilin, Anton and Mylovanov, Timofiy and Zapechelnyuk, Andriy},
  journal={Theoretical Economics},
  volume={17},
  number={2},
  pages={561--585},
  year={2022},
  publisher={Wiley Online Library}
}

@article{BHP,
  title={Model-Independent Bounds for Option Prices--a Mass Transport Approach},
  author={Beiglb{\"o}ck, Mathias and Henry-Labordere, Pierre and Penkner, Friedrich},
  journal={Finance and Stochastics},
  volume={17},
  number={3},
  pages={477--501},
  year={2013},
  publisher={Springer}
}

@article{BNT,
  title={Complete Duality for Martingale Optimal Transport on the Line},
  author={Beiglb{\"o}ck, Mathias and Nutz, Marcel and Touzi, Nizar},
  journal={Annals of Probability},
  volume={45},
  number={5},
  pages={3038--3074},
  year={2017}
}

@article{Rayo,
  title={Monopolistic Signal Provision},
  author={Rayo, Luis},
  journal={BE Journal of Theoretical Economics},
  volume={13},
  number={1},
  pages={27--58},
  year={2013},
  publisher={De Gruyter}
}

@article{OR,
  title={Conveying Value via Categories},
  author={Onuchic, Paula and Ray, Debraj},
  journal={Working paper},
  year={2022}
}

@article{HT,
  title={An Explicit Martingale Version of the One-Dimensional Brenier Theorem},
  author={Henry-Labord{\`e}re, Pierre and Touzi, Nizar},
  journal={Finance and Stochastics},
  volume={20},
  number={3},
  pages={635--668},
  year={2016},
  publisher={Springer}
}

@article{BHT,
  title={Monotone Martingale Transport Plans and Skorokhod Embedding},
  author={Beiglb{\"o}ck, Mathias and Henry-Labord{\`e}re, Pierre and Touzi, Nizar},
  journal={Stochastic Processes and their Applications},
  volume={127},
  number={9},
  pages={3005--3013},
  year={2017},
  publisher={Elsevier}
}

@book{anderson1987,
	Author = {Anderson, E. J. and Peter Nash},
	Publisher = {John Wiley \& Sons, New York},
	Title = {{Linear Programming in Infinite-Dimensional Space}},
	Year = {1987}}

@book{aliprantis2006,
	Author = {Aliprantis, Charalambos D. and Border, Kim},
	Publisher = {Springer},
	Title = {{Infinite Dimensional Analysis: A Hitchhiker's Guide}},
	Year = {2006}}

@article{B73,
  title={A Theory of Marriage: Part I},
  author={Becker, Gary S},
  journal={Journal of Political Economy},
  volume={81},
  number={4},
  pages={813--846},
  year={1973},
  publisher={The University of Chicago Press}
}

@book{villani,
  title={Optimal Transport: Old and New},
  author={Villani, C{\'e}dric},
  volume={338},
  year={2009},
  publisher={Springer}
}

@book{abs,
  title={Lectures on Optimal Transport},
  author={Ambrosio, Luigi and Bru{\'e}, Elia and Semola, Daniele},
  year={2021},
  publisher={Springer}
}

@book{santambrogio,
  title={Optimal Transport for Applied Mathematicians},
  author={Santambrogio, Filippo},
  journal={Birk{\"a}user, NY},
  volume={55},
  pages={94},
  year={2015},
  publisher={Springer}
}

@article{winkler,
  title={Extreme Points of Moment Sets},
  author={Winkler, Gerhard},
  journal={Mathematics of Operations Research},
  volume={13},
  number={4},
  pages={581--587},
  year={1988},
  publisher={INFORMS}
}

@book{AM,
  title={Repeated Games with Incomplete Information},
  author={Aumann, Robert J and Maschler, Michael},
  year={1995},
  publisher={MIT press}
}

@article{KX,
  title={An Optimal Transport Problem with Backward Martingale Constraints Motivated by Insider Trading},
  author={Kramkov, Dmitry and Xu, Yan},
  journal={Annals of Applied Probability},
  volume={32},
  number={1},
  pages={294--326},
  year={2022},
  publisher={Institute of Mathematical Statistics}
}

@article{T18,
  title={Bayesian Persuasion with Quadratic Preferences},
  author={Tamura, Wataru},
  journal={Working paper},
  year={2018}
}

@article{ABSY20,
  title={Optimal Persuasion via Bi-Pooling},
  author={Arieli, Itai and Babichenko, Yakov and Smorodinsky, Rann and Yamashita, Takuro},
  journal={Theoretical Economics},
  volume={forthcoming},
  year={2022}
}

@article{KMS,
  title={Extreme Points and Majorization: Economic Applications},
  author={Kleiner, Andreas and Moldovanu, Benny and Strack, Philipp},
  journal={Econometrica},
  volume={89},
  number={4},
  pages={1557--1593},
  year={2021},
  publisher={Wiley Online Library}
}

@article{IP20,
  title={Persuasion in Global Games with Application to Stress Testing},
  author={Inostroza, Nicolas and Pavan, Alessandro},
  journal={Working paper},
  year={2022}
}

@article{GL,
  title={Stress Tests and Information Disclosure},
  author={Goldstein, Itay and Leitner, Yaron},
  journal={Journal of Economic Theory},
  volume={177},
  pages={34--69},
  year={2018},
  publisher={Elsevier}
}

@article{LW,
  title={Model Secrecy and Stress Tests},
  author={Leitner, Yaron and Williams, Basil},
  journal={Journal of Finance},
  volume={Forthcoming},
  year={2022}
}

@article{LN,
  title={Monotone Matching in Perfect and Imperfect Worlds},
  author={Legros, Patrick and Newman, Andrew F},
  journal={Review of Economic Studies},
  volume={69},
  number={4},
  pages={925--942},
  year={2002},
  publisher={Wiley-Blackwell}
}

@article{DK,
  title={The Persuasion Duality},
  author={Dworczak, Piotr and Kolotilin, Anton},
  journal={Working paper},
  year={2022}
}

@article{FH,
  title={Optimal Gerrymandering: Sometimes Pack, but Never Crack},
  author={Friedman, John N and Holden, Richard T},
  journal={American Economic Review},
  volume={98},
  number={1},
  pages={113--44},
  year={2008}
}

@article{AC,
  title={Bayesian Persuasion with Heterogeneous Priors},
  author={Alonso, Ricardo and C{\^a}mara, Odilon},
  journal={Journal of Economic Theory},
  volume={165},
  pages={672--706},
  year={2016},
  publisher={Elsevier}
}

@article{CS,
  title={Ordinal Aggregation Results via Karlin's Variation Diminishing Property},
  author={Choi, Michael and Smith, Lones},
  journal={Journal of Economic Theory},
  volume={168},
  pages={1--11},
  year={2017},
  publisher={Elsevier}
}

@article{GS2018,
	Author = {Guo, Yingni and Shmaya, Eran},
	Date-Added = {2018-04-04 07:21:36 +0000},
	Date-Modified = {2018-12-23 15:08:36 -0500},
	Journal = {Econometrica},
	Title = {The Interval Structure of Optimal Disclosure},
	Volume={87},
  	Number={2},
  	Pages={653--675},
  	Year={2019}}

@article{Holmstrom79,
	Author = {Holmstr{\"o}m, Bengt},
	Journal = {Bell Journal of Economics},
	Pages = {74--91},
	Title = {Moral Hazard and Observability},
	Volume = {10},
	Number = {1},
	Year = {1979}}

@article{DM,
author = {Dworczak, Piotr and Martini, Giorgio},
title = {The Simple Economics of Optimal Persuasion},
journal = {Journal of Political Economy},
Volume = {127},
Number = {5},
Pages = {1993--2048},
year={2019}
}

@article{KMZL,
	Author = {Kolotilin, Anton and Mylovanov, Tymofiy and Zapechelnyuk, Andriy and Li, Ming},
	Date-Added = {2016-12-11 21:12:54 +0000},
	Date-Modified = {2018-04-01 21:32:15 +0000},
	Journal = {Econometrica},
	Pages = {1949--1964},
	Title = {Persuasion of a Privately Informed Receiver},
	Volume = {85},
	Year = {2017}}

@article{BBM,
	Author = {Bergemann, Dirk and Brooks, Benjamin and Morris, Stephen},
	Date-Added = {2016-08-02 16:48:57 +0000},
	Date-Modified = {2016-08-02 17:12:58 +0000},
	Journal = {American Economic Review},
	Pages = {921--957},
	Publisher = {American Economic Association},
	Title = {The Limits of Price Discrimination},
	Volume = {105},
	Year = {2015}}

@article{Quah2012,
	Author = {John Quah and Bruno Strulovici},
	Date-Added = {2016-04-22 12:19:22 +0000},
	Date-Modified = {2016-04-22 12:21:21 +0000},
	Journal = {Econometrica},
	Pages = {2333--2348},
	Title = {Aggregating the Single Crossing Property},
	Volume = {80},
	Year = {2012}}

@article{GK-RS,
	Author = {Matthew Gentzkow and Emir Kamenica},
	Date-Added = {2016-04-02 00:31:06 +0000},
	Date-Modified = {2016-08-02 17:04:19 +0000},
	Journal = {American Economic Review, Papers \& Proceedings},
	Pages = {597--601},
	Title = {A {Rothschild-Stiglitz} Approach to {Bayesian} Persuasion},
	Volume = {106},
	Year = 2016}

@article{rogerson,
  title={The First-Order Approach to Principal-Agent Problems},
  author={Rogerson, William P},
  journal={Econometrica},
  pages={1357--1367},
  year={1985},
  publisher={JSTOR}
}

@book{gale,
  title={The Theory of Linear Economic Models},
  author={Gale, David},
  year={1989},
  publisher={University of Chicago press}
}

@article{jewitt,
  title={Justifying the First-Order Approach to Principal-Agent Problems},
  author={Jewitt, Ian},
  journal={Econometrica},
  pages={1177--1190},
  year={1988},
  publisher={JSTOR}
}

@article{Kolotilin2017,
	Author = {Kolotilin, Anton},
	Date-Added = {2015-02-05 22:11:25 +0000},
	Date-Modified = {2018-04-06 03:02:53 +0000},
	Journal = {Theoretical Economics},
	Pages = {607--636},
	Title = {Optimal Information Disclosure: A Linear Programming Approach},
	Volume = {13},
	Year = {2018}}

@article{ZZ,
  title={Information Disclosure in Contests: A Bayesian Persuasion Approach},
  author={Zhang, Jun and Zhou, Junjie},
  journal={Economic Journal},
  volume={126},
  number={597},
  pages={2197--2217},
  year={2016},
  publisher={Wiley Online Library}
}

@article{BJ,
  title={On a Problem of Optimal Transport under Marginal Martingale Constraints},
  author={Beiglb{\"o}ck, Mathias and Juillet, Nicolas},
  journal={Annals of Probability},
  volume={44},
  number={1},
  pages={42--106},
  year={2016},
  publisher={Institute of Mathematical Statistics}
}

@article{SY,
  title={Information Design in Concave Games},
  author={Smolin, Alex and Yamashita, Takuro},
  journal={Working paper},
  year={2022}
}

@article{GHT,
  title={A Stochastic Control Approach to No-Arbitrage Bounds Given marginals, with an Application to Lookback Options},
  author={Galichon, Alfred and Henry-Labordere, Pierre and Touzi, Nizar},
  journal={Annals of Applied Probability},
  volume={24},
  number={1},
  pages={312--336},
  year={2014},
  publisher={Institute of Mathematical Statistics}
}

@article{KG,
	Author = {Emir Kamenica and Matthew Gentzkow},
	Date-Modified = {2016-08-02 17:14:23 +0000},
	Journal = {American Economic Review},
	Month = {October},
	Pages = {2590--2615},
	Title = {Bayesian Persuasion},
	Volume = {101},
	Year = 2011}

@article{RS,
	Author = {Luis Rayo and Ilya Segal},
	Date-Modified = {2016-08-02 17:14:18 +0000},
	Journal = {Journal of Political Economy},
	Pages = {949--987},
	Title = {Optimal Information Disclosure},
	Volume = {118},
	Year = 2010}

@article{NP,
  title={Conjugate Information Disclosure in an Auction with Learning},
  author={Nikandrova, Arina and Pancs, Romans},
  journal={Journal of Economic Theory},
  volume={171},
  pages={174--212},
  year={2017},
  publisher={Elsevier}
}

@article{KW,
  title={The Economics of Partisan Gerrymandering},
  author={Kolotilin, Anton and Wolitzky, Alexander},
  journal={Working paper},
  year={2020},
}

\renewcommand{\thesection}{A}

\section{Characterization of Aggregate Quasi-Concavity}
\label{a:ad} 

We present two alternative conditions that are equivalent to strict aggregate quasi-concavity of $U$. Condition (2) is analogous to the ``signed-ratio monotonicity'' conditions for weak aggregate quasi-concavity in Theorem 1 of \citet{Quah2012} and Corollary 2 of \citet{CS}. We give a shorter proof based on the optimality of pairwise signals (see Section \ref{proof:ASC}). Condition (3) is novel. It corresponds to strict concavity of $U$ (i.e., $u_a(a,\theta)<0$), up to a normalizing factor $g(a)>0$. 
\begin{lemma}\label{l:ASC}
Let Assumption \ref{a:smooth} hold. The following statements are equivalent:
\begin{enumerate}
	\item Assumption \ref{a:qc} holds.
	\item For all $\theta$, $\theta'$, and $a$, we have
\begin{align}
u(a,\theta)=0 &\implies u_a(a,\theta)<0,\label{1}	\\
u(a,\theta)<0<u(a,\theta') &\implies u(a,\theta')u_a(a,\theta)-u(a,\theta)u_a(a,\theta')<0.\label{2}
\end{align}
	\item There exists a differentiable function $g(a)>0$ such that $\tilde u(a,\theta)=u(a,\theta)/g(a)$ satisfies $\tilde u_a(a,\theta)<0$ for all $(a,\theta)$.
\end{enumerate} 
\end{lemma}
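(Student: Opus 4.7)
The plan is to establish (1)$\Leftrightarrow$(2) and (2)$\Leftrightarrow$(3), with (2)$\Rightarrow$(3) being the only step requiring genuine work; the other three directions are either direct computations or reuses of the Richter--Rogosinsky decomposition that powers Theorem~\ref{t:pairwise}.

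For (1)$\Rightarrow$(2), I would test Assumption~\ref{a:qc} on $\mu=\delta_\theta$ to obtain \eqref{1}, and on the unique sign-balancing binary posterior $\mu=p\delta_\theta+(1-p)\delta_{\theta'}$ with $p=u(a,\theta')/(u(a,\theta')-u(a,\theta))\in(0,1)$ to obtain \eqref{2} after multiplying through by the positive denominator $u(a,\theta')-u(a,\theta)$. Conversely, for (2)$\Rightarrow$(1), given any $\mu$ with $\int u(a,\theta)\,\df\mu=0$, I would decompose $\mu$ into posteriors with at most binary support, each still satisfying this single moment condition, using Richter--Rogosinsky together with Choquet (exactly as in the proof of Theorem~\ref{t:pairwise}(1)). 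On each binary component the inequality $\int u_a\,\df\mu'<0$ follows from \eqref{1} (when degenerate, or when the support consists of two zeros of $u$) or from \eqref{2} (when the atoms have opposite signs), and integrating over the decomposition yields $\int u_a\,\df\mu<0$.

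For (3)$\Rightarrow$(2), write $u=g\tilde u$ and compute $u_a=g\tilde u_a+g'\tilde u$. At any zero of $u$, $\tilde u$ also vanishes, so $u_a=g\tilde u_a<0$, giving \eqref{1}. Rewriting $\tilde u_a<0$ as $u_a/u<h:=g'/g$ on $\{u>0\}$ and $u_a/u>h$ on $\{u<0\}$ immediately implies \eqref{2}.

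The substantial step is (2)$\Rightarrow$(3). For each $a$, define
\[ M_+(a)=\sup_{\theta:\,u(a,\theta)>0}\frac{u_a(a,\theta)}{u(a,\theta)}, \qquad M_-(a)=\inf_{\theta:\,u(a,\theta)<0}\frac{u_a(a,\theta)}{u(a,\theta)}, \]
with $\sup\emptyset=-\infty$ and $\inf\emptyset=+\infty$. Condition \eqref{2} gives $M_+(a)\le M_-(a)$ element-wise. I would upgrade this to strict inequality using compactness of $\ol\Theta$: extract convergent sup- and inf-realising sequences, whose limits either lie in the interior of the relevant sign region (where \eqref{2} prevents equality) or approach a zero of $u$, in which case \eqref{1} forces $u_a/u$ to tend to $-\infty$ from the $u>0$ side and $+\infty$ from the $u<0$ side, again ruling out coincidence. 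The same compactness argument shows $M_+$ is upper semicontinuous and $M_-$ is lower semicontinuous. I would then build a continuous selection $h:A\to\R$ with $M_+(a)<h(a)<M_-(a)$ for all $a$ by covering $A$ with finitely many intervals on each of which a constant lies strictly inside $(M_+,M_-)$ and patching with a partition of unity. Finally, $g(a)=\exp\bigl(\int_0^a h(s)\,\df s\bigr)$ is positive and $C^1$ with $g'/g=h$, so $\tilde u_a=(u_a g-ug')/g^2<0$ everywhere by the bounds on $h$. The main obstacle throughout is the behavior of $u_a/u$ near the zero set of $u$: without \eqref{1}, the ratio could approach the ``wrong'' infinity along sequences in $\{u>0\}$ or $\{u<0\}$, destroying both strict separation of $M_\pm$ and the semicontinuity needed for the continuous selection.
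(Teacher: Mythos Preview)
Your proposal is correct. The directions (1)$\Leftrightarrow$(2) and (3)$\Rightarrow$(2) match the paper's argument exactly (the paper routes the last one through (3)$\Rightarrow$(1), but the computation is the same). The substantive difference is in (2)$\Rightarrow$(3): the paper does not work with the ratio bounds $M_\pm$ at all. Instead, for each fixed $a$ it sets up a convex cone $C\subset\R^3$ of triples $(\int u\,\df\mu,\int u_a\,\df\mu-z,\int\df\mu)$ with $\mu\ge 0$, $z\ge 0$, observes that Assumption~\ref{a:qc} forces $(0,0,1)\notin C$, and separates to extract a scalar $\gamma(a)$ with $u_a(a,\theta)+\gamma(a)u(a,\theta)<0$ for all $\theta$; continuity in $a$ is then obtained by Browder's continuous selection theorem applied to the correspondence $a\mapsto\{\gamma:u_a+\gamma u<0\text{ on }\Theta\}$, which has open lower sections. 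Your route---strict separation of $M_+<M_-$ via compactness and \eqref{1}, semicontinuity of $M_\pm$, and a partition-of-unity interpolant---is more constructive and avoids both the abstract separation and Browder's theorem, at the cost of the careful boundary analysis near $\{u=0\}$ that you flag. The paper's version hides precisely that boundary analysis inside the separation step (the case $y_2=0$) and inside the open-lower-sections verification. Either way one lands on $g(a)=\exp\bigl(\int_0^a h\bigr)$ with $h=-\gamma$, so the two arguments are equivalent at the level of the final object; what differs is whether the existence of the fiberwise scalar is obtained by direct sup/inf bounds or by convex duality.
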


\renewcommand{\thesection}{B}

\section{Proofs}

\subsection{Proof of Lemma \ref{l:dual}}

The proofs of primal attainment and strong duality (points 1 and 3 in the lemma) are standard and are deferred to the Online Appendix. Here we prove dual attainment (point 2).

For any nonempty, compact interval $I\subset \mathbb{R}$, let $B\left(
A,I\right) $ denote the set of bounded, measurable functions $q$ such that $%
q\left( A\right) \subset I$. Define%
\begin{equation*}
F_{I}=\left\{ p\in \mathbb{R}^{\Theta }:\exists q\in B\left(
A,I\right) ,p\left( \theta \right) =\sup_{a\in
A}\left\{ V\left( a,\theta \right) +q\left( a\right) u\left( a,\theta
\right) \right\} \text{ for all $\theta\in \Theta$}\right\} \text{,}
\end{equation*}
and consider the problem
\begin{equation*}
\inf_{p \in F_{I}}\int_{\Theta }p(\theta )\df\phi (\theta ). \tag{D'} 
\end{equation*}
(D') is a reformulated version of (D) that involves only the function $p$. 
Denote the value of (D) by $\mathbf{D}$, and denote the value of (D') (which depends on the interval $I$) by $\mathbf{D_{I}'}$. 
We first show that there exists a solution $p\in F_I$ to (D') and that $p$, together with any measurable selection $q$ from $Q$, is feasible for (D) (Lemma \ref{l:select_q}).  
Finally, we show that for a sufficiently large interval $I=[-C,C]$, $\mathbf{D}=\mathbf{D_{I}'}$ (Lemma \ref{l:D=D'}), so $(p,q)$ solve (D).

For the moment, let $I=[-C,C]$ for an arbitrary choice of $C>0$. The existence of a solution to (D') relies on the following lemma.
\begin{lemma}\label{l:D'attain}
The family of functions $F_{I}$ is uniformly bounded and equicontinuous. Thus, there exists a convergent sequence $p_n \to p$ such that $p_n \in F_I$ for all $n$, $p \in C(\Theta)$, and $\int{pd\phi}=\mathbf{D_{I}'}$.
\end{lemma}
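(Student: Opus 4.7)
The overall plan is to verify the two compactness hypotheses by routine continuity arguments, then deduce the extraction statement from Arzela-Ascoli combined with uniform convergence of integrals against the probability measure $\phi$. Both hypotheses exploit continuity (hence uniform continuity) of $V$ and $u$ on the compact product $A\times\Theta$, together with the a priori bound $|q(a)|\le C$ afforded by $q\in B(A,I)$ with $I=[-C,C]$.

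For uniform boundedness, I would let $M=\max\{\|V\|_\infty,\|u\|_\infty\}<\infty$ on $A\times\Theta$. Any $p\in F_I$ with witness $q\in B(A,I)$ satisfies $|V(a,\theta)+q(a)u(a,\theta)|\le M(1+C)$ pointwise, so taking the supremum over $a$ yields $p(\theta)\le M(1+C)$, while evaluating the same supremum at any fixed $a_0\in A$ delivers the matching lower bound $p(\theta)\ge -M(1+C)$. For equicontinuity, let $\omega_V$ and $\omega_u$ be moduli of continuity of $V$ and $u$ on $A\times\Theta$. For each fixed $a\in A$, combining $V(a,\theta_2)+q(a)u(a,\theta_2)\le p(\theta_2)$ with
\[
V(a,\theta_1)+q(a)u(a,\theta_1)\le V(a,\theta_2)+q(a)u(a,\theta_2)+\omega_V(d(\theta_1,\theta_2))+C\,\omega_u(d(\theta_1,\theta_2))
\]
and taking $\sup_a$ on the left gives $p(\theta_1)\le p(\theta_2)+\omega_V(d(\theta_1,\theta_2))+C\,\omega_u(d(\theta_1,\theta_2))$. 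Swapping $\theta_1\leftrightarrow\theta_2$ produces the matching bound on $|p(\theta_1)-p(\theta_2)|$, with modulus $\omega_V+C\,\omega_u$ independent of $p$.

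To finish, I would pick a minimizing sequence $\{p_n\}\subset F_I$ with $\int p_n\,\df\phi\to\mathbf{D_I'}$ and apply Arzela-Ascoli on the compact metric space $\Theta$ to extract a subsequence converging uniformly to some $p\in C(\Theta)$. Since $\phi$ is a probability measure, uniform convergence delivers $\int p_n\,\df\phi\to\int p\,\df\phi$ along this subsequence, so $\int p\,\df\phi=\mathbf{D_I'}$ as required. Notice that the lemma does not claim $p\in F_I$: the hard part of dual attainment---exhibiting a measurable $q\in B(A,I)$ whose induced supremum equals the limit $p$---is deferred to the subsequent selection step, so the compactness route above suffices here. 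The only mild subtlety in the proof is verifying that $\omega_V$ and $\omega_u$ can be chosen independent of $a$, which is immediate from uniform continuity on the compact product $A\times\Theta$.
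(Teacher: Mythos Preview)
Your proposal is correct and follows essentially the same route as the paper's proof: both obtain uniform boundedness from $\sup_{(a,\theta,r)\in A\times\Theta\times I}|V(a,\theta)+ru(a,\theta)|<\infty$, equicontinuity from a common modulus of continuity arising from uniform continuity of $V$ and $u$ on the compact product (the paper bundles your $\omega_V+C\omega_u$ into a single $w$), and then extract the limit via Arzel\`a--Ascoli applied to a minimizing sequence. Your observation that the lemma does not assert $p\in F_I$ is also on point; the paper likewise defers this to the subsequent selection argument.
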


\begin{proof}
For each $p\in F_{I}$, there exists $q\in
B\left( A,I\right)$ such that $p\left( \theta \right) =\sup_{a\in A}\{ V\left( a,\theta \right)
+q\left( a\right) u\left( a,\theta \right) \}$ for all $\theta \in \Theta$,
and thus
\[
\sup_{\theta \in \Theta} |p(\theta)|\leq \sup_{(a,\theta) \in A\times \Theta }\left\vert  V\left( a,\theta
\right) +q\left( a\right) u\left( a,\theta \right) \right\vert \leq
\sup_{\left( a,\theta ,r\right) \in A\times \Theta \times I}\left\vert
V\left( a,\theta \right) +ru\left( a,\theta \right) \right\vert\text{,}
\]
This upper bound is finite by compactness of $A\times \Theta \times I$ and continuity of $V
$ and $u$, so the family of functions $F_{I}$ is uniformly
bounded. 

Next, since $V $ and $u $ are continuous on the compact set $A\times\Theta $, they are uniformly continuous on $A\times\Theta $. This implies that there exists an increasing, continuous function $w:\mathbb{R}%
_{+}\rightarrow \mathbb{R}_{+}$ (known as the modulus of continuity)
such that $w\left( 0\right) =0$ and, for
every $\theta ,\theta ^{\prime }\in \Theta $ and $a\in A$, we have%
\begin{equation*}\label{e:w(d)}
\begin{aligned}
&\left\vert V\left( a,\theta \right) +q(a)u\left( a,\theta \right) -V\left(
a,\theta ^{\prime }\right) -q(a)u\left( a,\theta ^{\prime }\right) \right\vert \\
\leq &\left\vert V\left( a,\theta \right) -V\left( a,\theta ^{\prime
}\right) \right\vert +\left\vert q(a)\right\vert \left\vert u\left( a,\theta
\right) -u\left( a,\theta ^{\prime }\right) \right\vert  \\
\leq &\left\vert V\left( a,\theta \right) -V\left( a,\theta ^{\prime
}\right) \right\vert + C \left\vert u\left( a,\theta
\right) -u\left( a,\theta ^{\prime }\right) \right\vert  \\
\leq &w\left( d\left( \theta ,\theta ^{\prime }\right) \right),
\end{aligned}
\end{equation*}%
where $d(\theta,\theta')$ denotes the distance between $\theta,\theta'\in\Theta$. We claim that for all $p\in F_{I}$ and $\theta ,\theta
^{\prime }\in \Theta $, we have $|p(\theta)-p(\theta')|\leq w(d(\theta,\theta'))$. Indeed, for each $a\in A$, 
\begin{align*}
	p(\theta)=& \sup_{\tilde a\in A}\left\{ V\left( \tilde a,\theta \right) +q\left( \tilde a\right) u\left( \tilde a,\theta \right)\right\}\\
	\geq& V\left( a,\theta \right) +q\left( a\right) u\left( a,\theta \right) \\
	\geq& V\left( a,\theta' \right) +q\left( a\right) u\left( a,\theta' \right) - w\left( d\left( \theta ,\theta ^{\prime }\right) \right).
\end{align*}
Taking the supremum over $a\in A$ gives $p(\theta)\geq p(\theta')-w(d(\theta,\theta'))$, and switching the roles of $\theta$ and $\theta'$ gives $|p(\theta)-p(\theta')|\leq w(d(\theta,\theta'))$. Consequently, the family of functions $F_{I}$ is equicontinuous.

Now consider a minimizing sequence $p_n\in F_I$ such that
$\int p_n(\theta)\df \phi(\theta) \rightarrow \mathbf{D_{I}'}$.
Since $\Theta$ is compact, and $F_I$ is uniformly bounded and equicontinuous, Arzel\`{a}%
-Ascoli's theorem implies that there exists a subsequence $p_{n_k}$ uniformly converging to some function $p\in C(\Theta)$, and thus $\int p_{n_k}(\theta)\df \phi(\theta) \rightarrow  \int p(\theta)\df \phi(\theta)=\mathbf{D_{I}'}$.
\end{proof}

Now fix $p\in C(\Theta)$ as in Lemma \ref{l:D'attain}. To show that $p\in F_I$, recall the correspondence%
\begin{equation}
\label{e:corresp_Q}
Q\left( a\right) =\left\{ r\in I:p\left( \theta
\right) \geq V\left( a,\theta \right) +ru\left( a,\theta \right)\text{ for all $\theta \in \Theta$} \right\},
\quad \text{for all $a\in A$.}
\end{equation}%
We first derive some properties of this correspondence, which will also
be used in the subsequent analysis.\footnote{In the current proof, the correspondence $Q$ is defined in reference to the price function $p$ defined in Lemma \ref{l:D'attain}. In the text, $Q$ is defined in reference to an optimal price function. We will see that $p$ is indeed optimal, so the definitions are equivalent.}

\begin{lemma}\label{Q:uhc}
The correspondence $Q$ is nonempty, convex and compact valued, and upper
hemicontinuous, and hence admits a measurable selection $q$.
\end{lemma}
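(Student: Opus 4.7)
The plan is to verify the four properties of $Q$ in turn, taking particular care with non-emptiness because at this stage of the argument we do not yet know that $p\in F_I$.

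\textbf{Non-emptiness.} I would extract the witness from the minimising sequence underlying $p$. By Lemma~\ref{l:D'attain}, $p$ is the uniform limit of a sequence $p_n\in F_I$; for each $n$ pick $q_n\in B(A,I)$ with $p_n(\theta)=\sup_{a}\{V(a,\theta)+q_n(a)u(a,\theta)\}$ for every $\theta$. Fix $a\in A$. Because $I$ is a compact interval, the sequence $(q_n(a))_n\subset I$ has a convergent subsequence $q_{n_k}(a)\to r$ for some $r\in I$. For every $\theta$,
\[
p_{n_k}(\theta)\ \geq\ V(a,\theta)+q_{n_k}(a)u(a,\theta),
\]
and passing to the limit using the uniform convergence $p_{n_k}\to p$ and the continuity of $V$ and $u$ gives $p(\theta)\geq V(a,\theta)+ru(a,\theta)$ for all $\theta$, so $r\in Q(a)$.

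\textbf{Convexity and compactness.} The set $Q(a)$ is, by definition, the intersection over $\theta\in\Theta$ of the half-lines $\{r\in I:\, ru(a,\theta)\leq p(\theta)-V(a,\theta)\}$, each of which is closed and convex in $r$; hence $Q(a)$ is a closed convex subset of the compact interval $I$, and is therefore both convex and compact.

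\textbf{Upper hemicontinuity.} Since $Q$ has compact values in the compact metric space $I$, it suffices to verify the closed-graph property. If $a_n\to a$ in $A$ and $r_n\in Q(a_n)$ with $r_n\to r$, then for every $\theta\in\Theta$,
\[
p(\theta)\ \geq\ V(a_n,\theta)+r_nu(a_n,\theta),
\]
and the joint continuity of $V$ and $u$ lets me pass to the limit to obtain $p(\theta)\geq V(a,\theta)+ru(a,\theta)$, i.e., $r\in Q(a)$.

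\textbf{Measurable selection.} Having shown that $Q:A\rightrightarrows I$ is upper hemicontinuous with non-empty compact values on the metric space $A$, every upper hemicontinuous compact-valued correspondence between metric spaces is weakly measurable (e.g., Lemma~17.4 in \citealt{aliprantis2006}), so the Kuratowski--Ryll-Nardzewski selection theorem (Theorem~18.13 in \citealt{aliprantis2006}) furnishes a measurable selection $q$. Equivalently, one can simply take the measurable selection $q(a)=\min Q(a)$, whose measurability follows from upper hemicontinuity together with the fact that $\{a:\min Q(a)\leq c\}=\{a:Q(a)\cap[\min I,c]\neq\varnothing\}$ is closed for every $c\in I$. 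The only subtle step is the non-emptiness argument, since we cannot invoke the witness $q$ for $p$ before $p\in F_I$ has been established; the subsequential compactness argument above circumvents this circularity.
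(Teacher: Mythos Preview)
Your proof is correct and follows essentially the same approach as the paper: nonemptiness via a compactness argument on the sequence $(q_n(a))$ drawn from the minimising sequence of Lemma~\ref{l:D'attain}, convexity and compactness directly from the defining inequalities, upper hemicontinuity via the closed-graph theorem, and a measurable selection from standard results. The only cosmetic difference is that the paper cites Theorem~18.20 in \citet{aliprantis2006} for the selection step, whereas you route through weak measurability and Kuratowski--Ryll-Nardzewski (or the explicit $\min Q$); both are fine, and your remark about the circularity hazard in the nonemptiness step is exactly the point.
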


\begin{proof}
By Lemma \ref{l:D'attain}, there exists a sequence $p_{n}\in F_{I}$, such that $%
p_{n}\rightarrow p$ uniformly. For every $n\in \mathbb{N}$, define%
\[
Q_{n}\left( a\right) =\left\{ r\in I:p_{n}\left(
\theta \right) \geq V\left( a,\theta \right) +ru\left( a,\theta \right)
\text{ for all $\theta \in \Theta$}\right\} \quad \text{for all $a\in A$.}
\]%
For every $a\in A$ and $n\in \mathbb{N}$, we have $Q_{n}\left(
a\right) \neq \emptyset $ since $p_{n}\in F_{I}$. Fix $a\in A$, and for
every $n\in \mathbb{N}$ fix $r_{n}\in Q_{n}\left( a\right) \subset I$.
Since $I$ is compact, there exists a convergent
subsequence $r_{n_{k}}\rightarrow r$ with $r\in I$. For all $%
k\in \mathbb{N}$, we have $p_{n_k}\left( \theta \right) \geq V\left( a,\theta \right) +r_{n_k}u\left(
a,\theta \right)$ for all $\theta \in \Theta$, which implies that $p\left( \theta \right) \geq V\left( a,\theta \right) +ru\left( a,\theta
\right)$ for all $\theta \in \Theta$. This shows that $r\in Q\left( a\right) $. Since $a$ was arbitrary, it follows that $Q$ is nonempty valued.

Next, for all $a\in A$, $Q(a)$ is closed because $V$ and $u$ are continuous, and $Q(a)$ is convex because it is defined by a linear inequality. Now consider a sequence $ \left(
a_{n},r_{n}\right)$ in the graph of $Q$ such that $\left(
a_{n},r_{n}\right) \rightarrow \left( a,r\right) $. For every $n\in \mathbb{N%
}$, we have $p\left( \theta \right) \geq V\left( a_{n},\theta \right) +r_{n}u\left(
a_{n},\theta \right)$ for all $\theta \in \Theta$. By continuity of $V$ and $u$, this implies that $p\left( \theta \right) \geq V\left( a,\theta \right) +ru\left( a,\theta
\right)$ for all $\theta \in \Theta$. This shows that $\left( a,r\right)$ is in the graph of $Q$. By the closed-graph
theorem, the correspondence $Q$ is upper hemicontinuous. Finally, by Theorem
18.20 in \citet{aliprantis2006}, $Q$ admits a
measurable selection $q$.
\end{proof}

We next show that $(p,q)$ is feasible for (D), for any measurable selection $q$ from $Q$. Consider the problem
\begin{equation*}
\inf \left\{ \int p(\theta) \df\phi (\theta) : p\in C\left( \Theta \right) ,\exists
q\in B\left( A,I\right) \text{ such that } ( p,q) \text{ satisfy (D1)}\right\} .\tag{D''} 
\end{equation*}
Denote the the value of (D'') by $\mathbf{D_{I}''}$

\begin{lemma}\label{l:select_q}
\label{lm:q_invar}For every measurable selection $q$ from $Q$, we have%
\begin{equation*}
p\left( \theta \right) =\sup_{a\in A}\left\{ V\left( a,\theta \right)
+q\left( a\right) u\left( a,\theta \right) \right\}, \quad \text{for all $\theta
\in \Theta$,} 
\end{equation*}
and hence $p \in F_{I}$, and $(p,q)$ satisfy (D1). Therefore, $\mathbf{D_{I}'}=\mathbf{D_{I}''}$.
\end{lemma}

\begin{proof}
Fix a measurable selection $q$ from $Q$, and let $\hat{p}(\theta):=\sup_{a\in A}\{ V\left( a,\theta \right) +q\left( a\right) u\left( a,\theta
\right) \}$ for all $\theta \in \Theta$. Note that $\hat{p} \in F_{I} \subset C(\Theta)$, and that $p(\theta)\geq \hat{p}(\theta)$ for all $\theta \in \Theta$ by construction of $Q$. Conversely, if $\hat{p}(\theta)< p(\theta)$ for some $\theta \in \Theta$, then $\int_{\Theta }\hat{p}\left( \theta \right) \df \phi \left( \theta \right)
<\int_{\Theta }p\left( \theta \right) \df \phi \left( \theta \right)$ (by continuity of $p$ and $\hat{p}$, together with full support of $\phi$), which contradicts the definition of $p$. Hence, $p=\hat{p}$, establishing the first part of the lemma. Next, since $p$ is continuous and $(p,q)$ satisfy (D1) for every selection $q$ from $Q$, $p$ is feasible for (D''). Moreover, for any $\tilde p\in C\left( \Theta \right) $ that satisfies (D1) for some $q\in
B\left( A,I\right) $, the function $\hat p$ defined above satisfies $\hat{p}\left( \theta \right) \leq \tilde p\left(
\theta \right) $ for all $\theta \in \Theta $, so we have $\int p \df \phi \leq \int \hat{p}\df \phi \leq \int
\tilde p \df \phi$. Hence, $p$ solves (D''), and $\mathbf{D_{I}''}= \mathbf{D_{I}'}$.
\end{proof}

The following lemma implies that for a sufficiently large interval $I=\left[ -C,C\right] $, we have $\mathbf{D}=\mathbf{D_{I}''}$, so that the pair $(p,q)$ constructed in Lemma \ref{l:select_q} solve (D). This proves dual attainment.

\begin{lemma}
\label{l:D=D'}There exists $C>0$ such that $\mathbf{D}=\mathbf{D_{I}''}$, where $I=\left[ -C,C\right] $.
\end{lemma}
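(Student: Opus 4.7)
The direction $\mathbf{D}\le\mathbf{D_{I}''}$ is immediate from the inclusion $B(A,I)\subset B(A)$, which makes (D'') a restriction of (D). My plan for the reverse $\mathbf{D_{I}''}\le\mathbf{D}$ is to take an arbitrary feasible pair $(p,q)$ for (D) and produce a feasible pair $(\tilde p,\tilde q)$ for (D'') with $\tilde q\in B(A,I)$ and $\int\tilde p\,\df\phi\le\int p\,\df\phi$; ranging over $\varepsilon$-optimal $(p,q)$ and sending $\varepsilon\to 0$ then yields the inequality. The constant $C$ in $I=[-C,C]$ will depend only on the primitives $V$, $u$, and $\phi$.

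The construction has two moves. First, I replace $p$ by its value function $p(\theta):=\sup_{a\in A}\{V(a,\theta)+q(a)u(a,\theta)\}$, which only decreases $\int p\,\df\phi$ and preserves (D1). Second, for this $p$ I consider the correspondence $Q_p(a)=\{r\in\R: p(\theta)\ge V(a,\theta)+ru(a,\theta),\ \forall\theta\in\Theta\}$, which is nonempty (it contains $q(a)$), convex, and closed. Granted that $Q_p(a)\cap[-C,C]\ne\emptyset$ for every $a$, the Kuratowski--Ryll-Nardzewski selection theorem supplies a measurable selection $\tilde q\in B(A,I)$, and the associated $\tilde p(\theta):=\sup_a\{V(a,\theta)+\tilde q(a)u(a,\theta)\}\le p(\theta)$ is continuous by the equicontinuity argument of Lemma \ref{l:D'attain}. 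The pair $(\tilde p,\tilde q)$ is then feasible for (D'') with no larger integral than $(p,q)$.

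The main obstacle is establishing the uniform nonemptiness $Q_p(a)\cap[-C,C]\ne\emptyset$. When $u(a,\cdot)$ has constant sign on $\Theta$, $Q_p(a)$ is an unbounded half-line and the intersection is trivially nonempty for any $C\ge 0$. When $u(a,\cdot)$ changes sign, Assumption \ref{a:int} together with Assumption \ref{a:qc} (via Lemma \ref{l:ASC*}: there is continuous $\gamma$ with $u_a+\gamma u<0$, hence $u_a\le-\delta<0$ uniformly on the zero set of $u$) and continuity of $u$ on the compact $A\times\Theta$ yield a uniform $\epsilon>0$ and states $\theta^\pm(a)$ with $u(a,\theta^\pm(a))$ of opposite signs and magnitude at least $\epsilon$; evaluating (D1) at $\theta^\pm(a)$ gives the two-sided bound $|r|\le(\|p\|_\infty+\|V\|_\infty)/\epsilon$ on $Q_p(a)$. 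The remaining ingredient is a uniform sup-norm bound on $p$ for $\varepsilon$-optimal $(p,q)$, which I plan to obtain by a bootstrap: the trivial pair $(p_0,0)$ with $p_0(\theta)=\sup_a V(a,\theta)$ shows $\mathbf{D}\le\|V\|_\infty$; restricting to $(p,q)$ with $\int p\,\df\phi\le \mathbf{D}+1$ and using the pointwise lower bound $p\ge -\|V\|_\infty$ (attained at any $a$ with $u(a,\theta)=0$, whose existence follows from Assumptions \ref{a:int} and \ref{a:qc}) together with the equicontinuity of $p$ after a preliminary coarse truncation of $q$ controls $\|p\|_\infty$. Breaking this circularity between bounded $q$ and bounded $p$ is the delicate part of the argument.
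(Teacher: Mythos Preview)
Your proposal has a genuine gap, and it is precisely the one you flag yourself: the circularity between bounding $q$ and bounding $\|p\|_\infty$. Your route to $Q_p(a)\cap[-C,C]\ne\emptyset$ goes through the estimate $|r|\le(\|p\|_\infty+\|V\|_\infty)/\epsilon$, so $C$ depends on $\|p\|_\infty$; but your proposed control of $\|p\|_\infty$ invokes equicontinuity of $p$, which (as in Lemma~\ref{l:D'attain}) requires $q$ to already lie in a fixed compact interval. The ``preliminary coarse truncation of $q$'' does not break the circle: truncating $q$ generally destroys (D1) for the given $p$, and you have no estimate on how much $p$ must be raised to restore it. There is also a subsidiary issue with your uniform $\epsilon$: for $a$ near $0$ (resp.\ $1$), Assumption~\ref{a:int} forces $\min_\theta u(a,\theta)\to 0$ (resp.\ $\max_\theta u(a,\theta)\to 0$), so on the set of $a$ where $u(a,\cdot)$ changes sign no such uniform $\epsilon>0$ exists.

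The paper avoids the circularity entirely by exploiting the very lower bound you noticed, $p(\theta)\ge V(a^\star(\delta_\theta),\theta)$, much more directly. Define
\[
\widetilde q(a,\theta)=
\begin{cases}
\dfrac{v(a,\theta)}{-u_a(a,\theta)}, & u(a,\theta)=0,\\[1.2ex]
\dfrac{V(a^\star(\delta_\theta),\theta)-V(a,\theta)}{u(a,\theta)}, & u(a,\theta)\ne 0,
\end{cases}
\]
show (via a mean-value argument) that $\widetilde q$ is continuous on the compact $A\times\Theta$, and set $\underline C=\min\widetilde q$, $\overline C=\max\widetilde q$. Then for \emph{any} feasible $(p,q)$, clipping $q(a)$ into $[\underline C,\overline C]$ preserves (D1) with the \emph{same} $p$: for instance, if $q(a)>\overline C$ and $u(a,\theta)<0$, then $\overline C\ge\widetilde q(a,\theta)$ gives $V(a,\theta)+\overline C\,u(a,\theta)\le V(a^\star(\delta_\theta),\theta)\le p(\theta)$, while for $u(a,\theta)\ge 0$ lowering $q(a)$ only relaxes (D1). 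The resulting $C=\max\{|\underline C|,|\overline C|\}$ depends only on $V$ and $u$, not on $p$, so no bootstrap is needed.
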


\begin{proof}
It is enough to find $C>0$ such that the additional constraint $q(a)\in \lbrack -C,C]$ for
all $a\in A$ is non-binding in (D).

Define 
\begin{equation*}
\widetilde{q}(a,\theta )=%
\begin{cases}
\frac{v(a,\theta )}{-u_{a}(a,\theta )}, & u(a,\theta )=0, \\%
\frac{V(a^{\star }(\delta _{\theta }),\theta )-V(a,\theta )}{u(a,\theta )}, & 
u(a,\theta )\neq 0.
\end{cases}%
\end{equation*}%
Recall that Assumption \ref{a:qc} requires
that $u_{a}(a,\theta )<0$ when $u(a,\theta )=0$; so $\tilde{q}(a,\theta )$
is well-defined. Since $a^{\star }(\delta _{\theta })$ is a unique maximizer
of a continuous function $U(a,\theta )$, it is continuous in $\theta $ by Berge's theorem. 

We now prove that $\widetilde{q}$ is continuous at
each $\left( a ,\theta \right) \in A\times \Theta $. First, $\widetilde q$ is continuous at each $(a,\theta)$ such that $u(a,\theta)\neq 0$, because $V$, $u$, and $a^\star$ are continuous.
Next, consider $(a,\theta)$ such that $u(a,\theta)=0$, or equivalently $a=a^\star(\delta_\theta)$. For each $(a',\theta')\in A\times \Theta$, there exists $\hat a$ between $a^\star(\delta_{\theta'})$ and $a'$ such that
\[
[V(a^\star(\delta_{\theta'}),\theta')-V(a',\theta')]u_a(\hat a,\theta')=-v(\hat a,\theta')u(a',\theta'),
\]
by the mean value theorem applied to the function  
\[
[V(a^\star(\delta_{\theta'}),\theta')-V(\tilde a,\theta')]u(a',\theta')-[V(a^\star(\delta_{\theta'}),\theta')-V(a',\theta')]u(\tilde a,\theta'),
\]
where the argument $\tilde a$ is between $a^\star(\delta_{\theta'})$ and $a'$. Thus,
\[
\widetilde q(a',\theta')-\widetilde q(a,\theta)=\frac{v(\hat a,\theta')}{-u_a(\hat a,\theta')}-\frac{v(a,\theta)}{-u_a(a,\theta)}.
\]
If $(a',\theta')\rightarrow (a,\theta)$ then $(\hat a,\theta')\rightarrow (a,\theta)$, because $a^\star(\delta_\theta)$ is continuous in $\theta$. Hence, $\widetilde q(a',\theta')\rightarrow \widetilde q(a,\theta)$, because $v$ and $u_a$ are continuous. This shows that $\widetilde{q}$ is continuous on $A\times \Theta$. 

Next, define $\underline{C}=\min_{(a,\theta) \in A\times \Theta }\widetilde{q}%
(a,\theta )-1$ and $\overline{C}=\max_{(a,\theta) \in
A\times \Theta }\widetilde{q}(a,\theta )+1$,
where $\underline{C}$ and $\overline{C}$ are finite because $\widetilde{q}$
is continuous on the compact set $A\times \Theta $. To see why the
constraint $q(a)\leq \overline{C}$ is non-binding, notice that decreasing
$q(a)$ weakly tightens (D1) for $\theta $ such that $u(a,\theta )<0$, and
weakly relaxes (D1) for $\theta $ such that $u(a,\theta )\geq 0$. If $q(a)>%
\overline{C}$ and $u(a,\theta )<0$, then $V(a,\theta )+q(a)u(a,\theta
)<V(a^{\star }(\delta _{\theta }),\theta )$, so decreasing $q(a)$ to $\overline{C}$ does not strictly tighten (D1),
because $p(\theta )\geq V(a^{\star }(\delta _{\theta }),\theta )$ by feasibility. Thus, since the dual objective function does not depend on $q(a)$, adding the constraint $q(a)\leq \overline{C}$ does not affect the value of (D). Similarly, increasing $q(a)$ to $\underline{C}$ does not strictly tighten (D1) for $\theta $ such that $u(a,\theta )>0$, and weakly
relaxes (D1) for $\theta $ such that $u(a,\theta )\leq 0$; so we can add the non-binding
constraint $q(a)\geq \underline{C}$. 

In sum, adding the constraint $q\left( A\right) \subset I=\left[ -C,C\right] $ where  $C=\max \{|\underline{C}|,|\overline{C}|\}$
does not alter the value of (D), so $\mathbf{D}=\mathbf{D_{I}''}$.
\end{proof}

\subsection{Proof of Theorem \ref{t:contact}}

Let
\[
\mathring q(a)=\frac{\min Q(a) +\max Q(a)}{2}.
\]
Define the set of \emph{$a$-contact points of type 1} as
\[
\Psi_1=\{a\in A:\ \theta^\star(a)\in \Theta\text{ and }p(\theta^\star(a))=V(a,\theta^\star(a))\},
\]
and the set of \emph{$a$-contact points of type 2} as
\[
\Psi_2=\{a\in A\setminus \Psi_1: \exists \theta\in \Theta: p(\theta)=V(a,\theta)+\mathring q(a)u(a,\theta) \}.
\]
Note that
\[
q(a)=
\begin{cases}
\frac{v(a,\theta^\star(a))}{-u_a(a,\theta^\star(a))}, &a \in \Psi_1,\\
\mathring q(a), &\text{otherwise}.
\end{cases}
\]
Part (1) of the theorem follows from Lemmas \ref{l:12}--\ref{l:suppG}, and part (2) of the theorem follows from Lemmas \ref{l:G*} and \ref{l:pia}. 

\begin{lemma}\label{l:12}
$\Gamma_a$ is non-empty iff $a\in \Psi_1\cup\Psi_2$. That is, $\Psi_1\cup\Psi_2=A_\Gamma$.
\end{lemma}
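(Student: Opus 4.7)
The lemma is essentially a bookkeeping claim that the two-part definition of $q(a)$ (via the set $\Psi_1$ and its complement) correctly parametrizes the actions with non-empty $a$-section of $\Gamma$. The plan is to unpack the definitions and verify the equivalence by a case split on whether $a\in \Psi_1$.

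For the ``if'' direction, I would handle the two cases separately. If $a\in \Psi_1$, I claim that $\theta^\star(a)\in \Gamma_a$: since $u(a,\theta^\star(a))=0$ by the definition of $\theta^\star$, the defining identity $p(\theta)=V(a,\theta)+q(a)u(a,\theta)$ at $\theta=\theta^\star(a)$ reduces to $p(\theta^\star(a))=V(a,\theta^\star(a))$, which holds precisely because $a\in \Psi_1$. The particular value of $q(a)$ is immaterial here because it is multiplied by zero; one should only confirm that the formula defining $q$ on $\Psi_1$ is meaningful, which follows from Assumption \ref{a:qc} applied to $\mu=\delta_{\theta^\star(a)}$, forcing $u_a(a,\theta^\star(a))<0$. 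If $a\in \Psi_2$, then by construction $a\notin \Psi_1$, so $q(a)=\mathring q(a)$, and the defining property of $\Psi_2$ directly supplies a $\theta\in \Theta$ with $p(\theta)=V(a,\theta)+\mathring q(a)u(a,\theta)=V(a,\theta)+q(a)u(a,\theta)$, i.e., $\theta\in \Gamma_a$.

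For the ``only if'' direction, suppose $\Gamma_a\neq \emptyset$ and fix any $\theta\in \Gamma_a$. If $a\in \Psi_1$ we are already done. Otherwise $a\notin \Psi_1$ and $q(a)=\mathring q(a)$ by construction, so the witness $\theta$ satisfies $p(\theta)=V(a,\theta)+\mathring q(a)u(a,\theta)$, which is exactly the condition placing $a$ in $\Psi_2$.

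The main obstacle is essentially none: the statement is a definitional check that the partition $\Psi_1 \sqcup (\Psi_2\setminus \Psi_1)$ matches the piecewise construction of $q(a)$. The only minor subtlety is well-definedness of $q$ on $\Psi_1$, which is handled by Assumption \ref{a:qc} as noted above. The real work comes in the subsequent lemmas, where one must show that $Q(a)$ is non-empty for $a\in \Psi_2$ (so that $\mathring q(a)$ is well-defined) and then establish compactness of $\Gamma$ and the first-order condition on $\Gamma^\star$.
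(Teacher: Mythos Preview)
Your proof is correct and follows essentially the same approach as the paper: both arguments reduce to unpacking the definitions, noting that for $a\in\Psi_1$ the point $\theta^\star(a)$ lies in $\Gamma_a$ because $u(a,\theta^\star(a))=0$, and that for $a\notin\Psi_1$ the selection $q(a)=\mathring q(a)$ makes membership in $\Gamma_a$ coincide with the defining condition of $\Psi_2$. The paper's proof is simply a terser version of yours, and your remark on well-definedness of $q$ on $\Psi_1$ via Assumption~\ref{a:qc} is a correct but optional elaboration.
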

\begin{proof}
Clearly, $\theta^\star(a)\in \Gamma_a$ if $a\in \Psi_1$. By the definition of $\Psi_2$, $\Gamma_a$ is non-empty if $a\in \Psi_2$, and $\Gamma_a$ is empty if $a\notin \Psi_1\cup \Psi_2$.
\end{proof}
\begin{lemma}\label{l:inQ}
(p,q) solves (D).
\end{lemma}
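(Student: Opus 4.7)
The plan is to establish the lemma by showing that the function $q$ defined in Theorem \ref{t:contact} is a measurable selection from the correspondence $Q$ introduced in \eqref{e:corresp_Q}; the conclusion then follows from Lemma \ref{l:select_q} combined with Lemma \ref{l:D=D'}, which together imply that any dual pair of the form $(p,\tilde q)$ with $\tilde q$ a measurable selection from $Q$ is feasible and achieves the value $\mathbf{D}$.

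Measurability of $q$ is routine. The set $\Psi_1$ is closed in $A$ because $\theta^\star$, $p$, and $V$ are continuous, so both conditions defining $\Psi_1$ cut out closed subsets. On $\Psi_1$, $q$ is continuous as a composition of continuous functions, where the denominator is nonzero by Assumption \ref{a:qc} (which forces $u_a(a,\theta^\star(a))<0$ whenever $u(a,\theta^\star(a))=0$). On $A\setminus \Psi_1$, $\mathring q$ is measurable because Lemma \ref{Q:uhc} implies that $\max Q$ is upper semicontinuous and $\min Q$ is lower semicontinuous.

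The substantive step is verifying $q(a)\in Q(a)$ for every $a$. For $a\notin \Psi_1$ this is immediate from the convexity of $Q(a)$. For $a\in \Psi_1$ we must show $r^*:=v(a,\theta^\star(a))/(-u_a(a,\theta^\star(a)))\in[\min Q(a),\max Q(a)]$. The key idea is to exploit first-order behavior in $a$ at the state $\theta^\star(a)$. Fix any measurable selection $\tilde q$ from $Q$. By Lemma \ref{l:select_q}, $p(\theta^\star(a))=\sup_{a'\in A}\{V(a',\theta^\star(a))+\tilde q(a')u(a',\theta^\star(a))\}$, and since $a\in \Psi_1$ and $u(a,\theta^\star(a))=0$, this supremum equals $V(a,\theta^\star(a))$ and is attained at $a'=a$. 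A first-order Taylor expansion in the first argument (using Assumption \ref{a:smooth}, the identity $u(a,\theta^\star(a))=0$, and the uniform bound $|\tilde q|\leq C$) gives
\[
0\geq (a'-a)\bigl[v(a,\theta^\star(a))+\tilde q(a')u_a(a,\theta^\star(a))\bigr]+o(|a'-a|).
\]
Dividing by $a'-a$ and using $u_a(a,\theta^\star(a))<0$ yields $\liminf_{a'\to a^+}\tilde q(a')\geq r^*$ and $\limsup_{a'\to a^-}\tilde q(a')\leq r^*$. Specializing the first inequality to $\tilde q_{\max}(\cdot):=\max Q(\cdot)$ and extracting a convergent subsequence $a_n\to a^+$, the closed graph of $Q$ forces the limit to lie in $Q(a)$, and in particular to be at most $\max Q(a)$; this yields $\max Q(a)\geq r^*$. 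Specializing the second inequality to $\tilde q_{\min}(\cdot):=\min Q(\cdot)$ analogously yields $\min Q(a)\leq r^*$. Boundary cases $a\in\{0,1\}$ require only the one-sided Taylor argument together with Assumption \ref{a:int}: since $u(0,\cdot)\geq 0$ and $u(1,\cdot)\leq 0$, the bound $\min Q(0)=-C$ and $\max Q(1)=C$ holds automatically, while $|r^*(a)|\leq C$ by the choice of $C$ in Lemma \ref{l:D=D'}.

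The principal obstacle is the Taylor expansion step. Because $\tilde q$ is only measurable and may be discontinuous at $a$, one cannot directly differentiate; the argument proceeds by bounding the one-sided liminf and limsup of $\tilde q(a')$ along $a'\to a$ and then passing to subsequential limits inside the closed graph of $Q$. The uniform bound $|\tilde q|\leq C$ is what makes the Taylor remainder genuinely of order $o(|a'-a|)$ uniformly in the selection, allowing the one-sided limits to be taken cleanly.
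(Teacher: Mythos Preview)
Your argument is correct and takes a genuinely different route from the paper's. The paper verifies the (D1) inequality at each state $\theta$ directly: for $a_1\in\Psi_1$ and arbitrary $\theta$, it defines $a_\varepsilon$ as the receiver's optimal action at the posterior $(1-\varepsilon)\delta_{\theta^\star(a_1)}+\varepsilon\delta_\theta$, takes the two already-known (D1) constraints at $a_\varepsilon$ (with the midpoint selection $\mathring q$), and forms the convex combination with weights $(1-\varepsilon,\varepsilon)$ so that the obedience-type terms cancel and $\mathring q(a_\varepsilon)$ disappears entirely; sending $\varepsilon\to0$ then yields $p(\theta)\geq V(a_1,\theta)+q(a_1)u(a_1,\theta)$ via the implicit-function derivative of $a_\varepsilon$. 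Your approach instead bounds $r^*$ between $\min Q(a)$ and $\max Q(a)$ by reading off a first-order condition from the single state $\theta^\star(a)$ and then passing one-sided subsequential limits through the closed graph of $Q$. The paper's argument is shorter and more self-contained (one limit, no subsequence extraction, no separate boundary treatment, no appeal to upper hemicontinuity of $Q$); yours is more conceptual---it is literally the dual FOC at disclosed states---and it anticipates the Dini-derivative computation in Lemma~\ref{l:FOC}. Both rely on the same underlying idea that $q(a_1)$ should be the marginal value $v/(-u_a)$ at the tangency point $\theta^\star(a_1)$, but the paper gets there by perturbing the \emph{state mixture} while you perturb the \emph{action}.
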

\begin{proof}
Note that $q\in B(A)$, as follows from the proof of Lemma \ref{l:D=D'} (measurability of $q$ follows from continuity of $p$, $v$, $u_a$, and $\theta^\star$). Thus, by Lemma \ref{l:select_q}, it suffices to show that $q(a_1)\in Q(a_1)$ for each $a_1\in \Psi_1$: that is,
\[p(\theta) \geq V(a_1,\theta) + q(a_1)u(a_1,\theta)\quad \text{ for all $a_1\in \Psi_1$ and $\theta \in \Theta$.}\]

 Fix any $a_1\in \Psi_1$ and $\theta\in \Theta$, and let $\theta_1=\theta^\star (a_1)$. For any $\varepsilon \in (0,1)$, define $a_{\varepsilon}\in A$ as a unique solution to
$
(1-\varepsilon)u(a_\varepsilon ,\theta_1)+\varepsilon u(a_\varepsilon,\theta )=0.
$
By the implicit function theorem,
\[
\lim_{\varepsilon \downarrow 0}\frac{a_\varepsilon -a_1}{\varepsilon} = \frac{u(a_1,\theta)}{-u_a(a_1,\theta_1)}.
\]
By (D1), we have
\begin{align*}
	V(a_1,\theta_1)\geq V(a_\varepsilon,\theta_1)+\mathring q(a_\varepsilon) u(a_\varepsilon,\theta_1) \quad \text{and} \quad
	p(\theta)\geq V(a_\varepsilon,\theta)+\mathring q(a_\varepsilon) u(a_\varepsilon,\theta).
\end{align*}
Adding the first inequality multiplied by $1-\varepsilon$ and the second inequality multiplied by $\varepsilon$, and taking into account the definition of $a_\varepsilon$, we get
\[
p(\theta) \geq V(a_1,\theta) +\frac{(1-\varepsilon) [V(a_\varepsilon,\theta_1)-V(a_1,\theta_1)]+\varepsilon [V(a_\varepsilon,\theta)-V(a_1,\theta)]}{\varepsilon}.
\]
Taking the limit $\varepsilon \rightarrow 0$ gives
\[
p(\theta)\geq V(a_1,\theta)+\frac{v(a_1,\theta_1)}{-u_a(a_1,\theta_1)}u(a_1,\theta)=V(a_1,\theta)+ q(a_1)u(a_1,\theta).\qedhere
\]
\end{proof}

\begin{lemma}\label{l:-+}
For each $a\in \Psi_1$, we have $\inf \Gamma_a\leq\theta^\star(a)\leq\sup \Gamma_a$. For each $a\in \Psi_2$, we have $\theta^\star(a)\notin \Gamma_a$, $\inf \Gamma_a<\theta^\star (a)<\sup \Gamma_a$, and $\mathring q(a)=\min Q(a)=\max Q(a)$.
\end{lemma}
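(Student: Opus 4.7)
Proof plan: The argument naturally splits into the two cases. Throughout, I rely on Lemma \ref{l:inQ}, which guarantees that $(p,q)$ is dual feasible, so (D1) holds everywhere.

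Part (1) follows immediately from the observation that for $a \in \Psi_1$, the state $\theta^\star(a)$ itself lies in $\Gamma_a$. Indeed, $u(a,\theta^\star(a))=0$ by definition of $\theta^\star$, so the defining equation $p(\theta) = V(a,\theta)+q(a)u(a,\theta)$ reduces at $\theta=\theta^\star(a)$ to $p(\theta^\star(a))=V(a,\theta^\star(a))$, which holds by definition of $\Psi_1$. Consequently $\min\Gamma_a\le\theta^\star(a)\le\max\Gamma_a$.

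For part (2) I would establish the three claims in the order (a)$\Rightarrow$(c)$\Rightarrow$(b). \emph{For (a):} since $a\notin\Psi_1$, either $\theta^\star(a)\notin\Theta$ (in which case $\theta^\star(a)\notin\Gamma_a\subset A\times\Theta$ trivially), or $\theta^\star(a)\in\Theta$ but $p(\theta^\star(a))\neq V(a,\theta^\star(a))$. In the latter case (D1) forces $p(\theta^\star(a))>V(a,\theta^\star(a))=V(a,\theta^\star(a))+q(a)u(a,\theta^\star(a))$, excluding $\theta^\star(a)$ from $\Gamma_a$. \emph{For (c):} by the defining property of $\Psi_2$ there exists $\theta_0\in\Theta$ with $p(\theta_0)=V(a,\theta_0)+\mathring q(a)u(a,\theta_0)$. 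By (a), $\theta_0\neq\theta^\star(a)$, so $u(a,\theta_0)\neq 0$, and solving gives $\mathring q(a)=(p(\theta_0)-V(a,\theta_0))/u(a,\theta_0)$. If $u(a,\theta_0)>0$ this ratio equals $\max Q(a)$; if $u(a,\theta_0)<0$ it equals $\min Q(a)$. Either way, $\mathring q(a)$ coincides with an endpoint of $Q(a)$; combined with $\mathring q(a)=(\min Q(a)+\max Q(a))/2$, this forces $\min Q(a)=\max Q(a)$, proving (c).

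\emph{For (b):} by (c), $Q(a)$ is a singleton. Choosing $C$ in Lemma \ref{l:D=D'} strictly larger than the bounds $|\underline C|,|\overline C|$ guarantees $-C<\min Q(a)$ and $\max Q(a)<C$ uniformly, since the argument in that proof in fact yields $\min Q(a)\le\overline C$ and $\max Q(a)\ge\underline C$. Consequently the singleton $Q(a)$ cannot be created by the artificial boundary of $I$, so it must arise from the simultaneous saturation of an upper-bound constraint by some $\theta^+\in\Theta$ with $u(a,\theta^+)>0$ and a lower-bound constraint by some $\theta^-\in\Theta$ with $u(a,\theta^-)<0$. Both witnesses lie in $\Gamma_a$, and by Assumption \ref{a:sc}, $\theta^-<\theta^\star(a)<\theta^+$, giving the required strict bracketing $\inf\Gamma_a\le\theta^-<\theta^\star(a)<\theta^+\le\sup\Gamma_a$.

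The main obstacle is step (b): one must rule out the degenerate possibility that $Q(a)$ is a singleton only because it is clipped to $I=[-C,C]$ (for instance when $\theta^\star(a)$ lies outside $\text{supp}(\phi)$, so that only one-sided constraints from $\Theta$ are present). The remedy is to choose $C$ strictly larger than the specific bound produced in Lemma \ref{l:D=D'}, and then to use the sharpening of that lemma's proof that gives $\min Q(a)\le\overline C$ and $\max Q(a)\ge\underline C$, which forces any singleton $Q(a)$ to come from genuine constraints on both sides of $\theta^\star(a)$.
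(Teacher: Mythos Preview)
Your proof tracks the paper's closely. Part (1) and steps (2a), (2c) are essentially identical to the paper's argument; your treatment of (2a) is in fact slightly more careful, since you explicitly handle the case $\theta^\star(a)\notin\Theta$, which the paper's write-up glosses over.

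The only substantive difference is in step (2b). After obtaining $\mathring q(a)=\min Q(a)$, the paper writes $\min Q(a)$ directly as the supremum $\ell(a)=\sup_{\tilde\theta<\theta^\star(a)}(V(a,\tilde\theta)-p(\tilde\theta))/(-u(a,\tilde\theta))$ and then argues, using continuity of $p,V,u$ together with $p(\theta^\star(a))>V(a,\theta^\star(a))$, that this supremum is attained at some $\theta_-$. You instead take a detour through the bounds in Lemma~\ref{l:D=D'} to ensure that the singleton $Q(a)$ lies strictly inside $(-C,C)$ and hence is not created by clipping to $I$. This is a real refinement: the paper's identification $\min Q(a)=\ell(a)$ tacitly assumes $\ell(a)\ge -C$, which your argument makes explicit.

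That said, your argument has its own small gap. Having shown $\ell(a)=U(a)=\mathring q(a)\in(-C,C)$, you assert that the singleton ``must arise from the simultaneous saturation'' by some $\theta^+,\theta^-\in\Gamma_a$. But finiteness of $\ell(a)$ only guarantees $\Theta^-:=\{\theta\in\Theta:u(a,\theta)<0\}\neq\emptyset$; it does not by itself guarantee that the supremum defining $\ell(a)$ is \emph{attained}. One side is attained for free by the witness $\theta_0$ from the definition of $\Psi_2$, but for the other side you still need precisely the continuity argument the paper supplies: on $\Theta^-$ the function $\theta\mapsto (V(a,\theta)-p(\theta))/(-u(a,\theta))$ is continuous; the only possible non-compactness of $\Theta^-$ is at $\theta^\star(a)$; and there either the function tends to $-\infty$ (when $\theta^\star(a)\in\Theta$, since then $p(\theta^\star(a))>V(a,\theta^\star(a))$) or $\Theta^-$ is already compact (when $\theta^\star(a)\notin\Theta$). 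With that patch, your proof is complete.
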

\begin{proof}
We have $\theta^\star(a)\in\Gamma_a\subset [\inf \Gamma_a,\sup \Gamma_a]$ for each $a\in \Psi_1$, by the definition of $\Psi_1$.

Fix  $a\in \Psi_2$. By the definition of $\Psi_2$, we have $a\notin \Psi_1$, so 
\[p(\theta^\star (a))>V(a,\theta^\star(a))=V(a,\theta^\star(a))+\mathring q(a) u(a,\theta^\star(a)), \] 
showing that $\theta^\star (a)\notin \Gamma_a$. By the definition of $\Psi_2$, $\Gamma_a$ is non-empty, so it contains some $\theta\neq \theta^\star(a)$. Suppose for concreteness that $\theta>\theta^\star(a)$, so we write $\theta=\theta_+$ (the case $\theta<\theta^\star(a)$ is analogous and omitted).
By the definition of $\mathring q(a)$ and $Q(a)$,
\[
\mathring q(a)\leq \max Q(a) \leq \frac{p(\theta_+)-V(a,\theta_+)}{u(a,\theta_+)},
\]
and, by the definition of $\Gamma_a$,
\[
\mathring q(a)=\frac{p(\theta_+)-V(a,\theta_+)}{u(a,\theta_+)}.
\]
Hence,
\[
\mathring q(a)= \max Q(a) = \frac{p(\theta_+)-V(a,\theta_+)}{u(a,\theta_+)}.
\]
Then, 
\[
\mathring q(a)=\min Q(a) =\sup_{\tilde \theta<\theta^\star(a)} \frac{V(a,\tilde \theta)-p(\tilde \theta)}{-u(a,\tilde \theta)},
\]
where the first equality is by the definition of $\mathring q(a)$ and $\mathring q(a)=\max Q(a)$, and the second equality is by the definition of $Q(a)$. (Inspecting the definition gives $\min Q(a) =\sup_{\tilde \theta : u(a,\tilde \theta)<0}(V(a,\tilde \theta)-p(\tilde \theta))/(-u(a,\tilde \theta))$, and $u(a,\tilde \theta)<0$ iff $\tilde \theta < \theta^\star(a)$.)

Since $p$, $V$, and $u$ are continuous and since $p(\theta^\star(a))>V(a,\theta^\star(a))$, the supremum is attained at some $\theta_-<\theta^\star(a)$. Thus $\theta_-\in \Gamma_a$, by the definition of $\Gamma_a$. The lemma follows since $\inf \Gamma_a\leq \theta_-<\theta^\star(a)<\theta_+\leq \sup \Gamma_a$.
\end{proof}
\begin{lemma}\label{l:FOC}
For each $a\in \Psi_1\cup \Psi_2$ such that $\inf \Gamma_a<\theta^\star(a)<\sup \Gamma_a$, the function $q$ has a derivative $ q'(a)$, and \eqref{e:FOC} holds for all $\theta\in \Gamma_a$.
\end{lemma}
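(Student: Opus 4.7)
My approach is to use the existence of contact points on both sides of $\theta^\star(a)$ to sandwich $q$ locally between two explicit smooth functions that agree at $a$. Since $\Gamma$ is compact (established earlier in part (1) of the theorem), so is its $a$-section $\Gamma_a$, and the hypothesis $\inf\Gamma_a<\theta^\star(a)<\sup\Gamma_a$ lets me pick $\theta_-<\theta^\star(a)<\theta_+$ in $\Gamma_a$. Strict single-crossing (Assumption \ref{a:sc}) then gives $u(a,\theta_-)<0<u(a,\theta_+)$, and by continuity of $u$ the signs persist on a neighborhood $N$ of $a$. Note that $a$ must be interior to $A$: at $a=0$ or $a=1$, Assumption \ref{a:int} would force $u(a,\cdot)$ to have a single sign, contradicting the existence of $\theta_\pm$ on opposite sides of $\theta^\star(a)$.

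Applying (D1) at $\theta_\pm$ for each $\tilde a\in N$, and dividing through by $u(\tilde a,\theta_-)<0$ and $u(\tilde a,\theta_+)>0$ (flipping the first inequality), yields
\[
L(\tilde a):=\frac{p(\theta_-)-V(\tilde a,\theta_-)}{u(\tilde a,\theta_-)}\leq q(\tilde a)\leq \frac{p(\theta_+)-V(\tilde a,\theta_+)}{u(\tilde a,\theta_+)}=:U(\tilde a),
\]
with $L(a)=q(a)=U(a)$ because $\theta_\pm\in\Gamma_a$. Both $L$ and $U$ are $C^1$ on $N$ by Assumption \ref{a:smooth}, and $L\leq U$ with equality at $a$ forces $L'(a)=U'(a)$. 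A standard squeeze then shows that $q$ is differentiable at $a$ with $q'(a)=L'(a)=U'(a)$. Differentiating $L$ explicitly at $\tilde a=a$ and substituting the identity $p(\theta_-)-V(a,\theta_-)=q(a)u(a,\theta_-)$ simplifies to
\[
q'(a)=L'(a)=-\frac{v(a,\theta_-)+q(a)u_a(a,\theta_-)}{u(a,\theta_-)},
\]
which is \eqref{e:FOC} at $(a,\theta_-)$; the analogous computation at $\theta_+$ gives \eqref{e:FOC} at $(a,\theta_+)$.

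To extend \eqref{e:FOC} to an arbitrary $\theta\in\Gamma_a$, I will apply a maximum-principle argument: (D1) together with $(a,\theta)\in\Gamma$ says that $\tilde a\mapsto V(\tilde a,\theta)+q(\tilde a)u(\tilde a,\theta)$ is bounded above by $p(\theta)$ and attains this bound at $\tilde a=a$. Since $V(\cdot,\theta)$ and $u(\cdot,\theta)$ are $C^1$ and $q$ is now known to be differentiable at $a$, this function is differentiable at $a$ (with derivative $v(a,\theta)+q(a)u_a(a,\theta)+q'(a)u(a,\theta)$), and because $a$ is an interior maximizer its derivative vanishes, giving \eqref{e:FOC}.

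I expect the differentiability of $q$ at $a$ to be the main (and really only) obstacle, because $q$ is a priori just a bounded measurable selection from $Q$ with no regularity. The sandwich sidesteps this entirely: the two envelopes $L,U$ are smooth by construction, they touch at $a$, and this geometric fact alone pins down a common tangent that $q$ must inherit. Once $q'(a)$ exists, the extension of the FOC to the full section $\Gamma_a$ is an immediate consequence of dual feasibility and standard calculus.
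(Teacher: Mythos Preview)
Your proof is correct and follows essentially the same route as the paper: both arguments use (D1) at contact points $\theta_-<\theta^\star(a)<\theta_+$ in $\Gamma_a$ to squeeze the difference quotients of $q$ between two expressions that converge to the same limit (your $L,U$ are exactly the functions whose derivatives are the paper's $C_-,C_+$). The only cosmetic difference is in the extension step: the paper notes that $\theta_\pm$ were arbitrary in $\Gamma_a\setminus\{\theta^\star(a)\}$ and handles $\theta^\star(a)$ separately via the definition of $q$, whereas you fix one pair $\theta_\pm$ to obtain $q'(a)$ and then apply Fermat's theorem to $\tilde a\mapsto V(\tilde a,\theta)+q(\tilde a)u(\tilde a,\theta)$ at the interior maximizer $a$, which covers every $\theta\in\Gamma_a$ uniformly.
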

\begin{proof} Fix $a\in \Psi_1\cup \Psi_2$ such that there exist $\theta_-,\theta_+\in \Gamma_a$ with $\theta_-<\theta^\star(a)<\theta_+$. By (D1) and the definition of $\Gamma$, for every $\tilde a \in A$,  we have
\[
V(a,\theta_- )+ q(a)u(a,{\theta_- }) \geq  V(\tilde a,{\theta_- })+{q}(\tilde a)u(\tilde a,{\theta_- }).
\]
Therefore, for every $\tilde a > a$, we have
\[
\frac{{q}(\tilde a)-{q}(a )}{\tilde a -a }\geq 
\frac{1}{-u(\tilde a,\theta_-)}\left[ \frac{V(\tilde a,%
\theta_-)-V(a ,\theta_-)}{\tilde a-a }%
+{q}(a )\frac{u(\tilde a,\theta_-)-u(a ,%
\theta_-)}{\tilde a-a}\right] .
\]
Since $V$ and $u$ have continuous partial derivatives in $a $, we have 
\[
\underline{{q}}_+^{\prime }(a ):=\liminf_{\tilde a
\downarrow a }\frac{{q}(\tilde a)-{q}(a )}{%
\tilde a-a }\geq {C_-},
\]%
where 
\[
{C_-}=-\frac{1}{u(a ,{\theta_-})}[v(a ,%
{\theta_- })+{q}(a )u_a(a ,{%
\theta_- })].
\]
Applying a similar argument for $\theta=\theta_-$ and $\tilde a<a $, we get 
\[
\overline{q}_-^{\prime }(a ):=\limsup_{\tilde{a}\uparrow a }\frac{{q}(\tilde a)-{q}(a )}{\tilde a-a }\leq {C_-}.
\]
Similarly, considering $\theta =\theta_+$ with $\tilde a>a $
and $\tilde a<a $, we get 
\begin{align*}
\overline{q}_+^{\prime }(a ): =\limsup_{\tilde a\downarrow a}\frac{{q}(\tilde a)-{q}(a )}{\tilde a-a }\leq {C_+} \quad \text{and} \quad
\underline{{q}}_-^{\prime }(a ): =\liminf_{\tilde a\uparrow a }\frac{{q}(\tilde a)-{q}(a )}{\tilde a-a }\geq {C_+},
\end{align*}
where 
\[
{C_+}=-\frac{1}{u(a ,{\theta_+ })}[v(a ,{\theta_+  })+{q}(a )u_{a }(a ,{\theta_+  })].
\]
In sum, we have
\[
{C_-}\leq \underline{{q}}_+^{\prime }(a )\leq \overline{{q}}^{\prime }_+(a)\leq {C_+}  \quad \text{and} \quad {C_+}\leq \underline{{q}}_-^{\prime }(a )\leq \overline{{q}}_-^{\prime }(a )\leq {C_-}.
\]
We see that $ C_- =C_+$ and all four Dini derivatives of $q$ at $a$ coincide, so ${q}$ has a derivative ${q}'(a)$
at $a$ that satisfies $q^{\prime }(a )={C_-}={C_+}$. 

Since $\theta_-,\theta_+\in \Gamma_a$ are arbitrary, the lemma follows for $\theta\in \Gamma_a$ with $\theta\neq \theta^\star(a)$. For $\theta\in \Gamma_a$ with $\theta=\theta^\star(a)$, we have $a\in \Psi_1$, and the lemma follows by the definition of $q(a)$.
\end{proof}
\begin{lemma}
The sets $\Gamma$ and $\Psi_1\cup\Psi_2$ are compact.
\end{lemma}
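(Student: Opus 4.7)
The plan is to show that $\Gamma$ is closed, hence compact as a closed subset of the compact set $A \times \Theta$, and then deduce that $\Psi_1 \cup \Psi_2 = A_\Gamma$ (by Lemma \ref{l:12}) is compact as the continuous projection of the compact set $\Gamma$ onto $A$.

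To prove $\Gamma$ is closed, I will take a sequence $(a_n, \theta_n) \in \Gamma$ converging to $(a, \theta) \in A \times \Theta$. Since $q(A) \subset I$ is bounded, a subsequence satisfies $q(a_n) \to r$; upper hemicontinuity of $Q$ (Lemma \ref{Q:uhc}) yields $r \in Q(a)$, and passing to the limit in the defining equation gives $p(\theta) = V(a, \theta) + r u(a, \theta)$. It suffices to show either $u(a, \theta) = 0$ or $r = q(a)$. If $u(a, \theta) = 0$, then $\theta = \theta^\star(a) \in \Theta$ and $p(\theta^\star(a)) = V(a, \theta^\star(a))$ place $a$ in $\Psi_1$, so $(a, \theta) \in \Gamma$; thus I may assume $u(a, \theta) \neq 0$. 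If $a_n \in \Psi_1$ eventually, I note that $\Psi_1$ is closed (being cut out by the continuous condition $p(\theta^\star(\cdot)) = V(\cdot, \theta^\star(\cdot))$ on the closed set $\{a : \theta^\star(a) \in \Theta\}$, using continuity of $\theta^\star$ and closedness of $\Theta$), and that $q$ is continuous on $\Psi_1$ (since $u_a(a, \theta^\star(a)) < 0$ by Assumption \ref{a:qc}). Hence $a \in \Psi_1$ and $q(a_n) \to q(a) = r$.

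The main obstacle is the case where $a_n \in \Psi_2$ eventually. Here I will use Lemma \ref{l:-+} to extract $\theta_n^\pm \in \Gamma_{a_n}$ with $\theta_n^- < \theta^\star(a_n) < \theta_n^+$, taking $\theta_n^+ = \theta_n$ when (WLOG) $\theta > \theta^\star(a)$. Passing to further subsequences gives $\theta_n^\pm \to \theta^\pm$ with $\theta^- \leq \theta^\star(a) \leq \theta^+ = \theta$. Applying Lemma \ref{l:FOC} at $(a_n, \theta_n^\pm)$ and cross-multiplying to eliminate $q'(a_n)$ produces the identity
\[
q(a_n)[u_a(a_n, \theta_n^-) u(a_n, \theta_n^+) - u_a(a_n, \theta_n^+) u(a_n, \theta_n^-)] = v(a_n, \theta_n^+) u(a_n, \theta_n^-) - v(a_n, \theta_n^-) u(a_n, \theta_n^+),
\]
which passes to the limit. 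If $\theta^- = \theta^\star(a)$, the limit identity simplifies (using $u(a, \theta^-) = 0$ and $u(a, \theta^+) \neq 0$) to $r = v(a, \theta^\star(a))/(-u_a(a, \theta^\star(a)))$, while the limit of $p(\theta_n^-) = V(a_n, \theta_n^-) + q(a_n) u(a_n, \theta_n^-)$ gives $p(\theta^\star(a)) = V(a, \theta^\star(a))$, so $a \in \Psi_1$ with $q(a) = r$. If instead $\theta^- < \theta^\star(a) < \theta^+$ strictly, then $u(a, \theta^-) < 0 < u(a, \theta^+)$ and the equalities $p(\theta^\pm) = V(a, \theta^\pm) + r u(a, \theta^\pm)$ combined with $r \in Q(a)$ force $r = \min Q(a) = \max Q(a)$; hence $Q(a) = \{r\}$ is a singleton and $q(a) \in Q(a)$ gives $q(a) = r$. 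In either subcase $(a, \theta) \in \Gamma$, completing the proof; the remaining mixed cases are reduced to the above by passing to a subsequence contained in either $\Psi_1$ or $\Psi_2$.
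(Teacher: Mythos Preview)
Your proof is correct and follows essentially the same approach as the paper's. Both arguments take a convergent sequence $(a_n,\theta_n)\in\Gamma$, pass to a subsequence with $q(a_n)\to r\in Q(a)$, and split into the same subcases: the $\Psi_1$ case handled by closedness of $\Psi_1$ and continuity of $q$ there; the $\Psi_2$ case handled by extracting companion points $\theta_n^-<\theta^\star(a_n)<\theta_n^+$ via Lemma~\ref{l:-+}, then either (i) using the FOC identity from Lemma~\ref{l:FOC} to pin down $r=-v(a,\theta^\star(a))/u_a(a,\theta^\star(a))$ when $\theta^-=\theta^\star(a)$, or (ii) using the price equalities at $\theta^\pm$ to force $Q(a)=\{r\}$ when $\theta^-<\theta^\star(a)<\theta^+$. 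Your organization differs only cosmetically---you dispose of the degenerate case $u(a,\theta)=0$ up front and then split on $\theta^-=\theta^\star(a)$ versus $\theta^-<\theta^\star(a)$, whereas the paper splits on $a\in\Psi_1$ versus $a\notin\Psi_1$---but the substantive steps are identical.
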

\begin{proof}
To show that $\Gamma$ is compact, we need to show that if $(a_n,\theta_n)\rightarrow (a,\theta)$ with $(a_n,\theta_n)\in \Gamma$, then $(a,\theta)\in \Gamma$. By Lemma \ref{l:12}, $\Gamma_a$ is non-empty iff $a\in \Psi_1\cup \Psi_2$. Thus, $a_n\in \Psi_1\cup \Psi_2$. There are three cases to consider, up to taking a suitable subsequence.

(1) $a_n\in \Psi_1$ for all $n$. Since $p$, $V$, and $\theta^\star$ are continuous, the set $\Psi_1$ is closed. Thus, $a\in \Psi_1$. Since $(a_n,\theta_n)\in \Gamma$, we have $p(\theta_n)=V(a_n,\theta_n)+q(a_n) u(a_n,\theta_n)$. Since $v$, $u_a$, and $\theta^\star$ are continuous, $q$ is continuous on $\Psi_1$. Since $p$, $V$, and $u$ are also continuous, passing to the limit we have $p(\theta)=V(a,\theta)+ q(a)u(a,\theta)$, so $(a,\theta)\in \Gamma$.

(2) $a_n\in \Psi_2$ for all $n$, and $a\notin \Psi_1$. Since $a_n\in \Psi_2$, we have $a_n\notin \Psi_1$, and hence, by Lemma \ref{l:-+}, $\theta^\star(a_n)\notin \Gamma_{a_n}$. Taking another subsequence if necessary, we can assume that  $\theta_n- \theta^\star(a_n)$ has the same sign for all $n$. Suppose for concreteness that $\theta_n>\theta^\star(a_n)$ (the case $\theta_n<\theta^\star(a_n)$ is analogous). 

Since $a_n\in \Psi_2$, there exists $\tilde \theta_n \in \Gamma_{a_n}$ with $\tilde \theta_n<\theta^\star(a_n)$, by Lemma \ref{l:-+}. Taking yet another subsequence, we can assume that 
\[
\tilde \theta_n \rightarrow \tilde\theta \leq \theta^\star(a)\quad \text{and}\quad \mathring q(a_n) \rightarrow r\in Q(a).
\]
(Such a subsequence must exist because $A\times \Theta$ is compact, $\theta^\star$ is continuous, and $Q$ is upper hemi-continuous.) Moreover, by continuity of $p$, $V$, and $u$, we have
\begin{align*}
p(\tilde\theta)=V(a,\tilde\theta)+ru(a,\tilde\theta) \quad \text{and} \quad p(\theta)=V(a,\theta)+ru(a,\theta).	
\end{align*}
Since $a \notin \Psi_1$, we have $\tilde\theta,\theta\neq\theta^\star(a)$. Thus, $\tilde \theta<\theta^\star(a)<\theta$. Next, $\theta>\theta^\star(a)$ implies that $r=\max Q(a)$; otherwise, $p(\theta)<V(a,\theta)+\max Q(a) u(a,\theta)$, contradicting the definition of $Q(a)$. Similarly, $\tilde\theta<\theta^\star(a)$ implies that $r=\min Q(a)$. Hence, 
\[r=\min Q(a)=\max Q(a)=\mathring q(a).\]
We have $p(\theta)=V(a,\theta)+\mathring q(a)u(a,\theta)$. Since $a \notin \Psi_1$, this says that $(a,\theta)\in \Gamma$.

(3) $a_n\in \Psi_2$ for all $n$, and $a\in \Psi_1$. If $\theta=\theta^\star(a)$, then $(a,\theta)\in \Gamma$ because $p$, $V$, and $u$ are continuous, and $q$ is bounded. So suppose for concreteness that $\theta>\theta^\star(a)$ (the case $\theta<\theta^\star(a)$ is analogous). Taking another subsequence if necessary, we can assume that $\theta_n>\theta^\star(a_n)$ for all $n$. By Lemma \ref{l:-+}, for each $n$ there exist $\tilde \theta_n\in \Gamma_{a_n}$ with $\tilde \theta_n<\theta^\star(a_n)$. Taking a subsequence again, we can assume that
\[
\tilde \theta_n \rightarrow \tilde\theta \leq \theta^\star(a)\quad \text{and}\quad \mathring q(a_n) \rightarrow r\in Q(a).
\]
Passing to the limit, we get
\begin{align*}
p(\tilde\theta)&=V(a,\tilde\theta)+ru(a,\tilde\theta) \quad \text{and} \quad p(\theta)=V(a,\theta)+ru(a,\theta).
\end{align*}
If $\tilde \theta<\theta^\star(a)$, then as in the previous case $r=\min Q(a)=\max Q(a)$. Since $q(a)\in Q(a)$ by Lemma \ref{l:inQ}, this yields $r=q(a)$, and hence $(a,\theta)\in \Gamma$.

Finally, if $\tilde \theta=\theta^\star (a)$, then by Lemma \ref{l:FOC} and $a_n \in \Psi_2$ we have
\begin{align*}
v(a_n,\theta_n)+\mathring q(a_n)u_a(a_n,\theta_n)+\mathring q'(a_n)u(a_n,\theta_n)&=0,\\
v(a_n,\tilde\theta_n)+\mathring q(a_n)u_a(a_n,\tilde\theta_n)+\mathring q'(a_n)u(a_n,\tilde\theta_n)&=0.
\end{align*}
Thus, 
\[
\mathring q(a_n)=\frac{v(a_n,\tilde \theta_n)u(a_n,\theta_n)-v(a_n,\theta_n)u(a_n,\tilde \theta_n)}{-u_a(a_n,\tilde \theta_n)u(a_n,\theta_n)+u_a(a_n,\theta_n)u(a_n,\tilde\theta_n)}.
\]
As $\mathring q(a_n) \rightarrow r$ and $a \in \Psi_1$, passing to the limit we have
\[
r=\frac{v(a,\theta^\star(a))}{-u_a(a,\theta^\star(a))}= q(a).
\]
This shows that $(a,\theta)\in \Gamma$.

We have shown that $\Gamma$ is compact. By Lemma \ref{l:12}, $\Psi_1\cup \Psi_2 = A_\Gamma$, and thus is compact as the projection of a compact set.
\end{proof}

\begin{lemma}\label{l:suppG}
An implementable outcome $\pi$ is optimal iff $\supp (\pi) \subset \Gamma$.
\end{lemma}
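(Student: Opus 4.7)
The plan is to combine weak duality applied to the pair $(p,q)$ from Lemma \ref{l:inQ} with strong duality from Lemma \ref{l:dual}(3). The dual feasibility (D1) says that the integrand
\[
p(\theta)-q(a)u(a,\theta)-V(a,\theta)\geq 0
\]
is pointwise non-negative on $A\times\Theta$, and by construction of $\Gamma$ it vanishes precisely on $\Gamma$. So the idea is to integrate this inequality against any implementable $\pi$ and read off the support condition from the equality case.

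For the $p$-term, the feasibility constraint (P1) gives $\int_{A\times\Theta}p(\theta)\df\pi=\int_\Theta p\df\phi$. For the $q$-term, the obedience constraint (P2) asserts that the signed Borel measure $\widetilde A\mapsto \int_{\widetilde A\times\Theta}u(a,\theta)\df\pi$ is identically zero on $A$; extending from indicators to simple functions and then to $B(A)$ by monotone/dominated convergence (recall $q$ is bounded and measurable by Lemma \ref{Q:uhc} and the continuity of $v,u_a,\theta^\star$), we conclude that $\int_{A\times\Theta}q(a)u(a,\theta)\df\pi=0$. Putting these together,
\[
\int_{A\times\Theta}V(a,\theta)\df\pi \;\leq\; \int_\Theta p(\theta)\df\phi,
\]
with equality iff $\pi$ is concentrated on the zero set of the integrand, namely $\Gamma$.

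By Lemma \ref{l:dual}(3), an implementable $\pi$ is optimal iff $\int V\df\pi=\int p\df\phi$, hence iff $\pi(\Gamma)=1$. Because $\Gamma$ is compact (proved in the preceding lemma), it is closed, so $\pi(\Gamma)=1$ is equivalent to $\supp(\pi)\subset\Gamma$: the support is contained in any closed set of full $\pi$-measure, and conversely any Borel set containing $\supp(\pi)$ has $\pi$-measure one. This yields the stated equivalence.

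The main obstacle is essentially bookkeeping rather than substance: every nontrivial ingredient (strong duality, feasibility of $(p,q)$, compactness of $\Gamma$) is already in place. The one place where care is needed is the extension of (P2) from indicator functions of measurable sets $\widetilde A\subset A$ to the bounded measurable multiplier $q$, which is a routine measure-theoretic approximation argument.
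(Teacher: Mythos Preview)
Your proof is correct and follows essentially the same route as the paper: integrate the (D1) slack against $\pi$, use (P1) and (P2) to identify the $p$- and $q$-terms, invoke strong duality (Lemma \ref{l:dual}(3)) to characterize optimality by equality, and then use compactness of $\Gamma$ to pass from $\pi(\Gamma)=1$ to $\supp(\pi)\subset\Gamma$. The only difference is that you spell out the routine extension of (P2) from indicators to bounded measurable multipliers, which the paper leaves implicit.
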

\begin{proof}
For any implementable outcome $\pi$, we have, by (P1), (D1), and (P2),
\begin{align*}
\int_{\Theta} p(\theta) \df \phi(\theta) &= \int_{A\times \Theta} p(\theta)\df \pi(a,\theta)\\
 &\geq \int_{A\times \Theta} (V(a,\theta)+ q(a)u(a,\theta))\df \pi(a,\theta)\\
 &=\int_{A\times \Theta } V(a,\theta) \df \pi 	(a,\theta).
\end{align*}
By Lemma \ref{l:dual}, $\pi$ is optimal iff the inequality holds with equality, or equivalently $\pi(\Gamma)=1$. In turn, since $\Gamma$ is compact, $\pi(\Gamma)=1$ iff $\supp(\pi)\subset \Gamma$, because $\supp(\pi)$ is defined as the smallest compact set of measure one.
\end{proof}
\begin{lemma}\label{l:G*}
The set $\Gamma^\star$ is Borel, and \eqref{e:FOC} holds for all $(a,\theta)\in \Gamma^\star$.
\end{lemma}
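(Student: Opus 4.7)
The plan is to handle the two assertions separately: first, a measurability argument showing $\Gamma^\star$ is Borel; second, a case split on $a$ to verify that the FOC \eqref{e:FOC} holds pointwise on $\Gamma^\star$. The key observation is that Lemma \ref{l:FOC} already delivers the FOC whenever $\theta^\star(a)$ lies strictly between $\min \Gamma_a$ and $\max \Gamma_a$, so it only remains to deal with the boundary case that motivates introducing $\Gamma^\star$ in the first place.

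For Borel measurability, define $m(a)=\min \Gamma_a$ and $M(a)=\max \Gamma_a$ for $a\in A_\Gamma$. Since $\Gamma$ is compact (by the previous lemma), these are well-defined; a routine convergent-subsequence argument using compactness of $\Gamma$ shows that $m$ is lower semicontinuous and $M$ is upper semicontinuous, and hence both are Borel. Because $\theta^\star$ is continuous, the set
\[
B=\{a\in A_\Gamma:\theta^\star(a)\in\{m(a),M(a)\}\}
\]
is Borel (note $A_\Gamma=\Psi_1\cup\Psi_2$ is compact, hence Borel). By the very definition of $\Gamma^\star$,
\[
\Gamma^\star=\{(a,\theta^\star(a)):a\in B\}\ \cup\ \bigl(\Gamma\cap((A_\Gamma\setminus B)\times\Theta)\bigr),
\]
which is a finite union of Borel sets and therefore Borel.

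For the FOC, fix $(a,\theta)\in\Gamma^\star$. If $a\notin B$, then $\theta^\star(a)$ lies strictly inside the interval $[m(a),M(a)]$, so by Lemma \ref{l:-+} there exist $\theta_-,\theta_+\in\Gamma_a$ with $\theta_-<\theta^\star(a)<\theta_+$. Lemma \ref{l:FOC} then gives that $q'(a)$ exists and \eqref{e:FOC} holds for every $\theta\in\Gamma_a=\Gamma^\star_a$. If instead $a\in B$, then by definition $\Gamma^\star_a=\{\theta^\star(a)\}$, so $\theta=\theta^\star(a)$ and in particular $u(a,\theta)=0$. Since $\theta^\star(a)\in\Gamma_a$, the contact-set identity $p(\theta^\star(a))=V(a,\theta^\star(a))+q(a)u(a,\theta^\star(a))$ reduces to $p(\theta^\star(a))=V(a,\theta^\star(a))$, which by definition places $a$ in $\Psi_1$. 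By construction of $q$ on $\Psi_1$,
\[
q(a)=\frac{v(a,\theta^\star(a))}{-u_a(a,\theta^\star(a))},
\]
so $v(a,\theta)+q(a)u_a(a,\theta)=0$; combined with $u(a,\theta)=0$ and the convention $q'(a)\cdot 0=0$, this yields \eqref{e:FOC}.

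I do not expect a serious obstacle here: the whole content of the lemma is essentially bookkeeping around the two explicit branches used to define $q(a)$ and $\Gamma^\star_a$. The one subtlety worth double-checking is that the ``boundary'' actions in $B$ really fall into $\Psi_1$ rather than $\Psi_2$, but this is forced by the fact that $\theta^\star(a)\in\Gamma_a$ makes $u(a,\theta^\star(a))=0$, which collapses the contact-set equality to $p(\theta^\star(a))=V(a,\theta^\star(a))$, the defining condition of $\Psi_1$.
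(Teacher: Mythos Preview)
Your proposal is correct and follows essentially the same approach as the paper: measurability via the Borel functions $\min\Gamma_a$, $\max\Gamma_a$, $\theta^\star$, and the FOC via the same two-case split (boundary case handled by the explicit formula for $q(a)$ on $\Psi_1$, interior case by Lemma~\ref{l:FOC}). You spell out more carefully why boundary actions land in $\Psi_1$, which the paper leaves implicit; the one unnecessary step is citing Lemma~\ref{l:-+} in the interior case, since compactness of $\Gamma$ already gives $\theta_-=m(a),\ \theta_+=M(a)\in\Gamma_a$ directly.
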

\begin{proof}
Since $\Gamma$ is compact, $\min \Gamma_a$ and $\max \Gamma_a$ are measurable functions from $A_\Gamma$ to $\Theta$ that satisfy $\min \Gamma_a,\max \Gamma_a\in \Gamma_a$ for all $a\in A_\Gamma$. Since $\theta^\star(a)$ is a continuous function that satisfies $\min \Gamma_a\leq \theta^\star(a)\leq \max \Gamma_a$ for all $a\in A_\Gamma$, it follows that $\Gamma^\star$ is a Borel subset of $\Gamma$. Finally, if $\Gamma_a^\star=\{\theta^\star(a)\}$ then \eqref{e:FOC} holds at $(a,\theta^\star (a))$ by the definition of $q(a)$; otherwise, $\min \Gamma_a <\theta^\star(a) < \max \Gamma_a$, so \eqref{e:FOC} holds at $(a,\theta)$ for all $\theta \in \Gamma_a = \Gamma^\star_a$, by Lemma \ref{l:FOC}.
\end{proof}
\begin{lemma}\label{l:pia}
An implementable outcome $\pi$ satisfies $\supp (\pi)\subset \Gamma$ iff there exists a conditional probability $\pi_a$ such that $\supp (\pi_a)\subset \Gamma_a^\star$ and $\int u(a,\theta) \df \pi_a(\theta)=0$ for all $a\in \supp (\alpha_\pi)$.	
\end{lemma}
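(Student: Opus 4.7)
The plan is to prove the biconditional directly, leveraging the disintegration theorem to pass between the joint distribution $\pi$ and its conditional probabilities given $a$.

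For the "if" direction, suppose the conditional probabilities $\pi_a$ exist as described. Then $\pi$ is concentrated on the set $\{(a,\theta)\in A\times\Theta : \theta \in \Gamma_a^\star\}$, which is contained in $\Gamma$. Since $\Gamma$ is closed by part (1), this concentration immediately yields $\supp(\pi) \subset \Gamma$.

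For the "only if" direction, suppose $\pi$ is implementable with $\supp(\pi) \subset \Gamma$. The disintegration theorem yields a measurable family $\{\tilde\pi_a\}$ of conditional probabilities of $\pi$ given $a$, with $\supp(\tilde\pi_a) \subset \Gamma_a$ for $\alpha_\pi$-almost every $a$; applying (P2) to arbitrary measurable $\tilde A \subset A$ gives $\int u(a,\theta) \df\tilde\pi_a(\theta) = 0$ for $\alpha_\pi$-a.e. $a$. For such $a$, I would split into cases based on the definition of $\Gamma_a^\star$. If $\theta^\star(a) \notin \{\min \Gamma_a, \max \Gamma_a\}$, then $\Gamma_a^\star = \Gamma_a$ and $\supp(\tilde\pi_a) \subset \Gamma_a^\star$ is immediate. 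Otherwise, say $\theta^\star(a) = \min \Gamma_a$ (the other case being symmetric); then Assumption \ref{a:sc} gives $u(a,\theta) > 0$ for every $\theta \in \Gamma_a$ with $\theta > \theta^\star(a)$, so non-negativity of $u(a,\cdot)$ on $\Gamma_a$ combined with $\int u(a,\theta)\df \tilde\pi_a = 0$ forces $\tilde\pi_a = \delta_{\theta^\star(a)}$, whence $\supp(\tilde\pi_a) \subset \Gamma_a^\star = \{\theta^\star(a)\}$.

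The main obstacle is upgrading these "$\alpha_\pi$-a.e." statements to hold for every $a \in \supp(\alpha_\pi)$. I would resolve this by modifying $\tilde\pi_a$ on an $\alpha_\pi$-null subset of $\supp(\alpha_\pi)$: since $\supp(\alpha_\pi) \subset A_\Gamma$ and part (1) guarantees $\min \Gamma_a \leq \theta^\star(a) \leq \max \Gamma_a$ for every $a \in A_\Gamma$, one can always select a probability measure on $\Gamma_a^\star$ with zero $u(a,\cdot)$-expectation — either $\delta_{\theta^\star(a)}$ when $\Gamma_a^\star$ is a singleton, or a two-point distribution on $\{\min \Gamma_a, \max \Gamma_a\}$ with weights balancing $u(a,\cdot)$ otherwise. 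Modifying on an $\alpha_\pi$-null set preserves the disintegration identity, delivering the required version of the conditional probabilities and completing the proof.
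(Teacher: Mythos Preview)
Your proof is correct and follows essentially the same approach as the paper's. The only cosmetic difference is that where the paper observes that the conditions $\supp(\pi_a)\subset\Gamma_a$ and $\int u(a,\theta)\df\pi_a=0$ force either $\min\Gamma_a<\theta^\star(a)<\max\Gamma_a$ or $\min\Gamma_a=\theta^\star(a)=\max\Gamma_a$ (so that $\Gamma_a^\star=\Gamma_a$ in either case), you instead treat the boundary case $\theta^\star(a)\in\{\min\Gamma_a,\max\Gamma_a\}$ head-on and argue via Assumption~\ref{a:sc} that $\pi_a$ must then be the point mass at $\theta^\star(a)$; this is exactly the reasoning underlying the paper's dichotomy, just made explicit. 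The null-set modification you describe is also identical to the paper's.
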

\begin{proof}
If an outcome $\pi$ admits such a conditional probability then $\pi(\Gamma^\star)=\pi(\Gamma)=1$, so $\supp(\pi)\subset \Gamma$.
Now fix an implementable outcome $\pi$ such that $\supp(\pi)\subset \Gamma$. Recall that $\alpha_\pi$ is the $a$-marginal distribution. Let $\pi_a$ be any version of the conditional probability. By (P2) and $\supp(\pi) \subset \Gamma$, there exists a Borel set $S_\pi\subset \supp(\alpha_\pi)$ with $\alpha_\pi(S_\pi)=1$ such that
\[
\supp(\pi_a)\subset \Gamma_a \quad \text{and}\quad \int_\Theta u(a,\theta) \df\pi_a(\theta)=0\quad \text{for all $a\in S_\pi$}.
\]
Hence, for each $a\in S_\pi$, \[\text{either} \quad
\min \Gamma_a<\theta^\star(a) <\max \Gamma_a\quad \text{or} \quad \min \Gamma_a=\theta^\star(a) =\max \Gamma_a.
\]
By definition, $\Gamma^\star_a$ coincides with $\Gamma_a$ for such $a$, so
\[
\supp(\pi_a)\subset \Gamma_a = \Gamma_a^\star\quad  \text{for all $a\in S_\pi$}.
\] 
Finally, for all $a\in A_\Gamma\setminus S_\pi$, we can redefine $\pi_a$ as follows:
\begin{gather*}
\pi_a = \rho _a \delta_{\min \Gamma_a^\star}+(1-\rho  _a)\delta_{\max \Gamma_a^\star},\\
\text{where}\quad \rho _a=\frac{u(a,\max \Gamma_a^\star)}{u(a,\max \Gamma_a^\star)-u(a,\min \Gamma_a^\star)}\1 \{\min \Gamma_a^\star<\max \Gamma_a^\star\}.
\end{gather*}
With this definition, $\pi_a$ automatically satisfies the conditions of the lemma for all $a\in A_\Gamma\setminus S_\pi$. Lastly, since $\alpha_\pi(S_\pi)=1$, the redefined $\pi_a$ coincides with the original $\pi_a$ for $\alpha_\pi$-almost all $a$, and thus is a valid version of the conditional probability.
\end{proof}

\begin{lemma}
There exists a unique $p\in C(\Theta)$ that solves (D).
\end{lemma}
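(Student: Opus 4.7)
The plan is to fix any optimal primal outcome $\pi$ (given by Lemma \ref{l:dual}) together with two optimal dual price functions $p_1, p_2 \in C(\Theta)$, and prove $h := p_1 - p_2 \equiv 0$ on $\Theta$. For each $i \in \{1,2\}$, I pair $p_i$ with the specific $q_i \in B(A)$ constructed via Theorem \ref{t:contact}, so that $(p_i, q_i)$ solves (D), $\Gamma_i$ is compact, and the FOC \eqref{e:FOC} holds on $\Gamma_i^\star$. The key insight is that the FOC pins down $q_i(a)$ entirely from the primal conditional $\pi_a$ (via formula \eqref{eqn:q}), so both $i$'s must yield the same value on $\supp(\alpha_\pi)$.

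First, by strong duality (Lemma \ref{l:dual}(3)) and (D1), the integrand $p_i(\theta) - V(a,\theta) - q_i(a) u(a,\theta)$ is non-negative and integrates to zero against $\pi$, so it vanishes $\pi$-a.e. Subtracting the two identities for $i=1,2$ yields $h(\theta) = (q_1(a)-q_2(a))\,u(a,\theta)$ for $\pi$-a.e.\ $(a,\theta)$.

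Second, by Theorem \ref{t:contact}(2), $\pi(\Gamma_i^\star) = 1$ (since $\Gamma_i \setminus \Gamma_i^\star$ is polar with respect to (P2)) and the FOC \eqref{e:FOC} holds on $\Gamma_i^\star$. Integrating \eqref{e:FOC} against the conditional $\pi_a$ annihilates the $q_i'(a)u(a,\theta)$ term by obedience and produces
\[q_i(a) = -\mathbb{E}_{\pi_a}[v(a,\theta)] / \mathbb{E}_{\pi_a}[u_a(a,\theta)]\qquad \text{for $\alpha_\pi$-a.e.\ } a,\]
where the denominator is non-zero by Assumption \ref{a:qc}. Since the right-hand side depends only on $\pi_a$ and not on $i$, we conclude $q_1(a) = q_2(a)$ for $\alpha_\pi$-a.e.\ $a$. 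Substituting into the first step gives $h = 0$ $\pi$-a.e.; taking the $\theta$-marginal, $h = 0$ $\phi$-a.e.; and continuity of $h$ together with $\supp(\phi) = \Theta$ forces the open set $\{h \neq 0\}$ to be empty, yielding $p_1 = p_2$.

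The main subtlety is the second step: one must ensure that the FOC \eqref{e:FOC} holds $\pi$-a.e.\ for each $(p_i, q_i)$ (which uses $\pi(\Gamma_i^\star) = 1$ and the polar-set argument) and that the denominator $\mathbb{E}_{\pi_a}[u_a]$ does not vanish (which uses Assumption \ref{a:qc}), so that \eqref{eqn:q} is well-defined and pins down $q_i(a)$ from the primal alone. This ``invariance'' of $q$ on the support of $\alpha_\pi$ is precisely what forces uniqueness of $p$ despite the possible multiplicity of $q$ noted in the Remark following Theorem \ref{t:contact} and illustrated in Example \ref{e:1}.
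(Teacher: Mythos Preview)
Your proof is correct and follows essentially the same route as the paper: both pin down $q(a)$ from the primal conditional $\pi_a$ via the first-order condition, then use the contact-set identity $p(\theta)=V(a,\theta)+q(a)u(a,\theta)$ on $\supp(\pi)$ and extend to all of $\Theta$ by continuity and $\supp(\phi)=\Theta$. The only cosmetic difference is that you invoke the integrated formula \eqref{eqn:q} to obtain $q_1(a)=q_2(a)$ for $\alpha_\pi$-a.e.\ $a$, whereas the paper solves the two-point FOC system at $\theta,\theta'\in\Sigma_a^\star$ to get an explicit expression for $q(a)$; these are equivalent once $\pi_a$ is supported on $\Gamma_a^\star$.
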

\begin{proof}
Recall that in the main text we take an arbitrary solution $p$ to (D). Then we select $q(a)\in Q(a)$ such that the associated contact set $\Gamma$ is compact. By the definition of the contact set, we have
\[
p(\theta)=V(a,\theta)+q(a)u(a,\theta),\quad \text{for all $(a,\theta)\in \Gamma$}.
\]
Fix any solution $\pi$ to (P). By Theorem \ref{t:contact}, $\Sigma:=\supp (\pi)\subset \Gamma$. Let $\Sigma_a$ denote the $a$-section of $\Sigma$. Define the set $\Sigma^\star\subset \Sigma$ by letting its $a$-section be given by
\[
\Sigma^\star_a=
\begin{cases}
\{\theta^\star(a)\}, &\theta^\star (a)\in \{\min\Sigma_a,\max \Sigma_a\},\\
\Sigma_a, &\text{otherwise},
\end{cases}
\quad \text{for all $a\in A$}.
\]
Since $\Sigma\subset \Gamma$, we get $\Sigma^\star\subset \Gamma^\star $.
By Lemma \ref{l:pia}, $\pi(\Sigma^\star)=1$. Let the projection of $\Sigma^\star$ on $\Theta$ be defined as $\Theta_{\Sigma^\star} = \{\theta\in \Theta: (a,\theta)\in \Sigma^\star\text{ for some }a\in A\}$. Then, $\phi(\Theta_{\Sigma^\star})=1$ and the closure of $\Theta_{\Sigma^\star}$ is $\Theta$. 

Next take any $\theta\in \Theta_{\Sigma^\star}$. If $(a^\star (\delta_\theta),\theta)\in \Sigma^\star$, then $p(\theta)=V(a^\star(\delta_\theta),\theta)$. Otherwise, by the definition of $\Sigma^\star$, there exist $a\in A$ and $\theta'\in \Theta$ such that $(a,\theta),(a,\theta')\in \Sigma^\star$ and either $\theta<\theta^\star(a)<\theta'$ or $\theta'<\theta^\star(a)<\theta$. Suppose that $\theta<\theta^\star(a)<\theta'$ (the other case is analogous and omitted). By Theorem \ref{t:contact}, we have
\begin{align*}
v(a,\theta)+q(a)u_a(a,\theta)+q'(a)u(a,\theta) &=0,\\
v(a,\theta')+q(a)u_a(a,\theta')+q'(a)u(a,\theta') &=0.	
\end{align*}
Adding the first equation multiplied by $u(a,\theta')$ and the second multipliled by $-u(a,\theta)$, we obtain
\[
q(a)=-\frac{v(a,\theta)u(a,\theta')-v(a,\theta')u(a,\theta)}{u_a(a,\theta)u(a,\theta')-u_a(a,\theta')u(a,\theta)},
\]
which is well-defined because the denominator is strictly negative by Assumption \ref{a:qc}. Consequently, $p(\theta)=V(a,\theta)+q(a)u(a,\theta)$. In sum, for each $\theta\in \Theta_{\Sigma^\star}$, an arbitrary solution $p(\theta)$ to (D) is determined by $\Sigma^\star$, which is constructed from a fixed solution $\pi$ to (P). Moreover, since  $\Theta$ is the closure of $\Theta_{\Sigma^\star}$, there is a unique continuous extension of $p$ from $\Theta_{\Sigma^\star}$ to $\Theta$. This shows that there is a unique $p\in C(\Theta)$ that solves (D).
\end{proof}

\subsection{Proof of Theorem \ref{p:S}}
We first prove part (2).
Suppose by contradiction that there exist $(a,\theta_1)$, $(a,\theta_2)$, and $(a,\theta_3)$ in $\Gamma^\star$ with $\theta_1<\theta_2<\theta_3$. Then, by the definition of $\Gamma^\star$, we have $\min  \Gamma_a^\star<\theta^\star(a)<\max \Gamma_a^\star$. Thus, by redefining $\theta_1=\min \Gamma_a^\star$ and $\theta_3=\max \Gamma_a^\star$ if necessary, we can assume that $\theta_1<\theta^\star(a)<\theta_3$, so \eqref{e:sing} holds. But this implies that the rows of the matrix $S$ are linearly independent, which contradicts the fact that \eqref{e:FOC} holds at $(a,\theta_1)$, $(a,\theta_2)$, and $(a,\theta_3)$. Thus, $|\Gamma_a^\star|\leq 2$ for all $a\in A$. 

We now turn to part (1). For any $\mu \in
\Delta (\Theta )$, denote the set of distributions of posteriors with average posterior equal to $\mu$ by
\begin{equation*}
\Delta _{2}\left( \mu \right) =\left\{ \tau \in \Delta (\Delta (\Theta
)):\int_{\Delta (\Theta )}\eta \df \tau \left( \eta \right) =\mu \right\} \text{%
.}
\end{equation*}

Let $\Delta _{2}^{Bin}(\mu )\subset \Delta _{2}(\mu)$ denote the
set of such distributions where in addition the posterior is always supported on at most two states: 
\begin{equation*}
\Delta _{2}^{Bin} (\mu)=\left\{ \tau \in \Delta _{2}(\mu):\supp(\tau)\subset  \Delta_{1}^{Bin} \right\} \text{,}
\end{equation*}%
where%
\begin{equation*}
\Delta _{1}^{Bin}=\left\{ \eta \in \Delta (\Theta):\left\vert \supp(\eta)
\right\vert \leq 2\right\} \text{.}
\end{equation*}

We wish to show that for each $\tau\in \Delta_2(\phi)$, there exists $\hat \tau \in
\Delta_2^{Bin}(\phi) $ such that $\pi_{\hat \tau}=\pi_\tau$. 

We set the stage by defining some key objects and establishing their
properties. Define $\Delta _{1}=\Delta \left( \Theta \right) $ and $\Delta
_{2}=\Delta \left( \Delta \left( \Theta \right) \right) $. Since $\Theta$ is
compact, the sets $\Delta_1$ and $\Delta_2$ are also compact (in the weak*
topology), by Prokhorov's Theorem (Theorem 15.11 in \citealt{aliprantis2006}). Moreover, $\Delta
_{2}\left( \mu \right)$ is compact, since it is a closed subset of the
compact set $\Delta_2$.

Define the correspondence $P:\Delta _{1}\rightrightarrows \Delta _{1}$
as%
\begin{equation*}
P\left( \mu \right) =\left\{ \eta \in \Delta _{1}:\int u\left( a ^{\star
}\left( \mu \right) ,\theta \right) \df \eta \left( \theta \right) =0 \right\} 
\text{.}
\end{equation*}%
For each $\mu \in \Delta _{1}$, $P\left( \mu \right) $ is a \emph{moment set}---a set of probability measures $\eta \in \Delta _{1}$ satisfying a given
moment condition (e.g., \citealt{winkler}). By Assumption \ref{a:qc}, we
have, for all $\mu ,\eta \in \Delta _{1}$,%
\begin{equation}  \label{e:Prho*}
\eta \in P\left( \mu \right) \iff a ^{\star }\left( \mu \right) =a
^{\star }\left( \eta \right).
\end{equation}%
Clearly, $P(\mu)$ is nonempty (as $\mu\in P(\mu)$) and convex. Since $u$ is
continuous in $\theta$, $P(\mu)$ is a closed subset of $\Delta_1$, and hence is
compact. Moreover, the correspondence $P$ has a closed graph. Indeed, consider
two sequences $\mu_n\rightarrow \mu \in \Delta_1$ and $\eta_n\rightarrow
\eta\in \Delta_1$ with $\mu_n\in \Delta_1$ and $\eta_n\in P(\mu_n),$ so that
\ 
\begin{equation*}
\int u\left( a ^{\star }\left( \mu _{n}\right) ,\theta \right) \df \eta
_{n}\left( \theta \right) =0\text{.}
\end{equation*}%
Note that $a^\star(\mu)$ is a continuous function of $\mu$, by Berge's theorem (Theorem 17.31 in \citealt{aliprantis2006}). Since $u$ is also continuous, by Corollary 15.7 in \citet{aliprantis2006} we
have%
\begin{equation*}
\int u\left( a ^{\ast }\left( \mu \right) ,\theta \right) \df \eta \left(
\theta \right) =0,
\end{equation*}%
proving that $\eta\in P(\mu) $, so $P$ has a closed graph. 

Define the correspondence $E:\Delta _{1}\rightrightarrows \Delta _{1}$ as%
\begin{equation*}
E\left( \mu \right) =P\left( \mu \right) \cap \Delta _{1}^{Bin}=\left\{ \eta
\in P\left( \mu \right) :\left\vert \supp \eta \right\vert \leq 2\right\}.
\end{equation*}
Notice that for each $\mu\in \Delta_1$, the support of $\mu$ is well
defined, by Theorem 12.14 in \citet{aliprantis2006}. Moreover, from the proof of Theorem 15.8 in
\citet{aliprantis2006}, it follows that $\Delta_1^{Bin}$ is a closed subset of $\Delta_1$, so
both $\Delta_1^{Bin}$ and $E(\mu)$ are compact.

Define the correspondence $\Lambda :\Delta _{1}\rightrightarrows \Delta _{2}$
as%
\begin{equation*}
\Lambda \left( \mu \right) =\left\{ \lambda \in \Delta \left( E\left( \mu
\right) \right) :\mu =\int_{E\left( \mu \right) }\eta \df \lambda \left( \eta
\right) \right\} \text{.}
\end{equation*}
Lemma \ref{lm_binary} shows that the correspondence $\Lambda$ admits a measurable selection. In turn, Lemma \ref{lm_binary} relies on the following lemma, which follows immediately from the Choquet Theorem (Theorem 3.1 in \citealt{winkler}) and Richter-Rogosinsky's Theorem (Theorem 2.1 in  \citealt{winkler}).
\begin{lemma}\label{l:Choquet}
Let Assumptions \ref{a:smooth} and \ref{a:qc} hold. For any $a\in A$ and $\mu\in \Delta(\Theta)$ such that $\int u(a,\theta)\df\mu=0$, there exists $\lambda_\mu \in \Delta (\Delta(\Theta))$ such that $\int \eta \df\lambda_\mu =\mu$ and for each $\eta\in \supp (\lambda_\mu)$ we have $\int u(a,\theta) \df\eta=0$ and $|\supp (\eta)|\leq 2$.
\end{lemma}

\begin{lemma}\label{lm_binary} There exists a measurable function $\mu \mapsto \lambda
_{\mu }\in \Lambda \left( \mu \right) $. 
\end{lemma}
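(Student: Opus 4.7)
The plan is to verify that $\Lambda$ is nonempty-valued and has closed graph, and then invoke a standard measurable selection theorem.

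First, I would establish nonemptiness of $\Lambda(\mu)$ for each fixed $\mu$. The set $P(\mu)$ is a compact, convex subset of $\Delta_1$ cut out by a single linear moment constraint (beyond the probability normalization $\int \df \eta = 1$). Since $\mu \in P(\mu)$, Choquet's theorem (Theorem 3.1 in \citealt{winkler}) represents $\mu$ as the barycenter of a probability measure concentrated on the extreme points of $P(\mu)$. The Richter-Rogosinsky theorem (Theorem 2.1 in \citealt{winkler}) forces every extreme point of $P(\mu)$ to be supported on at most two states, so $\mathrm{ext}(P(\mu)) \subset \Delta_1^{Bin}$, and hence $\mathrm{ext}(P(\mu)) \subset E(\mu)$. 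This directly produces a $\lambda_\mu \in \Lambda(\mu)$.

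Second, I would show that the graph of $\Lambda$ is closed in $\Delta_1 \times \Delta_2$. Suppose $(\mu_n,\lambda_n) \to (\mu,\lambda)$ with $\lambda_n \in \Lambda(\mu_n)$. The barycenter identity $\mu_n = \int \eta \, \df \lambda_n$ passes to the limit by weak* continuity. For the support condition, define $F(\nu,\eta) = \int u(a^\star(\nu),\theta)\, \df \eta(\theta)$, which is jointly continuous on the compact set $\Delta_1 \times \Delta_1$ by continuity of $a^\star$ (Berge's theorem, Theorem 17.31 in \citealt{aliprantis2006}) together with continuity of $u$. Since $F(\mu_n,\eta)^2 \to F(\mu,\eta)^2$ uniformly in $\eta$ and $\lambda_n \to \lambda$ weakly*, a standard triangle-inequality argument gives $0 = \int F(\mu_n,\eta)^2 \, \df \lambda_n \to \int F(\mu,\eta)^2 \, \df \lambda$, so $\lambda$ is concentrated on $P(\mu)$. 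Combined with closedness of $\Delta_1^{Bin}$ (which contains $\supp(\lambda_n)$ for every $n$), this yields $\supp(\lambda) \subset E(\mu)$, so $\lambda \in \Lambda(\mu)$.

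Third, I would conclude via the Kuratowski-Ryll-Nardzewski measurable selection theorem (Theorem 18.13 in \citealt{aliprantis2006}): $\Lambda$ is a nonempty, closed-valued correspondence with closed graph between compact metric spaces, hence weakly measurable, so it admits a Borel measurable selection $\mu \mapsto \lambda_\mu$. The main obstacle will be the closed-graph step, specifically passing the support condition to the limit: the constraint $\supp(\lambda_n) \subset E(\mu_n)$ involves $\mu_n$ through $a^\star(\mu_n)$, so one cannot just combine weak* convergence of $\lambda_n$ with a fixed test function; the uniform convergence $F(\mu_n,\cdot)^2 \to F(\mu,\cdot)^2$, which relies on compactness of $\Delta_1$ plus joint continuity of $F$, is what carries the argument through.
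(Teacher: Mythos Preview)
Your proposal is correct and follows essentially the same architecture as the paper's proof: nonemptiness via Choquet/Richter--Rogosinsky (the paper packages this as Lemma~\ref{l:Choquet}), closed graph of $\Lambda$, and then the measurable selection theorem (the paper cites Theorem 18.20 in \citet{aliprantis2006} to pass from closed graph to measurability before invoking Theorem 18.13). The only notable variation is in the closed-graph step: the paper argues via the decreasing closed sets $\overline{P}(\mu_n)=\overline{\cup_{k\geq n}P(\mu_k)}$ and continuity of measures, whereas you use the quadratic test function $F(\mu,\cdot)^2$ together with uniform convergence on the compact $\Delta_1$---both routes are valid and yield the same conclusion.
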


\begin{proof}
The correspondence $\Lambda $ is nonempty-valued, by Lemma \ref{l:Choquet}.
Next, fix $\mu \in \Delta _{1}$, and consider a sequence $\lambda
_{n}\rightarrow \lambda\in \Delta_2 $ with $\lambda _{n}\in \Lambda \left(
\mu \right)$. By the Portmanteau Theorem (Theorem 15.3 in \citealt{aliprantis2006}), we have%
\begin{equation*}
\int_{E\left( \mu \right) }\eta \df \lambda _{n}\left( \eta \right) \rightarrow
\int_{E\left( \mu \right) }\eta \df \lambda \left( \eta \right) \quad \text{and} \quad \lim \sup_{n}\lambda _{n}\left( E\left( \mu \right) \right) \leq \lambda
\left( E\left( \mu \right) \right),
\end{equation*}%
where the last inequality holds because $E\left( \mu \right) $ is closed.
Thus, 
\begin{equation*}
\int_{E\left( \mu \right) }\eta \df \lambda \left( \eta \right) =\mu \quad \text{and} \quad 1=\lim \sup_{n}\lambda _{n}\left( E\left( \mu \right) \right) \leq \lambda
\left( E\left( \mu \right) \right) \leq 1,
\end{equation*}%
proving that $\lambda \in \Lambda \left( \mu \right) $. Thus, $\Lambda $ is closed-valued.

Next, consider two sequences $\mu_n\rightarrow \mu \in \Delta_1$ and $%
\lambda_n\rightarrow \lambda\in \Delta_2$ with $\mu_n\in \Delta_1$ and $%
\lambda_n\in \Lambda(\mu_n)$, so that 
\begin{gather*}
\mu _{n}=\int \eta \df \lambda _{n}\left( \eta \right), \quad \lambda _{n}\left( \Delta _{1}^{Bin}\right) =1, \quad \text{ and } \quad \lambda _{n}\left( P\left( \mu _{n}\right) \right) =1.
\end{gather*}%
The Portmanteau Theorem implies that $\mu =\int \eta \df \lambda \left( \eta
\right)$ and $\lambda \left( \Delta _{1}^{Bin}\right) =1$, since $\Delta
_{1}^{Bin}$ is closed. Define $\overline P(\mu_n)$ as the closure of $%
\cup_{k=n}^\infty P(\mu_k)$. By construction, $P(\mu_k)\subset \overline
P(\mu_k)\subset \overline P(\mu_n)$ for $k\geq n$, so the Portmanteau
Theorem implies that $\lambda (\overline P(\mu_n))=1.$ Moreover, $\overline
P (\mu_n)\downarrow \overline P \subset P(\mu)$, because $P$ has a closed
graph. Hence, $\lambda (P(\mu))=1$, by the continuity of probability
measures (Theorem 10.8 in \citealt{aliprantis2006}). That is, $\lambda \in \Lambda (\mu)$, showing
that the correspondence $\Lambda$ has a closed graph.

Therefore, $\Lambda $ is measurable, by Theorem 18.20 in \citet{aliprantis2006}, as well as nonempty- and closed-valued. Hence, there
exists a measurable function $\mu \mapsto \lambda _{\mu }\in \Lambda \left(
\mu \right) $, by Theorem 18.13 in \citet{aliprantis2006}.
\end{proof}

Finally, taking a measurable selection, for each $\tau \in \Delta _{2}\left( \phi \right) $, define $\hat{%
\tau}\in \Delta _2 $ as%
\begin{equation}  \label{e:tauhat}
\hat{\tau}\left( \widetilde \Delta_1\right) =\int_{\Delta _{1}}\lambda _{\mu
}\left( \widetilde \Delta_1\right) \df \tau \left( \mu \right)
\end{equation}%
for every measurable set $\widetilde \Delta_1\subset \Delta _{1}$. By
construction, $\hat{\tau}\in \Delta _{2}^{Bin}(\phi)$, since 
\begin{align*}
\hat{\tau}(\Delta_{1}^{Bin})=\int_{\Delta_{1}}\lambda_{\mu}(%
\Delta_{1}^{Bin})\df \tau(\mu)=1
\end{align*}
and 
\begin{align*}
\phi=\int_{\Delta_1} \mu \df \tau (\mu) = \int_{\Delta_1}
\left(\int_{E(\mu)}\eta \df \lambda_\mu(\eta)\right) \df \tau
(\mu)=\int_{\Delta_1} \eta \df \hat \tau (\eta),
\end{align*}
where the first equality holds by $\tau \in \Delta_2(\phi)$, the second by $\lambda_\mu\in \Lambda$, and the third by \eqref{e:tauhat}. Similarly, for each
measurable $\widetilde A\subset A$ and $\widetilde \Theta\subset \Theta$, we
have 
\begin{align*}
\pi_{\tau} (\widetilde A,\widetilde \Theta) &=
\int_{\Delta_1}1\{a^*(\mu)\in \widetilde A\}\mu(\widetilde \Theta)\df \tau
(\mu) \\
&=\int_{\Delta_1}1\{a^*(\mu)\in \widetilde
A\}\left(\int_{E(\mu)}\eta(\widetilde \Theta)d\lambda_\mu(\eta)\right)\df \tau
(\mu) \\
&=\int_{\Delta_1}\left(\int_{E(\mu)}1\{a^*(\eta)\in \widetilde
A\}\eta(\widetilde \Theta)\df \lambda_\mu(\eta)\right)\df \tau (\mu) \\
&=\int_{\Delta_1} 1\{a^*(\eta)\in \widetilde A\}\eta (\widetilde
\Theta)\df \hat \tau (\eta) \\
&=\pi_{\hat\tau} (\widetilde A, \widetilde \Theta),
\end{align*}
where the second equality holds by $\lambda_\mu\in \Lambda$, the third by %
\eqref{e:Prho*} and $E(\mu)\subset P(\mu)$, and the fourth by %
\eqref{e:tauhat}.

\subsection{Proof of Corollary \ref{c:nodisc}}
Let $a$ be such that $\int u(a,\theta)\df \phi=0$. Since $|\Theta|\geq 3$, Assumption \ref{a:sc} and $\int u(a,\theta)\df \phi=0$ imply that there exist $\theta_1<\theta_2<\theta_3$ in $\Theta$ such that $\theta_1<\theta^\star(a)<\theta_3$. 

Suppose that no disclosure is optimal. Then, by part (2) of Theorem \ref{t:contact}, it follows that $\Gamma^\star_{a}=\Gamma_{a}=\Theta$ and \eqref{e:FOC} holds for all $\theta\in \Theta$, so there exist constants $q(a),q'(a)\in \R$ such that
\[
v(a,\theta)=-q(a)u_a(a,\theta)-q'(a)u(a,\theta)\quad \text{for all $\theta\in \Theta$}.
\]
That is, $v(a,\cdot)$ lies in a linear space $L$ spanned by $u_a(a,\cdot)$ and $u(a,\cdot)$, whose dimension is at most 2. But the space of functions $v(a,\cdot)$ satisfying Assumption \ref{a:smooth} is the linear space $C(\Theta)$, whose dimension is at least 3, since $|\Theta|\geq 3$. Hence, the space $L$ is a proper subspace of $C(\Theta)$, so generically $v(a,\cdot)$ does not belong to $L$, and thus generically no disclosure is suboptimal.

\subsection{Proof of Theorem \ref{l:ssdd}}
We give the proof of the theorem for the single-dipped case. We start with an appropriate version of the theorem of alternative.
\begin{lemma}\label{l:alter}
Exactly one of the following two alternatives holds.
\begin{enumerate}
	\item There exists $x>0$ such that $xR\leq 0$.
	\item There exists $y\geq 0$ such that $Ry\geq 0$ and $Ry\neq 0$.
\end{enumerate}
\end{lemma}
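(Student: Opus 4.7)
The plan is to prove this as a variant of Motzkin's transposition theorem, splitting into two steps: mutual exclusivity via a scalar-pairing argument, and the ``at least one'' direction via strong LP duality.

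For mutual exclusivity, suppose both alternatives held simultaneously, with $x>0$ satisfying $xR\leq 0$ and $y\geq 0$ satisfying $Ry\geq 0$ and $Ry\neq 0$. I would compute the scalar $xRy$ in two ways. As $(xR)y$, the componentwise inequalities $xR\leq 0$ and $y\geq 0$ yield $xRy\leq 0$. As $x(Ry)$, strict positivity of every entry of $x$ together with $Ry\geq 0$, $Ry\neq 0$ (some entry strictly positive) yield $xRy>0$. This contradiction rules out the simultaneous case.

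For the ``at least one'' direction, suppose alternative (2) fails, and consider the linear program
\[
\max_{y\in\R^3}\ \1^{T} R y \quad \text{s.t.}\quad Ry\geq 0,\ y\geq 0,\ \1^T y\leq 1.
\]
The program is feasible (take $y=0$) and has compact feasible region, so it attains its optimum, which is at least $0$. If alternative (2) fails, every feasible $y$ satisfies $Ry=0$ (otherwise it would witness (2)), so the optimum equals $0$. Strong LP duality then forces the dual value to be $0$ as well. Writing the dual with multiplier $x\geq 0$ attached to $Ry\geq 0$ and $t\geq 0$ attached to $\1^T y\leq 1$, it takes the form: minimize $t$ subject to $R^{T}(\1+x)\leq t\,\1$, $x\geq 0$, $t\geq 0$. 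Attainment at $t=0$ furnishes some $x\geq 0$ with $R^{T}(\1+x)\leq 0$. Setting $\tilde{x}:=\1+x$ yields $\tilde{x}>0$ componentwise and $\tilde{x}R\leq 0$, which is alternative (1).

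The main step requiring care is setting up the dual LP so that its conclusion actually delivers the strict positivity $x>0$ required by alternative (1); the ``$+\1$'' shift is precisely what upgrades the dual $x\geq 0$ to a strictly positive vector. Strong duality applies without subtlety because the primal is both feasible and bounded. Equivalently, one could invoke Motzkin's transposition theorem directly with $A=I$ and $B=-R^{T}$: its alternative ``$\exists x$ with $Ax>0$, $Bx\geq 0$'' is exactly alternative (1), while its alternative ``$\exists y_{1}\geq 0$, $y_{1}\neq 0$, $y_{2}\geq 0$ with $A^{T}y_{1}+B^{T}y_{2}=0$'' rearranges to $Ry_{2}=y_{1}\geq 0$, $y_{1}\neq 0$, which is alternative (2) with $y=y_{2}$.
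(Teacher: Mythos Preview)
Your proof is correct and follows essentially the same strategy as the paper's: the mutual-exclusivity half is verbatim the paper's argument, and the ``at least one'' half is a standard theorem-of-the-alternative argument in both cases. The only cosmetic differences are that the paper takes the contrapositive in the opposite direction (assuming (1) fails and invoking Gale's Theorem 2.10 to produce the witness for (2)) and encodes the strict-positivity trick as the ``$-e$'' on the right-hand side of the system $x(R,\,-I)\leq(0,\,-e)$, whereas you encode it as the ``$+\1$'' in the dual variable; your explicit LP construction and your Motzkin reformulation are both equivalent repackagings of the same linear-programming duality that Gale's theorem expresses.
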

\begin{proof}
Clearly, (1) and (2) cannot both hold, because premultiplying $Ry\geq 0$ with $Ry\neq0$ by $x> 0$ yields $xRy>0$, whereas postmultiplying $xR\leq 0$ by $y\geq 0$ yields $xRy\leq 0.$

Now suppose that (1) does not hold. Then there does not exist $x\geq 0$ such that
\begin{equation*}
x
\begin{pmatrix}
	R &-I 
\end{pmatrix}
\leq
\begin{pmatrix}
	0 &-e
\end{pmatrix}
\end{equation*}
where $I$ is an identity matrix and $e$ is a row vector of ones. Thus, by the theorem of alternative (e.g., Theorem 2.10 in \citealt{gale}), there exists $y\geq 0$ and $z\geq 0$ such that
\begin{equation*}
\begin{pmatrix}
	R\\
	-I
\end{pmatrix}
\begin{pmatrix}
	y &z
\end{pmatrix}
\geq 0\quad \text{and}\quad -ez<0,
\end{equation*}
which in turn shows that (2) holds.
\end{proof}

We prove the theorem by contraposition. Suppose that $\Gamma$ is not single-dipped, so it contains a strictly single-peaked triple $(a_1,\theta_1)$, $(a_2,\theta_2)$, $(a_1,\theta_3)$. Without loss, we can assume that $\theta_1\leq \theta^\star(a)\leq \theta_3$. This is because $\min \Gamma_{a_1}\leq \theta^\star (a_1)\leq \max \Gamma_{a_1}$ by Theorem \ref{t:contact}, and thus by Assumption \ref{a:sc} the triple $(a_1,\min \Gamma_{a_1})$, $(a_2,\theta_2)$, $(a_1,\max \Gamma_{a_1})$ is strictly single-peaked triple and lies in $\Gamma$.	

By (D1) and Theorem \ref{t:contact}, we have
\begin{align*}
V(a_1,\theta_1)+q(a_1) u(a_1,\theta_1) &\geq V(a_2,\theta_1) +q(a_2) u(a_2,\theta_1),\\
V(a_2,\theta_2)+q(a_2) u(a_2,\theta_2) &\geq V(a_1,\theta_2) +q(a_1) u(a_1,\theta_2),\\
V(a_1,\theta_3)+q(a_1) u(a_1,\theta_3) &\geq V(a_2,\theta_3) +q(a_2) u(a_2,\theta_3).
\end{align*}
By \eqref{eqn:q}, for an optimal $\pi_a$, and for $i\in\{1,2\}$, we have
\[
q(a_i)=\frac{\E_{\pi_{a_i}}[v(a_i,\theta)]}{-\E_{\pi_{a_i}} [u_a(a_i,\theta)]}>0,
\]
where the inequality follows from Assumptions \ref{a:qc} and \ref{a:sc}. Thus, the vector $x=(1,q(a_1),q(a_2))$ is strictly positive and satisfies $x R\leq 0$. By Lemma \ref{l:alter}, there does not exist a vector $y\geq 0$ such that $Ry\geq 0$ and $Ry\neq 0$, as desired.

\subsection{Proof of Theorem \ref{t:SDPD}}

The set $\Gamma$ is single-dipped (-peaked) by Theorem \ref{l:ssdd} with 
\[
y=
\begin{pmatrix}
u(a_2,\theta_3)u(a_1,\theta_2)-u(a_2,\theta_2)u(a_1,\theta_3)\\
u(a_2,\theta_3)u(a_1,\theta_1)-u(a_2,\theta_1)u(a_1,\theta_3)\\
u(a_2,\theta_2)u(a_1,\theta_1)-u(a_2,\theta_1)u(a_1,\theta_2)
\end{pmatrix}
\quad
\left( y=
\begin{pmatrix}
\frac{u(a_2,\theta_1)}{V(a_2,\theta_1)-V(a_1,\theta_1)} \\
\frac{u(a_2,\theta_2)}{V(a_2,\theta_2)-V(a_1,\theta_2)} \\
\frac{u(a_2,\theta_3)}{V(a_2,\theta_3)-V(a_1,\theta_3)}
\end{pmatrix}\right),
\]
as follows from Lemma \ref{l:R>0} and Lemma \ref{l:ySDD} (Lemma \ref{l:ySPD}). Moreover, $|\Gamma_a^\star|\leq 2$ for all $a$ by Theorem \ref{t:pairwise} and Lemma \ref{l:S><0}, showing that $\Gamma^\star$ is strictly single-dipped (-peaked).
Finally, consider
\[
v^n(a,\theta)=v(a,\theta)+\int_0^\theta \frac{\tilde v(\theta)}{n}u_\theta (a,\tilde \theta) \df \tilde \theta,
\]
where $\tilde v (\theta)$ is a continuous, strictly positive, and strictly increasing (decreasing) function on $\ol \Theta$. Then $v^n(a,\theta)>0$ because $v(a,\theta)>0$ and $u_\theta(a,\theta)>0$ for all $(a,\theta)$, by Assumptions \ref{a:v>0} and \ref{a:u_s}. Moreover, for all $a_2\geq (\leq)a_1$,
\[
\frac{v_\theta^n(a_2,\theta)}{u_\theta(a_1,\theta)}=\frac{v_\theta(a_2,\theta)}{u_\theta(a_1,\theta)} + \frac{\tilde v(\theta)}{n} \frac{u_\theta(a_2,\theta)}{u_\theta(a_1,\theta)}
\]
is strictly increasing (decreasing) in $\theta$, because  $\tilde v (\theta)$ is strictly positive and strictly increasing (decreasing) in $\theta$; ${v_\theta(a_2,\theta)}/{u_\theta(a_1,\theta)}$ is increasing (decreasing) in $\theta$; and ${u_\theta(a_2,\theta)}/{u_\theta(a_1,\theta)}$ is increasing in $\theta$, since $u_{a\theta}(a,\theta)/u_{\theta}(a,\theta)$ is increasing (decreasing) in $\theta$. Thus, by Lemma \ref{l:stab}, there exists an optimal single-dipped (-peaked) outcome.

\begin{lemma}\label{l:S><0}
If $u_{a \theta}(a,\theta)/u_\theta(a,\theta)$ and $v_\theta(a,\theta)/u_\theta(a,\theta)$ are increasing (decreasing) in $\theta$ for all $a$, with at least one of them strictly increasing (decreasing), then $|S|>(<)0$ for all $a$ and $\theta_1<\theta_2<\theta_3$ such that $\theta_1<\theta^\star(a)<\theta_3$.
\end{lemma}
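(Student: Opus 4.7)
My plan is to reduce the sign of $|S|$ to the sign of a one-variable function whose monotonicity is controlled by the two hypotheses on $M(\theta):=u_{a\theta}(a,\theta)/u_\theta(a,\theta)$ and $N(\theta):=v_\theta(a,\theta)/u_\theta(a,\theta)$ (with $a$ held fixed throughout).

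First I would expand $|S|$ along the first row and introduce the $2\times 2$ minors $D_{ij}=u(a,\theta_i)u_a(a,\theta_j)-u(a,\theta_j)u_a(a,\theta_i)$. The key preliminary observation is that $D_{13}>0$: since $\theta_1<\theta^\star(a)<\theta_3$ gives $u(a,\theta_1)<0<u(a,\theta_3)$ by Assumption \ref{a:sc}, condition \eqref{2} of Lemma \ref{l:ASC} delivers exactly this sign. Next I would solve the $2\times 2$ linear system $\alpha u(a,\theta_i)+\beta u_a(a,\theta_i)=v(a,\theta_i)$ for $i=1,3$, which is uniquely solvable because $D_{13}\neq 0$, and set $\tilde v(a,\theta):=v(a,\theta)-\alpha u(a,\theta)-\beta u_a(a,\theta)$. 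Multilinearity of the determinant in its first row, combined with the vanishing of any $3\times 3$ determinant with two equal rows, yields the clean identity $|S|=-D_{13}\,\tilde v(a,\theta_2)$. A Cramer's rule computation of $\beta$, using $v>0$, $u(a,\theta_1)<0$, and $u(a,\theta_3)>0$, shows further that $\beta<0$ strictly.

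The remaining task is to sign $\tilde v(a,\theta_2)$ on $(\theta_1,\theta_3)$, knowing that $\tilde v$ vanishes at both endpoints. Differentiating in $\theta$ and factoring out $u_\theta(a,\theta)>0$ (Assumption \ref{a:u_s}) gives $\tilde v_\theta(a,\theta)=u_\theta(a,\theta)\,h(\theta)$ with $h(\theta):=N(\theta)-\alpha-\beta M(\theta)$. In the single-dipped case $M$ and $N$ are both weakly increasing with at least one strictly so, and $-\beta>0$, so $h$ is strictly increasing. Applying Rolle's theorem to $\tilde v$ on $[\theta_1,\theta_3]$ forces $h$ to have a unique zero $\hat\theta\in(\theta_1,\theta_3)$, so $\tilde v$ is strictly decreasing on $[\theta_1,\hat\theta]$ and strictly increasing on $[\hat\theta,\theta_3]$; combined with $\tilde v(a,\theta_1)=\tilde v(a,\theta_3)=0$, this forces $\tilde v(a,\theta_2)<0$, and hence $|S|>0$. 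The single-peaked case is symmetric: $M,N$ decreasing and $-\beta>0$ make $h$ strictly decreasing, $\tilde v$ becomes strictly positive on $(\theta_1,\theta_3)$, and $|S|<0$.

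The step that deserves a little care is the strict sign $\beta<0$: it is essential that $\beta<0$ strictly (not merely $\neq 0$), since this is what lets the strict monotonicity of $M$ (in the case where $M$ is the strictly monotone one of $M,N$) transfer through the positive coefficient $-\beta$ into strict monotonicity of $h$. Once this sign is secured, the rest of the argument is a clean application of Rolle's theorem combined with the factorisation $\tilde v_\theta=u_\theta\cdot h$.
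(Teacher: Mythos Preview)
Your argument is correct and takes a genuinely different route from the paper's. The paper proceeds by computing several $2\times 2$ determinants via explicit integral representations of the differences (e.g., writing $v(a,\theta_3)-v(a,\theta_2)=\int_{\theta_2}^{\theta_3}v_\theta\,\df\theta$ and similarly for $u$ and $u_a$), signing each one using the monotonicity of $v_\theta/u_\theta$ and $u_{a\theta}/u_\theta$ together with $v>0$ and Lemma~\ref{l:ASC}, and then assembling these pieces through a nontrivial algebraic rearrangement identity that expresses $|S|\cdot(u(a,\theta_3)-u(a,\theta_1))/D_{13}$ as a positive linear combination of two already-signed $2\times 2$ minors.

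Your approach replaces the rearrangement identity with a conceptually cleaner step: the row operation $v\mapsto \tilde v=v-\alpha u-\beta u_a$ collapses $|S|$ to $-D_{13}\,\tilde v(a,\theta_2)$ in one stroke, and then signing $\tilde v$ on $(\theta_1,\theta_3)$ becomes a one-variable calculus problem handled by Rolle's theorem and the strict monotonicity of $h(\theta)=N(\theta)-\alpha-\beta M(\theta)$. Both proofs use exactly the same primitive inputs ($D_{13}>0$ from Lemma~\ref{l:ASC}, $v>0$ for $\beta<0$, $u_\theta>0$, and the monotonicity hypotheses on $M,N$), so neither is strictly more elementary; but your argument is more transparent about \emph{why} the sign emerges, while the paper's discrete-difference computation has the advantage of being entirely algebraic and generalising verbatim to the non-local matrix $R$ in Lemma~\ref{l:R>0}, where a Rolle-type argument is not directly available.
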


\begin{proof}
We consider the case where $u_{a \theta}/u_\theta$ and $v_\theta/u_\theta$ are increasing in $\theta$; the case where $u_{a \theta}/u_\theta$ and $v_\theta/u_\theta$ are decreasing in $\theta$ is analogous and thus omitted. 

Fix $\theta_1<\theta_2<\theta_3$ and $a$ such that $u(a,\theta_1)<0<u(a,\theta_3)$. The inequality $|S|>0$ follows from the following displayed equations: 
\[
u(a,\theta_3)-u(a,\theta_1)=\int_{\theta_1}^{\theta_3} u_\theta (a,\theta)\df \theta>0,
\]
where the inequality holds by Assumption \ref{a:u_s};
\[
\begin{vmatrix}
	u(a,\theta_1) &u(a,\theta_3)\\
	u_{a}(a,\theta_1) &u_{a}(a,\theta_3)
\end{vmatrix}=-u(a,\theta_3)u_{a}(a,\theta_1) + u(a,\theta_1)u_{a}(a,\theta_3)>0,
\]
where the inequality holds by part (2) of Lemma \ref{l:ASC};
\[
\begin{vmatrix}
	v(a,\theta_1) & v(a,\theta_3)\\
	u(a,\theta_1) & u(a,\theta_3)
\end{vmatrix} 
=u(a,\theta_3)v(a,\theta_1) -u(a,\theta_1)v(a,\theta_3)>0,
\]
where the inequality holds by Assumption \ref{a:v>0}; 
\begin{gather*}
-\begin{vmatrix}
v(a,\theta_2)-v(a,\theta_1) &v(a,\theta_3)-v(a,\theta_2)\\
u(a,\theta_2)-u(a,\theta_1) &u(a,\theta_3)-u(a,\theta_2)
\end{vmatrix}\\
=(v(a,\theta_3)-v(a,\theta_2))(u(a,\theta_2)-u(a,\theta_1))-(v(a,\theta_2)-v(a,\theta_1))(u(a,\theta_3)-u(a,\theta_2))\\
=\int_{\theta_2}^{\theta_3}\int_{\theta_1}^{\theta_2} (v_\theta(a,\tilde \theta)u_\theta(a, \theta) -v_\theta(a,\theta)u_\theta(a,\tilde \theta))\df \theta \df \tilde \theta \geq (>) 0,
\end{gather*}
where the inequality holds by Assumption \ref{a:u_s} and (strict) monotonicity of $v_\theta /u_\theta$ in $\theta$;
\begin{gather*}
\begin{vmatrix}
u(a,\theta_2)-u(a,\theta_1) &u(a,\theta_3)-u(a,\theta_2)\\
u_a(a,\theta_2)-u_a(a,\theta_1) &u_a(a,\theta_3)-u_a(a,\theta_2)
\end{vmatrix}\\
=(u(a,\theta_2)-u(a,\theta_1))(u_a(a,\theta_3)-u_a(a,\theta_2))-(u(a,\theta_3)-u(a,\theta_2))(u_a(a,\theta_2)-u_a(a,\theta_1))\\
=\int_{\theta_2}^{\theta_3}\int_{\theta_1}^{\theta_2} (u_\theta(a,\theta) u_{a\theta}(a,\tilde \theta) -u_\theta(a,\tilde \theta)u_{a\theta}(a,\theta)) \df \theta \df \tilde\theta \geq(>)0,
\end{gather*}
where the inequality holds by Assumption \ref{a:u_s} and (strict) monotonicity of $u_{a\theta}/u_{\theta}$ in $\theta$;
\begin{align*}
&\frac{
\begin{vmatrix}
v(a,\theta_1) &v(a,\theta_2) &v(a,\theta_3)\\
u(a,\theta_1) &u(a,\theta_2) &u(a,\theta_3) \\
u_a(a,\theta_1) &u_a(a,\theta_2) &u_a(a,\theta_3) 
\end{vmatrix}
}
{
\begin{vmatrix}
	u(a,\theta_1) &u(a,\theta_3)\\
	u_{a}(a,\theta_1) &u_{a}(a,\theta_3)
\end{vmatrix}
}
(u(a,\theta_3)-u(a,\theta_1))\\
=-&\begin{vmatrix}
v(a,\theta_2)-v(a,\theta_1) &v(a,\theta_3)-v(a,\theta_2)\\
u(a,\theta_2)-u(a,\theta_1) &u(a,\theta_3)-u(a,\theta_2)
\end{vmatrix}\\
+&
\frac{
\begin{vmatrix}
	v(a,\theta_1) & v(a,\theta_3)\\
	u(a,\theta_1) & u(a,\theta_3)
\end{vmatrix} 
}
{
\begin{vmatrix}
	u(a,\theta_1) &u(a,\theta_3)\\
	u_{a}(a,\theta_1) &u_{a}(a,\theta_3)
\end{vmatrix}
}
\begin{vmatrix}
u(a,\theta_2)-u(a,\theta_1) &u(a,\theta_3)-u(a,\theta_2)\\
u_a(a,\theta_2)-u_a(a,\theta_1) &u_a(a,\theta_3)-u_a(a,\theta_2)
\end{vmatrix},
\end{align*}
where the equality holds by rearrangement.
\end{proof}

\begin{lemma}\label{l:R>0}
If $u_{a\theta}(a,\theta)/u_{\theta}(a,\theta)$ and $v_\theta (a_2,\theta)/u_\theta(a_1,\theta)$ are increasing (decreasing) in $\theta$ for all $a$ and $a_2\geq(\leq) a_1$, with at least one of them strictly increasing (decreasing), then $|R|>(<)0$ for all $\theta_1<\theta_2<\theta_3$ and all $a_2>(<)a_1$ such that $\theta_1\leq \theta^\star(a_1)\leq \theta_3$.
\end{lemma}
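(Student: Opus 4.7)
\emph{Proof plan.} I treat the single-dipped case ($a_2>a_1$, increasing ratios); the single-peaked case is symmetric. The strategy is to mirror the proof of Lemma \ref{l:S><0}. First, multiplying column 2 and then row 2 of $R$ by $-1$ leaves $|R|$ unchanged, so I may work with
\[
R' = \begin{pmatrix} W(\theta_1) & W(\theta_2) & W(\theta_3) \\ u(a_1,\theta_1) & u(a_1,\theta_2) & u(a_1,\theta_3) \\ u(a_2,\theta_1) & u(a_2,\theta_2) & u(a_2,\theta_3) \end{pmatrix}, \quad W(\theta):=V(a_2,\theta)-V(a_1,\theta),
\]
which satisfies $|R'|=|R|$. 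Applying the same algebraic rearrangement that proves the key identity in Lemma \ref{l:S><0}, but with $v$ replaced by $W$ and with $u$ and $u_a$ replaced by $u(a_1,\cdot)$ and $u(a_2,\cdot)$, yields
\[
|R|\bigl(u(a_1,\theta_3)-u(a_1,\theta_1)\bigr) \;=\; \delta_0\,\delta_1 + \delta_2\,\delta_3,
\]
where $\delta_0 = u(a_1,\theta_1)u(a_2,\theta_3)-u(a_1,\theta_3)u(a_2,\theta_1)$ and $\delta_2 = W(\theta_1)u(a_1,\theta_3)-W(\theta_3)u(a_1,\theta_1)$, while $\delta_1$ and $\delta_3$ are the signed $2\times 2$ determinants of consecutive $\theta$-differences of rows 1--2 and rows 2--3 of $R'$, with signs chosen so that the assumptions make them $\geq 0$.

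Next I would verify that all four factors on the right are non-negative and that the right-hand side is strictly positive. The factor $u(a_1,\theta_3)-u(a_1,\theta_1)>0$ by Assumption \ref{a:u_s}. For $\delta_2>0$, write $\delta_2 = \int_{a_1}^{a_2}[v(a,\theta_1)u(a_1,\theta_3)-v(a,\theta_3)u(a_1,\theta_1)]\,da$: the integrand is non-negative because $v>0$ (Assumption \ref{a:v>0}) and because $u(a_1,\theta_1)\leq 0\leq u(a_1,\theta_3)$ (from $\theta_1\leq\theta^\star(a_1)\leq\theta_3$ and Assumption \ref{a:sc}), and it is strictly positive somewhere because $\theta_1<\theta_3$ precludes $u(a_1,\theta_1)=u(a_1,\theta_3)=0$. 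For $\delta_1,\delta_3\geq 0$, I rewrite each as an iterated integral over $(\sigma,\tau)\in[\theta_1,\theta_2]\times[\theta_2,\theta_3]$, reducing non-negativity to monotonicity in $\theta$ of $W'(\theta)/u_\theta(a_1,\theta)$ and of $u_\theta(a_2,\theta)/u_\theta(a_1,\theta)$, respectively. The first ratio is increasing because $W'(\theta)=\int_{a_1}^{a_2} v_\theta(a,\theta)\,da$ and the integrand $v_\theta(a,\theta)/u_\theta(a_1,\theta)$ is increasing in $\theta$ for each $a\in[a_1,a_2]$ by assumption; the second is increasing because $\log u_\theta(a_2,\theta)-\log u_\theta(a_1,\theta)=\int_{a_1}^{a_2} u_{a\theta}(a,\theta)/u_\theta(a,\theta)\,da$ and the integrand is increasing in $\theta$ by assumption. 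Strict monotonicity of either assumed ratio then forces at least one of $\delta_1,\delta_3$ to be strictly positive.

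The main obstacle is $\delta_0>0$, which unlike $D_0$ in Lemma \ref{l:S><0} cannot be read off from part (2) of Lemma \ref{l:ASC}, because the two rows involve different values of $a$. To handle it I would invoke part (3) of Lemma \ref{l:ASC} to obtain a normalization $\tilde u(a,\theta)=u(a,\theta)/g(a)$ with $g>0$, $\tilde u_a<0$, and $\tilde u_\theta>0$. Factoring gives $\delta_0 = g(a_1)g(a_2)\det(\tilde u(a_i,\theta_j))_{i\in\{1,2\},\,j\in\{1,3\}}$, and a short case analysis on the signs of the four entries---using $\tilde u(a_2,\cdot)<\tilde u(a_1,\cdot)$ pointwise, strict monotonicity of $\tilde u$ in $\theta$, and $\tilde u(a_1,\theta_1)\leq 0\leq\tilde u(a_1,\theta_3)$, with the boundary possibilities $\theta_1=\theta^\star(a_1)$ or $\theta_3=\theta^\star(a_1)$ handled separately (one entry vanishes and the determinant reduces to a signed product)---yields $\delta_0>0$ in all cases. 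Substituting back into the identity gives $|R|\cdot(\text{positive})=(\text{strictly positive})$, hence $|R|>0$ as required.
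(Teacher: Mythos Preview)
Your proposal is correct and follows essentially the same route as the paper's own proof: the same algebraic identity (the paper writes it as $\dfrac{|R|}{\delta_0}(u(a_1,\theta_3)-u(a_1,\theta_1)) = -M_1 + \dfrac{M_2}{\delta_0}M_3$, which after clearing $\delta_0$ is exactly your $|R|(u(a_1,\theta_3)-u(a_1,\theta_1))=\delta_0\delta_1+\delta_2\delta_3$), the same integral arguments for $\delta_1,\delta_2,\delta_3$, and the same appeal to Lemma~\ref{l:ASC}(3) for $\delta_0$. The only cosmetic differences are that the paper deduces $\delta_0>0$ via the integral representation $g(a_1)g(a_2)\int_{a_1}^{a_2}[-\tilde u(a_1,\theta_3)\tilde u_a(a,\theta_1)+\tilde u(a_1,\theta_1)\tilde u_a(a,\theta_3)]\,da$ rather than your sign case analysis (both work), and that you treat the boundary cases $\theta_1=\theta^\star(a_1)$ or $\theta_3=\theta^\star(a_1)$ explicitly whereas the paper tacitly assumes strict inequality.
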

\begin{proof}
We consider the case where $u_{a\theta}/u_{\theta}$ and $v_\theta /u_\theta$ are increasing in $\theta$; the case where $u_{a \theta}/u_\theta$ and $v_\theta/u_\theta$ are decreasing in $\theta$ is analogous and thus omitted. 

Fix $\theta_1<\theta_2<\theta_3$ and $a_2>a_1$ such that $u(a_1,\theta_1)<0<u(a_1,\theta_2)$. The inequality $|R|>0$ follows from the following displayed equations: 
\[
u(a_1,\theta_3)-u(a_1,\theta_1)=\int_{\theta_1}^{\theta_3} u_\theta (a_1,\theta)\df \theta>0,
\]
where the inequality holds by Assumption \ref{a:u_s};
\begin{gather*}
\begin{vmatrix}
	u(a_1,\theta_1) &u(a_1,\theta_3)\\
	u(a_2,\theta_1) &u(a_2,\theta_3)
\end{vmatrix}\\
=-u(a_1,\theta_3)u(a_2,\theta_1) + u(a_1,\theta_1)u(a_2,\theta_3)\\
=-g(a_1)\tilde u(a_1,\theta_3)g(a_2)\tilde u(a_2,\theta_1) + g(a_1)\tilde u(a_1,\theta_1)g(a_2)\tilde u(a_2,\theta_3)\\
=g(a_1)g(a_2)[-\tilde u(a_1,\theta_3)(\tilde u(a_2,\theta_1)-\tilde u(a_1,\theta_1))+\tilde u(a_1,\theta_1)(\tilde u(a_2,\theta_3)-\tilde u(a_1,\theta_3))]\\
=g(a_1)g(a_2)\int_{a_1}^{a_2}[-\tilde u(a_1,\theta_3) \tilde u_a(a,\theta_1)+\tilde u(a_1,\theta_1)\tilde u_a (a,\theta_3) ]\df a>0,
\end{gather*}
where  the inequality and the second equality hold by parts (2) and (3) of Lemma \ref{l:ASC};
\begin{gather*}
\begin{vmatrix}
	V(a_2,\theta_1)-V(a_1,\theta_1) & V(a_2,\theta_3)-V(a_1,\theta_3)\\
	u(a_1,\theta_1) & u(a_1,\theta_3)
\end{vmatrix} \\
=u(a_1,\theta_3)\int_{a_1}^{a_2} v(a,\theta_1)\df a -u(a_1,\theta_1)\int_{a_1}^{a_2} v(a,\theta_3)\df a>0,	
\end{gather*}
where the inequality holds by Assumption \ref{a:v>0};
\begin{gather*}
-\begin{vmatrix}
V(a_2,\theta_2)-V(a_1,\theta_2)-V(a_2,\theta_1)+V(a_1,\theta_1) &V(a_2,\theta_3)-V(a_1,\theta_3)-V(a_2,\theta_2)+V(a_1,\theta_2)\\
u(a_1,\theta_2)-u(a_1,\theta_1) &u(a_1,\theta_3)-u(a_1,\theta_2)
\end{vmatrix}\\
=(V(a_2,\theta_3)-V(a_1,\theta_3)-V(a_2,\theta_2)+V(a_1,\theta_2))(u(a_1,\theta_2)-u(a_1,\theta_1))\\
-(V(a_2,\theta_2)-V(a_1,\theta_2)-V(a_2,\theta_1)+V(a_1,\theta_1))(u(a_1,\theta_3)-u(a_1,\theta_2))\\
=\int_{a_1}^{a_2}\int_{\theta_2}^{\theta_3}\int_{\theta_1}^{\theta_2} (v_\theta(a,\tilde \theta)u_\theta(a_1, \theta) -v_\theta(a,\theta)u_\theta(a_1,\tilde \theta))\df \theta \df \tilde \theta \df a \geq (>) 0,
\end{gather*}
where the inequality holds by Assumption \ref{a:u_s} and (strict) monotonicity of $v_\theta /u_\theta$ in $\theta$;
\begin{gather*}
\begin{vmatrix}
u(a_1,\theta_2)-u(a_1,\theta_1) &u(a_1,\theta_3)-u(a_1,\theta_2)\\
u(a_2,\theta_2)-u(a_2,\theta_1) &u(a_2,\theta_3)-u(a_2,\theta_2)
\end{vmatrix}\\
=(u(a_1,\theta_2)-u(a_1,\theta_1))(u(a_2,\theta_3)-u(a_2,\theta_2))-(u(a_1,\theta_3)-u(a_1,\theta_2))(u(a_2,\theta_2)-u(a_2,\theta_1))\\
= \int_{\theta_2}^{\theta_3}\int_{\theta_1}^{\theta_2}  (u_\theta(a_1,\theta)u_\theta(a_2,\tilde \theta)-u_\theta(a_1,\tilde \theta)u_\theta(a_2,\theta) ) \df \theta \df \tilde\theta  \geq(>)0,
\end{gather*}
where the inequality holds by Assumption \ref{a:u_s} and (strict) monotonicity of $u_{a\theta}/u_{\theta}$ in $\theta$, which imply that, for $a_2>a_1$ and $\tilde \theta>\theta$, we have
\begin{align*}
\ln \frac{u_\theta(a_1,\theta)u_\theta(a_2,\tilde \theta)}{u_\theta(a_1,\tilde \theta)u_\theta(a_2,\theta)} &= \int_{a_1}^{a_2}\frac{\partial }{\partial a} [\ln u_\theta (a,\tilde \theta)-\ln u_\theta (a,\theta)]\df a =\int_{a_1}^{a_2} \left[\frac{u_{a\theta}(a,\tilde\theta)}{u_\theta(a,\tilde\theta)}-\frac{u_{a\theta}(a,\theta)}{u_\theta(a,\theta)}\right]\df a\geq (>)0;
\end{align*}
\begin{align*}
&\frac{
\begin{vmatrix}
V(a_2,\theta_1)-V(a_1,\theta_1) &-(V(a_2,\theta_2)-V(a_1,\theta_2)) &V(a_2,\theta_3)-V(a_1,\theta_3)\\
-u(a_1,\theta_1) &u(a_1,\theta_2) &-u(a_1,\theta_3) \\
u(a_2,\theta_1) &-u(a_2,\theta_2) &u(a_2,\theta_3)
\end{vmatrix}
}
{
\begin{vmatrix}
	u(a_1,\theta_1) &u(a_1,\theta_3)\\
	u(a_2,\theta_1) &u(a_2,\theta_3)
\end{vmatrix}
}
(u(a_1,\theta_3)-u(a_1,\theta_1))\\
=-&\begin{vmatrix}
V(a_2,\theta_2)-V(a_1,\theta_2)-V(a_2,\theta_1)+V(a_1,\theta_1) &V(a_2,\theta_3)-V(a_1,\theta_3)-V(a_2,\theta_2)+V(a_1,\theta_2)\\
u(a_1,\theta_2)-u(a_1,\theta_1) &u(a_1,\theta_3)-u(a_1,\theta_2)
\end{vmatrix}\\
+&
\frac{
\begin{vmatrix}
	V(a_2,\theta_1)-V(a_1,\theta_1) & V(a_2,\theta_3)-V(a_1,\theta_3)\\
	u(a_1,\theta_1) & u(a_1,\theta_3)
\end{vmatrix}
}
{
\begin{vmatrix}
	u(a_1,\theta_1) &u(a_1,\theta_3)\\
	u(a_2,\theta_1) &u(a_2,\theta_3)
\end{vmatrix}
}
\begin{vmatrix}
u(a_1,\theta_2)-u(a_1,\theta_1) &u(a_1,\theta_3)-u(a_1,\theta_2)\\
u(a_2,\theta_2)-u(a_2,\theta_1) &u(a_2,\theta_3)-u(a_2,\theta_2)
\end{vmatrix},
\end{align*}
where the equality holds by rearrangement.
\end{proof}
\begin{lemma}\label{l:ySDD}
If $u_{a\theta}(a,\theta)/u_{\theta}(a,\theta)$ is increasing in $\theta$ for all $a$, then for all $\theta_1<\theta_2<\theta_3$ and all $a_2>a_1$ such that $\theta_1\leq \theta^\star(a_1)\leq \theta_3$, we have
\begin{align*}
u(a_2,\theta_3)u(a_1,\theta_1)> u(a_2,\theta_1)u(a_1,\theta_3),\\
u(a_2,\theta_2)u(a_1,\theta_1)> u(a_2,\theta_1)u(a_1,\theta_2),\\
u(a_2,\theta_3)u(a_1,\theta_2)> u(a_2,\theta_2)u(a_1,\theta_3).
\end{align*}
\end{lemma}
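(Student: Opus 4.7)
Each of the three claimed inequalities has the form $D(\alpha,\beta) > 0$ for
\[
D(\alpha,\beta) := u(a_1,\alpha)u(a_2,\beta) - u(a_1,\beta)u(a_2,\alpha),
\]
where $(\alpha,\beta)$ runs over $(\theta_2,\theta_3)$, $(\theta_1,\theta_3)$, and $(\theta_1,\theta_2)$. The plan is to prove $D(\alpha,\beta) > 0$ for every $a_1 < a_2$ and $\alpha < \beta$ in $[\theta_1,\theta_3]$ with $\tau_1 := \theta^\star(a_1) \in [\theta_1,\theta_3]$, by splitting on whether $\tau_1 \in [\alpha,\beta]$. As a preliminary reduction, invoking Lemma~\ref{l:ASC}(3), I replace $u$ by $\tilde u(a,\theta) := u(a,\theta)/g(a)$ for the positive $g$ furnished there; since $g > 0$, this normalization preserves $\operatorname{sign}(D)$ and the log-supermodularity of $u_\theta$, while yielding $u_a < 0$ everywhere. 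Setting $\tau_2 := \theta^\star(a_2)$, strict monotonicity of $\theta^\star$ (a consequence of Assumption~\ref{a:sc}) gives $\tau_1 < \tau_2$.

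In Branch (a), when $\alpha \leq \tau_1 \leq \beta$: since $D = 0$ at $a_2 = a_1$, differentiating in $a_2$ gives
\[
D(\alpha,\beta) = \int_{a_1}^{a_2}\bigl[u(a_1,\alpha)u_a(a,\beta) - u(a_1,\beta)u_a(a,\alpha)\bigr]\df a.
\]
The sign conditions $u(a_1,\alpha) \leq 0 \leq u(a_1,\beta)$ (from $\alpha \leq \tau_1 \leq \beta$ and $u_\theta > 0$) together with $u_a < 0$ force both products $u(a_1,\alpha)u_a(a,\beta)$ and $-u(a_1,\beta)u_a(a,\alpha)$ to be non-negative, with at least one strictly positive since $\alpha < \beta$ excludes $\alpha = \tau_1 = \beta$. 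Hence $D(\alpha,\beta) > 0$.

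In Branch (b), when $\tau_1 \notin (\alpha,\beta)$: from the column-operation identity $u(a,\beta) - u(a,\alpha) = \int_\alpha^\beta u_\theta(a,s)\,\df s$ one obtains
\[
D(\alpha,\beta) = \int_\alpha^\beta\bigl[u(a_1,\alpha)u_\theta(a_2,s) - u(a_2,\alpha)u_\theta(a_1,s)\bigr]\df s.
\]
Substituting the signed identity $u(a_i,\alpha) = -\int_\alpha^{\tau_i} u_\theta(a_i,r)\,\df r$ (using $u(a_i,\tau_i) = 0$) and splitting $\int_\alpha^{\tau_2} = \int_\alpha^{\tau_1} + \int_{\tau_1}^{\tau_2}$ yields the decomposition $D(\alpha,\beta) = I_1 + I_2$, where
\begin{align*}
I_1 &= \int_\alpha^\beta\!\int_\alpha^{\tau_1}\bigl[u_\theta(a_2,r)u_\theta(a_1,s) - u_\theta(a_1,r)u_\theta(a_2,s)\bigr]\df r\,\df s,\\
I_2 &= \int_\alpha^\beta\!\int_{\tau_1}^{\tau_2} u_\theta(a_2,r)u_\theta(a_1,s)\,\df r\,\df s.
\end{align*}
Strictly $I_2 > 0$ by $\alpha < \beta$, $\tau_1 < \tau_2$, and $u_\theta > 0$. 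For $I_1 \geq 0$: if $\alpha \geq \tau_1$, the signed inner integral reverses so that $(r,s) \in [\tau_1,\alpha]\times[\alpha,\beta]$ forces $r \leq s$, and the log-supermodularity of $u_\theta$ (equivalent to $u_{a\theta}/u_\theta$ being increasing in $\theta$) provides the required sign on the integrand; if $\beta \leq \tau_1$, splitting $[\alpha,\tau_1] = [\alpha,\beta] \cup [\beta,\tau_1]$ annihilates the contribution over the symmetric square $[\alpha,\beta]^2$ by the antisymmetry $f(r,s) = -f(s,r)$ of the integrand and leaves the strip $[\beta,\tau_1]\times[\alpha,\beta]$, where $r \geq s$ and log-supermodularity again yields a non-negative integrand.

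The main obstacle will be the signed-integral bookkeeping of Branch (b) — in particular, tracking the direction flip of $\int_\alpha^{\tau_1}$ in the sub-case $\alpha \geq \tau_1$ and verifying the antisymmetric cancellation over $[\alpha,\beta]^2$ in the sub-case $\beta \leq \tau_1$. Once these are handled, each target pair $(\alpha,\beta) \in \{(\theta_1,\theta_2),(\theta_1,\theta_3),(\theta_2,\theta_3)\}$ falls into exactly one of the two branches and yields the corresponding strict inequality, giving $y_1, y_2, y_3 > 0$ as claimed.
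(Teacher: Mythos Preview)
Your proof is correct and takes a genuinely different route from the paper's. The paper first derives the discrete slope-ratio inequality
\[
\frac{u(a_2,\theta_3)-u(a_2,\theta_2)}{u(a_1,\theta_3)-u(a_1,\theta_2)}\geq\frac{u(a_2,\theta_2)-u(a_2,\theta_1)}{u(a_1,\theta_2)-u(a_1,\theta_1)}
\]
from log-supermodularity of $u_\theta$ (as in Lemma~\ref{l:R>0}), and then does a three-way case split on the sign of $u(a_1,\theta_2)$: in each case one of the two remaining inequalities follows from the ``crossing'' argument (Assumption~\ref{a:qc} via the $\tilde u_a<0$ normalization) and the other from the slope-ratio inequality plus a short algebraic manipulation. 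Your approach instead proves the single claim $D(\alpha,\beta)>0$ uniformly for any pair $\alpha<\beta$, branching only on whether $\tau_1\in[\alpha,\beta]$: Branch~(a) is essentially the paper's crossing argument, while Branch~(b) replaces the slope-ratio-plus-case-analysis with the integral decomposition $D=I_1+I_2$, where $I_2>0$ comes from $\tau_1<\tau_2$ alone and $I_1\geq 0$ comes from log-supermodularity via the antisymmetric cancellation over $[\alpha,\beta]^2$. This is more economical (two branches instead of three cases with two sub-arguments each), makes the source of strict inequality transparent ($I_2$), and automatically covers the boundary situations $\theta^\star(a_1)\in\{\theta_1,\theta_3\}$ that the paper's proof sidesteps by writing $u(a_1,\theta_1)<0<u(a_1,\theta_3)$. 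The paper's route, on the other hand, stays closer to the discrete determinant machinery of Lemma~\ref{l:R>0} and so requires less new setup within the overall argument.
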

\begin{proof}
Fix $\theta_1<\theta_2<\theta_3$ and $a_2>a_1$ such that $u(a_1,\theta_1)<0<u(a_1,\theta_3)$. The first claimed inequality follows as in the proof of Lemma \ref{l:R>0}, by Assumption \ref{a:qc} and $u(a_1,\theta_1)<0<u(a_1,\theta_3)$. We thus focus on the second and third inequalities.

As in the proof of Lemma \ref{l:R>0}, Assumption \ref{a:u_s} and monotonicity of $u_{a\theta}/u_{\theta}$ in $\theta$ yield
\begin{gather*}
u(a_1,\theta_3)>u(a_1,\theta_2)>u(a_1,\theta_1),\\
\frac{u(a_2,\theta_3)-u(a_2,\theta_2)}{u(a_1,\theta_3)-u(a_1,\theta_2)}\geq\frac{u(a_2,\theta_2)-u(a_2,\theta_1)}{u(a_1,\theta_2)-u(a_1,\theta_1)}.
\end{gather*}
There are three cases to consider.

(1) $u(a_1,\theta_2)=0$. In this case, $u(a_2,\theta_2)<0$, by Assumption \ref{a:qc}. Thus,
\begin{align*}
u(a_2,\theta_2)u(a_1,\theta_1)&>0=u(a_2,\theta_1)u(a_1,\theta_2),\\
u(a_2,\theta_3)u(a_1,\theta_2)&=0> u(a_2,\theta_2)u(a_1,\theta_3).
\end{align*}

(2) $u(a_1,\theta_2)>0$. In this case, as follows from the proof of Lemma \ref{l:R>0},
\[
u(a_2,\theta_2)u(a_1,\theta_1)>u(a_2,\theta_1)u(a_1,\theta_2),
\]
by Assumption \ref{a:qc} and $u(a_1,\theta_1)<0<u(a_1,\theta_2)$. Thus,
\begin{gather*}
\frac{u(a_2,\theta_3)-u(a_2,\theta_2)}{u(a_1,\theta_3)-u(a_1,\theta_2)}\geq\frac{u(a_2,\theta_2)-u(a_2,\theta_1)}{u(a_1,\theta_2)-u(a_1,\theta_1)}>\frac{u(a_2,\theta_2)}{u(a_1,\theta_2)}\\
\implies u(a_2,\theta_3)u(a_1,\theta_2)> u(a_2,\theta_2)u(a_1,\theta_3).
\end{gather*}

(3) $u(a_1,\theta_2)<0$. In this case, as follows from the proof of Lemma \ref{l:R>0},
\[
u(a_2,\theta_3)u(a_1,\theta_2)> u(a_2,\theta_2)u(a_1,\theta_3),
\]
by Assumption \ref{a:qc} and $u(a_1,\theta_2)<0<u(a_1,\theta_3)$. Thus,
\begin{gather*}
\frac{u(a_2,\theta_2)}{u(a_1,\theta_2)}>\frac{u(a_2,\theta_3)-u(a_2,\theta_2)}{u(a_1,\theta_3)-u(a_1,\theta_2)}\geq\frac{u(a_2,\theta_2)-u(a_2,\theta_1)}{u(a_1,\theta_2)-u(a_1,\theta_1)}\\
\implies u(a_2,\theta_2)u(a_1,\theta_1)>u(a_2,\theta_1)u(a_1,\theta_2).\qedhere
\end{gather*}
\end{proof}

\begin{lemma}\label{l:ySPD}
If $v_\theta (a_2,\theta)/u_\theta(a_1,\theta)$ is decreasing in $\theta$ for all $a_2\leq a_1$, then for all $\theta_1<\theta_2<\theta_3$ and all $a_2<a_1$ such that $\theta_1\leq \theta^\star(a_1)\leq \theta_3$, we have
\begin{align*}
\frac{u(a_1,\theta_1)}{V(a_1,\theta_1)-V(a_2,\theta_1)}<\frac{u(a_1,\theta_2)}{V(a_1,\theta_2)-V(a_2,\theta_2)}<\frac{u(a_1,\theta_3)}{V(a_1,\theta_3)-V(a_2,\theta_3)}.
\end{align*}
\end{lemma}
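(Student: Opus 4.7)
The plan is to reduce both inequalities to the strict monotonicity of a single function, by exploiting a concavity structure that the hypothesis hides via a change of variable.

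First I would observe that each denominator satisfies $V(a_1,\theta_i)-V(a_2,\theta_i)=\int_{a_2}^{a_1} v(a,\theta_i)\df a>0$ by Assumption \ref{a:v>0} and $a_2<a_1$. Writing $D(\theta):=V(a_1,\theta)-V(a_2,\theta)$, it then suffices to show that the map $\theta\mapsto u(a_1,\theta)/D(\theta)$ is strictly increasing at $\theta_1,\theta_2,\theta_3$.

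Since $u_\theta(a_1,\cdot)>0$ by Assumption \ref{a:u_s}, the map $\theta\mapsto \xi:=u(a_1,\theta)$ is a strictly increasing $C^1$ bijection of $\Theta$ onto its image. I would reparametrize via $\xi$: set $\hat v(a,\xi):=v(a,\theta(\xi))$, $\hat D(\xi):=\int_{a_2}^{a_1}\hat v(a,\xi)\df a$, and $\xi_i:=u(a_1,\theta_i)$, so that $\xi_1<\xi_2<\xi_3$ and $\xi_1\leq 0\leq \xi_3$. The target becomes $\xi_1/\hat D(\xi_1)<\xi_2/\hat D(\xi_2)<\xi_3/\hat D(\xi_3)$. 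The chain rule yields
\[
\partial_\xi \hat v(a,\xi)=\frac{v_\theta(a,\theta(\xi))}{u_\theta(a_1,\theta(\xi))},
\]
which by hypothesis is decreasing in $\theta$ for each $a\leq a_1$, and therefore decreasing in $\xi$. Hence $\hat v(a,\cdot)$ is concave for every $a\in[a_2,a_1]$, and $\hat D$ is concave on its domain as an integral of concave functions.

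The crucial remaining input is $\hat D(0)=\int_{a_2}^{a_1}v(a,\theta^\star(a_1))\df a>0$. By the supporting-hyperplane characterization of concavity, for every $\xi$ in the domain of $\hat D$,
\[
\hat D(0)\leq \hat D(\xi)+\hat D'(\xi)(0-\xi),\qquad \text{so}\qquad \hat D(\xi)-\xi\hat D'(\xi)\geq \hat D(0)>0.
\]
Consequently,
\[
\frac{\df}{\df\xi}\left(\frac{\xi}{\hat D(\xi)}\right)=\frac{\hat D(\xi)-\xi\hat D'(\xi)}{\hat D(\xi)^2}\geq \frac{\hat D(0)}{\hat D(\xi)^2}>0,
\]
so $\xi\mapsto \xi/\hat D(\xi)$ is strictly increasing, and evaluating at $\xi_1<\xi_2<\xi_3$ gives the lemma.

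I expect the main conceptual step to be recognizing that the hypothesis ``$v_\theta/u_\theta$ decreasing in $\theta$'' is precisely concavity of $v$ as a function of $u(a_1,\cdot)$, so that no case split on the signs of $u(a_1,\theta_i)$ is needed: the positivity of $\hat D(0)$, combined with the assumption $\theta_1\leq \theta^\star(a_1)\leq \theta_3$ (equivalently $\xi_1\leq 0\leq \xi_3$), carries the argument uniformly. A minor technical point is differentiability of $\hat D$, which follows by Leibniz's rule from continuity of $v_\theta$ and $u_\theta$ (Assumption \ref{a:u_s}) together with the smoothness of $\theta\mapsto \xi$.
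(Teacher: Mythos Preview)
Your argument is correct and takes a genuinely different route from the paper. The paper first derives the secant inequality
\[
\frac{V(a_1,\theta_3)-V(a_2,\theta_3)-V(a_1,\theta_2)+V(a_2,\theta_2)}{u(a_1,\theta_3)-u(a_1,\theta_2)}\leq \frac{V(a_1,\theta_2)-V(a_2,\theta_2)-V(a_1,\theta_1)+V(a_2,\theta_1)}{u(a_1,\theta_2)-u(a_1,\theta_1)}
\]
and then splits into the two cases $u(a_1,\theta_2)\geq 0$ and $u(a_1,\theta_2)\leq 0$, handling each with a chain of ad hoc inequalities. Your reparametrization $\xi=u(a_1,\theta)$ packages the same information as concavity of $\hat D$ and replaces the case split by the single observation $\hat D(\xi)-\xi\hat D'(\xi)\geq \hat D(0)>0$. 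This is cleaner and, incidentally, shows that the hypothesis $\theta_1\leq\theta^\star(a_1)\leq\theta_3$ is not actually needed: since $\theta^\star(a_1)\in\ol\Theta$, the point $\xi=0$ always lies in the domain of $\hat D$, so $\xi/\hat D(\xi)$ is strictly increasing everywhere.

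Two small regularity overstatements to tighten. First, Assumption~\ref{a:u_s} gives existence of $u_\theta$ but not its continuity, so $\theta\mapsto u(a_1,\theta)$ is differentiable with positive derivative rather than $C^1$; the inverse is still differentiable and your chain-rule computation of $\partial_\xi\hat v$ goes through unchanged. Second, Leibniz's rule for $\hat D'$ needs a domination hypothesis that is not quite explicit in the paper's assumptions. You can sidestep this entirely: concavity of each $\hat v(a,\cdot)$ implies concavity of $\hat D$ directly (integrals preserve concavity without any differentiability), and for a concave $\hat D>0$ with $\hat D(0)>0$ one gets $\xi_2\hat D(\xi_1)-\xi_1\hat D(\xi_2)\geq(\xi_2-\xi_1)\hat D(0)>0$ for all $\xi_1<\xi_2$ by a one-line convex-combination argument in each sign regime, which is exactly $\xi/\hat D(\xi)$ strictly increasing. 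This avoids invoking $\hat D'$ at all.
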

\begin{proof}

Fix $\theta_1<\theta_2<\theta_3$ and $a_2<a_1$ such that $\theta_1\leq \theta^\star(a_1)\leq \theta_3$. As in the proof of Lemma \ref{l:R>0}, Assumptions \ref{a:v>0}, \ref{a:u_s}, and monotonicity of $v_\theta /u_\theta$ in $\theta$ yield
\begin{gather}
V(a_1,\theta_j)-V(a_2,\theta_j)>0\quad \text{for $j=1,2,3$},\label{e:dV>0}\\
u(a_1,\theta_3)>u(a_1,\theta_2)>u(a_1,\theta_1),\label{e:du>0}\\
\begin{gathered}
\frac{V(a_1,\theta_3)-V(a_2,\theta_3)-V(a_1,\theta_2)+V(a_2,\theta_2)}{u(a_1,\theta_3)-u(a_1,\theta_2)}\\ \leq\frac{V(a_1,\theta_2)-V(a_2,\theta_2)-V(a_1,\theta_1)+V(a_2,\theta_1)}{u(a_1,\theta_2)-u(a_1,\theta_1)}.\label{d:dV/du}
\end{gathered}
\end{gather}

There are two cases to consider.

(1) $u(a_1,\theta_2)\geq 0$. In this case,  we have
\[
\frac{u(a_1,\theta_1)}{V(a_1,\theta_1)-V(a_2,\theta_1)}<\frac{u(a_1,\theta_2)}{V(a_1,\theta_2)-V(a_2,\theta_2)},
\]
by \eqref{e:dV>0} and $u(a_1,\theta_1)<0\leq u(a_1,\theta_2)$, and
\[
\frac{u(a_1,\theta_2)}{V(a_1,\theta_2)-V(a_2,\theta_2)}<\frac{u(a_1,\theta_3)}{V(a_1,\theta_3)-V(a_2,\theta_3)},
\]
by
\begin{align*}
u(a_1,\theta_2)(V(a_1,\theta_3)-V(a_2,\theta_3))
\leq &u(a_1,\theta_2) \frac{u(a_1,\theta_3)-u(a_1,\theta_1)}{u(a_1,\theta_2)-u(a_1,\theta_1)}(V(a_1,\theta_2)-V(a_2,\theta_2))  \\
<&u(a_1,\theta_3)(V(a_1,\theta_2)-V(a_2,\theta_2)),
\end{align*}
where the first inequality holds by \eqref{d:dV/du}, $V(a_1,\theta_1)>V(a_2,\theta_1)$, $u(a_1,\theta_3)>u(a_1,\theta_2)$, and $u(a_1,\theta_2)\geq 0$, and the second inequality holds by $V(a_1,\theta_2)>V(a_2,\theta_2)$, $u(a_1,\theta_3)>u(a_1,\theta_2)$, and $u(a_1,\theta_1)<0$.

(2) $u(a_1,\theta_2)\leq 0$. In this case, we have  
\[
\frac{u(a_1,\theta_2)}{V(a_1,\theta_2)-V(a_2,\theta_2)}<\frac{u(a_1,\theta_3)}{V(a_1,\theta_3)-V(a_2,\theta_3)},
\]
by \eqref{e:dV>0} and $u(a_1,\theta_2)\leq 0< u(a_1,\theta_3)$, and
\[
\frac{u(a_1,\theta_1)}{V(a_1,\theta_1)-V(a_2,\theta_1)}<\frac{u(a_1,\theta_2)}{V(a_1,\theta_2)-V(a_2,\theta_2)},
\]
by
\begin{align*}
 -u(a_1,\theta_2)(V(a_1,\theta_1)-V(a_2,\theta_1))
\leq & -u(a_1,\theta_2)\frac{u(a_1,\theta_3)-u(a_1,\theta_1)}{u(a_1,\theta_3)-u(a_1,\theta_2)}(V(a_1,\theta_2)-V(a_2,\theta_2)) \\
<&-u(a_1,\theta_1)(V(a_1,\theta_2)-V(a_2,\theta_2)),
\end{align*}
where the first inequality holds by \eqref{d:dV/du}, $V(a_1,\theta_3)>V(a_2,\theta_3)$, $u(a_1,\theta_3)>u(a_1,\theta_2)$, and $u(a_1,\theta_2)\leq 0$, and the second inequality holds by $V(a_1,\theta_2)>V(a_2,\theta_2)$, $u(a_1,\theta_2)>u(a_1,\theta_1)$, and $u(a_1,\theta_3)>0$.
\end{proof}

\subsection{Proof of Theorem \ref{t:brenier}}
Since $\Gamma^\star$ is strictly single-dipped, we have $|\Gamma^\star_a|\leq 2$ for all $a\in A_\Gamma$, so $\Gamma^\star_a=\{t_1(a),t_2(a)\}$ with $t_1(a)=\min \Gamma^\star_a\leq \theta^\star(a)\leq \max \Gamma_a^\star = t_2(a)$ for all $a\in A_\Gamma$. Since $\Gamma$ is compact, and $\Gamma^\star$ is constructed from $\Gamma$ using a continuous function $\theta^\star(a)$, the functions $t_1$ and $t_2$ are measurable. Since $\Gamma^\star$ is single-dipped, for all $a<a'$ in $A_\Gamma$, we have $t_2(a)\leq t_2(a')$, as otherwise $(a,t_1(a))$, $(a',t_2(a'))$, $(a,t_2(a))$ would be a strictly single-peaked triple in $\Gamma^\star$; and $t_1(a')\notin (t_1(a),t_2(a))$, as otherwise $(a,t_1(a))$, $(a',t_1(a'))$, $(a,t_2(a))$ would be a strictly single-peaked triple in $\Gamma^\star$.

Suppose now that the set $\{a\in A_\Gamma:t_1(a)<t_2(a)\}$ is the union of finitely many intervals. We claim that for each $a\in A_\Gamma$ there exists $\varepsilon>0$ such that, for all $\tilde a_1,\tilde a_2\in [a-\varepsilon,a]\cap A_\Gamma$ with $t_1(\tilde a_1)\neq t_2(\tilde a_1)$ and $t_1(\tilde a_2)\neq t_2(\tilde a_2)$, we have $t_1(\tilde a_1)<t_2(\tilde a_2)$.
This claim is obvious if there does not exist a sequence $a_n\in A_\Gamma$ such that $a_n\uparrow a$, so suppose that such a sequence $a_n$ exists. By monotonicity of $t_2$, the sequence $t_2(a_n)$ converges to some $t_2(a_-)\leq t_2(a)$. In fact, we must have $t_2(a_-)=t_2(a)$, meaning that $t_2$ is left-continuous at $a$. First, if $t_1(a)<\theta^\star(a)<t_2(a)$, then $t_2(a_-)=t_2(a)$, as otherwise $\Gamma^\star_a$ would contain at least three distinct states $t_1(a)$, $t_2(a_-)$, and $t_2(a)$, by compactness of $\Gamma$, contradicting that $|\Gamma^\star_a|\leq 2$. Second, if $t_1(a)=\theta^\star(a)=t_2(a),$ then $t_2(a_-)=t_2(a)$, as otherwise there would exist $a_n\in A_\Gamma$ such that $t_2(a_n)<\theta^\star(a_n)$, contradicting that $t_1(a_n)\leq \theta^\star (a_n)\leq t_2(a_n)$. We will now show that there exists $\varepsilon>0$ with the required property. If $t_1(a)<\theta^\star (a)<t_2(a)$, then, by left-continuity of $t_2$, there exists $\varepsilon>0$ such that, for all $\tilde a_2\in [a-\varepsilon,a]\cap A_\Gamma$, we have $\theta^\star(a)< t_2(\tilde a_2)$, and thus, for all $\tilde a_1\in [a-\varepsilon,a]\cap A_\Gamma$, we have $t_1(\tilde a_1)\leq \theta^\star (\tilde a_1)\leq \theta^\star (a)<t_2(\tilde a_2).$ 
If $t_1(a)=\theta^\star (a)=t_2(a)$, then, by left-continuity of $t_2$ and the regularity condition, there exists $\varepsilon>0$ such that $t_2$ is continuous on $[a-\varepsilon,a]\cap A_\Gamma$ and either (i) $t_1(\tilde a)=\theta^\star(\tilde a)=t_2(\tilde a)$ for all $\tilde a\in [a-\varepsilon,a]\cap A_\Gamma$, or (ii) $t_1(\tilde a)<\theta^\star(\tilde a)<t_2(\tilde a)$ for all $\tilde a\in [a-\varepsilon,a)\subset  A_\Gamma$, in which case $t_1(\tilde a_1)\leq t_1(a-\varepsilon)<t_2(a-\varepsilon)\leq t_2(\tilde a_2)$ for all $\tilde a_1,\tilde a_2\in [a-\varepsilon,a)$. In particular, the inequality $t_1(\tilde a_1)\leq t_1(a-\varepsilon)$ holds because $t_1(\tilde a_1)\notin (t_1(a-\varepsilon),t_2(a-\varepsilon))$, as shown in the first paragraph,  and  $t_1(\tilde a_1)\notin [t_2(a-\varepsilon),\theta^\star(a))$, as otherwise $(a^\star(\delta_{t_1(\tilde a_1)}),t_1(a^\star(\delta_{t_1(\tilde a_1)}))$, $(\tilde a_1,t_1(\tilde a_1))$, $(a^\star(\delta_{t_1(\tilde a_1)}),t_2(a^\star(\delta_{t_1(\tilde a_1)}))$ would be a strictly single-peaked triple in $\Gamma^\star$. Thus, in both cases (i) and (ii), there exists $\varepsilon>0$ with the required property.

Suppose now that $\phi $ has a density. Suppose for contradiction that there exist two distinct optimal outcomes $\pi$ and $\pi'$. Recall that, because $|\Gamma_a^\star|\leq 2$ for all $a$, we have  $\pi_a=\pi_a'=\rho_a \delta_{t_1(a)}+(1-\rho_a)\delta_{t_2(a)}$ for all $a\in A_\Gamma$ where $1-\rho_a>0$ is given by
\[
1-\rho_a=
\begin{cases}
\frac{-u(a,t_1(a))}{u(a,t_2(a))-u(a,t_1(a))}, &t_1(a)<t_2(a),\\
1, &t_1(a)=t_2(a).
\end{cases}
\]
Thus, $\alpha_\pi\neq \alpha_{\pi'}$. Define $\hat a=\sup \{a\in A:\alpha_\pi([0,a])\neq\alpha_{\pi'}([0,a])\}\in A_\Gamma$, where the inclusion follows from $\alpha_\pi\neq \alpha_{\pi'}$ and $\alpha_\pi(A_\Gamma)=\alpha_{\pi'}(A_\Gamma)=1$. As shown above, there exists $\varepsilon>0$ such that, for all $\tilde a_1,\tilde a_2\in [\hat a-\varepsilon,\hat a]\cap A_\Gamma$ with $t_1(\tilde a_1)\neq t_2(\tilde a_1)$ and $t_1(\tilde a_2)\neq t_2(\tilde a_2)$, we have $t_1(\tilde a_1)<t_2(\tilde a_2)$. We will now show that $\alpha_\pi([0,\tilde a])=\alpha_{\pi'}([0,\tilde a])$ for all $\tilde a\in [\hat a-\varepsilon,\hat a]$ contradicting the definition of $\hat a$. 

By (P1), the marginals of $\pi$ and $\pi'$ on $\Theta$ are both equal to $\phi$. Since $t_2$ is increasing in $a$, states $\theta>t_2(\tilde a)$ can only induce actions $a>\tilde a$.  Thus, since $\alpha_{\pi'}([0,a])=\alpha_\pi([0,a])$ for all $a\geq \hat a$, and since $t_1(\tilde a_1)<t_2(\tilde a_2)$ for all $\tilde a_1,\tilde a_2\in [\hat a-\varepsilon,\hat a]\cap A_\Gamma$ with $t_1(\tilde a_1)\neq t_2(\tilde a_1)$ and $t_1(\tilde a_2)\neq t_2(\tilde a_2)$, it follows that, for all $\tilde a\in [\hat a-\varepsilon,\hat a]\cap A_\Gamma$, we have
\begin{align*}
\phi((t_2(\tilde a),1])-\phi([t_2(\tilde a),1])&\leq  \int_{[\tilde a,\hat a]} (1-\rho_a)\df \alpha_{\pi'}(a) -\int_{[\tilde a,\hat a]} (1-\rho_a)\df \alpha_{\pi}(a)\\ &\leq \phi([t_2(\tilde a),1])-\phi((t_2(\tilde a),1]).
\end{align*}
Moreover, since $\phi$ has a density, we have $\phi((t_2(\tilde a),1])=\phi([t_2(\tilde a),1])$, and hence
\[
\int_{[\tilde a,\hat a]} (1-\rho_a)\df \alpha_{\pi'}(a) =\int_{[\tilde a,\hat a]} (1-\rho_a)\df \alpha_{\pi}(a).
\]
Then, since $1-\rho_a>0$ for all $a\in A_\Gamma$, and since $\supp (\alpha_{\pi'})\subset A_\Gamma$ and $\supp (\alpha_{\pi})\subset A_\Gamma$, it follows that $\alpha_\pi([\tilde a,\hat a])=\alpha_\pi([\tilde a,\hat a])$ for all $\tilde a\in [\hat a-\varepsilon,\hat a]$. Thus, since  $\alpha_{\pi'}([0,a])=\alpha_\pi([0,a])$ for all $a\geq \hat a$, it follows that $\alpha_\pi([0,\tilde a])=\alpha_{\pi'}([0,\tilde a])$ for all $\tilde a\in [\hat a-\varepsilon,\hat a]$.

\subsection{Proof of Theorem \ref{t:FDu}} By $\Theta=[0,1]$ and Assumptions \ref{a:smooth}--\ref{a:sc}, $\theta^\star (a)$ is a strictly increasing, continuous function from $A$ onto $\ol \Theta=\Theta$. Since the range of $\theta^\star$ is $\Theta$ and full disclosure is optimal, Theorem \ref{t:contact} implies that $\theta^\star (a)\in \Gamma^\star_a$ for all $a$. Thus, since the contact set is pairwise (i.e., $|\Gamma^\star_a|\leq 2$) and $\min \Gamma^\star_a<\theta^\star(a)<\max \Gamma^\star_a$ whenever $\Gamma^\star_a$ is multivalued (by the definition of $\Gamma^\star$), it follows that $\Gamma^\star_a=\{\theta^\star (a)\}$ for all $a$, as otherwise $\min \Gamma^\star_a$, $\theta^\star(a)$, and $\max \Gamma^\star_a$ would be three distinct elements in $\Gamma^\star_a$. Hence, $\Gamma^\star=\cup _{\theta\in \Theta}(a^\star(\delta_\theta),\theta)$, so full disclosure is optimal.

\subsection{Proof of Theorem \ref{t:NAD}}

We give the proof for the single-dipped case. Since for all $\theta_1<\theta_2$ there exists $p\in (0,1)$ such that \eqref{e:nd} holds, it follows that there do not exist $\theta_1<\theta_2$ such that $(a^\star(\delta_{\theta_1}),\theta_1)$ and $(a^\star(\delta_{\theta_2}),\theta_2)$ are in $\Gamma$. Suppose by contradiction that such $\theta_1$ and $\theta_2$ exist. For any $\mu=\rho  \delta_{\theta_1}+(1-\rho )\delta_{\theta_2}$ with $\rho \in (0,1)$, we have 
\begin{align*}
	p(\theta_1)&=V(a^\star(\theta_1),\theta_1)\geq V(a^\star(\mu),\theta_1)+q(a^\star(\mu))u(a^\star(\mu),\theta_1),\\
	p(\theta_2)&=V(a^\star(\theta_2),\theta_2)\geq V(a^\star(\mu),\theta_2)+q(a^\star(\mu))u(a^\star(\mu),\theta_2),
\end{align*}
by (D1) and the definition of $\Gamma$. Adding the first inequality multiplied by $\rho $ and the second inequality multiplied by $1-\rho $, we obtain that \eqref{e:nd} fails for all $\rho \in (0,1)$, yielding a contradiction.

Since $\Theta=[0,1]$, $\Gamma^*$ is strictly single-dipped, and for all $\theta_1<\theta_2$ there exists $p\in (0,1)$ such that \eqref{e:nd} holds, it follows that $ t_1(a_2)\leq t_1(a_1)$ for all $a_1<a_2$ in $A_\Gamma$, and thus $\Gamma^\star $ is single-dipped negative assortative. (Recall that, by Theorem \ref{t:brenier}, $\Gamma^\star(a)=\{t_1(a),t_2(a)\}$ for all $a\in A_\Gamma$ where $t_2(a)$ is increasing in $a$.) Suppose by contradiction that there exist $a_1<a_2$ in $A_\Gamma$ such that $t_1(a_2)>t_1(a_1)$. Then $t_1(a_2)\geq t_2(a_1)$, as otherwise $(a_1,t_1(a_1))$, $(a_2,t_1(a_2))$, $(a_1,t_2(a_1))$ is a strictly single-peaked triple in $\Gamma^\star$. Define
\begin{gather*}
\ul a_i=\inf \{a\in A_\Gamma:t_1(a_i)\leq t_1(a)\leq t_2(a)\leq t_2(a_i)\}\leq a_i, \quad \text{for $i=1,2$}.
\end{gather*}
Since $A_{\Gamma^\star}=A_\Gamma$ and $A_\Gamma$ is compact, we have $\ul a_1,\ul a_2\in A_{\Gamma^\star}$. We claim that $\Gamma^\star_{\ul a_i}=\{\theta^\star(\ul a_i)\}$ for $i=1,2$. Suppose by contradiction that $\Gamma_{\ul a_i}^\star\neq \{\theta^\star(\ul a_i)\}$, so $\Gamma_{\ul a_i}^\star=\{t_1(\ul a_i),t_2(\ul a_i)\}$ with $t_1(\ul a_i)<\theta^\star (\ul a_i)<t_2 (\ul a_i)$. Let $\Theta_{\Gamma^\star}$ be the projection of $\Gamma^\star$ on $\Theta$. Since $\pi(\Gamma^\star)=1$ for an optimal $\pi$, we have $\phi(\Theta_{\Gamma^\star})=1$ by (P1), and the closure of $\Theta_{\Gamma^\star}$ is $\Theta=[0,1]$. Thus, there exists $(a,\theta)\in \Gamma^\star$ with $t_1(\ul a_i)<\theta<t_2 (\ul a_i)$. Since $\Gamma^\star$ is strictly single-dipped, it follows that $a<\ul a$ (otherwise $(\ul a,t_1(\ul a))$, $(a,\theta)$, $(\ul a,t_2 (\ul a))$ is a single-peaked triple in $\Gamma^\star$) and $t_1(\ul a_i)\leq t_1(a)\leq t_2(a)\leq t_2(\ul a_i)$ (otherwise either $(a,t_1(a))$, $(\ul a_i,t_1(\ul a_i)$, $(a,\theta)$ or $(a,\theta)$, $(\ul a_i,t_2(\ul a_i)$, $(a,t_2(a))$ is a strictly single-peaked triple in $\Gamma^\star$), contradicting the definition of $\ul a$. Hence, $(\ul a_1,\theta^\star (\ul a_1))$ and $(\ul a_2,\theta^\star (\ul a_2))$ are in $\Gamma$, so by the second step of the proof we must have $\ul a_1 = \ul a_2$. But, by construction, $t_1(a_1)\leq \theta^\star (\ul a_1)\leq t_2(a_1)\leq t_1(a_2)\leq \theta^\star (\ul a_2)\leq t_2(a_2)$, and $\ul a_1 = \ul a_2$ implies that these inequalities all hold with equality, contradicting $t_1(a_2)>t_1(a_1)$.

Now suppose that $\phi$ has a density $f$ and $\Gamma^\star $ is single-dipped negative assortative. Finally, we show that the functions $t_1$ and $t_2$ are continuous and satisfy the differential equations \eqref{e:obed}--\eqref{e:q'} and the boundary condition \eqref{e:boundary}. Since the closure of the projection $\Theta_{\Gamma^\star}$ of $\Gamma^\star $ on $\Theta$ is $\Theta$, it follows that the the closure of the image of the functions $t_1$ and $t_2$ must also be equal to $\Theta$. 
Since $t_1$ is decreasing and $t_2$ is increasing on the compact domain $A_\Gamma$, and since $t_1(a)\leq \theta^\star (a)\leq t_2(a)$ for all $a\in A_\Gamma$, it follows that $t_1$ and $t_2$ are continuous functions such that $t_1(\ul a)=\theta^\star (a)=t_2(\ul a)$, $t_1(a)<\theta^\star (a)<t_2(a)$ for all $a>\ul a$, $t_1(\ol a)=0$, $t_2(\ol a)=1$, and $(t_1(\ul b_i),t_2(\ul b_i)) =(t_1(\ol b_i), t_2(\ol b_i))$ for all $i$, where $\{(\ul b_i,\ol b_i)\}_i$ is an at most countable set of disjoint open intervals comprising the set $[\ul a,\ol a]\setminus A_\Gamma$. Since $\phi $ has a density, the measure of the endpoints of these intervals is zero, and hence the set of optimal outcomes is unaffected if we redefine $A_\Gamma$ as $[\ul a,\ol a]$ and extend the domain of $t_1$ and $t_2$ to $[\ul a,\ol a]$ by setting $t_1(a)=t_1(\ul b_i)=t_1(\ol b_i)$ and $t_2(a)=t_2(\ul b_i)=t_2(\ol b_i)$ for all $a\in (\ul b_i,\ol b_i)$. In sum, without loss of generality, we can assume that $t_1$ and $t_2$ are continuous monotone functions defined on $[\ul a,\ol a]$ that satisfy \eqref{e:boundary} and $t_1(a)<\theta^\star (a)<t_2(a)$ for all $a\in (\ul a,\ol a]$.

Since $\phi$ has a density and $\Gamma^\star_a=\{t_1(a),t_2(a)\}$ for all $a\in [\ul a,\ol a]$, where $t_1$ is continuously decreasing and $t_2$ is continuously increasing, we can rewrite (P2) for $\tilde A=[a,a']$, with $\ul a\leq a< a'\leq \ol a$, as
\[
\int_{a}^{a'} u(\tilde a,t_1(\tilde a))(-\df \phi ([0,t_1(\tilde a)]))+\int _{a}^{a'} u(\tilde a,t_2(\tilde a))\df \phi ([0,t_2(\tilde a)])=0.
\]
Taking the limit $a'\downarrow a$, we obtain  \eqref{e:obed} for all $a\in [\ul a,\ol a]$.

Since $\Gamma_a^\star=\{t_1(a),t_2(a)\}$ for all $a\in [\ul a,\ol a]$, Theorem \ref{t:contact} gives the FOC, for all $a\in (\ul a,\ol a]$,
\begin{gather*}
v(a,t_1(a))+ q(a)u_a(a,t_1(a))+q'(a)u(a,t_1(a))=0,\\
v(a,t_2(a))+ q(a)u_a(a,t_2(a))+q'(a)u(a,t_2(a))=0.
\end{gather*}
Solving for $q(a)$ and $q'(a)$, we get, for all $a\in (\ul a,\ol a]$,
\begin{gather*}
q(a)= \frac{v(a,t_1(a))u(a,t_2(a))-v(a,t_2(a))u(a,t_1(a))}{u(a,t_1(a))u_a(a,t_2(a))-u(a,t_2(a))u_a(a,t_1(a))},\\
q'(a)=\frac{v(a,t_1(a))u_a(a,t_2(a))-v(a,t_2(a))u_a(a,t_1(a))}{u_a(a,t_1(a))u(a,t_2(a))-u_a(a,t_2(a))u(a,t_1(a))},
\end{gather*}
where the denominators in the expressions for $q(a)$ and $q'(a)$ are not equal to $0$, by Assumption \ref{a:qc}.
Recalling that $q'$ is the derivative of $q$, we obtain \eqref{e:q'} for all $a\in (\ul a,\ol a]$.

\subsection{Proof of Corollary \ref{c:ndSDD}}
We give the proof for the single-dipped case. Noting that $\rho u(a^\star (\mu),\theta_1)+(1-\rho )u(a^\star(\mu),\theta_2)=0$ and denoting $a=a^\star (\mu)$, we infer that \eqref{e:nd} fails if there exist $\theta_1<\theta_2$ such that for all $a\in (a^\star (\delta_{\theta_1}),a^\star (\delta_{\theta_2}))$, we have
\[
u(a,\theta_2)(V(a,\theta_1)-V(a^\star(\delta_{\theta_1}),\theta_1))-u(a,\theta_1)(V(a,\theta_2)-V(a^\star(\delta_{\theta_2}),\theta_2))\leq 0.
\]
By Taylor's theorem and some algebra, we get
\begin{gather*}
u(a,\theta_2)(V(a,\theta_1)-V(a^\star(\delta_{\theta_1}),\theta_1))-u(a,\theta_1)(V(a,\theta_2)-V(a^\star(\delta_{\theta_2}),\theta_2))\\
=\frac 12u_a(a,\theta^\star(a))\left( v_a(a,\theta^\star (a)) - \frac{v(a,\theta^\star (a)) u_{aa}(a,\theta^\star (a))}{u_a(a,\theta^\star (a))} \right.\\
\left.-2\frac{v_\theta(a,\theta^\star (a)) u_a(a,\theta^\star (a)) - v(a,\theta^\star (a)) u_{a\theta}(a,\theta^\star (a))}{u_\theta(a,\theta^\star (a))}\right)\\
\cdot (a-a^\star(\delta_{\theta_1}))(a^\star(\delta_{\theta_2})-a)(a^\star(\delta_{\theta_2})-a^\star(\delta_{\theta_1})) \\
+o((a-a^\star(\delta_{\theta_1}))(a^\star(\delta_{\theta_2})-a)(a^\star(\delta_{\theta_2})-a^\star(\delta_{\theta_1}))).
\end{gather*}
Hence, if \eqref{e:ndSDD} fails at some $a$, then there exist $\theta_2>\theta_1$ with $a^\star(\delta_{\theta_2})-a>0$ and $a-a^\star(\delta_{\theta_1})>0$ small enough such that \eqref{e:nd} fails for all $\rho \in (0,1)$.

Note that ${\df\theta^\star (a)}/{\df a}=-{u_a(a,\theta^\star(a))}/{u_\theta(a,\theta^\star(a))}$, by the implicit function theorem applied to $u(a,\theta^\star (a))=0$.
Thus, denoting the partial derivatives of $v$ and $u_a$ in $a$ by $v_a$ and $u_{aa}$, we get that the derivative of $q(a)=-{v(a,\theta^\star(a))}/{u_a(a,\theta^\star(a))}$ is given by
\[
q'(a)=-\frac{v_a(a,\theta^\star(a))}{u_a(a,\theta^\star(a))}+\frac{v_\theta(a,\theta^\star(a))}{u_\theta(a,\theta^\star(a))}+\frac{v(a,\theta^\star(a)) u_{aa}(a,\theta^\star(a))}{(u_a(a,\theta^\star(a)))^2}-\frac{v(a,\theta^\star(a)) u_{a\theta}(a,\theta^\star(a))}{u_a(a,\theta^\star(a))u_\theta(a,\theta^\star(a))}.
\]

Conversely, suppose that \eqref{e:ndSDD}, together with all other assumptions of the corollary, holds. Then, for $a>a^\star (\delta_\theta)$, we have
\begin{align*}
	&V(a,\theta)-\frac{v(a,\theta^\star(a))}{u_a(a,\theta^\star(a))}u(a,\theta)-V(a^\star(\delta_\theta),\theta)\\
	=&(V(\tilde a,\theta) +q(\tilde a)(\theta-\tilde a))|^{a}_{a^\star(\delta_\theta)}\\
	=&\int^{a}_{a^\star(\delta_\theta)} [v(\tilde a,\theta)+q(\tilde a)u_a(\tilde a,\theta)+q'(\tilde a)u(\tilde a,\theta)]\df \tilde a\\
	\geq &\int^{a}_{a^\star(\delta_\theta)} \left [v(\tilde a,\theta)-\frac{v(\tilde a,\theta^\star(\tilde a))}{u_a(\tilde a,\theta^\star(\tilde a))} u_a(\tilde a,\theta)\right ]\df \tilde a\\
	&+\int^{a}_{a^\star(\delta_\theta)} \left [\frac{v(\tilde a,\theta^\star (\tilde a)) u_{a\theta}(\tilde a,\theta^\star (\tilde a))}{u_\theta(\tilde a,\theta^\star (\tilde a))} - \frac{v_\theta (\tilde a,\theta^\star (\tilde a))}{u_\theta (\tilde a,\theta^\star (\tilde a))}\right ]u(\tilde a,\theta)\df \tilde a\\
	=&\int^{a}_{a^\star(\delta_\theta)}\int ^{\theta^\star (\tilde a)}_{\theta} \left [\frac{v(\tilde a,\theta^\star(\tilde a))}{u_a(\tilde a,\theta^\star(\tilde a))} u_{a\theta}(\tilde a,\tilde \theta)-v_\theta (\tilde a,\tilde \theta)\right ]\df \tilde \theta \df \tilde a\\
	&+\int^{a}_{a^\star(\delta_\theta)}\int ^{\theta^\star (\tilde a)}_{\theta} \left [ \frac{v_\theta (\tilde a,\theta^\star (\tilde a))}{u_\theta (\tilde a,\theta^\star (\tilde a))} - \frac{v(\tilde a,\theta^\star (\tilde a)) u_{a\theta}(\tilde a,\theta^\star (\tilde a))}{u_\theta(\tilde a,\theta^\star (\tilde a))} \right ]u_\theta(\tilde a,\tilde \theta)\df \tilde \theta \df \tilde a\\
	= &\int^{a}_{a^\star(\delta_\theta)}\int ^{\theta^\star (\tilde a)}_{\theta} \left [\frac{v_\theta (\tilde a,\theta^\star (\tilde a))}{u_\theta (\tilde a,\theta^\star (\tilde a))} -\frac{v_\theta (\tilde a,\tilde \theta)}{u_\theta(\tilde a,\tilde \theta)}\right ]u_\theta(\tilde a,\tilde \theta)\df \tilde \theta \df \tilde a\\
	&+\int^{a}_{a^\star(\delta_\theta)}\int ^{\theta^\star (\tilde a)}_{\theta} \frac{v(\tilde a,\theta^\star(\tilde a))}{-u_a(\tilde a,\theta^\star(\tilde a))} \left [\frac{u_{a\theta}(\tilde a,\theta^\star(\tilde a))}{u_\theta (\tilde a,\theta^\star(\tilde a))}-\frac{u_{a\theta}(\tilde a,\tilde \theta)}{u_\theta (\tilde a,\tilde \theta)} \right]u_\theta(\tilde a,\tilde \theta)\df \tilde \theta \df \tilde a>0,
\end{align*}
where the first and last equalities are by rearrangement, the second and third equalities are by the fundamental theorem of calculus, the first inequality is by \eqref{e:ndSDD} and substitution of $q(\tilde a)$ and $q'(\tilde a)$, and the last inequality is by our assumptions imposed in the corollary.

By Taylor's theorem, we have, for $\theta_1<\theta_2$ and $a\in (a^\star(\delta_{\theta_1}),a^\star(\delta_{\theta_2}))$,
\begin{gather*}
u(a,\theta_2)(V(a,\theta_1)-V(a^\star(\delta_{\theta_1}),\theta_1))-u(a,\theta_1)(V(a,\theta_2)-V(a^\star(\delta_{\theta_2}),\theta_2))\\
=\left[V(a^\star(\theta_2),\theta_1)-\frac{v(a^\star(\theta_2),\theta_2)}{u_a(a^\star(\theta_2),\theta_2)}u(a^\star(\theta_2),\theta_1)-V(a^\star(\delta_{\theta_1}),\theta_1)\right]\\
\cdot(-u_a(a^\star (\delta_{\theta_2}),\theta_2))(a^\star(\delta_{\theta_2})-a)+o(a^\star(\delta_{\theta_2})-a).
\end{gather*}
Hence \eqref{e:nd} holds for sufficiently small $\rho >0$.

\newpage

\begin{center}
\Large{Online Appendix}
\end{center}

\renewcommand{\thesection}{C}

\section{Applications and Extensions} \label{s:other}

\subsection{Contests}\label{s:ZZ}\mbox{} \citet{ZZ} study information disclosure in contests. In their model, two contestants, $A$ and $B$, compete for a prize by exerting efforts $x_A$ and $x_B$. The probability that contestant $i=A,B$ wins is $x_i/(x_A+x_B)$. Everyone knows contestant A's value $v_A=1$. Contestant B's value $v_B$ is known to contestant B and the designer. The sender designs a signal about $v_B$ to maximize expected total effort.

It is convenient to parameterize $\theta = 1/\sqrt{v_{B}}$ and $a=\sqrt{x_{A}}$. With this parameterization, \citeauthor{ZZ}'s Proposition 1 shows that, given a posterior $\mu$, contestant A exerts effort $x_A^\star=a^\star (\mu)^2$ determined by $\E_{\mu} \left [\theta-\left(1+\theta^2 \right)a^\star (\mu) \right] =0$, and contestant B (who knows $\theta$) exerts effort $x^\star(\theta)=a^\star (\mu)/\theta-a^\star (\mu)^2$, so the sender's expected utility is $x_A^\star+\E_\mu \left[x^\star(v_B)\right]=\E_\mu \left[a^\star (\mu)/\theta \right]$. We thus recover our model with $V(a,\theta)=a/\theta$ and $u(a,\theta)=\theta-(1+\theta^2)a$. 

\citeauthor{ZZ} give results on optimality of pairwise disclosure, full-disclosure, and no-disclosure. Our approach easily yields the following result, which additionally gives conditions for optimality of single-dipped/-peaked disclosure and negative assortative disclosure (which were not considered by \citeauthor{ZZ}).\footnote{\citeauthor{ZZ} assume that $\phi$ is discrete; we instead assume that $\phi$ is continuous.}

\begin{proposition}\label{p:ZZ} Let $\phi$ have a density on $\Theta=[\ul \theta ,\ol \theta]$, where $0<\ul \theta<\ol \theta$. If $\ul \theta\geq 1$, then the unique optimal outcome is full disclosure. If  $\ol \theta \leq 1/\sqrt{3}$ ($1/\sqrt{3}\leq \ul \theta<\ul \theta\leq 1$), then the unique optimal outcome is single-dipped (-peaked) negative assortative disclosure.
\end{proposition}

The proof of single-dippedness/-peakedness uses Theorem \ref{l:ssdd} with a perturbation that fixes both actions. In contrast, directly applying Theorem \ref{t:SDPD} would yield only the weaker result that single-peaked negative assortative disclosure is optimal if $1/\sqrt 2\leq \ul \theta<\ol \theta< 1$.\footnote{To see this, suppose $\ol \theta< 1$. Then $u_\theta(a,\theta)=1-2\theta a>0$ for $a\leq \ol \theta/(1+\ol \theta\,\!^2)=\max A$. Moreover, $u_{a\theta}(a,\theta)/u_\theta(a,\theta)=-2\theta/(1-2\theta a)$ is always decreasing in $\theta$, while $v_\theta(a_2,\theta)/u_\theta(a_1,\theta)=-1/(\theta^2-2\theta^3 a_1)$ is decreasing in $\theta $ iff $3\ul \theta \min A= 3\ul \theta^2/(1+\ul\theta\,\!^2)\geq  1$, or equivalently $\ul \theta \geq 1/\sqrt 2$.}

\subsection{Affiliated Information}\label{s:GS}\mbox{} \citet{GS2018} consider a persuasion model with a privately informed receiver, where it is commonly known that the receiver wishes to accept a proposal iff $\theta$ exceeds a threshold $\theta_0$, and the receiver's type $t$ is his private signal of $\theta$. Letting $G(t|\theta)$ denote the distribution of $t$ conditional on $\theta$, with corresponding density $g(t|\theta)$, this setup maps to our model with $V(a,\theta)=G(a|\theta)$, $u(a,\theta)=(\theta-\theta_{0})g(a|\theta)$, and $g(t|\theta)$ strictly log-submodular in $(t,\theta)$.\footnote{The ordering convention here is that high $t$ is bad news about $\theta$. This ordering is opposite to \citeauthor{GS2018}'s, but follows our convention that the receiver accepts for types below a cutoff.}$^,$\footnote{\citeasnoun{IP20} study robust stress test design in a setting with multiple receivers with coordination motives. As they note, the single-receiver version of their model is a special case of \citeasnoun{GS2018}.} These preferences satisfy
Assumptions \ref{a:smooth}, \ref{a:qc} (see Lemma \ref{l:ASC}), \ref{a:sc}, and \ref{a:v>0}, but not Assumption \ref{a:int}, as $u(a,\theta)>0$ for all $a$ when $\theta>\theta_0$. Nonetheless, assuming that the receiver breaks ties in the sender's favor, we have $a^\star (\mu)=\max\{a:\int_\Theta u(a,\theta)\df \mu\geq 0\}$.

Let us take for granted that Theorem \ref{l:ssdd} holds even though Assumption \ref{a:int} is violated (e.g., this is clearly true if $\Theta$ is finite). Applying Theorem \ref{l:ssdd} with a perturbation that fixes one action while increasing the other action and the sender's expected utility (for fixed actions), we obtain the following result, which reproduces \citeauthor{GS2018}'s main qualitative insight.%

\begin{proposition}\label{c:spd1}
Every optimal outcome is single-peaked.
\end{proposition}

Notice that when Assumption \ref{a:int} fails, condition \eqref{e:nd} cannot hold for all $\mu$, because there exist states $\theta_1\neq \theta_2$ such that either (i) $u(a,\theta_1)>0$ and $u(a,\theta_2)>0$ for all $a$, so that $a^\star(\rho\delta_{\theta_1}+(1-\rho)\delta_{\theta_2})=1$ for all $\rho\in [0,1]$, or (ii) $u(a,\theta_1)<0$ and $u(a,\theta_2)<0$ for all $a$, so that $a^\star (\rho \delta_{\theta_1}+(1-\rho)\delta_{\theta_2})=0$ for all $\rho\in [0,1]$. In both cases, we obviously have, for all $\rho\in [0,1]$,
\[
\rho V(a ^{\star }(\mu),\theta_1) +(1-\rho )V(a ^{\star }(\mu),\theta_2)= \rho V(a ^{\star }(\delta_{\theta_1}),\theta_1) +(1-\rho )V(a ^{\star }(\delta_{\theta_2}),\theta_2),
\] 
so \eqref{e:nd} necessarily fails. This suggests the following adjusted requirement when Assumption \ref{a:int} fails: for all $\theta_1,\theta_2$ with $u(1,\theta_1)<0<u(0,\theta_2)$, condition \eqref{e:nd} holds for some $\rho\in(0,1)$. This requirement is clearly satisfied in \citet{GS2018}, as then $\theta_1<\theta_0<\theta_2$, so \eqref{e:nd} holds for $\rho$ sufficiently small so that $a^\star (\rho \delta_{\theta_1}+(1-\rho)\delta_{\theta_2})=1$. For the case where $\phi$ has a density on $\Theta=[0,1]$, Theorem 3.1 in \citet{GS2018} implies that the optimal outcome is single-peaked negative assortative, in the sense that there exist an increasing function $t_1(a)$ and a decreasing function $t_2(a)$ such that $t_1(a)\leq\theta_0\leq t_2(a)$ and $\supp (\pi_a)=\{t_1(a),t_2(a)\}$ for all $a>0$, and $\supp(\pi_0)=[0,t_1(0)]$. %

\subsection{Stress Tests}\label{s:GL}\mbox{}
\citeasnoun{GL} consider a model of optimal stress tests. The sender is a bank regulator and the receiver is a perfectly competitive market. The bank has an asset that yields a random cash flow. The asset's quality is $\theta$, which is observed by the bank and the regulator but not the market, and is normalized to equal the asset's expected cash flow.\footnote{This is the model in Section 5 of their paper, where the bank observes $\theta$.} The regulator designs a test to reveal information about $\theta$. After observing the test result, the market offers a competitive price $a$ for the asset. Finally, the bank decides whether to keep the asset and receive the random cash flow, or sell it at price $a$. Letting $z$ denote the bank's final cash holding (equal to either the random cash flow or $a$), the bank's payoff equals $z+\1 \{z \geq \theta_{0}\}$, where $\theta_{0}$ is a constant. An interpretation is that the bank faces a run if its cash holding falls below $\theta_{0}$. The regulator designs the test to maximize expected social welfare, or equivalently to minimize the probability of a run.

\citeauthor{GL} show that a bank with a type-$\theta$ asset is willing to sell at a price $a$ iff $a$ exceeds a reservation price $\tilde{\sigma} (\theta)$ that satisfies $\tilde{\sigma} (\theta) >\theta $ if $\theta < \theta_0$, $\tilde{\sigma} (\theta) <\theta $ if $\theta > \theta_0$, and $\tilde{\sigma}' (\theta)\geq 0$. Intuitively, if $\theta < \theta_0$ then the bank demands a premium to forego the chance that a lucky cash flow shock pushes its holdings above $\theta_0$, while if $\theta > \theta_0$ then the bank desires insurance against bad cash flow shocks that push its holdings below $\theta_0$. However, the value of the regulator's problem is unaffected if the reservation price is re-defined as $\sigma (\theta) =\theta$ if $\theta \leq \theta_0$ and $\sigma (\theta) =\tilde{\sigma} (\theta)$ if $\theta > \theta_0$, because it is suboptimal for the regulator to induce a bank to sell at a price below $\theta_0$. It is more convenient to work with the normalized reservation price $\sigma (\theta)$. 

It is also convenient to restrict attention to tests that, for each $\theta$, either induce the bank to sell or fully disclose the bank's value: this is without loss because if the regulator pools two asset types that do not sell, then it would be weakly better to disclose these types. Note that for such a test, the price induced by any posterior $\mu$ is $a^\star (\mu)=\mathbb{E}_{\mu}[\theta]$, so we are in the linear receiver case. We can capture the requirement that the bank always sells if $a\neq \theta$ by setting $V(a,\theta)=- \infty$ if $a<\sigma(\theta)$. Finally, letting $w(\theta)>0$ equal the social gain when a bank sells a type-$\theta$ asset at a price above $\theta_0$ (which equals the probability that a type-$\theta$ asset yields a cash flow below $\theta_0$), we obtain the linear receiver case of our model with
\[
V(a,\theta)=
\begin{cases}
w(\theta)\1\{a\geq \theta_0\}, &\text{if $a\geq \sigma(\theta)$,}\\
-\infty, &\text{otherwise.}
\end{cases}
\]
Note that $V$ violates Assumptions \ref{a:smooth} and \ref{a:v>0}, as it is discontinuous and only weakly increasing in $a$. Nonetheless, if we assume that $\Theta$ is a finite set (as do \citeauthor{GL}), we recover their main qualitative insight.

\begin{proposition}\label{p:GL} 
Let $\Theta$ be finite. There exists an optimal single-dipped outcome.
\end{proposition}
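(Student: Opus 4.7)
The plan is to combine the pairwise reduction of Theorem~\ref{t:pairwise}(1) with a finite combinatorial exchange argument, exploiting the finiteness of $\Theta$. Although Assumptions~\ref{a:smooth} and~\ref{a:v>0} fail in the GL model, the proof of part~(1) of Theorem~\ref{t:pairwise} uses only strict aggregate quasi-concavity (Assumption~\ref{a:qc}), so pairwise outcomes remain without loss. Existence of an optimal outcome is standard, via compactness on the space of outcomes with support in $F=\{(a,\theta):a\ge\sigma(\theta)\}$, which is a finite-dimensional polytope when $\Theta$ is finite.

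Given a pairwise optimal outcome $\pi$, I would first argue that every non-trivial pool can be taken to be \emph{cross-threshold}, i.e., of the form $\{\theta_w,\theta_s\}$ with $\theta_w\le\theta_0\le\theta_s$: pooling two states both below $\theta_0$ yields induced action $a<\theta_0$, so the sender's value is zero and disclosing these states weakly dominates; pooling two states both above $\theta_0$ is weakly dominated by disclosure since $w>0$. The optimal pool structure thus reduces to a bipartite matching between weak ($\le\theta_0$) and strong ($\ge\theta_0$) states. I would then argue that there exists a \emph{nested} (single-dipped, in the sense of Theorem~\ref{t:brenier}) bipartite matching attaining the optimal value. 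Suppose the pool collection in $\pi$ contains interlocking pairs $\{\theta_a,\theta_b\}$ and $\{\theta_c,\theta_d\}$ with $\theta_a<\theta_c\le\theta_0\le\theta_b<\theta_d$. I replace them with the nested pairs $\{\theta_a,\theta_d\}$ and $\{\theta_c,\theta_b\}$, with swap weights adjusted to preserve the marginals on $\Theta$. A direct computation then shows that both (i)~the total $w$-weighted mass and (ii)~the total action-weighted mass $\sum_i m_i a_i$ are invariant under the swap, so the sender's value is preserved provided the new pool actions still satisfy $a'\ge\max(\theta_0,\sigma(\cdot))$.

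The main obstacle is that the direct nested swap need not guarantee that both new actions exceed $\theta_0$: for asymmetric pool weights, the ``outer'' pool $\{\theta_a,\theta_d\}$ can have action below $\theta_0$ even though the two original pools were feasible. I would address this either by (i) performing only a \emph{partial} swap on the excess mass---leaving enough of each original pool intact to maintain feasibility of all actions, while still strictly reducing the number of interlocking pairs---or (ii) recasting the bipartite matching problem as a finite max-flow LP and invoking a ``matching extremes'' algorithm that iteratively pools the strongest unused strong state with the weakest unused weak state (using the maximum feasible mass); this yields, by construction, a vertex-optimal nested matching. Since $\Theta$ is finite, the space of pool structures is finite, and either procedure terminates at a single-dipped outcome that inherits optimality from the conservation laws above. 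By Theorem~\ref{t:contact} (adapted to the discontinuous $V$ by restricting to outcomes supported in $F$), this yields the desired optimal single-dipped outcome.
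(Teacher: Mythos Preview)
Your approach diverges substantially from the paper's and runs into obstacles you do not fully resolve. The paper's proof is much shorter and sidesteps your difficulty entirely. Starting from any optimal outcome with finite support, whenever the support contains a strictly single-peaked triple $(a_1,\theta_1),(a_2,\theta_2),(a_1,\theta_3)$ with $\theta_1<a_1<\theta_3$ and $a_1<a_2$, feasibility ($V\neq-\infty$) forces $a_1\ge\sigma(\theta_3)$; together with $\sigma(\theta)=\theta$ for $\theta\le\theta_0$ and monotonicity of $\sigma$, this gives $a_1\ge\theta_0$. Hence $V(a_2,\theta_i)-V(a_1,\theta_i)=0$ for each $i$, so the first row of $R$ vanishes. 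One then applies the Theorem~\ref{l:ssdd} perturbation with $y=\bigl((\theta_3-\theta_2),(\theta_3-\theta_1),(\theta_2-\theta_1)\bigr)\varepsilon$---precisely the weights that, in the simple-receiver case, \emph{hold both actions $a_1$ and $a_2$ fixed}. Because the actions do not move, the constraints $a\ge\sigma(\theta)$ and $a\ge\theta_0$ are automatically preserved, and the sender's value is unchanged. Pushing $\varepsilon$ to its maximum removes one of the three points from the support; iterating yields a single-dipped optimum. No pairwise or cross-threshold reduction is needed.

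By contrast, your swap changes the actions: replacing interlocking pairs $\{\theta_a,\theta_b\}$, $\{\theta_c,\theta_d\}$ by $\{\theta_a,\theta_d\}$, $\{\theta_c,\theta_b\}$ with the same state marginals produces new pool means that need not satisfy $a'\ge\theta_0$ or $a'\ge\sigma(\theta_d)$. You flag this, but there is a second gap you do not mention: even when both new actions clear $\theta_0$, the outer pool $\{\theta_a,\theta_d\}$ can have a \emph{lower} action than the inner pool $\{\theta_c,\theta_b\}$, in which case $(a_{ad},\theta_a),(a_{cb},\theta_c),(a_{ad},\theta_d)$ is a strictly single-peaked triple and nesting has not achieved single-dippedness. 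Your proposed fixes---a ``partial swap on excess mass'' and a matching-extremes greedy algorithm---are only sketched and come with no argument that they strictly reduce a complexity measure while handling both obstacles. The missing idea is exactly the action-fixing perturbation: once $a_1$ and $a_2$ are held fixed, both feasibility and single-dippedness take care of themselves.
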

To prove the proposition, we use a perturbation that fixes both actions. Since $V$ is only weakly increasing, this perturbation now only weakly increases the sender's expected utility. Nonetheless, when $\Theta$ is finite, repeatedly apply such perturbations eventually yields a single-dipped outcome. We also note that, as \citeauthor{GL} show, if $\mathbb{E}_{\phi}[\theta]<\theta_{0}$---so that no-disclosure does not attain the sender's first-best outcome---then every optimal outcome is single-dipped.\footnote{A related model by \citet{GT} studies optimal information disclosure to facilitate trade in an insurance market with adverse selection. Their model can be mapped to the linear receiver case with $V(a,\theta)=\nu(a)$ if $a\geq \sigma (\theta)$ and $V(a,\theta)=-\infty$ otherwise, where $\nu(a)$ is a strictly increasing, strictly concave function, and $\sigma$ is a continuous, strictly increasing function that satisfies $\sigma(\theta)<\theta$. Considering a similar perturbation as in \citeauthor{GL} shows that single-dipped negative assortative disclosure is optimal in their model. We also mention \citet{LW}, where a bank regulator discloses information about the design of a stress test to induce banks to make socially desirable investments. In this model, single-peaked disclosure is optimal.}

\renewcommand{\thesection}{D}
\section{Additional Examples}
\label{s:examples}

\begin{example}[$\Gamma^\star$ might not be compact; with the ``wrong'' selection from $Q$, $\Gamma$ might not be compact either.] \label{e:1}
Consider the linear case with $V(a)=0$ if $a<1/2$ and $V(a)=(a-1/2)^2$ otherwise. Let $\phi $ be uniform on $\Theta=\{0,1/2,1\}$. Note that $p(\theta)=V(\theta,\theta)$ solves (D). Moreover, $Q(a)=0$ if $a<1/2$ and $Q(a)=[a-1/2,a]$ otherwise. Our selection from $Q$ is given by $q(a)=0$ if $a<1/2$ and $q(a)=2a-1$ otherwise. Note that this selection is from the interior of $Q(a)$ for all $a\in (1/2,1)$.

With our selection, the contact set  $\Gamma = ([0,1/2]\times \{0,1/2\})\cup \{1,1\}$ is compact, but $\Gamma^\star =\Gamma\setminus (\{0,1/2\}\cup\{1/2,0\})$ is not compact. Note also that there exists an optimal outcome with $\supp(\pi)=\Gamma$ (e.g., the outcome that induces action $1$ with certainty if $\theta=1$, and induces action $a\in [0,1/2]$ with densities $4-8a$ and $8a$ if $\theta=0$ and $\theta=1/2$, respectively.) However, for any such an outcome there exists a conditional probability $\pi_a$ such that $\supp (\pi_a)=\Gamma^\star _a$ for all $a\in A_\Gamma=[0,1/2]\cup \{1\}$ (i.e., $\pi_1=\delta_1$ and $\pi_a=(1-2a)\delta_0 +2a\delta_{1/2}$ for all $a\in [0,1/2]$).

In contrast, consider an alternative selection from $Q$ given by $\tilde q(a)=0$ if $a<1/2$ and $\tilde q(a)=a$ otherwise. The associated contact set $\tilde \Gamma = \Gamma\cup ([1/2,1)\times \{1\})\setminus \{(1/2,0)\}$ is not compact because $(1/2,0)\notin \tilde \Gamma$, and $A_{\tilde\Gamma}$ contains redundant actions $a\in (1/2,1)$ that are not induced by any optimal outcome.
\end{example}

\begin{example}[The FOC \eqref{e:FOC} might not hold on all of $\Gamma$.]\label{ex:FOC}
Consider the linear receiver case. Let $\phi$ be uniform on $\Theta=\{0,1/3,1\}$, and $V(a,\theta)=-a^2$ if $\theta=0$ and $V(a,\theta)=-a/3+a^2-3a^3/4$ if $\theta\in \{1/3,1\}$. Since $V(a,\theta)\leq 0$ for all $(a,\theta)$ with equality on $\Gamma=\{0,0\}\cup (\{0,2/3\}\times \{1/3,1\})$ and strict inequality elsewhere, the unique optimal outcome reveals state $0$ (which induces action $0$) and pools states $1/3$ and $1$ (which induces action $2/3$). The contact set is $\Gamma$, so $\Gamma_0=\Theta$. But \eqref{e:FOC} cannot hold on $\Gamma$, because the following system of equations does not have a solution $(q(0),q'(0))$,
\[
\begin{cases}
0-q(0)+q'(0)0=0,\\
-\frac 13 -q(0)+q'(0)\frac 13 =0,\\
-\frac 13 -q(0)+q'(0)1 =0.
\end{cases}
\]
Intuitively, $\theta^\star (a)\in (\min \Gamma_a,\max \Gamma_a)$ is an interior case, so the FOC is valid on $\Gamma^\star_a=\Gamma_a$; while $\theta^\star (a)\in \{\min \Gamma_a,\max \Gamma_a\}$ is a boundary case, so the FOC may be invalid on $\Gamma_a$, but it is still valid on $\Gamma^\star_a=\{\theta^\star (a)\}$ given our selection $q(a)=v(a,\theta^\star(a))/(-u_a(a,\theta^\star(a))$.
\end{example}

\begin{example}[Without Assumption \ref{a:sc}, $\Gamma$ might not be compact and the FOC might fail.]\label{ex:SC}
Let $\phi$ be uniform on $\Theta=\{0,1/3,2/3,1\}$; $u(a,0)=-a$, $u(a,1/3)=u(a,2/3)=1/2-a$, and $u(a,1)=1-a$; and $V(a,0)=V(a,1/3)=0$, $V(a,2/3)=a-1/2$, and $V(a,1)=a-1$. Note that $p=0$ solves (D). Moreover, $Q(a)=0$ if $a<1/2$, $Q(a)=1$ if $a>1/2$, and $Q(1/2)=[0,1]$. 

For any selection $\tilde q$ from $Q$, the associated contact set $\tilde \Gamma$ satisfies $\tilde \Gamma_a=\{0,1/3\}$ if $a<1/2$, $\tilde \Gamma_a=\{2/3,1\} $ if $a>1/2$, and $\tilde \Gamma_{1/2}=\{1/3,2/3\}\cup (\{\tilde q(1/2)\}\cap \Theta)$. The set $\tilde \Gamma$ is not compact because $(1/2,0)\notin \tilde \Gamma$ if $\tilde q(1/2)\neq 0$ and $(1/2,1)\notin \tilde \Gamma$ if $\tilde q(1/2)\neq 1$. Moreover, there does not exist a full measure set where the FOC holds: since the full-disclosure outcome $\pi=(\delta_{(0,0)}+ \delta_{(1/2,1/3)}+\delta_{(1/2,2/3)}+\delta_{(1,1)})/4$ is supported on $\tilde \Gamma$, it is optimal, but the FOC does not hold at $(1/2,1/3)$ if $\tilde q(a)\neq 0$ and at $(1/2,2/3)$ if $\tilde q(a)\neq 1$.
\end{example}

\begin{example}[The Hausdorff limit of single-dipped sets might not be single-dipped.]\label{ex:stab}
Consider the linear receiver case. Let $\Theta=\{0,1/2,1\}$ and $\Gamma^n$ be given by
\[
\Gamma_{a}^n=
\begin{cases}
\{0,\frac 12\}, &a\in \left[\frac 1 4,\frac 1 2 - \frac {1}{4n}\right],\\
\{\frac 12,1\}, &a\in \left[\frac 12 +\frac{1}{4n}, \frac{3}{4}\right],\\
\emptyset, &\text{otherwise}.
\end{cases}
\]
Clearly, $\Gamma^n$ is single-dipped for each $n$, but the limit set $\Gamma$ given by
\[
\Gamma_{a} = 
\begin{cases}
\{0,\frac 12\}, &a\in [0,\frac 12),\\
\{0,\frac 12,1\}, &a=\frac 12,\\
\{\frac 12,1\}, &a\in (\frac 12,1],
\end{cases} 
\]
is not single-dipped, as it contains the strictly single-peaked triple $(1/2,0)$, $(3/4,  1/2)$, and $(1/2, 1)$. Note that for any convergent sequence of optimal outcomes $\pi^n\rightarrow \pi$ with $\supp(\pi^n)=\Gamma^n$, we have $\supp (\pi)=\Gamma$. Nevertheless, since each $\pi^n$ is supported on $\Gamma^n$ and satisfies (P2), it follows that $\pi (\{1/2\}\times \{0,1\})=0$, and hence $\pi$ is concentrated on the single-dipped set $\Gamma^\dagger =\Gamma\setminus (\{1/2\}\times \{0,1\})$.
\end{example}

\begin{example}[If $\Theta \neq \ol \Theta$, $\Gamma^\star$  might not be negative assortative.]\label{ex:NAD}
Consider the linear case with $V(a,\theta)=\sin (3\pi a)$. Let $\phi$ be uniform on $\Theta=\{0,1/2,1\}$. Then $\Gamma^\star =\{(1/6,0),(1/6,1/2),(5/6,1/2),(5/6,1)\}$,  so the unique optimal outcome induces action $1/6$ at state $0$, action $5/6$ at state $1$, and randomizes between actions $1/6$ and $5/6$ with equal probabilities at state $1/2$. Clearly, $\Gamma^\star$ is both strictly single-dipped and strictly single-peaked, and \eqref{e:nd} holds (e.g., at $\rho=2/3$ for $(\theta_1,\theta_2)=(0,1/2)$, at $\rho=1/3$ for $(\theta_1,\theta_2)=(1/2,1)$, and at $\rho=5/6$ for $(\theta_1,\theta_2)=(0,1)$), but $\Gamma^\star$ is not negative assortative.
\end{example}

\renewcommand{\thesection}{E}

\section{Additional Proofs}

\subsection{Proof of Lemma \ref{l:ASC}} \label{proof:ASC}

$(1)\implies (2).$ It is easy to see that Assumption \ref{a:qc} for $\mu=\delta_\theta$ such that $u(a,\theta)=0$ yields \eqref{1}. Similarly, Assumption \ref{a:qc} for $\mu=\rho \delta_\theta+(1-\rho )\delta_\theta$ such that $u(a,\theta)<0<u(a,\theta')$ and $\rho u(a,\theta)+(1-\rho )u(a,\theta')=0$  yields \eqref{2}.

$(2)\implies (1).$ By Lemma \ref{l:Choquet}, for any $a\in A$ and $\mu\in \Delta(\Theta)$ such that $\int u(a,\theta)\df\mu=0$, there exists $\lambda_\mu \in \Delta (\Delta(\Theta))$ such that $\int \eta \df\lambda_\mu =\mu$, and for each $\eta\in \supp (\lambda_\mu)$ there exist $\theta,\theta'\in \Theta$ and $\rho \in [0,1]$ such that $\eta = \rho \delta_\theta +(1-\rho )\delta_{\theta'}$ and
\begin{equation}\label{Eu=0}
\rho u(a,\theta)+(1-\rho )u(a,\theta')=0.
\end{equation}
It suffices to show that
\begin{equation}\label{3}
\rho u_a(a,\theta)+(1-\rho )u_a(a,\theta')<0.
\end{equation}
There are two cases to consider. First, if $\rho u(a,\theta)=0$, then \eqref{3} follows from \eqref{1} and \eqref{Eu=0}. Second, if $\rho u(a,\theta)\neq 0$, then \eqref{3} follows from \eqref{2} and \eqref{Eu=0}.

$(3)\implies (1).$ 
Notice that
\begin{equation*}
\int u(a,\theta)\df \mu =0 \iff \int \tilde u(a,\theta)\df \mu=0.	
\end{equation*}
Hence, if $\tilde u_a(a,\theta)<0$ for all $(a,\theta)$ and $\int u(a,\theta)\df \mu=0$, then 
\begin{equation*}
\int u_a(a,\theta)\df \mu = g(a)\int \tilde u_a(a,\theta)\df \mu+  g'(a) \int \tilde u(a,\theta)\df \mu=g(a)\int \tilde u_a(a,\theta)\df \mu<0,
\end{equation*}
yielding Assumption \ref{a:qc}.

$(1)\implies (3).$ We rely on the following lemma.
\begin{lemma}\label{l:ASC*}
If Assumptions \ref{a:smooth} and \ref{a:qc} hold, then there exists a continuous function $\gamma(a)$ such that
\begin{equation}\label{e:ASC*}
u_a(a,\theta)+\gamma (a) u(a,\theta) <0, \quad \text{for all } (a,\theta)\in A\times \Theta.
\end{equation}
\end{lemma}
Given this lemma, the required $g$ is given by
\[
g(a)=e^{-\int_0^a \gamma (\tilde a)\df \tilde a},
\]
as follows from
\[
\tilde u_a(a,\theta)= \frac{\partial }{\partial a}\left(\frac{u(a,\theta)}{e^{-\int_0^a \gamma (\tilde a)\df \tilde a}}\right) =\frac{u_a(a,\theta)+\gamma(a)u(a,\theta)}{e^{-\int_0^a \gamma (\tilde a)\df \tilde a}}<0.
\]

\begin{proof}[Proof of Lemma \ref{l:ASC*}] Fix $a\in [0,1]$. Let $M_+([0,1])$ be the set of positive Borel measures on $[0,1]$. Define the set $C\subset \R^3$ as follows 
\[C=\left\{\left(\int u(a,\theta)\df \mu,\int u_a(a,\theta)\df \mu -z,\int \df \mu\right)\ \big|\ \mu\in M_+([0,1]), \ z\geq 0 \right\}.\]
Clearly, $C$ is a convex cone. 

Moreover, $C$ is closed, because $u(a,\theta)$ and $u_a(a,\theta)$ are continuous in $\theta$. To see this, let sequences $\mu_n\in M_{+}([0,1])$ and $z_n\in \R_+^n$ be such that
\[
\int u(a,\theta) \df \mu_n\rightarrow c_1,\ \int u_a(a,\theta)\df \mu_n -z_n\rightarrow c_2,\ \int \df \mu_n\rightarrow c_3
\] 
for some $(c_1,c_2,c_3)\in \R^3$. It follows from $\int \df \mu_n\rightarrow c_3$ that all $\mu_n$ belong to a compact subset of positive measures whose total variation is bounded by $\sup_n \int \df \mu_n$, and hence, up to extraction of a subsequence, $\mu_n\rightarrow \mu\in M_+([0,1])$, with $\int \df \mu =c_3$. 
Since $u(a,\theta)$ and $u_a(a,\theta)$ are continuous in $\theta$, we get $\int u(a,\theta)\df \mu_n\rightarrow \int u(a,\theta)\df \mu=c_1$ and $\int u_a(a,\theta)\df \mu_n\rightarrow \int u_a(a,\theta)\df \mu$. Hence, $z_n\rightarrow \int u_a(a,\theta)\df \mu-c_2=z\geq 0$. In sum,
\[
\int u(a,\theta) \df \mu=c_1,\ \int u_a(a,\theta)\df \mu -z= c_2,\ \int \df \mu= c_3,
\]
showing that $C$ is closed.

Next, notice that Assumption \ref{a:qc} implies that $(0,0,1)\notin C.$ Thus, by the separation theorem (e.g., Corollary 5.84 in \citealt{aliprantis2006}), there exists $y\in \R^3$ such that, for all $\mu\in M_+([0,1])$ and $z\geq 0$,
\begin{align*}
0y_1+0y_2+1y_3<0&\leq \left(\int u(a,\theta)\df \mu\right)y_1+\left(\int u_a(a,\theta)\df \mu -z\right)y_2+\left(\int \df \mu\right)y_3,
\end{align*}
or equivalently
\begin{equation}\label{Cy}
\begin{aligned}
u(a,\theta) y_1 + u_a(a,\theta) y_2 +y_3 &\geq 0, \quad \text{for all}\ \theta\in [0,1],\\
-y_2 &\geq 0,\\
y_3 &<0.
\end{aligned}
\end{equation}
We now show that there exists a scalar $\gamma(a) \in \R$ satisfying
\begin{equation}\label{e:gamma}
u_a(a,\theta)+\gamma(a) u(a,\theta)<0, \quad \text{for all} \ \theta\in [0,1].	
\end{equation}
There are two cases. First, if $y_2<0$ then $\gamma(a) = y_1/y_2\in \R$ satisfies \eqref{e:gamma}. Second, if $y_2=0$ then \eqref{Cy} implies that
\[
u(a,\theta)y_1\geq -y_3>0, \quad \text{for all } \theta\in [0,1].
\]
Thus, we have either (i) $u(a,\theta)>0$ for all $\theta\in [0,1]$, so, taking into account continuity of $u(a,\theta)$ and $u_a(a,\theta)$ in $\theta$,  
\[\gamma(a)=\min_{\theta\in [0,1]}\left\{-\frac{u_a(a,\theta)}{u(a,\theta)}\right\}-1\in \R\] 
satisfies \eqref{e:gamma}; or (ii) $u(a,\theta)<0$ for all $\theta\in [0,1]$, so  
\[\gamma(a) =\max_{\theta\in [0,1]}\left\{-\frac{u_a(a,\theta)}{u(a,\theta)}\right\}+1\in \R\] 
satisfies \eqref{e:gamma}. 

It remains to show that if for all $a\in [0,1]$ there exists $\gamma(a)\in \R$ satisfying \eqref{e:gamma}, then there exists a continuous function $\tilde \gamma:[0,1]\rightarrow \R$ satisfying \eqref{e:gamma}. Define a correspondence $\varphi:[0,1]\rightrightarrows \R$,
\[
\varphi (a)=\{r\in \R: u_a(a,\theta)+ru(a,\theta)<0, \quad  \text{for all } \theta\in [0,1]\}.
\]
Note that $\varphi$ is nonempty valued by assumption, and is clearly convex valued. In addition, $\varphi$ has open lower sections, because for each $r\in \R$ the set 
\[\{a\in [0,1]:u_a(a,\theta)+r u(a,\theta)<0,\quad \text{for all }\theta\in [0,1]\}\]
is open, since $u_a$ and $u$ are continuous on the compact set $[0,1]\times [0,1]$. Thus, by Browder's Selection Theorem (Theorem 17.63 in \citealt{aliprantis2006}), $\varphi$ admits a continuous selection $\tilde \gamma$, which by construction satisfies \eqref{e:gamma}.
\end{proof}

\subsection{Proof of Lemma 1, Points 1 and 3} \hfill

\textbf{Point 1. }The set of feasible solutions to (P) is clearly nonempty, as $\pi (a,\theta )=\phi (\theta
)\delta _{a^{\ast }(\phi )}(a)$ (i.e., no disclosure) is feasible. Since the set $A\times \Theta $ is compact, the set of probability measures $\Delta
(A\times \Theta )$ is also compact (in the weak* topology), by Prokhorov's
theorem. The constraint map in (P1) is continuous because it is a projection,
and the constraint map in (P2) is continuous because $u(a,\theta )$ is
continuous in $(a,\theta )$; so the set of feasible solutions is a closed
subset of the compact set $\Delta (A\times \Theta )$, and is thus itself
compact. Since $V(a,\theta )$ is continuous, the objective function is
continuous, and thus attains its maximum on the compact set of feasible
solutions.

\textbf{Point 3. }Consider a tightened dual problem in
which $(p,q)\in C(\Theta )\times C(A)$, and let $F_{D}$ be the set of
feasible solutions of the original dual problem: $(p,q)\in C(\Theta )\times
B(A)$ satisfying (D1). Let $F_{P}$ be the set of feasible
solutions of the primal problem: $\pi \in \Delta (A\times \Theta )$
satisfying (P1) and (P2). 
Weak duality follows easily: 
\begin{equation}
\begin{aligned} \inf_{(p,q)\in F_D,\ q\in C(A)} \int_\Theta p(\theta)\df \phi
(\theta) &\geq \inf_{(p,q)\in F_D} \int_\Theta p(\theta)\df \phi (\theta) \\
&=\inf_{(p,q)\in F_D,\ \pi\in F_P} \int_{A\times \Theta} p(\theta)
\df \pi(a,\theta) \\ &\geq \sup_{(p,q)\in F_D,\ \pi\in F_P} \int_{A\times
\Theta} (V(a,\theta)+q(a)u(a,\theta))\df \pi(a,\theta) \\ &=\sup_{\pi \in F_P}
\int_{A\times\Theta} V(a,\theta)\df \pi(a,\theta), \end{aligned}  \label{e:WD}
\end{equation}%
where the first inequality holds because the original dual problem is more
relaxed than the tightened dual problem, the first equality holds by (P1),
the second inequality holds by (D1), and the second equality holds by (P2).

By the Riesz representation theorem, the space $M_{r}(A\times \Theta )$ of
regular, signed Borel measures on the compact set $A\times \Theta $ with the
total variation norm is the topological dual of the space $C(A\times \Theta )
$ of continuous functions on $A\times \Theta $ with the supremum norm.
Moreover, the set of (positive) measures in $M_{r}(A\times \Theta )$, $%
M_{r}(\Theta )$, and $M_{r}(A)$ are all weak* closed, so the positive cones
in the primal variable space and the primal constraint space are closed. 

The
tightened dual problem has a finite value, since it is bounded below by the
value of the primal problem and is bounded above by $\overline{V}:=\max_{a,\theta}V(a,\theta)$, as $%
(p,q)=(\overline{V},0)$ is feasible. Moreover, since $u,V\in C(A\times
\Theta )$, there is an interior feasible solution $(p,q)=(1+\overline{V},0)$
of the tightened dual problem, as the function $p(\theta
)-q(a)u(a,\theta )-V(a,\theta )=1+\overline{V}-V(a,\theta )$ lies in the
interior of the positive cone of $C(A\times \Omega )$. Together with the closedness properties established in the previous paragraph, this implies that the
(generalized) Slater condition is satisfied for the tightened dual problem, so
there is no duality gap by Corollary 3.14 in \citet{anderson1987}: that is,
\begin{equation*}
\inf_{(p,q)\in F_{D},\ q\in C(A)}\int_{\Theta }p(\theta )\df \phi (\theta
)=\sup_{\pi \in F_{P}}\int_{A\times \Theta }V(a,\theta )\df \pi (a,\theta ),
\end{equation*}%
It follows that all inequalities in \eqref{e:WD} hold with equality. Finally, as the original dual and primal problems admit
solutions, we have 
\begin{align*}
\min_{(p,q)\in F_{D}}\int_{\Theta }p(\theta )\df \phi (\theta )=\max_{\pi \in F_{P}}\int_{A\times \Theta }V(a,\theta )\df \pi (a,\theta ).
\end{align*}

\subsection{Proof of Lemma \ref{l:stab}}
We give the theorem for the single-dipped case. Let  $\pi^n$ be any optimal outcome, so that $\supp (\pi^n)\subset \Gamma^n$. Since the set of compact subsets of a compact set is compact (in the Hausdorff topology), taking a subsequence if necessary, $\Gamma^n$ converges to some compact set $\ol\Gamma\subset A\times \Theta$. Since the set of implementable outcomes is compact (in the weak* topology), taking a subsequence if necessary, $\pi^n$ converges weakly to some implementable outcome $ \pi$. Finally, since $\Gamma^n\rightarrow \ol\Gamma$, $\pi^n\rightarrow \pi$, and $\supp( \pi^n)\subset \Gamma^n$, it follows that $\supp (\pi) \subset \ol \Gamma$, by  Box 1.13 in \citet{santambrogio}. 

We claim that $\pi$ is optimal under $v$. Since $v^n$ converges uniformly to $v$, for each $\delta>0$ there exists $n_\delta\in \N$ such that, for all $n\geq n_\delta$, we have $|v_n(a,\theta)-v(a,\theta)|\leq \delta$ for all $(a,\theta)$. Since $\pi^n$ is optimal under $v^n$, for each implementable outcome $\tilde\pi$ we have
\begin{align*}
\int_{A\times \Theta} \int_0^a v (\tilde a,\theta)\df \pi^n(a,\theta) &\geq \int_{A\times \Theta} \int_0^a v^n (\tilde a,\theta)\df \pi^n(a,\theta)-\delta\\
&\geq \int_{A\times \Theta} \int_0^a v^n (\tilde a,\theta)\df \tilde \pi(a,\theta)	-\delta\\
&\geq \int_{A\times \Theta} \int_0^a v(\tilde a,\theta)\df \pi(a,\theta)-2\delta.
\end{align*}
Passing to the limit as $\delta\rightarrow 0$ and $n\rightarrow \infty$ establishes the optimality of $\pi$  under $v$. 

Let $\Phi$ be a subset of $A$ such that $a\in \Phi$ iff there exists a strictly single-peaked triple $(a_1,\theta_1)$, $(a_2,\theta_2)$, $(a_1,\theta_3)$ in  $\ol \Gamma$  with $a= a_1$. Define \[\Gamma^\dagger =\ol \Gamma \setminus \cup_{a\in \Phi}\left( \{a\}\times \Theta\setminus \{\theta^\star (a)\}\right).\]  
We show that $\Gamma^\dagger $ is a Borel single-dipped set satisfying $\pi(\Gamma^\dagger)=1$, and hence $\pi$ is single-dipped.

First, we show that $\Gamma^\dagger$ is single-dipped. For each strictly single-peaked triple $(a_1,\theta_1)$, $(a_2,\theta_2)$, $(a_1,\theta_3)$ in $\ol \Gamma$, we have $a_1\in \Phi$, and thus $(a_1,\theta)\in \Gamma^\dagger$ only if $\theta=\theta^\star (a_1)$. But then $(a_1,\theta_1)$ and $(a_1,\theta_3)$ cannot both be in $\Gamma^\dagger$, as $\theta_1\neq \theta_3$.

Second, we show that for each strictly single-peaked triple $(a_1,\theta_1)$, $(a_2,\theta_2)$, $(a_1,\theta_3)$ in $\ol \Gamma$, we have $\theta^\star (a_1)=\theta_2$. Fix such a triple. Since $\Gamma^n\rightarrow\ol \Gamma$ and $\theta^\star (a)$ is uniformly continuous on $A$, for each $\delta>0$ there exist $n\in \N$ and a triple $(a_1^n,\theta_1^n)$, $(a_2^n,\theta_2^n)$, $(a_3^n,\theta_3^n)$ in $\Gamma^n$ such that $\theta_1^n<\theta_2^n<\theta_3^n$, $a_1^n<a_2^n$, $a_3^n<a_2^n$, $|\theta_i^n-\theta_i|\leq \delta$, and $|\theta^\star (a_i^n)-\theta^\star(a_i)|\leq \delta$ for all $i\in \{1,2,3\}$ (where $a_3 =a_1$). Hence,
\[
\theta^\star(a_1)-\delta\leq \theta^\star(a_1^n) \leq \theta_2^n\leq \theta_2+\delta \implies \theta^\star(a_1)\leq \theta_2+2\delta.
\]
To understand the middle inequality, suppose by contradiction that $\theta^\star(a_1^n)>\theta_2^n$. Recall that, by Theorem \ref{t:contact}, each contact set $\Gamma^n$ satisfies $\min \Gamma^n_{a_1^n}\leq \theta^\star(a_1^n)\leq\max \Gamma^n_{a_1^n}$. Hence, there exists $\hat \theta_1^n\in \Gamma^n_{a_1^n}$ with $\hat \theta_1^n\geq \theta^\star(a_1^n)>\theta_2^n$ (for example, $\hat \theta_1^n=\max \Gamma^n_{a_1^n}$). But then $\Gamma^n$ cannot be single-dipped, as it contains the strictly single-peaked triple $(a_1^n,\theta_1^n), (a_2^n,\theta_2^n),(a_1^n,\hat \theta_1^n)$. By an analogous argument, we get
\[
\theta_2-\delta\leq \theta_2^n\leq \theta^\star(a_3^n)\leq \theta^\star(a_1)+\delta \implies \theta^\star(a_1)\geq \theta_2-2\delta. 
\]
Since $\delta>0$ is arbitrary, we get $\theta^\star(a_1)=\theta_2$.

Third, we show that for any two strictly single-peaked triples $(a_1,\theta_1)$, $(a_2,\theta_2)$, $(a_1,\theta_3)$ and $(\tilde a_1,\tilde \theta_1)$, $(\tilde a_2,\tilde \theta_2)$, $(\tilde a_1,\tilde \theta_3)$ in $\ol \Gamma$, we have $\tilde \theta_2\notin (\theta_2,\theta_3)$. Suppose by contradiction that $\tilde \theta_2\in (\theta_2,\theta_3)$. By the previous paragraph, $\theta^\star(a_1)=\theta_2$ and $\theta^\star(\tilde a_1)=\tilde \theta_2$. Moreover, for each $\delta>0$, there exist $n\in \N$ and two triples $(a_1^n,\theta_1^n)$, $(a_2^n,\theta_2^n)$, $(a_3^n,\theta_3^n)$ and $(\tilde a_1^n,\tilde \theta_1^n)$, $(\tilde a_2^n,\tilde \theta_2^n)$, $(\tilde a_3^n,\tilde \theta_3^n)$ in $\Gamma^n$ such that $|\theta_i^n-\theta_i|\leq \delta$, $|\theta^\star(a_i^n)-\theta^\star(a_i)|\leq \delta$, $|\tilde \theta_i^n-\tilde \theta_i|\leq \delta$, and $|\theta^\star(\tilde a_i^n)-\theta^\star(\tilde a_i)|\leq \delta$ for all $i\in \{1,2,3\}$ (where $a_3 =a_1$ and $\tilde a_3 =\tilde a_1$). Next, since $\min \Gamma^n_{a_3^n}\leq \theta^\star(a_3^n)\leq \max \Gamma^n _{a_3^n}$, %
there exists $\hat \theta_3^n\in \Gamma^n_{a_3^n}$ %
such that $\hat \theta_3^n\leq \theta^\star(a_3^n)$ %
(for example, $\hat \theta_3^n =\min \Gamma^n_{a_3^n}$). %
Since $\Gamma^n$ is single-dipped, to reach a contradiction it suffices to show that the triple $(a_3^n,\hat \theta_3^n)$, $(\tilde a_2^n,\tilde \theta_2^n)$, $(a_3^n,\theta_3^n)$ (which is in $\Gamma^n$ by construction) is strictly single-peaked for small enough $\delta>0$. To see this, notice that we  have
\begin{align*}
\theta^\star(a_3^n) &\leq \theta^\star(a_1)+\delta =\theta_2+\delta,\\
\theta^\star(\tilde a_2^n) &\geq \theta^\star(\tilde a_2)-\delta> \theta^\star(\tilde a_1	)-\delta =\tilde \theta_2-\delta,\\
\hat \theta_3^n &\leq \theta^\star(a_3^n)\leq \theta^\star(a_1)+\delta =\theta_2+\delta,\\
\tilde \theta_2^n &\in [\tilde \theta_2-\delta,\tilde \theta_2+\delta],\\
\theta_3^n &\geq \theta_3-\delta.  
\end{align*}
Thus, if $\delta\in (0, \min \{(\tilde \theta_2-\theta_2)/2, (\theta_3-\tilde \theta_2)/2\})$, then $a_3^n<\tilde a_2^n$ and $\hat \theta_3^n<\tilde \theta_2^n<\theta_3^n$, so the triple $(a_3^n,\hat \theta_3^n)$, $(\tilde a_2^n ,\tilde \theta_2^n)$, $(a_3^n,\theta_3^n)$  is strictly single-peaked.

Fourth, we show that the set $\Phi$ is countable, and thus $\Gamma^\dagger$ is Borel. If $a_1\in \Phi$, then there exists a strictly single-peaked triple $(a_1,\theta_1)$, $(a_2,\theta_2)$, $(a_1,\theta_3)$ in $\ol \Gamma$ with $\theta^\star(a_1)=\theta_2$. Let us associate with each such $a_1$ some rational number $r(a_1)\in (\theta_2,\theta_3)$. Since for any other strictly single-peaked triple $(\tilde a_1,\tilde \theta_1)$, $(\tilde a_2,\tilde \theta_2)$, $(\tilde a_1,\tilde \theta_3)$ in $\ol \Gamma$, we have $\tilde \theta_2\notin (\theta_2,\theta_3)$ and, by symmetry, $\theta_2\notin (\tilde \theta_2,\tilde \theta_3)$, we see that $(\theta_2,\theta_3)\cap (\tilde \theta_2,\tilde \theta_3)=\emptyset$ if $\theta_2\neq \tilde \theta_2$. Consequently, $r(a_1)\neq r(\tilde a_1)$ if $a_1,\tilde a_1\in \Phi$ and $a_1\neq \tilde a_1$. Thus, $r$ is a one-to-one mapping of $\Phi$ into a subset of the set of rational numbers, so $\Phi$ is countable.
 
Finally, we show that $\pi(\Gamma^\dagger)=1$. Since $\Phi$ is countable and probability measures are countably additive, it suffices to show that $\pi(\{a_1\}\times\Theta \setminus \{\theta^\star(a_1)\})=0$ for each $a_1\in \Phi$. In turn, this follows if for each $\varepsilon_\theta>0$, we have 
\[\pi ((a_1-\varepsilon,a_1+\varepsilon)\times \Theta\setminus [\theta^\star(a_1)-\varepsilon_\theta,\theta^\star(a_1)+\varepsilon_\theta])\rightarrow 0\text{ as }\varepsilon\rightarrow 0.\] 
Fix $a_1\in \Phi$ and a strictly single-peaked triple $(a_1,\theta_1)$, $(a_2,\theta^\star(a_1))$, $(a_1,\theta_3)$ in $\ol \Gamma$. Let $\varepsilon\in (0,a_2-a_1)$.  Since  $\Gamma^n\rightarrow \ol \Gamma$,  $(a_2,\theta^\star (a_1))\in \ol \Gamma$, and $a_2>a_1$ (and thus $\theta^\star (a_2)>\theta^\star (a_1)$), there exists $n\in \N$ and $(a_2^n,\theta_2^n)\in \Gamma^n$ with $\theta^\star (a_1-\varepsilon)<\theta_2^n<\theta^\star(a_1+\varepsilon)<\theta^\star (a_2^n)$.

Since $\Gamma^n$ is a compact single-dipped set with $\min \Gamma^n_a\leq \theta^\star(a)\leq \max \Gamma^n_a$ for all $a\in A_{\Gamma^n}$, the triple $(a,\min \Gamma^n_a)$, $(a_2^n, \theta_2^n)$, $(a,\max \Gamma^n_a)$ cannot be strictly single-peaked. Hence, we have the following implications: \\
(i) if $a\in [a_1-\varepsilon,a^\star(\delta_{\theta_2^n}))$, then $\Gamma_a^n\cap (\theta_2^n,1]=\emptyset$; \\
(ii) if $a\in (a^\star(\delta_{\theta_2^n}),\theta_2+\varepsilon]$, then $\Gamma_a^n\cap [0,\theta_2^n)=\emptyset$;\\
(iii) if $a=a^\star(\delta_{\theta_2^n})$, then $\Gamma_a^n\cap (\theta_2^n,1]=\emptyset$ or $\Gamma_a^n\cap [0,\theta_2^n)=\emptyset$. 

Let $\chi_0^n=\pi^n((a_1-\varepsilon,a_1+\varepsilon)\times [0,\theta^\star(a_1)-\varepsilon_\theta))$ and $\chi_1^n=\pi^n((a_1-\varepsilon,a_1+\varepsilon)\times (\theta^\star(a_1)+\varepsilon_\theta,1])$. By (P2) and condition (iii), we have $\pi^n(a^\star(\delta_{\theta_2^n})\times\Theta\setminus \{\theta_2^n\})=0$, and hence
\begin{align*}
\int_{\left\{a^\star(\delta_{\theta_2^n})\right\}\times [0,\theta_2^n]}u(a,\theta)\df \pi^n(a,\theta)=\int_{\left\{a^\star(\delta_{\theta_2^n})\right\}\times [\theta_2^n,1]}u(a,\theta)\df \pi^n(a,\theta)=0.
\end{align*}
Together with conditions (i) and (ii) (and again using (P2)), we have
\begin{align*}
0%
&=\int_{(a_1-\varepsilon,a^\star(\delta_{\theta_2^n})]\times [0,\theta_2^n]} u(a,\theta)\df \pi^n(a,\theta)\\
&\leq \max_{(a,\theta)\in [a_1-\varepsilon,a^\star(\delta_{\theta_2^n})]\times [0,\theta^\star(a_1)-\varepsilon_\theta]} u(a,\theta)\chi_0^n\\
&\quad +\max_{(a,\theta)\in [a_1-\varepsilon,a^\star(\delta_{\theta_2^n})]\times [\theta^\star(a_1)-\varepsilon_\theta,\theta_2^n]} u(a,\theta)(1-\chi_0^n)\\
\implies&\chi_0^n\leq \frac{\underset{(a,\theta)\in [a_1-\varepsilon,a^\star(\delta_{\theta_2^n})]\times [\theta^\star(a_1)-\varepsilon_\theta,\theta_2^n]}{\max }u(a,\theta)}{\underset{(a,\theta)\in [a_1-\varepsilon,a^\star(\delta_{\theta_2^n})]\times [\theta^\star(a_1)-\varepsilon_\theta,\theta_2^n]}{\max }u(a,\theta)-\underset{(a,\theta)\in [a_1-\varepsilon,a^\star(\delta_{\theta_2^n})]\times [0,\theta^\star(a_1)-\varepsilon_\theta]}{\max} u(a,\theta)},
\end{align*}
and 
\begin{align*}
0%
&=\int_{[a^\star(\delta_{\theta_2^n}),a_1+\varepsilon)\times [\theta_2^n,1]} u(a,\theta)\df \pi^n(a,\theta)\\
&\geq \min_{(a,\theta)\in [a^\star(\delta_{\theta_2^n}),a_1+\varepsilon]\times [\theta^\star(a_1)+\varepsilon_\theta,1]} u(a,\theta)\chi_1^n\\
&\quad +\min_{(a,\theta)\in [a^\star(\delta_{\theta_2^n}),a_1+\varepsilon]\times [\theta_2^n,\theta^\star(a_1)+\varepsilon_\theta]} u(a,\theta)(1-\chi_1^n)\\
\implies& \chi_1^n  \leq \frac{-\underset{(a,\theta)\in [a^\star(\delta_{\theta_2^n}),a_1+\varepsilon]\times [\theta_2^n,\theta^\star(a_1)+\varepsilon_\theta]}{\min}u(a,\theta)}{\underset{(a,\theta)\in [a^\star(\delta_{\theta_2^n}),a_1+\varepsilon]\times [\theta^\star(a_1)+\varepsilon_\theta,1]}{\min} u(a,\theta)-\underset{(a,\theta)\in [a^\star(\delta_{\theta_2^n}),a_1+\varepsilon]\times [\theta_2^n,\theta^\star(a_1)+\varepsilon_\theta]}{\min} u(a,\theta)},
\end{align*}
where the inequalities hold because $u(a_1-\varepsilon,\theta_2^n)>0$ and $u(a_1+\varepsilon,\theta_2^n)<0$, by Assumption \ref{a:sc} and $\theta^\star (a_1-\varepsilon)<\theta_2^n<\theta^\star(a_1+\varepsilon)$.

By Assumptions \ref{a:smooth} and \ref{a:qc}, $u_a(a,\theta)<0$ for all $(a,\theta)$ in a neighborhood of $(a,\theta^*(a))$. Hence, for sufficiently small $\varepsilon$, $u(a,\theta)$ is maximized over $a \in [a_1-\varepsilon,a^\star(\delta_{\theta_2^n})]$ at $a=a_1-\varepsilon$, and $u(a,\theta)$ is minimized over $a \in [a^\star(\delta_{\theta_2^n}),a_1+\varepsilon]$ at $a=a_1+\varepsilon$. Therefore, By the Portmanteau Theorem (Theorem 15.3 in \citealt{aliprantis2006}), for sufficiently small $\varepsilon>0$, we get
\begin{align*}
&\pi((a_1-\varepsilon,a_1+\varepsilon)\times[0,\theta^\star(a_1)-\varepsilon_\theta))\leq \liminf_n \chi_0^n\\ 
\leq &\frac{\underset{\theta\in[\theta^\star(a_1)-\varepsilon_\theta,\theta^\star(a_1)]}{\max }u(a_1-\varepsilon,\theta)}{\underset{\theta\in [\theta^\star(a_1)-\varepsilon_\theta,\theta^\star(a_1)]}{\max }u(a_1-\varepsilon,\theta)-\underset{\theta\in[0,\theta^\star(a_1)-\varepsilon_\theta]}{\max }u(a_1-\varepsilon,\theta)},\\ 
&\pi((a_1-\varepsilon,a_1+\varepsilon)\times(\theta^\star(a_1)+\varepsilon_\theta,1])\leq \liminf_n \chi_1^n \\
\leq &\frac{-\underset{\theta\in [\theta^\star(a_1),\theta^\star(a_1)+\varepsilon_\theta]}{\min}u(a_1+\varepsilon,\theta)}{\underset{\theta\in [\theta^\star(a_1)+\varepsilon_\theta,1]}{\min}u(a_1+\varepsilon,\theta)-\underset{\theta\in [\theta^\star(a_1),\theta^\star(a_1)+\varepsilon_\theta]}{\min}u(a_1+\varepsilon,\theta)}.	
\end{align*}
Next, taking into account Assumption \ref{a:sc}, we get
\begin{align*}
\lim_{\varepsilon\rightarrow 0}\underset{\theta\in[\theta^\star(a_1)-\varepsilon_\theta,\theta^\star(a_1)]}{\max }u(a_1-\varepsilon,\theta)&=0, \quad &\lim_{\varepsilon\rightarrow 0}\underset{\theta\in[0,\theta^\star(a_1)-\varepsilon_\theta]}{\max }u(a_1-\varepsilon,\theta)<0, \\
\lim_{\varepsilon\rightarrow 0}\underset{\theta\in [\theta^\star(a_1),\theta^\star(a_1)+\varepsilon_\theta]}{\min}u(a_1+\varepsilon,\theta)&=0, \quad &\lim_{\varepsilon\rightarrow 0}\underset{\theta\in [\theta^\star(a_1)+\varepsilon_\theta,1]}{\min}u(a_1+\varepsilon,\theta)>0.
\end{align*}
Consequently,  $\pi ((a_1-\varepsilon,a_1+\varepsilon)\times \Theta\setminus [\theta^\star(a_1)-\varepsilon_\theta,\theta^\star(a_1)+\varepsilon_\theta])\rightarrow 0$ as $\varepsilon\rightarrow 0$.

\subsection{Linear Receiver Case}\label{app:SR}
For the linear receiver case, we replace Theorem \ref{l:ssdd} with Theorem \ref{l:ssddSR}.
\begin{theorem}\label{l:ssddSR}
Let Assumptions \ref{a:smooth}--\ref{a:sc} hold.
Suppose that for all $\theta_1<\theta_2<\theta_3$ and all $a_2>(<)a_1$ such that $\theta_1\leq \theta^\star(a_1)\leq \theta_3$, we have $|R|>(<)0$ and
\begin{align*}
u(a_2,\theta_2)u(a_1,\theta_1)&\geq(\leq) u(a_2,\theta_1)u(a_1,\theta_2),\\
u(a_2,\theta_3)u(a_1,\theta_2)&\geq(\leq) u(a_2,\theta_2)u(a_1,\theta_3).
\end{align*}
Then $\Gamma$ is single-dipped (single-peaked).
\end{theorem}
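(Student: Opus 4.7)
The plan is to mimic the contradiction argument of Theorem \ref{l:ssdd}. Suppose $\Gamma$ is not single-dipped, so that it contains a strictly single-peaked triple $(a_1,\theta_1), (a_2,\theta_2), (a_1,\theta_3)$ with $a_2 > a_1$; by Theorem \ref{t:contact} I may assume $\theta_1 \le \theta^\star(a_1) \le \theta_3$ (after replacing $\theta_1, \theta_3$ with $\min \Gamma_{a_1}, \max \Gamma_{a_1}$ if necessary). Combining the equalities in (D1) at each of the three contact points with the inequalities in (D1) at the three swapped pairs yields exactly the three scalar inequalities that rearrange to $xR \le 0$ componentwise, with $x = (1, q(a_1), q(a_2))$, just as in the proof of Theorem \ref{l:ssdd}.

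I would then take the vector $y$ suggested by Lemma \ref{l:ySDD}, namely
\[ y_1 = u(a_2,\theta_3)u(a_1,\theta_2) - u(a_2,\theta_2)u(a_1,\theta_3), \quad y_3 = u(a_2,\theta_2)u(a_1,\theta_1) - u(a_2,\theta_1)u(a_1,\theta_2), \]
and $y_2 = u(a_2,\theta_3)u(a_1,\theta_1) - u(a_2,\theta_1)u(a_1,\theta_3)$. The two product inequalities in the hypothesis give $y_1 \ge 0$ and $y_3 \ge 0$; for $y_2$, I would use the simple-receiver identity $u(a_k,\theta_l)u(a_j,\theta_i) - u(a_k,\theta_i)u(a_j,\theta_l) = (a_k-a_j)(\theta_l-\theta_i)$, which gives $y_2 = (a_2-a_1)(\theta_3-\theta_1) > 0$. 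Recognizing $y_1, y_2, y_3$ as the signed cofactors of the first row of $R$, Laplace expansion yields $Ry = (|R|, 0, 0)^T$: the first coordinate is $|R|$ by cofactor expansion along the first row, while the second and third vanish as cofactor expansions of matrices in which the first row has been replaced by (a copy of) the second or third, hence with two identical rows.

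The contradiction is then immediate: on one hand $x(Ry) = x_1\cdot |R| + q(a_1)\cdot 0 + q(a_2)\cdot 0 = |R| > 0$; on the other hand $(xR)y \le 0$ since $xR \le 0$ and $y \ge 0$ componentwise. The single-peaked case is symmetric: if $\Gamma$ contains a strictly single-dipped triple with $a_2 < a_1$, the signs in $|R|$, in the product inequalities, and in each entry of $y$ all flip, and the same argument applied to $-y \ge 0$ yields $-|R| > 0 \ge (xR)(-y)$. The main obstacle---and the reason Theorem \ref{l:ssdd} does not already cover this case---is that without Assumption \ref{a:v>0} the entries $q(a_1), q(a_2)$ of $x$ can fail to be positive, so the strict-positivity argument that drove the contradiction in Theorem \ref{l:ssdd} breaks. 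The resolution is essentially a linear-algebraic windfall: $Ry$ concentrates entirely in its first coordinate, where $x_1 = 1$ is automatically positive, and the signs of $q(a_1), q(a_2)$ drop out of the computation altogether.
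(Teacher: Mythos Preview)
Your argument is essentially the paper's proof, recast cleanly through the cofactor identity $Ry=(|R|,0,0)^T$: the paper multiplies the three (D1) inequalities by exactly your $y_1,y_2,y_3$ and adds them to obtain $|R|\le 0$, which is the same computation as $(xR)y\le 0$ together with $x(Ry)=|R|$. Your diagnosis of why Assumption~\ref{a:v>0} is no longer needed---the second and third coordinates of $Ry$ vanish, so the signs of $q(a_1),q(a_2)$ are irrelevant---is exactly the point.

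One small gap: the theorem is stated under Assumptions \ref{a:smooth}--\ref{a:sc} for general $u$, not only for $u(a,\theta)=\theta-a$, so your ``simple-receiver identity'' for $y_2$ does not apply as written. The paper instead obtains the strict inequality
\[
y_2 \;=\; u(a_2,\theta_3)u(a_1,\theta_1)-u(a_2,\theta_1)u(a_1,\theta_3) \;>\;0
\]
from Assumption~\ref{a:qc} (via part~(3) of Lemma~\ref{l:ASC}, as in the proof of Lemma~\ref{l:R>0}), using only $u(a_1,\theta_1)\le 0\le u(a_1,\theta_3)$ and $a_2>a_1$. Replacing your identity with that reference makes the proof valid at the stated generality; in the simple-receiver case your identity is of course correct and gives the same conclusion.
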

\begin{proof}
We give the theorem for the single-dipped case. Suppose, by contradiction, that $\Gamma$ is not single-dipped. Then, as shown in the proof of Theorem \ref{l:ssdd}, there exists a strictly single-peaked triple $(a_1,\theta_1)$, $(a_2,\theta_2)$, $(a_3,\theta_3)$ in $\Gamma$ such that $\theta_1\leq \theta^\star(a_1)\leq \theta_3$. 

As shown in the proof of Theorem \ref{t:SDPD}, Assumptions \ref{a:qc}--\ref{a:sc} and $\theta_1\leq \theta^\star(a_1)\leq \theta_3$ imply
\[
u(a_2,\theta_3)u(a_1,\theta_1)> u(a_2,\theta_1)u(a_1,\theta_3).
\]
By (D1) and Theorem \ref{t:contact}, we have
\begin{align*}
V(a_1,\theta_1)+q(a_1) u(a_1,\theta_1) &\geq V(a_2,\theta_1) +q(a_2) u(a_2,\theta_1),\\
V(a_2,\theta_2)+q(a_2) u(a_2,\theta_2) &\geq V(a_1,\theta_2) +q(a_1) u(a_1,\theta_2),\\
V(a_1,\theta_3)+q(a_1) u(a_1,\theta_3) &\geq V(a_2,\theta_3) +q(a_2) u(a_2,\theta_3),
\end{align*}
Adding up the first inequality multiplied by $u(a_2,\theta_3)u(a_1,\theta_2)- u(a_2,\theta_2)u(a_1,\theta_3)>0$, the second inequality multiplied by $u(a_2,\theta_3)u(a_1,\theta_1)-u(a_2,\theta_1)u(a_1,\theta_3)\geq 0$, and the third inequality multiplied by $u(a_2,\theta_2)u(a_1,\theta_1) - u(a_2,\theta_1)u(a_1,\theta_2)\geq 0$, we get $|R|\leq 0$, leading to a contradiction.
\end{proof}
Notice that in the linear receiver case, the conditions of Theorem \ref{l:ssddSR} are satisfied if $v_\theta(a,\theta)$ is strictly increasing (decreasing) in $\theta$. Moreover, notice that, in the linear receiver case, the proofs of Lemmas \ref{l:S><0} and \ref{l:R>0} remain valid without Assumption \ref{a:v>0}, because
\begin{gather*}
\begin{vmatrix}
u(a,\theta_2)-u(a,\theta_1) &u(a,\theta_3)-u(a,\theta_2)\\
u_a(a,\theta_2)-u_a(a,\theta_1) &u_a(a,\theta_3)-u_a(a,\theta_2)
\end{vmatrix}=0,\\
\begin{vmatrix}
u(a_1,\theta_2)-u(a_1,\theta_1) &u(a_1,\theta_3)-u(a_1,\theta_2)\\
u(a_2,\theta_2)-u(a_2,\theta_1) &u(a_2,\theta_3)-u(a_2,\theta_2)
\end{vmatrix}=0.
\end{gather*}
Thus, to prove Theorem \ref{t:SDPD} in the linear receiver case without Assumption \ref{a:v>0}, we just need to replace the vector $y$ in the proof of the single-peaked case with 
\[
y=
-
\begin{pmatrix}
u(a_2,\theta_3)u(a_1,\theta_2)-u(a_2,\theta_2)u(a_1,\theta_3)\\
u(a_2,\theta_3)u(a_1,\theta_1)-u(a_2,\theta_1)u(a_1,\theta_3)\\
u(a_2,\theta_2)u(a_1,\theta_1)-u(a_2,\theta_1)u(a_1,\theta_2)
\end{pmatrix}.
\]

\subsection{Proof of Lemma \ref{p:full}} The support of the full disclosure outcome is $\cup _{\theta\in \Theta} (a^\star (\delta_\theta),\theta)$. Thus, by Lemma \ref{l:dual} and Theorem \ref{t:contact}, full disclosure is optimal iff there exists $q\in B(a)$ such that
\begin{gather*}
V(a^\star(\delta_\theta),\theta)\geq V(a,\theta) +q(a) u(a,\theta), \quad \text{for all $(a,\theta)\in A\times \Theta$},\\
\iff \frac{V(a,\theta_1)-V(a^\star(\delta_{\theta_1}),\theta_1)}{-u(a,\theta_1)}\leq q(a)\leq \frac{V(a^\star(\delta_{\theta_2}),\theta_2)-V(a,\theta_2)}{u(a,\theta_2)},
\end{gather*}
for all $a\in A$ and $\theta_1,\theta_2\in \Theta$ such that $\theta_1<\theta^\star (a)<\theta_2$.
As shown in the proof of Lemma \ref{l:D=D'}, the left-hand side and right-hand side functions are bounded on $A\times \Theta$, so full disclosure is optimal iff, for all $a\in A$ and $\theta_1,\theta_2\in \Theta$ such that $\theta_1<\theta^\star (a)<\theta_2$, we have
\begin{gather*}
\frac{V(a,\theta_1)-V(a^\star(\delta_{\theta_1}),\theta_1)}{-u(a,\theta_1)}\leq \frac{V(a^\star(\delta_{\theta_2}),\theta_2)-V(a,\theta_2)}{u(a,\theta_2)},\\
\iff u(a,\theta_2)V(a,\theta_1)-u(a,\theta_1)V(a,\theta_2)\leq u(a,\theta_2)V(a^\star(\delta_{\theta_1}),\theta_1)-u(a,\theta_1)V(a^\star(\delta_{\theta_2}),\theta_2),\\
\iff \rho V(a ^{\star }(\mu),\theta_1) +(1-\rho )V(a ^{\star }(\mu),\theta_2)\leq \rho V(a ^{\star }(\delta_{\theta_1})),\theta_1) +(1-\rho )V(a ^{\star }(\delta_{\theta_2}),\theta_2),
\end{gather*}
where $\rho =u(a,\theta_2)/(u(a,\theta_2)-u(a,\theta_1))$, $\mu=\rho \delta_{\theta_1}+(1-\rho )\delta_{\theta_2}$, and $a^\star(\mu)=a$, by the definition of $a^\star (\mu)$. To complete the proof that full disclosure is optimal iff \eqref{e:full} holds for all $\mu$, note that for all $a$ and $\theta_1,\theta_2\in \Theta$ such that $\theta_1<\theta^\star (a)<\theta_2$, we have $\rho =u(a,\theta_2)/(u(a,\theta_2)-u(a,\theta_1))\in (0,1)$; and conversely, for each $\theta_1<\theta_2$ and $\rho \in (0,1)$, there exists a unique $a\in (a^\star(\delta_{\theta_1}),a^\star(\delta_{\theta_2}))$ such that $\rho =u(a,\theta_2)/(u(a,\theta_2)-u(a,\theta_1))$.

Finally, assume that \eqref{e:full} holds with strict inequality for all $\mu$. Suppose by contradiction that full disclosure is not uniquely optimal. Then, by Theorem \ref{t:contact}, there exist $a\in A_\Gamma$ and $\theta_1,\theta_2\in \Gamma^\star_a$ such that $\theta_1<\theta^\star(a)<\theta_2$, so
\begin{align*}
V(a,\theta_1)+q(a)u(a,\theta_1)\geq V(a^\star(\delta_{\theta_1}),\theta_1),\\
V(a,\theta_2)+q(a)u(a,\theta_2)\geq V(a^\star(\delta_{\theta_2}),\theta_1).	
\end{align*}
Denote $\rho =u(a,\theta_2)/(u(a,\theta_2)-u(a,\theta_1)\in (0,1)$ and $\mu=\rho \delta_{\theta_1}+(1-\rho )\delta_{\theta_2}$. Notice that $a=a^\star (\mu)$. Adding the first inequality multiplied by $\rho$ and the second inequality multiplied by $1-\rho$ gives
\[
\rho V(a ^{\star }(\mu),\theta_1) +(1-\rho )V(a ^{\star }(\mu),\theta_2)\geq \rho V(a ^{\star }(\delta_{\theta_1}),\theta_1) +(1-\rho )V(a ^{\star }(\delta_{\theta_2}),\theta_2),
\]
contradicting that \eqref{e:full} holds with strict inequality.

\subsection{Proof of Corollary \ref{c:fd}'}
Condition \eqref{e:full} holds because 
\begin{align*}
& \rho V(p\theta_1 +(1-\rho )\theta_2,\theta_1 )+(1-\rho )V(\rho \theta_1 +(1-\rho )\theta_2,\theta_2) \\
& \leq \rho (\rho V(\theta_1 ,\theta_1 )+(1-\rho )V(\theta_2,\theta_1))+(1-\rho )(\rho V(\theta_1 ,\theta_2)+(1-\rho )V(\theta_2,\theta_2)) \\
& \leq \rho V(\theta_1 ,\theta_1 )+(1-\rho )V(\theta_2,\theta_2),
\end{align*}%
where the first inequality holds because $V(a ,\theta )$ is convex in $a$, and the second holds because $V(\theta_1 ,\theta_2)+V(\theta_2,\theta_1 )\leq V(\theta_1 ,\theta_1 )+V(\theta_2,\theta_2)$.

\subsection{Proof for Example \ref{e:quantile}} \label{s:quantileproof}
First, notice that the outcome $\pi$ is implementable. (P1) holds because, for all $a\in [\ul a,a]$
\begin{gather*}
\pi_a=\frac{\df\phi ([0,t_1(a)])}{\df\phi ([0,t_1(a)]+\df\phi ([a,1])}\delta_{t_1(a)}+\frac{\df\phi ([a,1])}{\df\phi ([0,t_1(a)]+\df\phi ([a,1])}\delta_{t_1(a)},\\
\alpha_{\pi}([a,1])=\phi([0,t_1(a)])+\phi ([a,1]),
\end{gather*}
as follows from $\kappa \phi ([0,t_1(a)])=(1-\kappa)\phi ([a,1])$, which implies that $\kappa \df\phi ([0,t_1(a)])=(1-\kappa)\df\phi ([a,1])$ and that $t_1$ is a continuous, strictly decreasing function. (P2) holds because, for all $a\in [\ul a,1]$, 
\[
\E_{\pi_a}[u(a,\theta)]=\E_{\pi_a}[\1 \{\theta\geq a\}-\kappa]=\pi_a([a,1])-\kappa=0.
\]
Consider now any other implementable outcome $\tilde\pi$. By (P2), there exists $\tilde\pi_a $ with $\tilde\pi_a([a,1])\geq \kappa$, as otherwise $\E_{\tilde \pi_a}[u(a,\theta)]<0$. Thus, by (P1), $\alpha_{\tilde \pi} ([a,1])\leq \phi([a,1])/\kappa$, as follows from
\[
\phi([a,1])=\int_A \tilde \pi_{\tilde a}([a,1])\df \alpha_{\tilde \pi} (\tilde a)\geq\int_{a}^1 \tilde \pi_{\tilde a}([a,1])\df \alpha_{\tilde \pi} (\tilde a)  \geq \kappa \alpha_{\tilde \pi} ([a,1]).
\]
Since $\alpha_\pi([a,1])=\phi([a,1])/\kappa$, it follows that $\alpha_{\pi}$ first-order stochastically dominates $\alpha_{\tilde \pi}$, and thus, for an increasing $V$, 
\[
\int_{A\times \Theta} V(a)\df \pi(a,\theta)=\int_{A} V(a)\df \alpha_{\pi}(a)\geq \int_{A} V(a)\df \alpha_{\tilde \pi}(a)=\int_{A\times \Theta} V(a)\df \pi(a,\theta),
\]
showing that $\pi$ is optimal.

\subsection{Proof for Example \ref{ex:segpair}} \label{s:segpair}
The optimal outcome $\pi$ is unique, because there is a unique implementable outcome $\pi$ with $\pi (\Gamma)=1$. To illustrate how the argument works more generally, we suppose that $\phi$ has a density on $\Theta=[\ul \theta,\ol \theta]$, and that there exists a bifurcation point $a_0$ in the interior of $A_\Gamma=[\ul a,\ol a]$ such that $\Gamma_a=\{t_1(a),t_2(a)\}$ with $t_1(a)=\theta^\star (a)=t_2(a)$ for $a\in [\ul a, a_0]$, and $t_1(a)<\theta^\star(a)<t_2(a)$ for $a\in [a_0,\ol a]$ where $t_1:(a_0,\ol a]\rightarrow[\ul \theta, \theta^\star(a_0))$ is continuous, strictly decreasing, and bijective and $t_2:(a_0,\ol a]\rightarrow(\theta^\star(m),\ol \theta ]$ is continuous, strictly increasing, and bijective. 

Define the continuous, strictly decreasing, and bijective inverse $t_1^{-1}:[\ul \theta,\theta^\star (a_0))\rightarrow (a_0,\ol a]$ by
\[
t_1^{-1}(\theta)=\{a\in (a_0,\ol a]:t_1(a)=\theta\}.
\]
Define the distribution functions $F(\theta)=\phi([-1,\theta])$ and $H(a)=\alpha_\pi([-1,a])$ representing measures $\phi$ and $\alpha_\pi$. 
Define the $\theta$-section of $\Gamma$ by $\Gamma^\theta=\{a\in A:(a,\theta)\in \Gamma\}$. 
Recall that, for $a\in (a_0,\ol a]$, $\pi_a=\rho_a\delta_{t_1(a)}+(1-\rho_a)\delta_{t_2(a)}$  with $\rho_a=u(a,t_2(a))/(u(a,t_2(a)-u(a,t_1(a))\in (0,1)$, by (P2).

For all $a\in (a_0,\ul a]$, we have $\Gamma^{t_2(a)}=\{a\}$ and $\Gamma_{a}=\{t_1(a),t_2(a)\}$, and thus, by (P1) and (P2),
\[
\df F(t_2(a))=(1-\rho_a)\df H(a).
\]
For all $a\in [\ul a,a_0)$, we have $\Gamma_{a}=\{\theta^\star (a)\}$ and $\Gamma^{\theta^\star (a)}=\{a, t_1^{-1}(\theta^\star(a))\}$, with $t_1^{-1}(\theta^\star(a))\in (a_0,\ol a]$ and thus, by (P1) and (P2),
\[
\df F(\theta^\star (a))=\df H(a) - \rho_{t_1^{-1}(\theta^\star (a))}\df H(t_1^{-1}(\theta^\star (a))),
\]
where the last term has a minus sign because $t_1^{-1}(\theta^\star (a))$ is decreasing in $a$. So,
\[
\df H(a)=
\begin{cases}
\frac{1}{1-\rho_a}\df F(t_2(a)), &a\in (a_0,\ol a],\\
\df F(\theta^\star (a))+\frac{\rho_{t_1^{-1}(\theta^\star (a))}}{1-\rho_{t_1^{-1}(\theta^\star (a))}}\df F(t_2(t_1^{-1}(\theta^\star (a)))), &a\in [\ul a,a_0).
\end{cases}
\]
Substituting $\theta^\star (a)=a$ for $a\in [\ul a,a_0)=[-1,0)$, and $\rho_a=1/2$, $t_1(a)=-a$, and $t_2(a)=3a$ for $a\in (a_0,\ol a]=(0,1]$, we obtain that $\alpha_\pi$ has the stated density $h$.

Finally, to see that the contact set is the stated set $\Gamma$, we invoke the following lemma from \citet{KW}.
\begin{lemma}\label{l:gerry}
Functions
\[
p(\theta)=
\begin{cases}
T(2\theta), &\theta\in [-1,0),\\
3 T(\frac 23 \theta), &\theta\in [0,3],
\end{cases}
\quad\text{and}\quad 
q(a)=
\begin{cases}
\frac{2T'(2a)}{T'(0)}, &a\in [-1, 0),\\
2, &a\in [0,3].
\end{cases}
\]
satisfy (D1) with equality for all $(a,\theta)\in\Gamma$ and strict inequality for all $(a,\theta)\notin \Gamma$.
\end{lemma}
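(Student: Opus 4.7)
My plan is to verify directly that the stated $(p,q)$ satisfies the dual constraint (D1), with equality on $\Gamma$ and strict inequality off $\Gamma$. I would first record the properties of $T$ that drive the computation: the bank-run interpretation (with $\kappa = 1/2$ and symmetric noise) makes $T$ odd with $T(0) = 0$; since $T'$ is strictly positive and strictly log-concave, $T'$ is even and strictly unimodal with maximum at $0$, so $T'(x) = T'(y)$ only when $x = \pm y$; and $T$ is strictly concave on $[0,\infty)$, so $T(y)/y$ is strictly decreasing on $(0,\infty)$. Continuity of $p$ at $\theta = 0$ and of $q$ at $a = 0$ (both piecewise definitions agree) is immediate, and a computation shows $p$ is actually $C^{2}$ at $0$ with $p''(0) = 0$ since $T''(0)=0$.

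For equality on $\Gamma$, I would substitute at the three line segments. For $(a,a)$ with $a \in [-1,0)$, both sides reduce to $T(2a)$ since $T(0)=0$. For $(a,-a)$ with $a \in (0,1]$, the RHS is $T(2a) + 2T(-2a) = -T(2a)$ by oddness, matching the LHS $p(-a)=T(-2a)=-T(2a)$. For $(a,3a)$ with $a \in (0,1]$, the RHS is $T(2a)+2T(2a)=3T(2a)$, matching the LHS $p(3a) = 3T\!\left(\tfrac{2}{3}\cdot 3a\right)=3T(2a)$. The bifurcation point $(0,0)$ is covered by continuity.

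For strict inequality off $\Gamma$, fix $a$ and study the univariate function $\Psi(a,\theta):=p(\theta) - T(2a) - q(a)T(\theta-a)$, whose critical points in $\theta$ solve $p'(\theta) = q(a)T'(\theta-a)$, with $p'(\theta) = 2T'(2\theta)$ on $\theta<0$ and $p'(\theta) = 2T'(2\theta/3)$ on $\theta>0$. For $a \in [-1,0)$, $q(a)=2T'(2a)/T'(0)$ reduces the FOC on $\theta<0$ to the product identity $T'(2\theta)T'(0) = T'(2a)T'(\theta - a)$, and strict log-concavity of $T'$ combined with $\lambda'(0)=0$ (where $\lambda=\log T'$) forces $\theta = a$ to be the only root; a strict second-order check $\partial_\theta^{2}\Psi(a,a) = 4T''(2a) > 0$ (since $T''$ is odd with $T''(x)>0$ for $x<0$) makes this a strict local minimum with value $0$, and a parallel computation rules out critical points in $\theta > 0$, so $\Psi$ is monotone on $(0,3]$ with a positive one-sided value at $0$ (which equals $-T(2a)+q(a)T(a)>0$ by strict concavity). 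For $a \in (0,1]$, $q(a)=2$ reduces the FOC to $p'(\theta)=2T'(\theta-a)$; the dichotomy $T'(x)=T'(y) \iff x = \pm y$ yields exactly three candidates $\{-a,\,3a/5,\,3a\}$, of which $-a$ and $3a$ are strict local minima with $\Psi=0$ (by the same second-order argument) and the "mirror" candidate $3a/5$ is a local maximum with $\Psi(a,3a/5) = 5T(2a/5)-T(2a) > 0$, the inequality being "$T(y)/y$ strictly decreasing on $(0,\infty)$." For $a\in(1,3]$, only $3a/5$ lies in the interior, giving $\Psi>0$ at the unique critical point; boundary positivity $\Psi(a,-1) = 2T(1+a) - T(2) - T(2a) > 0$ and $\Psi(a,3) = 3T(2) - T(2a) - 2T(3-a) > 0$ follow from Jensen's inequality for strictly concave $T$ on $[0,\infty)$, since $1+a$ is the midpoint of $\{2,2a\}$ and $2 = \tfrac{1}{3}(2a) + \tfrac{2}{3}(3-a)$.

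The main obstacle is pure bookkeeping: the piecewise structure of $p$ and $q$ spawns several subcases in the FOC, and within each one must (i) exhaustively enumerate the critical points and (ii) verify that the only zeros of $\Psi$ sit on $\Gamma_{a}$. The two analytic workhorses---strict unimodality of $T'$ for part (i) and strict concavity of $T$ on $[0,\infty)$ for part (ii)---together with oddness of $T$ do all the real work, but careful handling of the kinks of $p$ at $\theta = 0$ and of $q$ at $a = 0$, as well as the boundary values $\theta = \pm 1, 3$, is where errors can easily creep in.
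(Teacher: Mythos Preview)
Your approach is correct and genuinely different from the paper's. The paper splits $A\times\Theta$ into six sign-regions of $(a,\theta)$ and, in each, reduces (D1) to an inequality that is dispatched either by Jensen for the strictly concave $T$ on $[0,\infty)$ or, in the hardest case ($a<0$, $\theta<a$), by an integral estimate that compares $T'(y+2|a|)/T'(2|a|)$ with $T'(y)/T'(0)$ via strict log-concavity. You instead fix $a$ and enumerate the critical points of $\Psi(a,\cdot)$: for $a\ge 0$ the identity $T'(x)=T'(y)\iff x=\pm y$ delivers the explicit list $\{-a,\,3a/5,\,3a\}$, and the second-order/value checks are clean; for $a<0$ you argue that the log-FOC has a unique root on each half-line. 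Both routes rest on the same three facts (oddness of $T$, strict concavity of $T$ on $[0,\infty)$, strict log-concavity of $T'$), but the paper's case-by-case reductions are shorter and avoid any calculus beyond Jensen, whereas your method is more mechanical once the critical set is pinned down. One caveat: your one-line justification ``strict log-concavity of $T'$ combined with $\lambda'(0)=0$ forces $\theta=a$ to be the only root'' is doing real work that deserves a sentence more. The clean way to close it is to show that $\psi(\theta)=\lambda(2\theta)-\lambda(\theta-a)$ is strictly increasing on $[-1,0)$: for $\theta\in(a,0)$ one has $\lambda'(2\theta)>0>\lambda'(\theta-a)$, while for $\theta<a$ one has $2\theta<\theta-a<0$ so $\lambda'(2\theta)>\lambda'(\theta-a)>0$, and in either case $\psi'(\theta)=2\lambda'(2\theta)-\lambda'(\theta-a)>0$. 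The ``parallel computation'' on $\theta>0$ goes the same way (there $2\theta/3<\theta-a$ and both arguments are positive, giving $\tilde\psi'>0$ and $\tilde\psi(0)>0>\lambda(2a)-\lambda(0)$), which also makes your positivity of $\Psi(a,0)$ automatic rather than a separate concavity claim.
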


\begin{proof}[Proof of Lemma \ref{l:gerry}]
Since $T$ is symmetric about 0 (i.e., $T(\theta-a)=-T(a-\theta)$) and $T'$ is strictly log-concave, it follows that $T'(0)>T'(y)$ for all $y\neq 0$ and $T(y)$ is strictly concave for $y\geq 0$. Hence, if $y_1'\leq y_1\leq y_2\leq y_2'$, $(y_1',y_2')\neq (y_1,y_2)$, and $\rho'y_1'+(1-\rho')y_2'=\rho y_1+(1-\rho)y_2$, for some $y_1,y_2,y_1',y_2'\geq 0$ and $\rho,\rho'\in (0,1)$, then $\rho'T(y_1')+(1-\rho)T(y_2')<\rho T(y_1)+(1-\rho)T(y_2)$, by Jensen's inequality.

We split the analysis into six cases.

(1) For $a\in [0,3]$ and $\theta\in [a,3]$, (D1) simplifies to
\[
3T(\tfrac 23 \theta) \geq T(2a) + 2T(\theta-a),
\]
which holds with equality for $\theta=3a=t_2(a)$ and strict inequality for $\theta\neq 3a$.

(2) For $a\in [0,3]$ and $\theta\in (0,a)$, (D1) simplifies to
\[
3T(\tfrac 23 \theta)+2T(a-\theta) \geq T(2a) + 4T(0),
\]
which always holds with strict inequality.

(3) For $a\in [0,3]$ and $\theta \in [-1,0]$, (D1) simplifies to
\[
2T(a-\theta) \geq  T(2a)+T(-2\theta),
\]
which holds with equality for $\theta=-a=t_1(a)$ and strict inequality for $\theta\neq -a$.

(4) For $a\in [-1,0)$ and $\theta\in [0,3]$, (D1) simplifies to
\[
3T(\tfrac 23 \theta)+ T(-2a) \geq q (a)T(\theta-a)+2T(0),
\]
which always holds with strict inequality because $q(a)<2$ and $T(\theta-a)>0$.

(5) For $a\in [-1,0)$ and $\theta\in (a,0)$, (D1) simplifies to
\[
T(-2a)  \geq T(-2\theta)+ q (a)T(\theta-a),
\]
which is equivalent to 
\[
\frac{T(-2a)-T(-2\theta)}{T'(-2a) (-2a+2\theta)}\geq \frac{T(\theta-a)-T(0)}{T'(0)(\theta-a)},
\]
which always holds with strict inequality because $T(y)$ is strictly concave for $y\geq 0$, and thus the left-hand side is strictly greater than 1 whereas the right-hand side is strictly less than 1.

(6) For $a\in [-1,0)$ and $\theta\in [-1,a]$, (D1) simplifies to
\[
T(-2a) +q (a)T(a-\theta) \geq T(-2\theta),
\]
which holds with equality for $\theta=a=t_1(a)$. For $\theta<a$, the inequality is equivalent to
\[
\frac{T(a-\theta)-T(0)}{T'(0)(a-\theta)}\geq \frac{T(-2\theta)-T(-2a)}{T'(-2a)(-2\theta+2a)},
\]
which always holds with strict inequality because
\begin{align*}
\frac{T(-2\theta)-T(-2a)}{T'(-2a)(2a-2\theta)}&=\frac{1}{2a-2\theta}\int_{0}^{2(a-\theta)} \frac{T'(y-2a)}{T'(-2a)}\df y\\
&<\frac{1}{2a-2\theta}\int_{0}^{2(a-\theta)} \frac{T'(y)}{T'(0)}\df y\\
& =\frac{T(2a-2\theta)-T(0)}{T'(0)(2a-2\theta)}\\
&<\frac{T(a-\theta)-T(0)}{T'(0)(a-\theta)},
\end{align*}
where the first inequality holds because $T'(x+y)/T'(x)$, with $y>0$, is strictly decreasing in $x$ for a strictly log-concave $T'$, and the second inequality holds because $T(y)$ is strictly concave for $y\geq 0$.
\end{proof}

\subsection{Proof of Proposition \ref{p:ZZ}}
Clearly, $a^\star(\mu)={\E_\mu [\theta]}/({1+\E_\mu[\theta^2]})$. To ensure that Assumption \ref{a:int} holds, we normalize $A=[\min_{\theta\in [\ul \theta,\ol \theta]} a^\star(\delta_\theta),\max_{\theta\in [\ul \theta,\ol \theta]} a^\star(\delta_\theta)]$. Assumptions \ref{a:smooth}, \ref{a:qc}, \ref{a:v>0} obviously hold. Moreover, since $a^\star(\delta_\theta)$ is strictly increasing on $[0,1]$ and strictly decreasing on $[1,+\infty)$, it follows that $u(a^\star(\delta_\theta),\theta')>0$ if $\theta<\theta'\leq 1$ and if $1\leq \theta'<\theta$. Thus, if $\ol \theta\leq 1$, then Assumption \ref{a:sc} holds, whereas, if $\ul \theta \geq 1$, Assumption \ref{a:sc} also holds once the state is redefined as $-\theta$. So Theorems \ref{t:pairwise}, \ref{l:ssdd}, \ref{t:NAD} and Lemma \ref{p:full} apply.

Lemma \ref{l:ZZ1} replicates Lemma 1 and Proposition 3 in \citet{ZZ}.
\begin{lemma}\label{l:ZZ1}
If $\theta_1<\theta_2$  and $\theta_1\theta_2> (<)1$, then  $\rho V(a^\star(\delta_{\theta_1}),\theta_1)+(1-\rho )V(a^\star(\delta_{\theta_2}),\theta_2)>(<)\rho V(a^\star(\mu),\theta_1)+(1-\rho )V(a^\star(\mu),\theta_2)$ for all $\rho\in (0,1)$.

$a^\star(\mu)\left({\rho }/{\theta_1}+{(1-\rho )}/{\theta_2}\right)$.

\end{lemma}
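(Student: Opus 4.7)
The plan is to reduce the inequality to a single algebraic identity. Define
\[
g(\rho) = \rho V(a^\star(\delta_{\theta_1}),\theta_1) + (1-\rho) V(a^\star(\delta_{\theta_2}),\theta_2) - \rho V(a^\star(\mu),\theta_1) - (1-\rho) V(a^\star(\mu),\theta_2),
\]
where $\mu = \rho\delta_{\theta_1}+(1-\rho)\delta_{\theta_2}$. The goal is to show that the sign of $g(\rho)$ coincides with the sign of $\theta_1\theta_2-1$ for every $\rho\in(0,1)$. Since $V(a,\theta)=a/\theta$ and $a^\star(\delta_\theta)=\theta/(1+\theta^2)$, the ``disclosed'' term reads $\rho/(1+\theta_1^2)+(1-\rho)/(1+\theta_2^2)$. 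Writing $m_1=\rho\theta_1+(1-\rho)\theta_2$ and $m_2=\rho\theta_1^2+(1-\rho)\theta_2^2$, so that $a^\star(\mu)=m_1/(1+m_2)$, the ``pooled'' term is $a^\star(\mu)\bigl[\rho/\theta_1+(1-\rho)/\theta_2\bigr]=m_1(\rho\theta_2+(1-\rho)\theta_1)/[(1+m_2)\theta_1\theta_2]$.

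The next step is to put $g(\rho)$ over the common (positive) denominator $(1+\theta_1^2)(1+\theta_2^2)(1+m_2)\theta_1\theta_2$ and simplify the numerator. Two identities will do the bookkeeping. First, $(\rho\theta_1+(1-\rho)\theta_2)(\rho\theta_2+(1-\rho)\theta_1)=\theta_1\theta_2+\rho(1-\rho)(\theta_2-\theta_1)^2$. Second, with $n=\rho\theta_2^2+(1-\rho)\theta_1^2$, one has $n+m_2=\theta_1^2+\theta_2^2$ and $nm_2=(\theta_1\theta_2)^2+\rho(1-\rho)(\theta_1+\theta_2)^2(\theta_2-\theta_1)^2$, so that $(1+n)(1+m_2)=(1+\theta_1^2)(1+\theta_2^2)+\rho(1-\rho)(\theta_1+\theta_2)^2(\theta_2-\theta_1)^2$. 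Substituting these, the numerator of $g(\rho)$ collapses to
\[
\frac{\rho(1-\rho)(\theta_2-\theta_1)^2(\theta_1\theta_2-1)\bigl(\theta_1^2+\theta_1\theta_2+\theta_2^2+1\bigr)}{\theta_1\theta_2},
\]
after one recognises that $(\theta_1+\theta_2)^2\theta_1\theta_2-(1+\theta_1^2)(1+\theta_2^2)$ factors as $(\theta_1\theta_2-1)(\theta_1^2+\theta_1\theta_2+\theta_2^2+1)$.

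Since $\rho(1-\rho)(\theta_2-\theta_1)^2>0$, $\theta_1\theta_2>0$, and $\theta_1^2+\theta_1\theta_2+\theta_2^2+1>0$, the sign of $g(\rho)$ is exactly the sign of $\theta_1\theta_2-1$, which is the claimed dichotomy. The only real obstacle is bookkeeping: one has to recognise the factorisation $(\theta_1+\theta_2)^2\theta_1\theta_2-(1+\theta_1^2)(1+\theta_2^2)=(\theta_1\theta_2-1)(\theta_1^2+\theta_1\theta_2+\theta_2^2+1)$, which is the reason the threshold $\theta_1\theta_2=1$ appears. Everything else is mechanical symmetric-function manipulation in $\theta_1,\theta_2$, and no deeper tool from the preceding theory is needed (although Lemma \ref{p:full} and Theorem \ref{t:NAD} will then convert these sign conclusions into optimality statements for full disclosure when $\ul\theta\ge1$ and for negative assortative disclosure when $\ol\theta\le 1/\sqrt 3$ or $1/\sqrt 3\le\ul\theta<\ol\theta\le1$, as invoked in Proposition \ref{p:ZZ}).
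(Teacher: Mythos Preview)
Your argument is correct and complete: the closed-form identity
\[
g(\rho)=\frac{\rho(1-\rho)(\theta_2-\theta_1)^2(\theta_1\theta_2-1)(\theta_1^2+\theta_1\theta_2+\theta_2^2+1)}{(1+\theta_1^2)(1+\theta_2^2)(1+m_2)\,\theta_1\theta_2}
\]
indeed holds (your displayed ``numerator'' carries an extra $1/(\theta_1\theta_2)$ that should not be there, but this is a harmless bookkeeping slip and does not affect the sign conclusion). All the intermediate symmetric-function identities you state check out, and the factorisation $(\theta_1+\theta_2)^2\theta_1\theta_2-(1+\theta_1^2)(1+\theta_2^2)=(\theta_1\theta_2-1)(\theta_1^2+\theta_1\theta_2+\theta_2^2+1)$ is exactly what produces the threshold.

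Your route is genuinely different from the paper's. The paper never computes $g(\rho)$ explicitly; instead it sets $\varphi(\rho)=a^\star(\mu)\bigl(\rho/\theta_1+(1-\rho)/\theta_2\bigr)$, differentiates $a^\star(\mu)$ twice in $\rho$ to obtain
\[
\frac{d a^\star(\mu)}{d\rho}=\frac{(\theta_2-\theta_1)(\theta_1\theta_2-1)}{(1+m_2)^2},\qquad \frac{d^2 a^\star(\mu)}{d\rho^2}=\frac{(\theta_2-\theta_1)(\theta_1\theta_2-1)(\theta_2^2-\theta_1^2)}{(1+m_2)^3},
\]
observes that both terms in $\varphi''(\rho)$ carry the sign of $\theta_1\theta_2-1$, and then invokes strict convexity/concavity of $\varphi$ together with $\varphi(1)=V(a^\star(\delta_{\theta_1}),\theta_1)$ and $\varphi(0)=V(a^\star(\delta_{\theta_2}),\theta_2)$ to conclude via Jensen. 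Your direct computation is more elementary and yields strictly more information (the exact value of the gap, not just its sign); the paper's convexity argument is shorter on algebra and ties more transparently to the structure of condition \eqref{e:full} as a secant-versus-chord comparison.
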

\begin{proof} 
For $\mu=\rho \delta_{\theta_1}+(1-\rho )\delta_{\theta_2}$, $a^\star(\mu)={(\rho \theta_1 +(1-\rho )\theta_2)}/{(1+\rho \theta_1^2+(1-\rho )\theta_2^2)}$. 
Thus, if $\theta_1<\theta_2$ and $\theta_1\theta_2>(<)1$, we have
\begin{gather*}
\frac{\df}{\df \rho }a^\star(\mu) =  \frac{(\theta_2-\theta_1)(\theta_1\theta_2-1)}{(1+\rho \theta_1^2+(1-\rho )\theta_2^2)^2}>(<)0,\\
\frac{\df^2}{\df \rho^2 }a^\star(\mu)  =  \frac{(\theta_2-\theta_1)(\theta_1\theta_2-1)(\theta_2^2-\theta_1^2)}{(1+\rho \theta_1^2+(1-\rho )\theta_2^2)^3}>(<)0.
\end{gather*}
Define $\varphi (\rho)=a^\star(\mu)\left({\rho }/{\theta_1}+{(1-\rho )}/{\theta_2}\right) $.
Thus, if $\theta_1<\theta_2$ and $\theta_1\theta_2>(<)1$, we have
\[
\varphi''(\rho )={\left(\frac{\rho }{\theta_1}+\frac{1-\rho }{\theta_2}\right)}{\frac{\df^2}{\df \rho^2 }a^\star(\mu)} +2{\left(\frac{1}{\theta_1}-\frac{1}{\theta_2}\right)}\frac{\df}{\df \rho }a^\star(\mu)>(<)0,
\]
so $\varphi$ is strictly convex (concave), and $\rho\varphi(1)+(1-\rho) \varphi(0)>(<)\varphi (\rho)$.
\end{proof}
If $\ul \theta\geq 1$, then $\theta_1\theta_2>1$ for all $\ul \theta_1\leq \theta_1<\theta_2$, so full disclosure is uniquely optimal by Lemmas \ref{p:full} and \ref{l:ZZ1}. Assume henceforth that $\ul \theta\leq 1$. 

After some algebra,  we get, for all $a$ and $\theta_1<\theta_2<\theta_3$,
\[
|S|=\frac{(\theta_3-\theta_2)(\theta_3-\theta_1)(\theta_2-\theta_1)(1-\theta_2\theta_3-\theta_1\theta_3-\theta_1\theta_2)}{\theta_1\theta_2\theta_3}
\]
If $\ol \theta\leq 1/\sqrt 3$ ($ \ul \theta\geq 1/\sqrt 3$), then $|S|>(<)0$ for all $\theta_1<\theta_2<\theta_3\leq \ol \theta$  ($\ul \theta\leq \theta_1<\theta_2<\theta_3$), so $\Gamma^\star$ is pairwise by Theorem \ref{t:pairwise}. Proposition 4 in \citet{ZZ} derives a version of this result for a finite set $\Theta$.

Moreover, if $\ol \theta \leq 1/\sqrt 3$ ($\ul \theta \geq 1/\sqrt 3$), then $\Gamma$ is single-dipped (-peaked), as follows from Theorem \ref{l:ssdd} with
\begin{gather*}
y=
\begin{pmatrix}
u(a_2,\theta_3)u(a_1,\theta_2)-u(a_2,\theta_2)u(a_1,\theta_3) \\
u(a_2,\theta_3)u(a_1,\theta_1)-u(a_2,\theta_1)u(a_1,\theta_3) \\
u(a_2,\theta_2)u(a_1,\theta_1)-u(a_2,\theta_1)u(a_1,\theta_2)
\end{pmatrix}\\
\begin{pmatrix}
y=-
\begin{pmatrix}
u(a_2,\theta_3)u(a_1,\theta_2)-u(a_2,\theta_2)u(a_1,\theta_3) \\
u(a_2,\theta_3)u(a_1,\theta_1)-u(a_2,\theta_1)u(a_1,\theta_3) \\
u(a_2,\theta_2)u(a_1,\theta_1)-u(a_2,\theta_1)u(a_1,\theta_2)
\end{pmatrix}
\end{pmatrix},	
\end{gather*}
because, for $a<a'$ and $\theta<\theta'$ with $\theta\theta'<1$, we have
\[
u(a',\theta')u(a,\theta)-u(a',\theta)u(a,\theta')=(a'-a)(\theta'-\theta)(1-\theta\theta')>0,
\]
and 
\[
Ry=
\begin{pmatrix}
(a_2-a_1)^2|S|\\
0\\
0
\end{pmatrix}
\gneq 
\begin{pmatrix}
0\\
0\\
0
\end{pmatrix}
\begin{pmatrix}
Ry=
\begin{pmatrix}
-(a_2-a_1)^2|S|\\
0\\
0
\end{pmatrix}
\gneq 
\begin{pmatrix}
0\\
0\\
0
\end{pmatrix}
\end{pmatrix}.
\]
Since $\Gamma^\star$ is pairwise and  $\Gamma$ is single-dipped (-peaked) if $\ol\theta \leq 1/\sqrt 3$ ($\ul \theta \geq 1/\sqrt 3$), it follow that $\Gamma^\star$ is single-dipped (-peaked) if $\ol\theta \leq 1/\sqrt 3$ ($\ul \theta \geq 1/\sqrt 3$). Finally, since, by Lemma \ref{l:ZZ1}, \eqref{e:nd} holds for all $p\in (0,1)$, Theorem \ref{t:NAD} yields that, if $\ol\theta \leq 1/\sqrt 3$ ($\ul \theta \geq 1/\sqrt 3$), then $\Gamma^\star$ is single-dipped (-peaked) negative assortative disclosure, and the optimal outcome is unique.

\subsection{Proof of Proposition \ref{c:spd1}}
Let
\[y=
\begin{cases}
\left(0,\frac{1}{(\theta_2-\theta_0)g(a_2|\theta_2)},\frac{1}{(\theta_2-\theta_0)g(a_2|\theta_3)}\right), &\theta_2>\theta_0,\\
\left(0,1,0\right), &\theta_2=\theta_0, \\
\left(\frac{1}{(\theta_0-\theta_1)g(a_1|\theta_1)},\frac{1}{(\theta_0-\theta_2)g(a_1|\theta_2)},0\right), &\theta_2<\theta_0,
\end{cases}
\]
where $\theta_1 <\theta_2<\theta_3$, $a_2 <a_1$, and $\theta_1\leq \theta _{0}\leq \theta_3$.
 We focus on the case $\theta _{0}<\theta_2$, as the other cases are analogous. The above perturbation increases action $a_1$, because, by strict log-submodularity of $g$, 
\begin{equation*}
u(a_1,\theta_2)y_2-u(a_1,\theta_3)y_3= \frac{g(a_1|\theta _2)}{g(a_2 |\theta_2)}-\frac{g(a_1|\theta _3)}{g(a_2 |\theta _3)} >
0.
\end{equation*}%
The intuition is that, since a type-$a_2$ receiver is more optimistic about the state than a type-$a_1$ receiver, he assigns higher prior probability to $\theta_3$ relative to $\theta_2$. He therefore finds a signal that puts more weight on $\theta_3$ relatively more persuasive, while the more pessimistic type-$a_2$ receiver is more persuaded by a signal that puts more weight on~$\theta_2$. 

Moreover, the same perturbation also increases the sender's expected utility for fixed $a_1 ,a_2$. This follows because
\begin{align*}
& (V(a_1,\theta_2)-V(a_2,\theta_2))y_2-(V(a_1,\theta_3)-V(a_2,\theta_3))y_3 \\
=&  \left( \frac{G(a_1|\theta_2)-G(a_2|\theta_2)}{(\theta_2-\theta _{0})g(a_2 |\theta_2)}-\frac{G(a_1|\theta_3)-G(a_2|\theta_3)}{(\theta _3-\theta _{0})g(a_2 |\theta_3)}\right)  \\
>& \frac{1}{(\theta_2-\theta _{0})}\left( \frac{G(a_1|\theta_2)-G(a_2 |\theta_2)}{g(a_2 |\theta_2)}-\frac{G(a_1|\theta_3)-G(a_2|\theta_3)}{g(a_2 |\theta_3)}\right)  \\
=& \frac{1 }{(\theta _2-\theta _{0})}\int_{a_2}^{a_1}\left( \frac{g(t|\theta_2)}{g(a_2|\theta_2)}-\frac{g(t|\theta_3)}{g(a_2 |\theta_3)}\right) \df t\geq 0,
\end{align*}
where the first inequality is by $\theta _{0}<\theta_2<\theta_3$ and the second inequality is by log-submodularity of $g$. Thus, every optimal outcome is single-peaked.

The above chain of inequalities can also be given an intuitive explanation. There are two effects that both benefit the sender. First, if we ignore the effect of $\theta$ on $g(a_2 | \theta)$ (e.g., suppose for the moment that $g(a_2| \theta_2)=g(a_2| \theta_3)$), then the fact that $(\theta-\theta_0 )$ is strictly increasing in $\theta$ implies that $y_2>y_3$. That is, to keep a type-$a_2$ receiver indifferent, the weight on $\theta_2$ that can be moved from $a_2$ to $a_1$ is greater than the weight on $\theta_3$ that moves in the opposite direction. This benefits the sender, as more weight moves from the lower receiver cutoff to the higher cutoff than moves in the opposite direction. This effect explains the first inequality. 

The second inequality comes from the fact that, since $t$ is more likely to be high when $\theta$ is low, $\Pr(t \in [a_2,a_1] | \theta )/\Pr(t=a_2 | \theta)$ is decreasing in $\theta$. That is, the higher is $\theta$, the lower is the relative probability that a decrease in the receiver's cutoff from $a_2$ to $a_1$ is pivotal for the receiver's action. Thus, moving mass on $\theta_2$ to $a_1$ while moving mass on $\theta_3$ to $a_2$ benefits the sender by inducing a higher receiver cutoff at those states where the choice of cutoff is more likely to matter for the receiver's action, and inducing a lower cutoff at states where the choice of cutoff is less likely to matter. This explains the second inequality.

\subsection{Proof of Proposition \ref{p:GL}}
As shown by \citet{KG}, there exists an optimal outcome with a finite support. Suppose the support contains a strictly single-peaked triple $(a_1,\theta_1)$, $(a_2,\theta_2)$, $(a_1,\theta_3)$, with $\theta_1<\theta_2<\theta_3$, $a_1<a_2$, and $\theta_1<a_1<\theta_3$. Notice that $V(a_1,\theta_3)\neq-\infty$ (so $a_1\geq \sigma(\theta_3))$, as otherwise the sender's expected utility would be $-\infty$, which cannot be optimal. Taking into account that $\sigma(\theta)=\theta$ for $\theta\leq \theta_0$ gives $a_1 >\theta_0$. Thus, the first row in $R$ is zero. Consider a perturbation that shifts weights $y_1=(\theta_3-\theta_2)\varepsilon$ and $y_3=(\theta_2-\theta_1)\varepsilon$ on $
\theta_1$ and $\theta_3$ from $a_1$ to $a_2$ and shifts weight $y_2=(\theta_3-\theta_1)\varepsilon$ from $a_2$ to $a_1$, where $\varepsilon$ takes the maximum value such that $y_1\leq \pi(\{(a_1,\theta_1\})$, $y_2\leq \pi(\{(a_2,\theta_2\})$, $y_3\leq \pi(\{(a_1,\theta_3\})$, so that a strictly single-peaked triple is removed. This perturbation holds fixed $a_1$ and $a_2$ and thus does not change the sender's expected utility, since the first row in $R$ is zero. Repeating such perturbations until all strictly single-peaked triples are removed (a finite number of times since $\supp (\pi)$ is finite) yields a single-dipped outcome that is weakly preferred by the sender.

\end{document}